\documentclass[12pt]{article}

\usepackage{amsmath, amssymb, amsthm, bm, graphicx, xcolor, physics, newtxtext, newtxmath, tikz, eso-pic, enumitem, float}
\usetikzlibrary {arrows.meta}

\usepackage{geometry}
\geometry{letterpaper, top=90pt, bottom=72pt, left=54pt, right=54pt}

\usepackage{hyperref}
\hypersetup{breaklinks, colorlinks=true, linkcolor=blue, urlcolor=cyan, citecolor=magenta}

\usepackage[lastpage, user]{zref}
\usepackage{fancyhdr}
\pagestyle{fancy}
\fancypagestyle{plain}{}

\fancyhead{}
\cfoot{\fontsize{9pt}{0pt}\selectfont\thepage/\zpageref{LastPage}}

\usepackage[numbered, framed]{matlab-prettifier}
\lstset{style=Matlab-editor, basicstyle=\mlttfamily\small, mlshowsectionrules=true}

\usepackage[numbers]{natbib}

\newcommand*\CircleAroundChar[2][\small]{\tikz[baseline=(char.base)]{\node[shape=circle, draw, inner sep=1pt](char){#1#2};}}

\newcommand{\fis}{\raisebox{2pt}{$\chi$} }
\newcommand{\tdfis}{\raisebox{2pt}{$\widetilde\chi$} }

\theoremstyle{plain}
\newtheorem{theorem}{Theorem}[section]

\theoremstyle{definition}

\newtheorem{proposition}[theorem]{Proposition}
\newtheorem{remark}[theorem]{Remark}

\newtheorem{conjecture}[theorem]{Conjecture}

\renewenvironment{proof}{{\noindent \bf Proof.}}{\qed}

\usepackage [autostyle, english = american]{csquotes}
\MakeOuterQuote{"}

\usepackage{multirow,booktabs} 

\begin{document}

\title{Bifurcation and fission in the liquid drop model:\\ a phase-field approach\thanks{This work is supported in part by the National Science Foundation DMS-2012562 and DMS-1937254.
}} 

\author{Zirui Xu\thanks{Department of Applied Physics and Applied
Mathematics, Columbia University, New York, NY
10027, USA. Email: zx2250@columbia.edu.}
 \quad and \quad Qiang Du\thanks{Department of Applied Physics and Applied
Mathematics, and Data Science Institute,
Columbia University, New York, NY
10027, USA. Email: qd2125@columbia.edu.}\quad(corresponding author)
}

\date{\vspace{-5ex}}


\maketitle

\noindent

\section*{Abstract}

The liquid drop model, originally used to model atomic nuclei, describes the competition between surface tension and Coulomb force. To help understand how a ball loses stability and becomes prone to fission, we calculate the minimum energy path of the fission process, and study the bifurcation branch conjectured by Bohr and Wheeler. We then present the two-dimensional analogue for comparison. Our study is conducted with the help of numerical simulations via a phase-field approach.


\section{Introduction}

The competition between attractive and repulsive interactions gives rise to a wide variety of pattern formation phenomena, ranging from polymer systems to ferroelectric/ferromagnetic systems to quantum systems to reaction-diffusion systems \cite{muratov1998theory,muratov2002theory}. It also accounts for the structure of nuclear matter in the crust of neutron stars \cite{knupfer2016low}. One of the earliest theoretical studies in the literature is the liquid drop model of the atomic nucleus conceived by Gamow in 1928, which captures the competition between the attractive short-range nuclear force and the repulsive long-range Coulomb force \cite{choksi2017old}. Another relevant study is the Ohta\textendash Kawasaki free energy introduced in 1986 to model the self-assembly of diblock copolymers \cite[Equation (5)]{choksi2017old}. Because of their abilities to reproduce various fine mesoscopic structures found in block copolymers and many other systems, the Ohta\textendash Kawasaki free energy and its variants have drawn wide mathematical interests (see, e.g., \cite{xu2022ternary} and the references therein). In a mathematical study by Choksi and Peletier \cite{choksi2010small}, the liquid drop model resurfaced as the leading-order term of the Ohta\textendash Kawasaki free energy in the vanishing volume limit \cite[Top-right of Page 1277]{choksi2017old}. Since then the liquid drop model started receiving much more attention from the mathematical community (see, e.g., \cite{bonacini2014local,frank2016nonexistence,emmert2020liquid} and many references therein), serving as an exemplary model of pattern formation driven by energetic competitions.

The liquid drop model was originally used to model the shape of the atomic nucleus. Although an atomic nucleus typically makes up more than 99.9\% of the atomic mass, it occupies much less than 0.01\% of the atomic volume. In fact, its diameter ranges from 2 to 12 fm (note that 1 fm equals $10^{-15}$ m), much smaller than the resolution of the best microscope \cite[Page 16]{cook2010models}. This means that in nuclear physics, the shape of the very object under study cannot be seen by any microscope. Nonetheless, the shape of a nucleus is a basic notion in theoretical models of nuclear structure and reactions \cite{ivanyuk2013scission}. A proper description of nuclear shapes is essential for explaining and theoretically studying nuclear activities such as nuclear fission \cite{ivanyuk2009optimal}. In the liquid drop model, the atomic nucleus is treated as a drop of incompressible and uniformly charged fluid \cite{choksi2017old}, therefore its energy is given by the surface energy plus the Coulomb potential energy. More precisely, for a nucleus occupying a domain $\Omega\subseteq\mathbb R^3$, up to some rescaling, the energy is given by \eqref{energy-liquid-drop-model}. The volume $V$ of $\Omega$ is related to the mass number, i.e, the number of nucleons. Nucleons are bound together by the attractive nuclear force, so the nucleons on the surface of the nucleus lack neighbors and have higher potential energy, thus giving rise to the surface energy term. Despite its simplicity, the liquid drop model offers satisfactory explanations for the gross properties of nuclei such as the overall trend in the binding energy.

For a measurable set $\Omega\subseteq\mathbb R^3$, we consider the following energy functional \cite[Equation (1)]{choksi2017old}
\begin{equation}
\label{energy-liquid-drop-model}
I(\Omega):=\text{Per}\,\Omega+\frac12\int_\Omega\int_\Omega G(\vec x,\vec y)\dd{\vec x}\dd{\vec y},\quad G(\vec x,\vec y)=\frac1{4\pi|\vec x\!-\!\vec y|},
\end{equation}
under the volume constraint $|\Omega|=V:=40\pi\fis$, where $\fis$ is the so-called fissility parameter, and Per$\,\Omega$ denotes the surface area of $\Omega$ (or the perimeter of $\Omega$ in the 2-D case). Note that in 2-D we define $G(\vec x,\vec y)=-\ln|\vec x\!-\!\vec y|/(2\pi)$ and $\fis=(V/\pi)^{3/2}\big/12$, where $V$ is the area of $\Omega$. In the physics literature \cite[Page 2]{krappe2012theory}, the fissility parameter is denoted as $x$. We use the notation $\fis$ to avoid confusion with $\vec x$. We hereinafter focus on the 3-D case which has relevance to nuclear physics. Under the volume constraint, a ball minimizes the first term in \eqref{energy-liquid-drop-model} due to the isoperimetric inequality, but maximizes the second term due to Riesz rearrangement inequality \cite[Page 755]{choksi2011small}. For $\fis\ll1$, the first term is dominant, and the global minimizer $\Omega$ of $I=I(\Omega)$ is expected to be a ball; for $\fis\gg1$, the second term is dominant, and $\Omega$ is expected to split into many fragments in order to prevent the second term from growing too large. In fact, as $\fis$ exceeds $\fis_2:=(\sqrt[3]2+\sqrt[3]4)^{-1}\approx0.351$, a single ball starts to have higher energy than two balls of equal radii infinitely far apart \cite[Page 755]{choksi2011small}.

According to \cite[Page 4437 and Theorem 3.4]{frank2015compactness} and \cite{frank2016nonexistence} (see also \cite{muratov2014isoperimetric,julin2014isoperimetric,bonacini2014local,lu2014nonexistence}), there exist $0<\fis_2^-\leqslant\fis_2^+\leqslant4/5$ such that:
\begin{enumerate}[label=\protect\CircleAroundChar{\arabic*}]
\item for $\fis<\fis_2^-$, a ball is the unique global minimizer (up to translation);
\item for $\fis=\fis_2^-$, a ball is a global minimizer;
\item for $\fis>\fis_2^-$, a ball is not a global minimizer;
\item for $\fis\in(0,\fis_2^+]\backslash O$ with some open set $O\subseteq(\fis_2^-,\fis_2^+)$, there is a global minimizer;
\item for $\fis\in(\fis_2^+,\infty)\cup O$, there is no global minimizer.
\end{enumerate}
Obviously $\fis_2^-\leqslant\fis_2$. It is widely believed but not proved that $\fis_2^-=\fis_2^+=\fis_2$ \cite[Page 755]{choksi2011small}. For $\fis$ not very small, the global minimizer is not rigorously known. However, several qualitative results are known. According to \cite[Bottom-right of Page 1280]{choksi2017old} and \cite[Theorem 2.7]{bonacini2014local}, any local minimizer has $C^\infty$ reduced boundaries, and is essentially bounded with a finite number of connected components; moreover, any global minimizer is connected. According to \cite[Theorem 1]{julin2017remark}, any $C^2$-regular critical set has analytic boundaries. Although there is no global minimizer for $\fis\in(\fis_2^+,\infty)\cup O$, a generalized global minimizer always exists, consisting of a finite collection of connected components which are infinitely far apart (with each component bounded, connected and $C^\infty$-regular after a zero Lebesgue measure modification) \cite[Theorem 4.5]{knupfer2016low}.

For any $\fis>0$, a ball is always a stationary point of $I$ (i.e., a regular critical set in the sense of \cite[Definition 2.5]{bonacini2014local}). Note that a stationary point is not necessarily stable. In fact, a ball is strictly stable for $\fis<\fis_*:=1$ and unstable for $\fis>\fis_*$ \cite[Theorem 2.9]{bonacini2014local}. By "strictly stable" we mean that for $\fis<\fis_*$, a ball is an isolated local minimizer in the sense of \cite[Definition 2.4]{bonacini2014local}. Thus, for $\fis\in(\fis_2, \fis_*)$, a ball remains an isolated local minimizer although it is no longer a global minimizer. Notice that the stability is often defined in the $L^1$-topology \cite[Equations (2.1) and (2.8)]{kohn1989local} or $L^2$-topology \cite[Bottom of Page 911]{ren2000multiplicity}. Furthermore, strict stability may be defined modulo translation (see \cite[Definition 2.4]{acerbi2013minimality} and \cite[Definition 2.4]{bonacini2014local}). It is known that strict stability can be deduced from strict positivity of the second variation (modulo translation) of a $C^1$-regular critical set \cite[Theorem 2.8]{bonacini2014local}. It should be analogous to define strict stability modulo both translation and rotation, although we have not seen such a definition in the literature.

For $\fis$ slightly larger than $\fis_*$, we are interested to see what will happen to a ball after it loses its stability and ceases to be a local minimizer. For $\fis\approx\fis_*$, a smooth family of stationary points has been constructed in \cite{frank2019non}. Every member of this family is axisymmetric and resembles a prolate \footnote{\;Throughout this paper, we regard "oblate" and "prolate" as nouns instead of adjectives, and refer to the interiors of oblate and prolate spheroids, respectively.} (for $\fis<\fis_*$), a ball (for $\fis=\fis_*$), or an oblate (for $\fis>\fis_*$), as shown on the left of Figure \ref{Spheroids}. This family is illustrated as the tilted line on the right of Figure \ref{Spheroids} (where the vertical axis qualitatively represents the shape, e.g., the height to width ratio). According to \cite[Page 071506-2]{frank2019non}, this family exchanges stability with the ball at $\fis=\fis_*$, where a transcritical bifurcation happens. More specifically, for $\fis<\fis_*$, the prolate-like stationary point is unstable and has higher energy than the ball; for $\fis>\fis_*$, the oblate-like stationary point is "stable" and has lower energy than the ball. Therefore it seems reasonable to speculate that as $\fis$ slightly exceeds $\fis_*$, the ball will deform slightly into a stable oblate-like equilibrium.

\begin{figure}[H]
\centering
$\overset{\includegraphics[width=47.328pt]{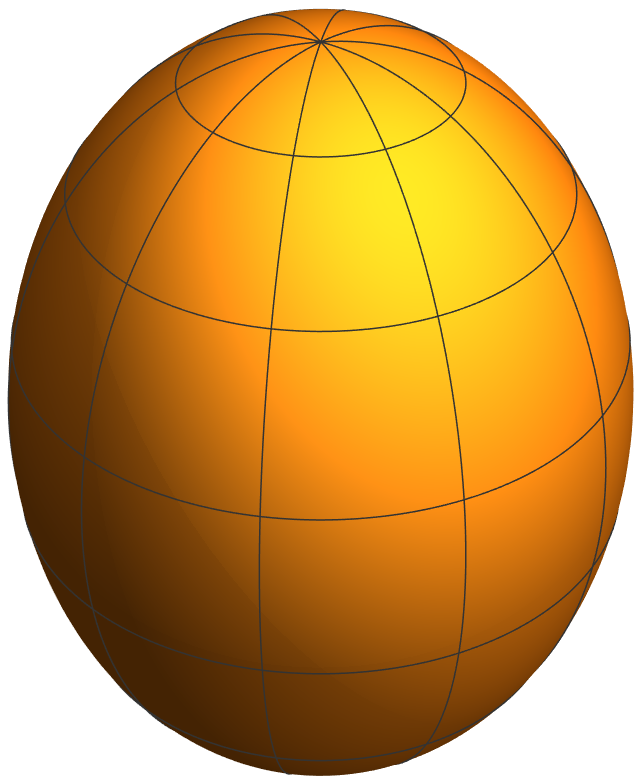}}{\text{prolate}}$
$\overset{\raisebox{3pt}{\includegraphics[width=52.2pt]{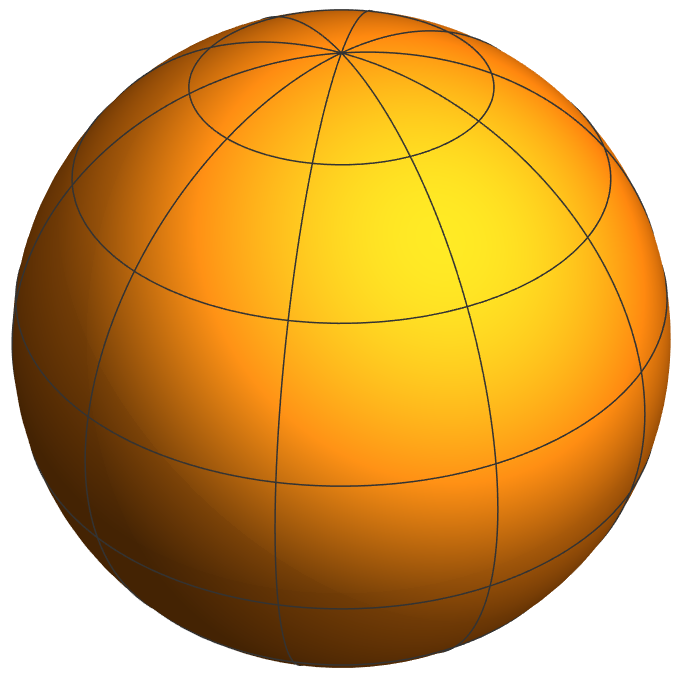}}}{\text{ball}}$
$\overset{\raisebox{4.5pt}{\includegraphics[width=57.42pt]{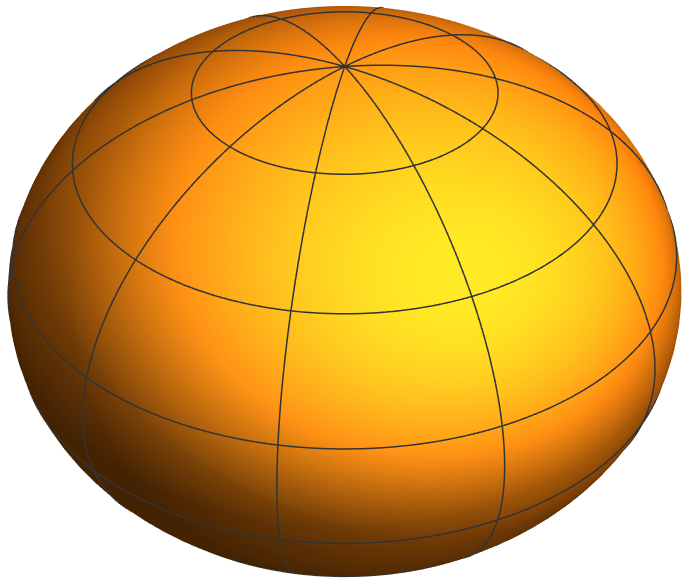}}}{\text{oblate}}$
\hspace{20pt}
\raisebox{-8pt}{\begin{tikzpicture}
\draw [dashed, line width=1pt](-60pt,30pt) -- (0pt,0pt);
\node[rotate=-26] at (-37pt,24pt) {prolate-like};
\filldraw[black] (0pt,0pt) circle (2pt);
\draw [line width=1pt](0pt,0pt) -- (62pt,-31pt);
\draw [line width=1pt](-65pt,0pt) -- (0pt,0pt);
\draw [dashed, line width=1pt](73pt,0pt) -- (0pt,0pt);
\node[rotate=-26] at (48pt,-17pt) {oblate-like};
\node at (21pt,6pt) {$\fis\!>\!\fis_*$};
\node at (63pt,-6pt) {ball};
\node at (-17pt,-6pt) {$\fis\!<\!\fis_*$};
\node at (-55pt,-6pt) {ball};
\end{tikzpicture}}
\\
\caption{Left: three types of spheroids. Right: transcritical bifurcation diagram, where the solid and dashed lines indicate stable and unstable branches, respectively.}
\label{Spheroids}
\end{figure}
Upon closer scrutiny, however, we note that those oblate-like stationary points have only been proved to be stable against certain axisymmetric perturbations \cite[Corollary 5]{frank2019non}. In fact, as shown in Section \ref{Symmetry breaking of oblate-like equilibria}, our numerical simulations indicate that they are unstable against non-axisymmetric perturbations and will eventually split into two disconnected components. In other words, symmetry breaking occurs. Usually in a transcritical bifurcation, the unstable branch exchanges stability with the stable branch and becomes stable when the parameter exceeds the critical value \cite[Pages 51 and 244]{strogatz2018nonlinear}. But as mentioned above, we believe that both the prolate-like and oblate-like branches are unstable, and we will explain this counter-intuitive phenomenon in Section \ref{explainations3D}.

For $\fis<\fis_*$, we are interested in the minimum energy path of the fission process. To the best of our knowledge, in the literature people have not obtained complete fission paths without severe restrictions on the admissible shapes of the nucleus. Using the string method \cite{weinan2007simplified}, we are able to obtain complete fission paths (see Section \ref{Minimum energy paths of fission}). With the help of the shrinking dimer method \cite{zhang2012shrinking}, we are able to obtain the saddle point along the fission path. Such a saddle point, also called a {\em transition state} \footnote{\;A transition state is a point of locally highest energy along the minimum energy path.}, is of particular interest, because it is related to the activation energy required for the nuclear fission to take place. In 1939, Bohr and Wheeler \cite{bohr1939mechanism} studied those saddle points and obtained the leading-order behaviors for $\fis\approx\fis_*$ using a Legendre polynomial expansion. Such asymptotic results were justified rigorously by Frank \cite{frank2019non}, who proved the existence (and uniqueness) of those saddle points for $\fis\approx\fis_*$ (see Figure \ref{Spheroids}). In addition, Bohr and Wheeler conjectured that this bifurcation branch can be extended from $\fis\approx\fis_*$ to arbitrarily small $\fis$, with the saddle point changing from spheroid-like shapes to dumbbell-like shapes and then converging to two touching balls of equal sizes \cite[Figure 2]{bohr1939mechanism}. This conjecture remains open, since the proof given by Frank is only valid in a small neighborhood of $\fis=\fis_*$ \cite[Page 071506-2]{frank2019non}. In the literature, numerical calculation of the Bohr\textendash Wheeler branch is scarce \cite[Top right of Page 1280]{choksi2017old}. Using a phase-field approach, we are able to numerically compute the entire Bohr\textendash Wheeler branch, as shown in Figure \ref{bifurcation branch}, thus providing numerical supports for their conjecture. Note that our computation covers the entire range of $\fis$, i.e., $\fis\in(0,\fis_5)$, with $\fis_5\approx 1.8$ corresponding to a discocyte-like stationary point whose two faces touch each other. In the literature we have not seen any numerical calculation of the portion $\fis\in(\fis_*,\fis_5)$.

\begin{figure}[H]
\centering
\hspace{-30pt}$\overset{\includegraphics[width=70pt,bb=0 0 284 291]{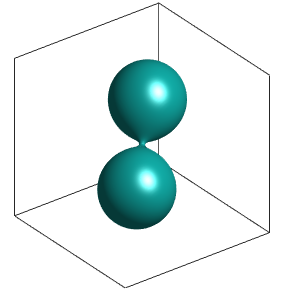}}{\fontsize{10pt}{0pt}\selectfont\text{0.124}\quad}$
\hspace{-12pt}
$\overset{\includegraphics[width=70pt,bb=0 0 284 291]{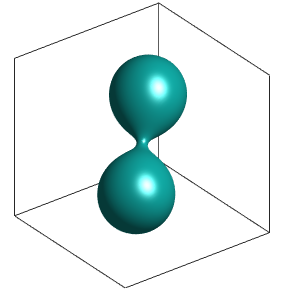}}{\fontsize{10pt}{0pt}\selectfont\text{0.248}\quad}$
\hspace{-12pt}
$\overset{\includegraphics[width=70pt,bb=0 0 284 291]{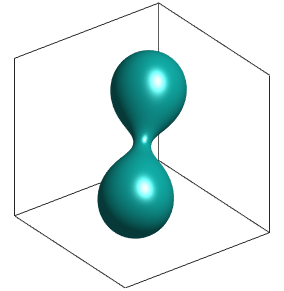}}{\fontsize{10pt}{0pt}\selectfont\text{0.397}\quad}$
\hspace{-12pt}
$\overset{\includegraphics[width=70pt,bb=0 0 284 291]{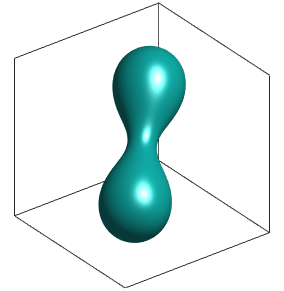}}{\fontsize{10pt}{0pt}\selectfont\text{0.583}\quad}$
\hspace{-12pt}
$\overset{\includegraphics[width=70pt,bb=0 0 284 291]{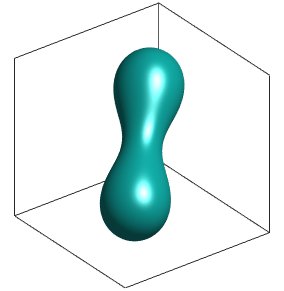}}{\fontsize{10pt}{0pt}\selectfont\text{0.657}\quad}$
\hspace{-12pt}
$\overset{\includegraphics[width=70pt,bb=0 0 284 291]{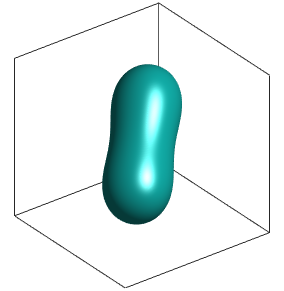}}{\fontsize{10pt}{0pt}\selectfont\text{0.744}\quad}$\\
\hspace{30pt}$\overset{\includegraphics[width=70pt,bb=0 0 284 291]{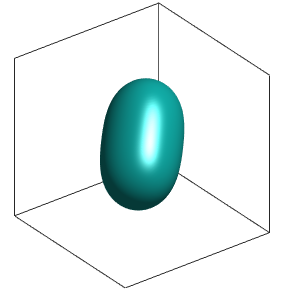}}{\fontsize{10pt}{0pt}\selectfont\text{0.844}\quad}$
\hspace{-12pt}
$\overset{\includegraphics[width=70pt,bb=0 0 284 291]{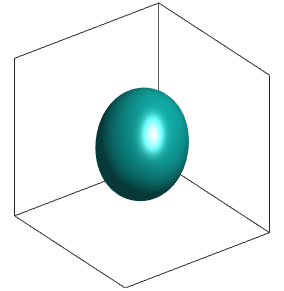}}{\fontsize{10pt}{0pt}\selectfont\text{0.930}\quad}$
\hspace{-12pt}
$\overset{\includegraphics[width=70pt,bb=0 0 284 291]{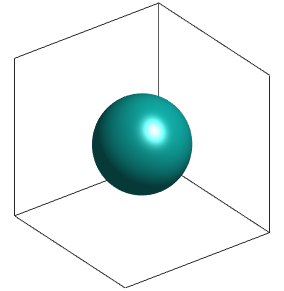}}{\fontsize{10pt}{0pt}\selectfont\text{0.992}\quad}$
\hspace{-12pt}
$\overset{\includegraphics[width=70pt,bb=0 0 284 291]{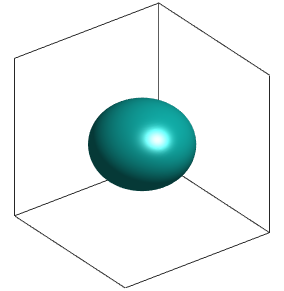}}{\fontsize{10pt}{0pt}\selectfont\text{1.054}\quad}$
\hspace{-12pt}
$\overset{\includegraphics[width=70pt,bb=0 0 284 291]{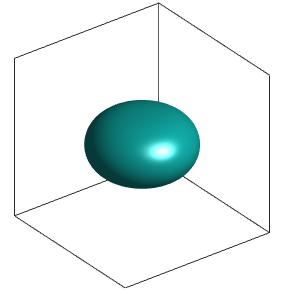}}{\fontsize{10pt}{0pt}\selectfont\text{1.116}\quad}$
\hspace{-12pt}
$\overset{\includegraphics[width=70pt,bb=0 0 284 291]{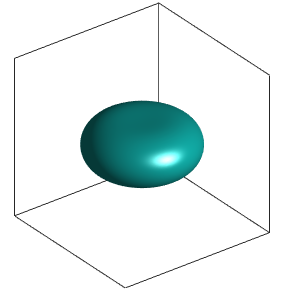}}{\fontsize{10pt}{0pt}\selectfont\text{1.178}\quad}$\\
\hspace{-30pt}$\overset{\includegraphics[width=70pt,bb=0 0 284 291]{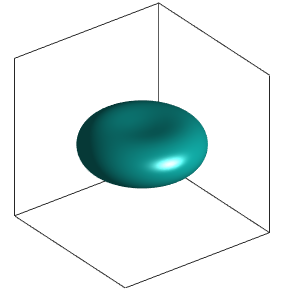}}{\fontsize{10pt}{0pt}\selectfont\text{1.240}\quad}$
\hspace{-12pt}
$\overset{\includegraphics[width=70pt,bb=0 0 284 291]{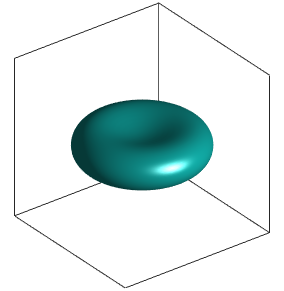}}{\fontsize{10pt}{0pt}\selectfont\text{1.365}\quad}$
\hspace{-12pt}
$\overset{\includegraphics[width=70pt,bb=0 0 284 291]{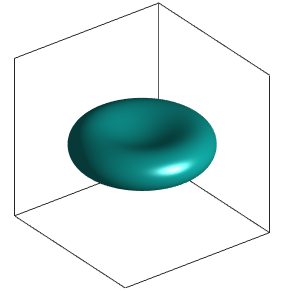}}{\fontsize{10pt}{0pt}\selectfont\text{1.489}\quad}$
\hspace{-12pt}
$\overset{\includegraphics[width=70pt,bb=0 0 284 291]{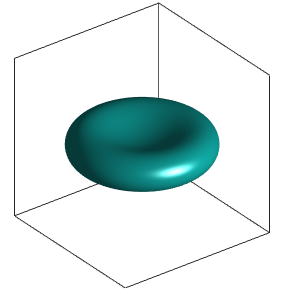}}{\fontsize{10pt}{0pt}\selectfont\text{1.613}\quad}$
\hspace{-12pt}
$\overset{\includegraphics[width=70pt,bb=0 0 284 291]{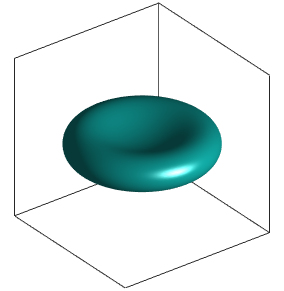}}{\fontsize{10pt}{0pt}\selectfont\text{1.737}\quad}$
\hspace{-12pt}
$\overset{\includegraphics[width=70pt,bb=0 0 284 291]{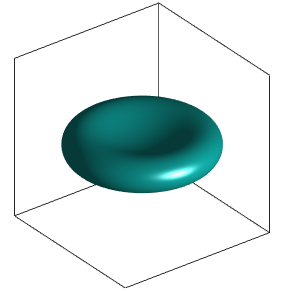}}{\fontsize{10pt}{0pt}\selectfont\text{1.799}\quad}$
\caption{Computed Bohr\textendash Wheeler bifurcation branch. Below each shape is $\tdfis$.}
\label{bifurcation branch}
\end{figure}

In Figure \ref{bifurcation branch}, the simulation box is a unit cube. The shapes are rescaled and their volumes are 0.09. They are axisymmetric, and the axes of revolution for the first nine shapes are chosen to be a body diagonal of the cube, while those for the last nine are chosen to be a vertical line passing through the center of the cube. As discussed in Section \ref{Phase-field approach: the diffuse interface energy}, $\tdfis$ is the phase-field approximation of $\fis$. Following the terminology in \cite{nix1969further}, if an axisymmetric object lacks mirror symmetry perpendicular to the axis of revolution, then we call it mass-asymmetric (e.g., a bowling pin or an European pear that has a heavier bottom and a lighter top). There is a critical value $\fis_3\approx0.395\pm0.001$ (the so-called Businaro\textendash Gallone point), such that for $\fis<\fis_3$, the saddle point becomes unstable to mass-asymmetric perturbations, that is, the two ends of the dumbbell-like shape tend to develop size inequality, until one of them vanishes and the dumbbell becomes a ball (see Section \ref{Businaro-Gallone point}). In other words, as $\fis$ drops below $\fis_3$, the saddle point gains one more unstable direction and its index increases from 1 to 2.

To understand the fate of the Bohr\textendash Wheeler branch after it disappears when $\fis$ exceeds $\fis_5$, we carry out numerical simulations for $\fis>\fis_5$. We observe that the discocyte-like shape on the Bohr\textendash Wheeler branch changes hysteretically to a torus-like shape on a separate branch (see Appendix \ref{torus appendix}). As $\fis$ increases, such a torus-like equilibrium expands and becomes thinner. For $\fis$ large enough, equilibria on this branch have been found by Ren and Wei \cite{ren2011toroidal}. As $\fis$ decreases, such a torus-like shape constricts and thickens, and disappears for $\fis<\fis_4\approx0.972$, through what we believe is a saddle-node bifurcation.

A summary of the above-mentioned qualitative changes is illustrated in Figure \ref{several qualitative changes}. For $\fis=0$, we recover the classical isoperimetric problem upon rescaling \cite[Equation (2.1)]{julin2014isoperimetric}, and any local minimizer must be a ball \cite{delgadino2019alexandrov}. In a tiny neighborhood of $\fis=0$, Julin proved a similar result: for any $M>0$, there is a $\varepsilon>0$ such that for any $\fis\in(0,\varepsilon)$, if a smooth local minimizer of \eqref{energy-liquid-drop-model} satisfies $\text{Per}\,\Omega\leqslant M|\Omega|^{2/3}$, then it must be a ball \cite[Top of Page 071506-2]{frank2019non}. As $\fis$ exceeds $\fis_1:=1/5$ and $\fis_2$, the shape consisting of two balls of equal radii infinitely far apart becomes a generalized local minimizer (see Appendix \ref{Local minimality of two balls}) and a generalized global minimizer (conjectured but unproved), respectively. Once $\fis$ exceeds $\fis_3$, the Bohr\textendash Wheeler branch changes the degree of instability from 2 to 1. In other words, $\fis_3$ is where the dumbbell-like saddle point changes stability against mass-asymmetric perturbations, whereas $\fis_1$ is where the shape consisting of two balls of equal radii infinitely far apart changes stability against mass-asymmetric perturbations. When $\fis$ exceeds $\fis_4$, the torus-like branch comes to existence. When $\fis$ exceeds $\fis_*$, the ball loses its local minimality and becomes unstable. When $\fis$ exceeds $\fis_5$, the Bohr\textendash Wheeler branch ceases to exist.
\begin{figure}[H]
\centering
\begin{tikzpicture}
\draw [-{Stealth[length=8*1.200pt]},line width = 0.6pt](0*1.200pt,0*1.200pt) -- (300*1.200pt,0*1.200pt);
\filldraw[black] (0*1.200pt,0*1.200pt) circle (0.9*1.200pt);
\node at (0*1.200pt,6*1.200pt) {$0$};
\draw [-{Stealth[length=5*1.200pt]},line width = 0.6*1.200pt,dashed](0*1.200pt,0*1.200pt) -- (0*1.200pt,-20.5*1.200pt);
\node[draw,align=center] at (-8*1.200pt,-32*1.200pt) {\fontsize{11pt}{0pt}\selectfont One ball is the only\\[-3*1.200pt]\fontsize{11pt}{0pt}\selectfont local minimizer};
\filldraw[black] (0.2*150*1.200pt,0*1.200pt) circle (0.9*1.200pt);
\node at (0.2*150*1.200pt-2*1.200pt,-6*1.200pt) {$\fis_1$};
\draw [-{Stealth[length=5*1.200pt]},line width = 0.6*1.200pt,dashed](0.2*150*1.200pt,0*1.200pt) -- (0.2*150*1.200pt,20.5*1.200pt);
\node[draw, align=center] at (0.18*150*1.200pt,32*1.200pt) {\fontsize{11pt}{0pt}\selectfont Two-ball becomes\\[-3*1.200pt]\fontsize{11pt}{0pt}\selectfont a local minimizer};
\filldraw[black] (0.3512*150*1.200pt,0*1.200pt) circle (0.9*1.200pt);
\node at (0.3512*150*1.200pt-5*1.200pt,-6*1.200pt) {$\fis_2$};
\draw [-{Stealth[length=5*1.200pt]},line width = 0.6*1.200pt,dashed](0.3512*150*1.200pt,0*1.200pt) -- (0.3512*150*1.200pt+5*1.200pt,-19.7*1.200pt);
\node[draw, align=center] at (0.3512*150*1.200pt+22.5*1.200pt,-32*1.200pt) {\fontsize{11pt}{0pt}\selectfont Two-ball becomes\\[-3*1.200pt]\fontsize{11pt}{0pt}\selectfont a global minimizer};
\filldraw[black] (0.395*150*1.200pt,0*1.200pt) circle (0.9*1.200pt);
\node at (0.395*150*1.200pt+13*1.200pt,6*1.200pt) {$\fis_3$};
\draw [line width = 0.6*1.200pt,dashed](0.395*150*1.200pt+0.44*13.1*1.200pt,0.44*19.8*1.200pt) -- (0.395*150*1.200pt,0*1.200pt);
\draw [-{Stealth[length=5*1.200pt]},line width = 0.6*1.200pt,dashed](0.395*150*1.200pt+0.56*13.1*1.200pt,0.56*19.8*1.200pt) -- (0.395*150*1.200pt+13.1*1.200pt,19.8*1.200pt);
\node[draw, align=center] at (0.395*150*1.200pt+50*1.200pt,32*1.200pt) {\fontsize{11pt}{0pt}\selectfont Businaro\textendash Gallone\\[-3*1.200pt]\fontsize{11pt}{0pt}\selectfont point};
\filldraw[black] (0.972*150*1.200pt,0*1.200pt) circle (0.9*1.200pt);
\node at (0.972*150*1.200pt-7*1.200pt,-6*1.200pt) {$\fis_4$};
\draw [-{Stealth[length=5*1.200pt]},line width = 0.6*1.200pt,dashed](0.972*150*1.200pt,0*1.200pt) -- (0.972*150*1.200pt,-20.5*1.200pt);
\node[draw, align=center] at (0.972*150*1.200pt,-32*1.200pt) {\fontsize{11pt}{0pt}\selectfont Torus-like\\[-3*1.200pt]\fontsize{11pt}{0pt}\selectfont branch starts};
\filldraw[black] (150*1.200pt,0*1.200pt) circle (0.9*1.200pt);
\node at (150*1.200pt+9*1.200pt,6*1.200pt) {$\fis_*$};
\draw [-{Stealth[length=5*1.200pt]},line width = 0.6*1.200pt,dashed](150*1.200pt,0*1.200pt) -- (150*1.200pt+5*1.200pt,19.5*1.200pt);
\node[draw, align=center] at (150*1.200pt+38*1.200pt,32*1.200pt) {\fontsize{11pt}{0pt}\selectfont One ball loses\\[-3*1.200pt]\fontsize{11pt}{0pt}\selectfont local minimality};
\filldraw[black] (1.8*150*1.200pt,0*1.200pt) circle (0.9*1.200pt);
\node at (1.8*150*1.200pt,6*1.200pt) {$\fis_5$};
\draw [-{Stealth[length=5*1.200pt]},line width = 0.6*1.200pt,dashed](1.8*150*1.200pt,0*1.200pt) -- (1.8*150*1.200pt,-20.5*1.200pt);
\node[draw, align=center] at (1.8*150*1.200pt-10*1.200pt,-32*1.200pt) {\fontsize{11pt}{0pt}\selectfont Bohr\textendash Wheeler\\[-3*1.200pt]\fontsize{11pt}{0pt}\selectfont branch ends};
\node at (2*150*1.200pt,-7*1.200pt) {$\fis$};
\end{tikzpicture}
\caption{Several qualitative changes as $\fis$ increases.}
\label{several qualitative changes}
\end{figure}

So far we have only considered the 3-D case. The 2-D case is also interesting, since it corresponds to the cylindrical phase in the diblock copolymer systems \cite[Page 880]{ren2007many}. In 2-D, when a disk loses stability at $\fis=\fis_*$, we believe that it corresponds to a subcritical pitchfork bifurcation (unlike the transcritical bifurcation in 3-D). For $\fis$ slightly smaller than $\fis_*$, there is an intermediate "local minimizer" resembling a smooth connected non-convex Cassini oval, a shape that we refer to as an eye mask. The transition state from the eye mask to two disks resembles a lemniscate (see Section \ref{Analogue in 2-D}).

The rest of this paper is organized as follows. In Section \ref{Physics background and related studies} we compare the present work to related studies in the literature. In Section \ref{Preliminaries} we present our phase-field reformulation and numerical methods. In Section \ref{Simulation results} we present our numerical results in 3-D and 2-D. In Section \ref{Resistance to rupture in the classical isoperimetric problem} we study the periodic isoperimetric problem involving only the perimeter term. In Section \ref{Two touching balls (or disks)} we present asymptotic analysis of two touching balls (or disks) as $\fis\rightarrow0$. In Section \ref{Discussion} we conclude with remarks about future directions.

\section{Comparison with existing literature}

\label{Physics background and related studies}

In the literature, shape parametrizations are often empirically chosen and widely used in the study of the fission process. The nucleus is then restricted to a certain class of shapes, and the variational problem is reduced to an optimization problem in a finite dimension, e.g., 18 dimensions \cite[Page 408]{cohen1963deformation}. With such a reduction, Cohen and Swiatecki \cite[Figure 1a]{cohen1963deformation} obtained the portion $0.3\leqslant\fis\leqslant1$ of the Bohr\textendash Wheeler branch. However, it is questionable whether the chosen class of shapes can accurately approximate the true solution, especially for $\fis<0.28$ \cite[Page 409]{cohen1963deformation}.

In 2009, Ivanyuk and Pomorski \cite{ivanyuk2009shapes, ivanyuk2009optimal} managed to numerically solve the variational problem without relying on shape parameterizations. They assume the nucleus to be axisymmetric with a sharp interface, so that its shape can be described by the meridian of the surface of revolution. They then use iterations to solve the Euler\textendash Lagrange equation subject to the volume constraint. In order to study fission, they introduce another constraint that fixes the elongation. The elongation is a functional which characterizes how elongated a shape is. For a sequence of different values of elongation, they obtain a sequence of shapes, which they think can depict the fission process (because during fission, the nucleus should become more and more elongated, intuitively). Their results therefore depend on the definition of elongation \cite[Figure 3]{ivanyuk2009optimal}, although the dependence may be weak for good choices of definitions. However, the dependence becomes more noticeable when the elongation approaches its maximum, with the nucleus now resembling a dumbbell. Beyond this maximum elongation, no solution is obtained, rendering it difficult to study the later stages of fission (i.e., the rupture of the neck of the dumbbell and the separation of the nucleus into two fragments). As a workaround, Ivanyuk \cite[Figure 6-(b)]{ivanyuk2013scission} carries out a similar calculation for two separate fragments and then pieces together the entire fission process. However, the resulting fission curve is non-smooth and possibly inaccurate. In other words, near the scission point, their method cannot handle topological changes effectively, due to its sharp interface description of the nuclear shape. Note that the scission point is where the nucleus is going through topological changes and about to be split into two fragments, a terminology from \cite[Page 427]{cohen1963deformation} and \cite{nix1969further}. Despite its importance, the scission point is not well understood \cite{ivanyuk2013scission}. Furthermore, it seems difficult to adapt Ivanyuk's method to the case where the axisymmetric assumption is dropped.

In contrast, axisymmetry is not assumed in our phase-field simulations. With the nuclear shape described by a thin layer of diffuse interface, topological changes can be easily simulated. A similar attempt to numerically study the liquid drop model using the phase-field approach has already been made in \cite[Section 5]{generau2018large} for the 2-D case. Inevitably, we introduce an approximation error by changing a sharp interface problem into a diffuse interface problem, but our results suggest that the error is very small. Moreover, the sharp interface model itself is also an approximate description of the actual physics. For example, nuclear densities are said to have exponential tails rather than a sharp interface \cite[Page 6]{krappe2012theory} (see also \cite[Section 6]{ivanyuk2009shapes}). Our method can conveniently take into account the short (but nonzero) range of the nuclear force, by adjusting the diffuseness of the interface if necessary. We can simulate the entire fission process by calculating the minimum energy path, without the need to introduce an artificial definition of elongation. Our phase-field approach will have even more advantages when the shape is non-axisymmetric, e.g., if we also consider the angular momentum of the nucleus.

Businaro and Gallone in 1955, as well as Cohen and Swiatecki in 1962, stated that for $\fis>\fis_*$, the oblate-like stationary points on the Bohr\textendash Wheeler branch are unstable against non-axisymmetric perturbations \cite[Paragraph 2 in Page 211]{natarajan1987role}, but they did not provide evidence \cite[Paragraph 1 in Page 69]{tsamopoulos1985dynamics}. The present study provides convincing numerical evidence for this claim. It remains an open question to analytically prove that the oblate-like stationary points constructed by Frank \cite{frank2019non} are indeed unstable against non-axisymmetric perturbations. As a side note, a similar symmetry breaking phenomenon occurs in a slightly different problem \cite[Bottom of Page 225, Top of Page 218]{natarajan1987role}, where the liquid drop is assumed to be conductive and thus the charge is distributed only on the surface (although such a problem was later found to be ill-posed and thus regularization is needed \cite{muratov2016well}).

Nix \cite[Bottom of Page 265]{nix1969further} suspected that for $\fis\geqslant0.78$, the minimum energy path of fission may involve three fragments, which was not allowed by the parametrization that was used (with only six degrees of freedom). Since our phase-field method does not make use of axisymmetric assumptions or any shape parametrizations, our numerical results confirm that the fission paths are indeed axisymmetric, and that there is no such three-fragments fission.

In the literature, it has been noticed that there is a rapid (but continuous) change in the shape of the saddle point on the Bohr\textendash Wheeler branch around $\fis\approx0.67$ \cite[Abstract]{cohen1963deformation}, and our results in Figure \ref{bifurcation branch} seem to support this observation.

The Businaro\textendash Gallone point is related to mass-asymmetric fission of light and medium nuclei. For $\fis>\fis_3$, the fission can proceed in a mass-symmetric manner, from a ball to a dumbbell and eventually to two separate balls. In other words, every point along this transition path is mass-symmetric, and the transition state is an index-1 saddle point whose energy defines a threshold energy (a notion used in \cite[Page 144]{cohen1962deformation}, equivalent to activation energy in transition state theory). For $\fis<\fis_3$, the saddle point is of index 2, so its energy does not define a threshold energy, and the process tends to pick a mass-asymmetric path. In extremely mass-asymmetric cases, where a very light fragment (consisting of only a few nucleons) is emitted from the parent nucleus, the process is called spallation \cite[Page 144]{cohen1962deformation}. Below the Businaro\textendash Gallone point, it is thought that there is no longer a distinction between fission and spallation \cite[Paragraph 1 of Page 145]{cohen1962deformation}, because the mass-symmetric fission path is not energetically preferable to mass-asymmetric paths. Our numerical simulations confirm the above qualitative statements (see Sections \ref{Minimum energy paths of fission} and \ref{Businaro-Gallone point} as well as Appendix \ref{Simulations in 3-D under periodic boundary conditions}). Businaro and Gallone \cite{businaro1957asymmetric} estimated $\fis_3$ to be 0.47. Later on Cohen and Swiatecki \cite{cohen1963deformation} determined $\fis_3$ to be approximately 0.394, Nix \cite{nix1969further} and Thomas et al. \cite{thomas1985conditional} determined $\fis_3$ to be 0.396, and then Ivanyuk \cite[Page 519]{ivanyuk2010fission} determined $\fis_3$ to be approximately 0.4. Our estimation $\fis_3\approx0.395\pm0.001$ is consistent with those existing results. In the physics literature, it is unknown if there is any instability against mass-asymmetric perturbations during the descent from the saddle point to the scission point \cite[Middle of Page 242]{nix1969further}. Nix suggested that there is likely no such instability \cite[Bottom of Page 265]{nix1969further}. Our results in Section \ref{Minimum energy paths of fission} are consistent with Nix's speculation.

As shown at the top-left of Figure \ref{bifurcation branch}, our results demonstrate that when $\fis$ is small, the saddle point on the Bohr\textendash Wheeler branch resembles two equally large balls connected by a thin and short neck. We believe that it will converge to two equally large balls touching at a single point as $\fis\rightarrow0$, as conjectured by Bohr and Wheeler \cite[Figure 2-(a)]{bohr1939mechanism}. We perform asymptotic analysis in Section \ref{Two touching balls (or disks)} and calculate the asymptotic neck circumference as $\fis\rightarrow0$.

\section{Phase-field models and numerical methods}
\label{Preliminaries}


\subsection{Phase-field approach}
\label{Phase-field approach: the diffuse interface energy}

Phase-field methods are useful tools to quantitatively simulate interfaces and geometric motions such as those related to the microstructures found in materials or biological specimens, see \cite{du2020phase} for a recent review.  The basic idea is to use a narrow but diffuse interface in place of the sharp interface. The diffuse interface is described by a phase-field variable $\eta$ (also called an order parameter), which takes different values in different phases and has a rapid but smooth transition across interfaces from one phase to another, with a diffuseness parameter controlling the thickness of the interfacial layer. In phase-field models, there is no need to explicitly track the interfaces, since the governing equations are in a unified form throughout the domain of simulation. We reformulate the liquid drop energy \eqref{energy-liquid-drop-model} into a diffuse interface version. For a fixed diffuseness parameter which is small enough, we study the diffuse interface problem numerically and then infer similar properties about the sharp interface problem. Note that our diffuse interface formulation is very similar to the Ohta\textendash Kawasaki free energy \cite[Equation (6.37)]{choksi2006periodic}.



We consider the following phase-field reformulation of \eqref{energy-liquid-drop-model}
\begin{equation}
\label{diffuse interface version of energy}
\begin{aligned}
\tilde I(\eta):=&\sqrt{\frac6c}\int_D\!W\big(\eta(\vec x)\big)\dd{\vec x}+\sqrt{\frac{3c}2}\!\int_D\big|\nabla\eta(\vec x)\big|^2\dd{\vec x}+\frac{\tilde\gamma}2\!\int_D\int_D f\big(\eta(\vec x)\big)G(\vec x,\vec y)f\big(\eta(\vec y)\big)\dd{\vec x}\dd{\vec y}\\
&+\frac K2\Big(\int_Df\big(\eta(\vec x)\big)\dd{\vec x}-\omega|D|\Big)^2,
\end{aligned}
\end{equation}
where $D\subseteq\mathbb R^n$ is the simulation box ($n\in\{2,3\}$, since the 1-D case has already been solved in \cite{ren2000multiplicity}), $W(z):=3(z\!-\!z^2)^2$, the diffuseness parameter $0<c\ll1$ controls the interfacial thickness, the nonlocal coefficient $\tilde\gamma>0$ controls the strength of Coulomb interactions, $G$ is the Green's function of the negative Laplacian $-\Delta$ on $D$ with suitable boundary conditions, $f(z):=3z^2-2z^3$, and $K\gg1$ is the penalty coefficient for the volume constraint with $\omega$ being the volume fraction. As we will explain below \eqref{rescaled energy-liquid-drop-model}, in order to make clear the physical meaning of the coefficient $\tilde\gamma$ in terms of the fissility parameter, we use the following reparametrization of $\tilde\gamma$,
\begin{equation}
\label{reparametrization of nonlocal coefficient}
\tilde\gamma=\frac{40\pi\tdfis}{\omega|D|}\;\text{in 3-D\quad and}\quad\tilde\gamma=12\tdfis\Big(\frac\pi{\omega|D|}\Big)^{3/2}\;\text{in 2-D}.
\end{equation}

The double well potential $W$ has two minimum points at $z=0$ and $z=1$. In order for the first term of $\tilde I(\eta)$ to be minimized, $\eta$ should be equal to 0 or 1 almost everywhere. However, the second term of $\tilde I(\eta)$ penalizes rapid spatial variations in $\eta$. Since the coefficient $c$ is small, we expect a thin and smooth transition layer to separate the domains where $\eta\approx0$ and $\eta\approx1$, respectively.

The nonlinear function $f$ is a technique introduced in \cite{wang2019bubble} to overcome some numerical difficulties. Since it satisfies $f(0)=0$ and $f(1)=1$, and we expect $\eta\approx0$ or $\eta\approx1$ at most places in $D$, the function $f$ has roughly the same effect on $\tilde I(\eta)$ as the identity function. In addition, since we have $f'(0)=f'(1)=0$, the solution $\eta$ can exhibit the desired two-phases profile, which would have required a much smaller $c$ and thus a much finer grid if we used the identity function instead of $f$. See \cite{wang2019bubble} and \cite[Section 4.1]{xu2019energy} for more details.

For clarity let us for a moment replace $f$ by the identity function, i.e., $f(z):=z$. In this way we can see that the third term of $\tilde I$ resembles the Coulomb potential energy term of $I$ in \eqref{energy-liquid-drop-model}, and that the fourth term of $\tilde I$ is to penalize the violation of the volume constraint $\int_D\eta=\omega|D|$.

\subsection{Sharp interface limit}

As $K\rightarrow\infty$ and $c\rightarrow0$, we expect that the minimizer $\eta$ converges to some indicator function $\bm1_\Omega$, and that the diffuse interface energy \eqref{diffuse interface version of energy} $\Gamma$-converges to the sharp interface energy \eqref{energy-liquid-drop-model} up to some rescaling,
\begin{equation*}
\tilde I(\eta)\overset{\Gamma}{\longrightarrow}\,\text{Per}\,\Omega+\frac{\tilde\gamma}2\big\langle\bm1_\Omega,(-\Delta)^{-1}\bm1_\Omega\big\rangle,\quad\text{subject to}\;|\Omega|=\omega|D|,
\end{equation*}
where $\langle\,\cdot\,,\,\cdot\,\rangle$ represents the $L^2$ inner product, and $(-\Delta)^{-1}$ is equipped with the same boundary conditions as \eqref{diffuse interface version of energy}. In fact, a related $\Gamma$-convergence result is proved in \cite[Section 2]{ren2000multiplicity} under Neumann boundary conditions.

In the variational problem \eqref{energy-liquid-drop-model}, the nonlocal coefficient is fixed, and the only adjustable parameter is the volume of $\Omega$. Note that according to the scaling property \cite[Equation (2.1)]{julin2014isoperimetric}, adjusting the volume and adjusting the nonlocal coefficient are interchangeable, so we can fix the volume and adjust the nonlocal coefficient. More specifically, minimization of \eqref{energy-liquid-drop-model} is equivalent (upon rescaling) to minimizing
\begin{equation}
\label{rescaled energy-liquid-drop-model}
\text{Per}\,\Omega+\frac\gamma2\big\langle\bm1_\Omega,(-\Delta)^{-1}\bm1_\Omega\big\rangle,\quad\text{subject to}\;|\Omega|=\omega|D|,
\end{equation}
where $\gamma=40\pi\fis/(\omega|D|)$ in 3-D, $\gamma=12\fis(\omega|D|/\pi)^{-3/2}$ in 2-D, and $(-\Delta)^{-1}$ is equipped with no boundary conditions since \eqref{energy-liquid-drop-model} is posed on the whole space.

In order to make it convenient to compare our results with the existing results in the physics literature, we define a new parameter $\tdfis$ in \eqref{reparametrization of nonlocal coefficient} so that $\gamma/\fis=\tilde\gamma/\tdfis$. In other words, $\tdfis$ is the counterpart of $\fis$ in the diffuse interface setting. As mentioned before, in the sharp interface problem, a ball loses stability when $\fis=\fis_*$, where $\fis_*$ equals 1. However, we do not expect its counterpart $\tdfis_*$ in the diffuse interface setting to be exactly 1, due to the error introduced by the diffuse interfaces. Nevertheless, we do expect $\tdfis_*$ to converge to 1 as $K\rightarrow\infty$ and $c\rightarrow0$.

\subsection{Energy minimization}

In order to find the local minimizers of $\tilde I$, we use the following $L^2$ gradient flow, which is also called the penalized Allen\textendash Cahn\textendash Ohta\textendash Kawasaki (pACOK) dynamics \cite[Equation (1.13)]{xu2019energy}:
\begin{equation}
\label{Allen-Cahn dynamics}
\frac{\partial\eta}{\partial t}=-\frac{\delta\tilde I(\eta)}{\delta \eta}=-\sqrt{6/c}\,W'(\eta)+\sqrt{6c}\,\Delta\hspace{0.5pt}\eta+\tilde\gamma f'(\eta)\,\Delta^{-1}f(\eta)-Kf'(\eta)\big\langle f(\eta)\!-\!\omega,\,1\big\rangle,
\end{equation}
with a given initial value of $\eta$ at $t=0$, where $t$ is the time variable, and $\Delta^{-1}$ is equipped with the same boundary conditions as \eqref{diffuse interface version of energy}. It is easy to see $W'(z) = 12z^3\!-\!18z^2\!+\!6z$ and $f'(z)=6z\!-\!6z^2$. Therefore any stationary point of $\tilde I$ should satisfy
\begin{equation}
\label{stationary point diffuse interface}
\sqrt{6/c}\,W'(\eta)-\sqrt{6c}\,\Delta\hspace{0.5pt}\eta-\tilde\gamma f'(\eta)\,\Delta^{-1}f(\eta)=-Kf'(\eta)\big\langle f(\eta)\!-\!\omega,\,1\big\rangle.
\end{equation}
Analogously, any stationary point of the sharp interface energy \eqref{rescaled energy-liquid-drop-model} satisfies the following equation \cite[Equation (1.1)]{ren2008spherical} (see also \cite[Equation (6)]{choksi2017old})
\begin{equation}
\label{stationary point sharp interface}
(n\!-\!1)H+\gamma\phi=\lambda\;\text{on}\;\partial\Omega,\quad\text{with}\;\phi:=(-\Delta)^{-1}\bm1_\Omega,
\end{equation}
where $n$ is the spatial dimension, $H$ is the mean curvature of $\partial\Omega$ (mean of principal curvatures, nonnegative if $\Omega$ is convex), $\lambda$ is the Lagrange multiplier, and $(-\Delta)^{-1}$ is equipped with no boundary conditions. In Appendix \ref{Formal derivation of Euler-Lagrange equation from diffuse to sharp interface settings}, we will formally derive \eqref{stationary point sharp interface} from \eqref{stationary point diffuse interface}.

\subsection{Numerical discretization methods}
\label{Numerical methods}

For the time discretization of \eqref{Allen-Cahn dynamics}, we use a semi-implicit time-marching scheme (see, e.g., \cite[Section 2.2]{chen1998applications}) to improve the numerical stability for larger time steps without significantly increasing the computational cost during each time step. We adopt the following scheme that treats linear terms implicitly and nonlinear terms explicitly
\begin{equation}
\label{semi-implicit scheme}
\frac{\eta_{i+1}-\eta_i}{\Delta t}=-\sqrt{\frac6c}\big(W'_1(\eta_{i+1})+W'_2(\eta_i)\big)+\sqrt{6c}\,\Delta\hspace{0.5pt}\eta_{i+1}+\tilde\gamma\big(g_1(\eta_{i+1})+g_2(\eta_i)\big)-K\big(h_1(\eta_{i+1})+h_2(\eta_i)\big),
\end{equation}
where\;\;$g_1(\eta):=\Delta^{-1}\eta$,\;\;$g_2(\eta):=f'(\eta)\,\Delta^{-1}f(\eta)-\Delta^{-1}\eta$,\;\;$h_1(\eta):=\langle\eta\!-\!\omega,\,1\rangle$,\;\;$h_2(\eta):=f'(\eta)\big\langle f(\eta)\!-\!\omega,\,1\big\rangle-\langle\eta\!-\!\omega,\,1\rangle$,\;\;$W_1(z) := 5z^2$,\;\;and\;\;$W_2(z) := 3(z\!-\!z^2)^2 - 5z^2$.

We choose $W_1+W_2$ to be a convex splitting of $W$ (see, e.g., \cite[Page 476]{shen2019new}). It is easy to verify that $W_1$ is convex, and that $W_2$ is concave on the interval $z\in[-0.1,1.1]$. During the time-marching, $W_1$ and $W_2$ will be treated implicitly and explicitly, respectively. Since $W$ is a double well potential whose minimizers are 0 and 1, as long as the initial value satisfies $0\leqslant\eta(\,\cdot\,,0)\leqslant1$, in most of our numerical experiments we observe $-0.1\leqslant\eta(\,\cdot\,,t)\leqslant1.1$ for any $t>0$. Therefore, although $W_2$ is not concave outside the interval $[-0.1,1.1]$, the stability should not be compromised.

The above convex splitting scheme should be unconditionally stable if $f$ was linear. However, because $f$ is chosen to be a nonlinear function, we try another technique to improve the stability, inspired by \cite[Section 2.2]{xu2019energy}. In the third term of $\tilde I(\eta)$, we consider the following way of splitting
\begin{equation*}
\big\langle f(\eta),(-\Delta)^{-1}f(\eta)\big\rangle=\big\langle\eta,(-\Delta)^{-1}\eta\big\rangle+\Big(\big\langle f(\eta),(-\Delta)^{-1}f(\eta)\big\rangle-\big\langle\eta,(-\Delta)^{-1}\eta\big\rangle\Big),
\end{equation*}
where the first and second summands will be treated implicitly and explicitly, giving rise to $g_1$ and $g_2$, respectively. Analogously, in the fourth term of $\tilde I(\eta)$, we have
\begin{equation*}
\big\langle f(\eta)\!-\!\omega,\,1\big\rangle^2=\langle\eta\!-\!\omega,\,1\rangle^2+\Big(\big\langle f(\eta)\!-\!\omega,\,1\big\rangle^2-\langle\eta\!-\!\omega,\,1\rangle^2\Big),
\end{equation*}
where the first and second summands will be treated implicitly and explicitly, giving rise to $h_1$ and $h_2$, respectively.

In order to study the relevant transition path, we use the string method \cite{weinan2007simplified} to find the  minimum energy path. More specifically, given two local minimizers (denoted by $\eta_1$ and $\eta_m$, respectively), we construct an initial string of nodes $\{\eta_i\}_{i=1}^m$ by using, for example, linear interpolation between $\eta_1$ and $\eta_m$. Then we apply one step of gradient descent \eqref{Allen-Cahn dynamics} to each $\eta_i$ and obtain $\eta_i^\star$. After each gradient descent step, the nodes are no longer uniformly distributed along the string, so we perform linear interpolation/reparametrization of the string by equal arc length \cite[Section IV-B]{weinan2007simplified} (we can also locally refine the string by using non-uniform arc length). Iterating the above procedure until convergence, we obtain the minimum energy path. The node of the highest energy among $\{\eta_i\}_{i=1}^m$ can be used to approximate the saddle point. To obtain the saddle point more accurately, we apply the shrinking dimer method \cite{zhang2012shrinking} (with the $\alpha$ in \cite[Equation (2.1)]{zhang2012shrinking} chosen to be 1). We let the dimer length shrink to a small but nonzero value to avoid possible numerical roundoff errors.

We now discuss the spatial discretizations. We start with the boundary conditions of $D$. On the one hand, we always use periodic boundary conditions for the gradient term in \eqref{diffuse interface version of energy} (i.e., the Laplacian term in \eqref{semi-implicit scheme}), so that we can apply the Fourier spectral method described in \cite[Section 2.2]{chen1998applications}. On the other hand, for the Green's function $G$, in this paper we will consider two types of boundary conditions: periodic boundary conditions (the periodic Green's function), and no boundary conditions (the Green's function in the free space). Therefore unless otherwise specified, by boundary conditions we refer to the boundary conditions for $G$. When we choose no boundary conditions, as long as $\eta$ almost vanishes near the boundaries of $D$, we can treat $\eta$ as a periodic function on $D$, and thus periodic boundary conditions can be used for the gradient term in \eqref{diffuse interface version of energy} without any trouble. Hence in our simulations under no boundary conditions, we need to ensure that the liquid drop stays clear of the boundaries of $D$, by choosing suitable initial values or choosing a larger $D$ if necessary.


Under periodic boundary conditions, to obtain $\phi=(-\Delta)^{-1}\eta$, i.e., to solve $-\Delta\phi=\eta$, we use the Fourier spectral method, see, e.g., \cite[upper half of Page 1670]{xu2019energy}. Note that $\eta$ must satisfy a compatibility condition $\int_D\eta=0$ in order for $\phi$ to exist. However, for the sake of brevity we abbreviate $\phi=(-\Delta)^{-1}\big(\eta-\fint_D\eta\big)$ to $\phi=(-\Delta)^{-1}\eta$, and we require $\int_D\phi=0$. In other words, our notation $(-\Delta)^{-1}\eta$ refers to $\int_DG(\vec x,\vec y)\,\eta(\vec y)\dd{\vec y}$, where we require $\int_DG(\,\cdot\,,\vec y)=0$ (for any $\vec y\in D$) but not necessarily $\int_D\eta=0$.

The Green's function $G$ under periodic boundary conditions and the fundamental solution given in \eqref{energy-liquid-drop-model} are different by the addition of a regular part. See \cite[Equations (2.1) and (2.2)]{choksi2010small}; see also \cite[Equation (2.1)]{ren2009oval} and \cite[Equation (2.1)]{ren2008spherical}. Intuitively, near its singularity, the fundamental solution dominates the regular part. Therefore as $\omega\rightarrow0$, we expect $G$ to asymptotically capture the Coulomb potential in \eqref{energy-liquid-drop-model}. In fact, similar $\Gamma$-convergence results have been obtained in \cite[Theorems 4.3 and 6.1]{choksi2010small}. In our numerical simulations, however, $\omega$ cannot be made too small, otherwise we would need to choose a very small $c$ and a very fine grid, resulting in much higher computational costs. Thus, when using periodic boundary conditions, we choose $\omega=0.04$ in 3-D and $\omega=0.1$ in 2-D, although with such choices, $G$ might not accurately approximate the Coulomb potential in \eqref{energy-liquid-drop-model}. Nevertheless, the results under periodic boundary conditions can be compared to those under no boundary conditions and offer us a better understanding of the problem. Moreover, the Green's function under periodic boundary conditions serves as an interesting example of the generalization of \eqref{energy-liquid-drop-model} to the cases where $G$ is a general nonlocal kernel.


Under no boundary conditions (the free space case), the Green's function is the fundamental solution, and thus a larger $\omega$ can be used. For example, we can choose $\omega=0.09$ in 3-D, and $\omega=0.15$ in 2-D. In our numerical scheme \eqref{semi-implicit scheme}, when no boundary conditions are adopted, they are only implemented for the first $\Delta^{-1}$ operator in $g_2$, while periodic boundary conditions are used for other occurrences of $\Delta^{-1}$ for convenience. In the literature \cite{jiang2014fast} there are some involved algorithms to calculate the convolution with a singular kernel, but here we use a simple approach. In order to calculate the convolution $\phi=G*\eta$, we use a piecewise constant interpolation of $\eta$, and then we compute a discrete convolution using fast Fourier transforms. Below we present our algorithms in 2-D and 3-D. In Appendix \ref{numerical convergence of G*eta} we present a numerical test to demonstrate that our algorithm, when used to solve Poisson's equation in the free space, is of quadratic convergence. A complete error analysis of the discretization schemes used in our phase-field model is beyond the scope of this paper, and we leave it to future works.

\paragraph*{The 2-D case}

We define the following double integral as $H(x,y)$:
\begin{equation*}
\int\!\dd{y}\!\int\!\ln\big(x^2\!+\!y^2\big)\dd{x}
=
xy\Big(\ln\big(x^2\!+\!y^2\big)-3\Big)+x^2\arctan\Big(\frac{y}{x}\Big)+y^2\arctan\Big(\frac{x}{y}\Big),
\end{equation*}
where the right-hand side is obtained with the help of WOLFRAM MATHEMATICA. On a uniform grid $\{x_i, y_j\}$ with a grid spacing $h$, we define a discretized version of $G$, denoted by $G_h$, as follows
\begin{equation*}
\begin{aligned}
-4\pi G_h(x_i,y_j)&:=\int_{y_{j-\frac12}}^{y_{j+\frac12}}\int_{x_{i-\frac12}}^{x_{i+\frac12}}\ln\big(x^2\!+\!y^2\big)\dd{x}\dd{y}\\
&\;=H\big(x_{i-\frac12},y_{j-\frac12}\big)-H\big(x_{i-\frac12},y_{j+\frac12}\big)-H\big(x_{i+\frac12},y_{j-\frac12}\big)+H\big(x_{i+\frac12},y_{j+\frac12}\big).
\end{aligned}
\end{equation*}
where $x_{i+\frac12}:=(x_i+x_{i+1})/2$ (similarly for $y_{j+\frac12}$). We then use the following $\phi_h$ to approximate $\phi=(-\Delta)^{-1}\eta$:
\begin{equation*}
\phi_h(x_i,y_j) = \sum_{k,\,l}\eta(x_k,y_l)\,G_h(x_{i-k},y_{j-l}),
\end{equation*}
which is a convolution and can be efficiently computed using discrete Fourier transforms (see Appendix \ref{algorithm 2-D Poissons equation}).

\paragraph*{The 3-D case}

In the 3-D case, similarly we have
\begin{equation*}
\begin{aligned}
\int\!\dd{z}\!\int\!\dd{y}\!\int\!\frac{\dd{x}}{\sqrt{x^2\!+\!y^2\!+\!z^2}}
=&\;x z \ln (r\!+\!y)+x y \ln (r\!+\!z)+y z \ln (r\!+\!x)\\
&-\frac{x^2}{2}\arctan\Big(\frac{y z}{r x}\Big)-\frac{y^2}{2} \arctan\Big(\frac{x z}{r y}\Big)-\frac{z^2}{2} \arctan\Big(\frac{x y}{r z}\Big),
\end{aligned}
\end{equation*}
where $r=\sqrt{x^2\!+\!y^2\!+\!z^2}$. The code is given in Appendix \ref{algorithm 3-D Poissons equation}.

\subsection{Implementation details}

In all our simulations in 3-D, we let $c=2\times10^{-4}$ and $K=2\sqrt{3}\times10^5$. In most of our simulations in 3-D, unless otherwise specified, the simulation domain $D$ is chosen to be $[0,1]^3$ and discretized into $128\times128\times128$ uniform grid points, where $128$ is a power of $2$ so that the fast Fourier transform can work efficiently.

In all of our simulations in 2-D, we let $c=10^{-4}$ and $K=2\sqrt{6}\times10^5$, and the simulation domain is chosen to be $[0,1]^2$ and discretized into $256\times256$ uniform grid points.

We accelerate the computation with the GPU support of MATLAB. The longest simulation runs no more than a few days on an NVIDIA Tesla P100 GPU. We have conducted refined computations to ensure that those choices of parameters give satisfactory accuracies.

\section{Simulation results: bifurcation and fission}
\label{Simulation results}

In this section, we present our numerical simulations of the diffuse interface energy \eqref{diffuse interface version of energy}. In all the figures throughout this paper, if the vertical axis is labeled as "Energy", then it refers to $\tilde I$; if the horizontal axis is labeled as "Time", then it refers to $t$; if the horizontal axis is labeled as "Node", then it refers to the arc length in the string method. Throughout this paper, we visualize the 3-D numerical results using the MATLAB command \texttt{isosurface($\eta$,1/2)}, where \texttt{1/2} is the \texttt{isovalue}. As mentioned in \eqref{reparametrization of nonlocal coefficient}, the parameter $\tdfis$ refers to  $\tilde\gamma\omega|D|/(40\pi)$ in 3-D and $\tilde\gamma\big(\omega|D|/\pi\big)^{3/2}/12$ in 2-D. We focus on the 3-D case under no boundary conditions (with $\omega|D|=0.09$), and then briefly present analogous results in 2-D under no boundary conditions (with $\omega|D|=0.15$). For analogous results under periodic boundary conditions, see Appendices \ref{Simulations in 3-D under periodic boundary conditions} and \ref{Simulations in 2-D under periodic boundary conditions}. Roughly speaking, our simulation results are organized in descending order of $\tdfis$.

\subsection{Symmetry breaking of oblate-like equilibria}
\label{Symmetry breaking of oblate-like equilibria}

In this subsection, we numerical verify that the oblate-like equilibria on the Bohr\textendash Wheeler branch are unstable against non-axisymmetric perturbations. This claim was stated by Businaro and Gallone \cite[Page 631]{businaro1955interpretation} as well as Cohen and Swiatecki \cite[Caption of Figure 39]{cohen1962deformation}, but was not given any evidence.

We numerically study the pACOK dynamics \eqref{Allen-Cahn dynamics}, with the initial value $\eta(\,\cdot\,,0)$ resembling the indicator function of a ball. According to our simulation results, there is a critical value $\tdfis_*\approx0.997$. For any $\tdfis>\tdfis_*$, there exists a time $t>0$ such that $\eta(\,\cdot\,,t)$ resembles the indicator function of two (or even more than two when $\tdfis$ is too large) separate balls. For any $\tdfis<\tdfis_*$, we observe that $\eta(\,\cdot\,,t)$ resembles the indicator function of a ball for any $t>0$ and converges numerically to machine precision as $t\rightarrow\infty$. In other words, $\tdfis_*$ is where the ball loses stability in our simulations and becomes prone to fission. For the sharp interface problem \eqref{energy-liquid-drop-model}, the critical value is known to be $\fis_*=1$. The relative error between $\tdfis_*$ and $\fis_*$ is $-0.3\%$, which could be attributed to the finitely small $c$, finitely large $K$, and finite numerical discretizations.

For $\tdfis>\tdfis_*$, we obtained the equilibria on the Bohr\textendash Wheeler branch as shown in Figure \ref{bifurcation branch}. The equilibrium resembles an oblate when $\tdfis$ is slightly larger than $\tdfis_*$, and gradually changes to a discocyte-like shape as $\tdfis$ increases. Those equilibria are stable against axisymmetric perturbations, but unstable against non-axisymmetric perturbations. In order to overcome their instability against non-axisymmetric perturbations and numerically compute those equilibria, we use the pACOK dynamics \eqref{Allen-Cahn dynamics} with the initial value $\eta(\,\cdot\,,0)$ resembling the indicator function of an oblate, and we employ the following technique to maintain the axisymmetry of the shape that $\eta(\,\cdot\,,t)$ represents: rotate it by $45^\circ,\,90^\circ,\,135^\circ,\,\cdots,\,360^\circ$, then take the average in the sense of superposition. In the rotation step, we need to use interpolation with zero-padding, similar to rotating an RGB image made up of a matrix of pixels. Such a symmetrization procedure only needs to be invoked every 500 time steps in the pACOK dynamics in order to inhibit non-axisymmetric deformations.

To illustrate that the oblate-like equilibria are indeed unstable against non-axisymmetric perturbations, we carry out the following two numerical experiments using the pACOK dynamics \eqref{Allen-Cahn dynamics}, with the initial values being the oblate-like equilibria that we obtained above. In the first numerical experiment, we choose $\tdfis=1.116$ where the oblate-like equilibrium is noticeably flattened, and our goal is to demonstrate how symmetry breaking occurs. In the second numerical experiment, we choose $\tdfis=1.004$ where the oblate-like equilibrium is very close to a ball, and our goal is to demonstrate that it is still unstable against non-axisymmetric perturbations, even when $\tdfis$ is just slightly larger than $\tdfis_*$.

The first numerical experiment is shown in Figure \ref{symmetry breaking gamma=9}. In order to simulate the later stages (where the two fragments repel each other farther away) without the fragments reaching the boundaries of the simulation box, we carry out the simulation in a larger box $[-1/2,\,3/2]^2\times[0,1]$ instead of $[0,1]^3$. To make Figure \ref{symmetry breaking gamma=9} more compact, however, we use $[0,1]^3$ as the bounding box indicated by the straight black line segments. The first five shapes lie inside $[0,1]^3$, whereas the last four shapes do not. In particular, the last shape sits completely outside $[0,1]^3$, although this spatial relation may not be visually intuitive at first glance. When viewed from above (along the axis of revolution), the 2-D projection of such an oblate-like equilibrium tends to change from a disk to an ellipse-like shape (symmetry breaking), and eventually separate into two disconnected disk-like components.

\begin{figure}[H]
\centering
\includegraphics[width=500pt]{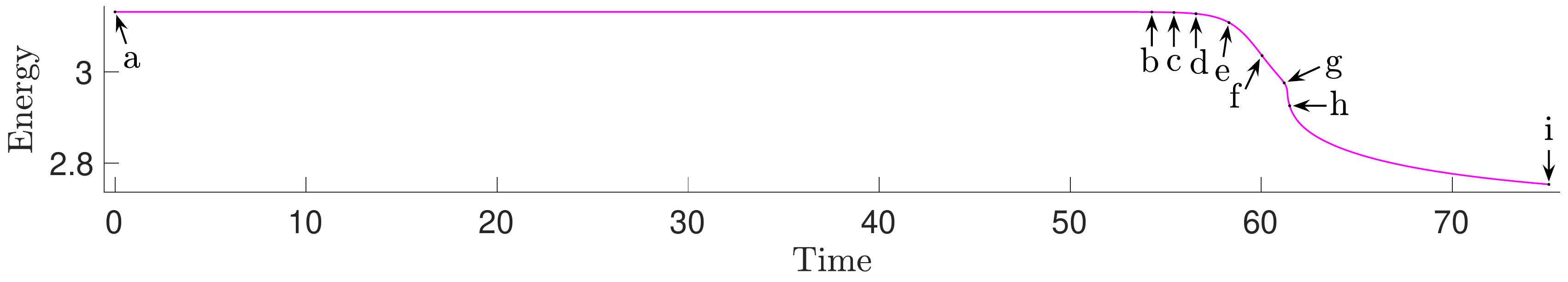}

$\overset{\includegraphics[width=58pt]{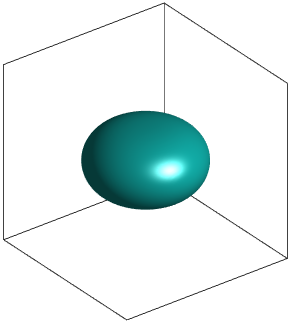}}{\begin{minipage}[t][\height][b]{24pt}\hspace{7pt}a\end{minipage}}$
$\overset{\includegraphics[width=58pt]{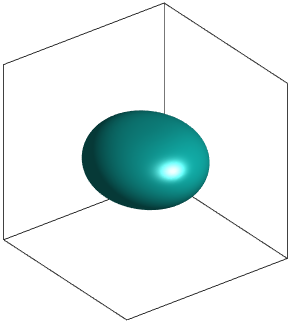}}{\begin{minipage}[t][\height][b]{24pt}\hspace{7pt}b\end{minipage}}$
$\overset{\includegraphics[width=58pt]{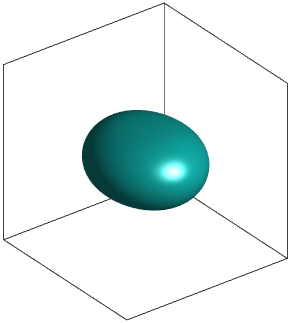}}{\begin{minipage}[t][\height][b]{24pt}\hspace{7pt}c\end{minipage}}$
$\overset{\includegraphics[width=58pt]{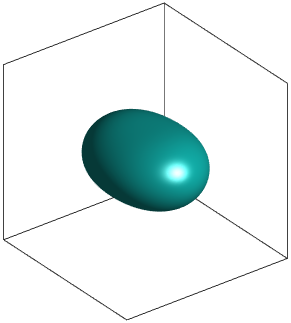}}{\begin{minipage}[t][\height][b]{24pt}\hspace{7pt}d\end{minipage}}$
$\overset{\includegraphics[width=58pt]{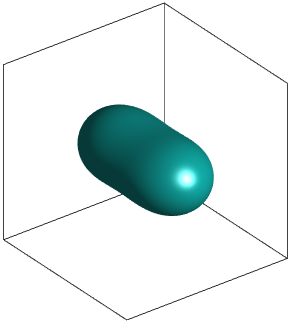}}{\begin{minipage}[t][\height][b]{24pt}\hspace{7pt}e\end{minipage}}$

$\overset{\includegraphics[width=58pt]{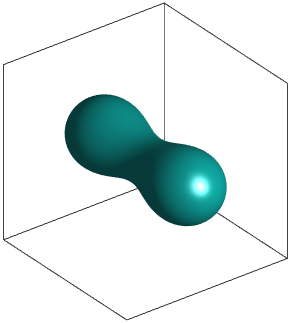}}{\begin{minipage}[t][\height][b]{24pt}\hspace{7pt}f\end{minipage}}$
$\overset{\includegraphics[width=58pt]{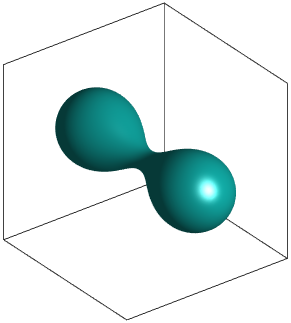}}{\begin{minipage}[t][\height][b]{24pt}\hspace{7pt}g\end{minipage}}$
$\overset{\includegraphics[width=58pt]{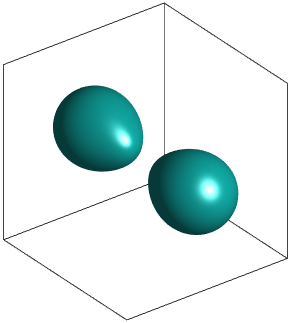}}{\begin{minipage}[t][\height][b]{24pt}\hspace{7pt}h\end{minipage}}$
$\overset{\includegraphics[width=58pt]{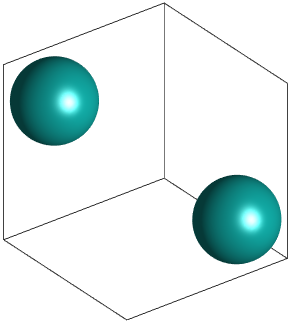}}{\begin{minipage}[t][\height][b]{24pt}\hspace{7pt}i\end{minipage}}$
\caption{pACOK dynamics for $\tdfis=1.116$.}
\label{symmetry breaking gamma=9}
\end{figure}

The second numerical experiment is shown in Figure \ref{symmetry breaking gamma=8.09}. The simulation box is $[0,1]^3$, therefore we can only simulate the stages up to f, after which the shape will reach the boundaries of the simulation box. Note that the initial value (represented by a) is an oblate-like stationary point (whose axis of revolution is a vertical line) with a very tiny non-axisymmetric perturbation. After a long time, the perturbation grows exponentially large, resulting in a dramatic deformation, thus demonstrating the instability of the oblate-like equilibrium for a $\tdfis$ that is slightly larger than $\tdfis_*$.

\begin{figure}[H]
\centering
\includegraphics[width=396.735pt]{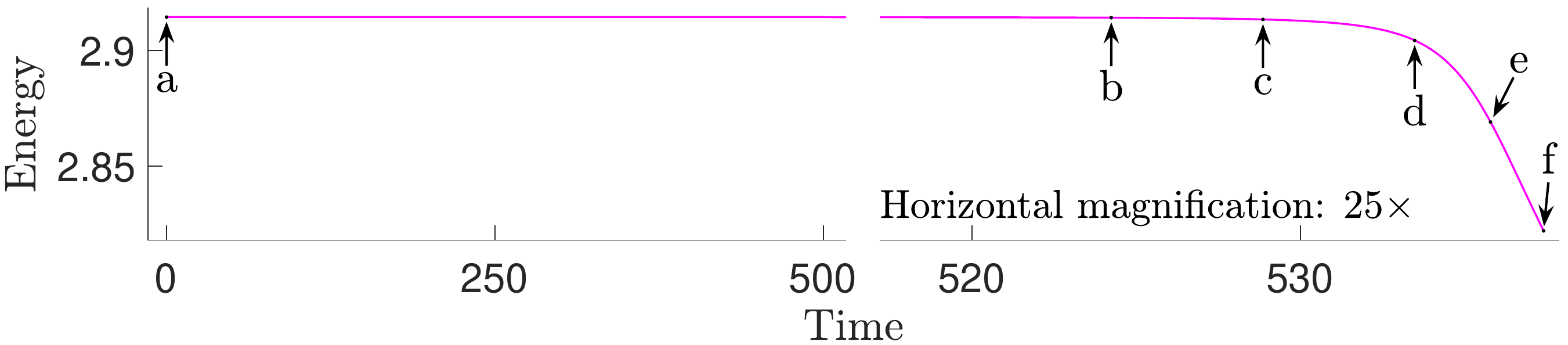}

$\overset{\includegraphics[width=58pt]{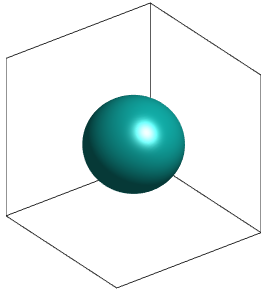}}{\begin{minipage}[t][\height][b]{24pt}\hspace{7pt}a\end{minipage}}$
$\overset{\includegraphics[width=58pt]{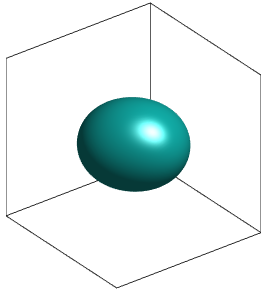}}{\begin{minipage}[t][\height][b]{24pt}\hspace{7pt}b\end{minipage}}$
$\overset{\includegraphics[width=58pt]{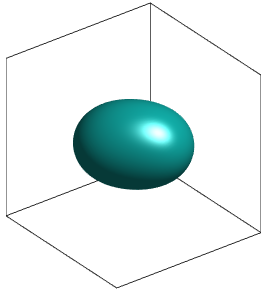}}{\begin{minipage}[t][\height][b]{24pt}\hspace{7pt}c\end{minipage}}$
$\overset{\includegraphics[width=58pt]{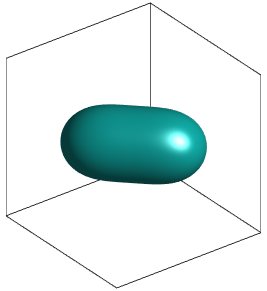}}{\begin{minipage}[t][\height][b]{24pt}\hspace{7pt}d\end{minipage}}$
$\overset{\includegraphics[width=58pt]{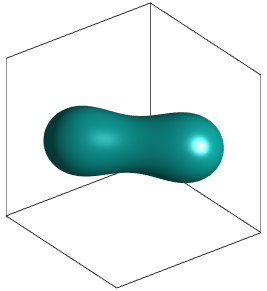}}{\begin{minipage}[t][\height][b]{24pt}\hspace{7pt}e\end{minipage}}$
$\overset{\includegraphics[width=58pt]{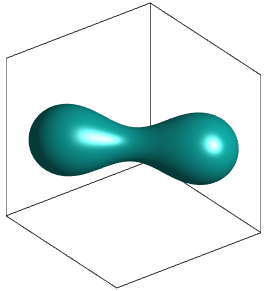}}{\begin{minipage}[t][\height][b]{24pt}\hspace{7pt}f\end{minipage}}$
\caption{pACOK dynamics for $\tdfis=1.004$.}
\label{symmetry breaking gamma=8.09}
\end{figure}

\subsection{Minimum energy paths of fission}
\label{Minimum energy paths of fission}

In this subsection, we obtain the minimum energy paths of fission for $\tdfis<\tdfis_*$, with the help of the string method. Complete fission paths have not been obtained in the literature before. Difficulties have been encountered in calculating the fission path near the scission point \cite{ivanyuk2013scission}. The reason we suspect is that topological changes cannot be easily handled by the sharp interface formulation in \cite{ivanyuk2013scission}. In contrast, topological changes can be easily handled in our phase-field formulation, and we are able to obtain the complete fission paths shown in Figure \ref{Minimum energy paths of fission for various fissility parameters}. We omit the latter part of the string where the two fragments reach the boundaries of the simulation box. The saddle points are indicated by b, c, d, e, f, g and h. The scission points are indicated by i, j, k, l, m, n and o. The details can be found in Appendix \ref{Simulations in 3-D under no boundary conditions}.
\begin{figure}[H]
\centering
\includegraphics[width=155pt]{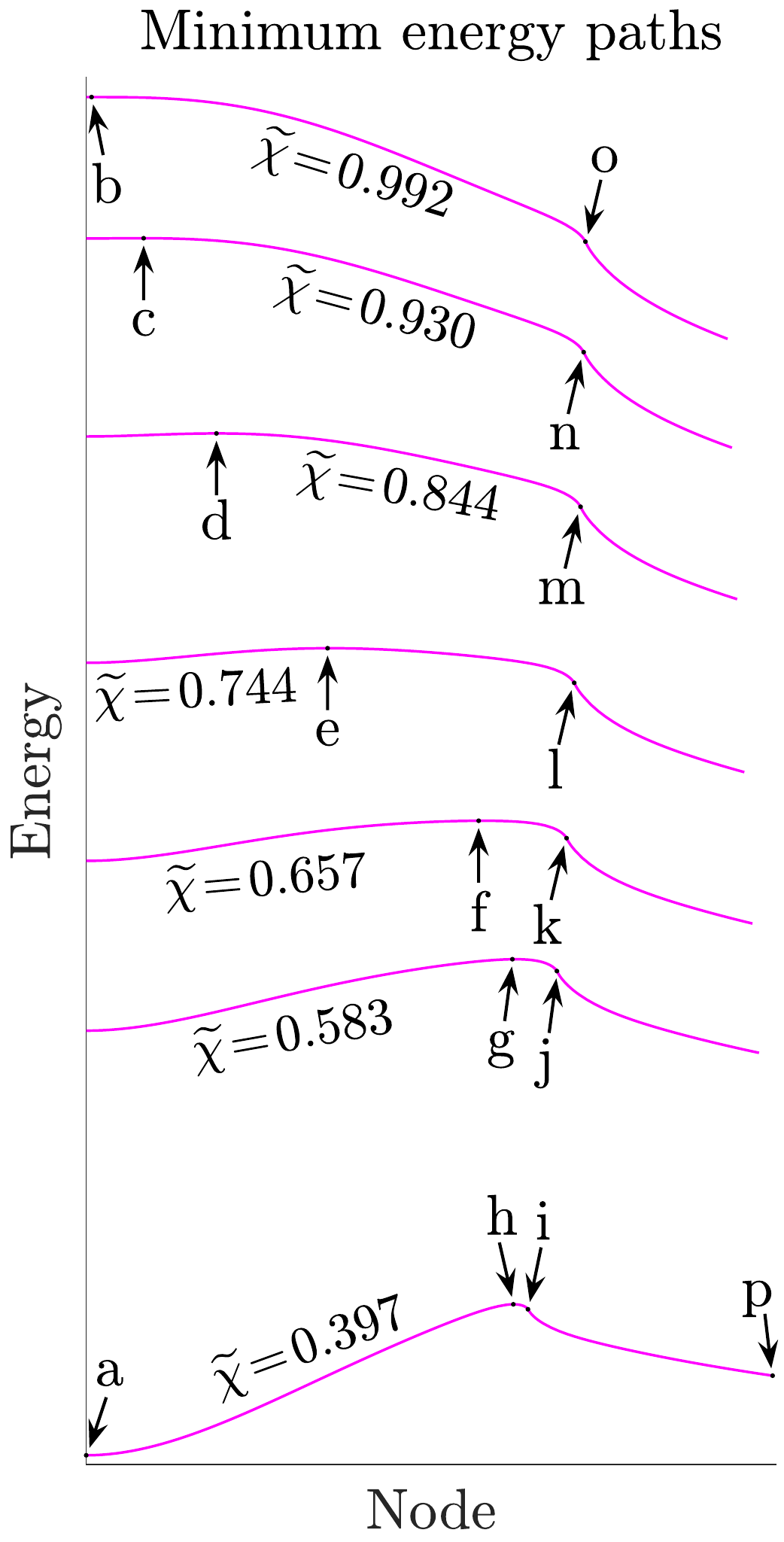}
\hspace{2pt}
\begin{minipage}[b][305pt][t]{270pt}

\hspace{-15pt}
$\overset{\includegraphics[width=58pt]{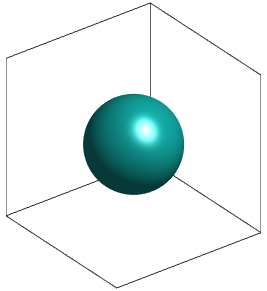}}{\begin{minipage}[t][\height][b]{24pt}\hspace{7pt}a\end{minipage}}$
$\overset{\includegraphics[width=58pt]{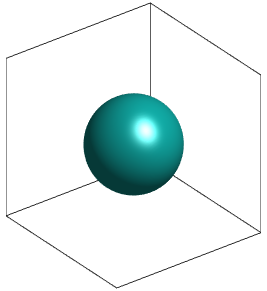}}{\begin{minipage}[t][\height][b]{24pt}\hspace{7pt}b\end{minipage}}$
$\overset{\includegraphics[width=58pt]{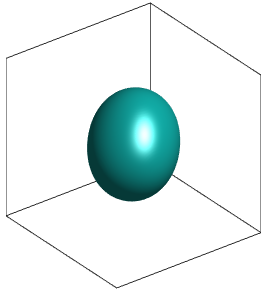}}{\begin{minipage}[t][\height][b]{24pt}\hspace{7pt}c\end{minipage}}$
$\overset{\includegraphics[width=58pt]{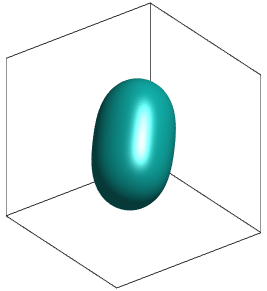}}{\begin{minipage}[t][\height][b]{24pt}\hspace{7pt}d\end{minipage}}$

\hspace{15pt}
$\overset{\includegraphics[width=58pt]{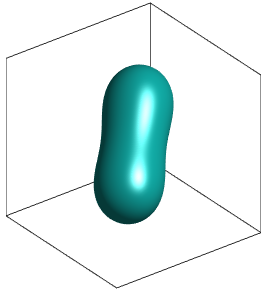}}{\begin{minipage}[t][\height][b]{24pt}\hspace{7pt}e\end{minipage}}$
$\overset{\includegraphics[width=58pt]{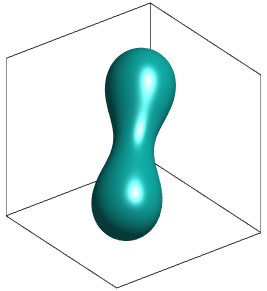}}{\begin{minipage}[t][\height][b]{24pt}\hspace{7pt}f\end{minipage}}$
$\overset{\includegraphics[width=58pt]{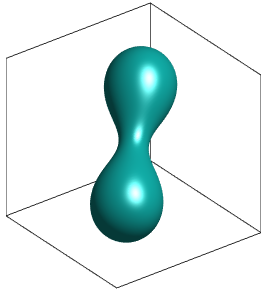}}{\begin{minipage}[t][\height][b]{24pt}\hspace{7pt}g\end{minipage}}$
$\overset{\includegraphics[width=58pt]{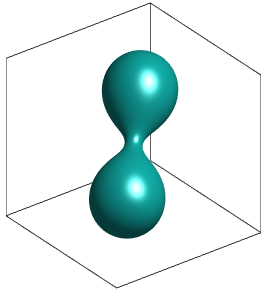}}{\begin{minipage}[t][\height][b]{24pt}\hspace{7pt}h\end{minipage}}$

\hspace{-15pt}
$\overset{\includegraphics[width=58pt]{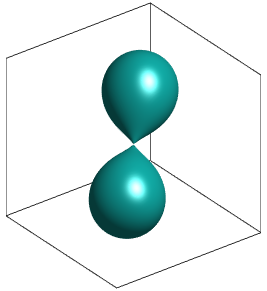}}{\begin{minipage}[t][\height][b]{24pt}\hspace{7pt}i\end{minipage}}$
$\overset{\includegraphics[width=58pt]{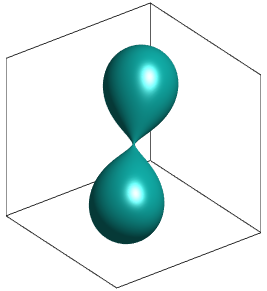}}{\begin{minipage}[t][\height][b]{24pt}\hspace{7pt}j\end{minipage}}$
$\overset{\includegraphics[width=58pt]{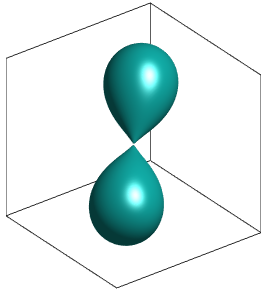}}{\begin{minipage}[t][\height][b]{24pt}\hspace{7pt}k\end{minipage}}$
$\overset{\includegraphics[width=58pt]{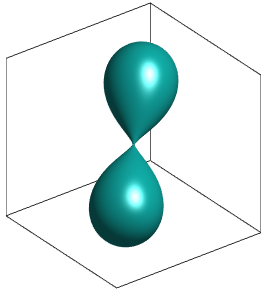}}{\begin{minipage}[t][\height][b]{24pt}\hspace{7pt}l\end{minipage}}$

\hspace{15pt}
$\overset{\includegraphics[width=58pt]{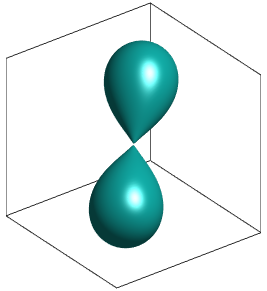}}{\begin{minipage}[t][\height][b]{24pt}\hspace{7pt}m\end{minipage}}$
$\overset{\includegraphics[width=58pt]{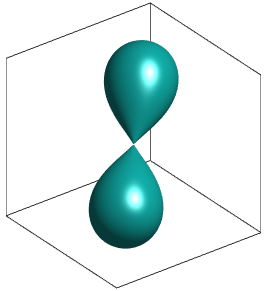}}{\begin{minipage}[t][\height][b]{24pt}\hspace{7pt}n\end{minipage}}$
$\overset{\includegraphics[width=58pt]{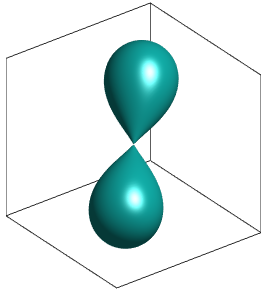}}{\begin{minipage}[t][\height][b]{24pt}\hspace{7pt}o\end{minipage}}$
$\overset{\includegraphics[width=58pt]{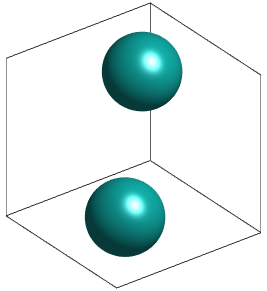}}{\begin{minipage}[t][\height][b]{24pt}\hspace{7pt}p\end{minipage}}$

\end{minipage}
\caption{Minimum energy paths of fission for various fissility parameters.}
\label{Minimum energy paths of fission for various fissility parameters}
\end{figure}
As we will discuss in Section \ref{Businaro-Gallone point}, the saddle point loses stability against mass-asymmetric perturbations when $\tdfis$ is below a threshold $\tdfis_3$. For $\fis\in(\fis_3,\fis_*)$, it is unknown in the physics literature if there is instability against mass-asymmetric perturbations during the descent from the saddle point to the scission point \cite[Middle of Page 242]{nix1969further}. Nix \cite[Bottom of Page 265]{nix1969further} suggested that there is likely no such instability. Our numerical results support that there is no such instability, because we are able to obtain a stable fission path for any $\tdfis\in(\tdfis_3,\tdfis_*)$.

\subsection{Businaro\textendash Gallone point}
\label{Businaro-Gallone point}

In this subsection, we numerically verify the known results about the Businaro\textendash Gallone point. There is a critical point $\tdfis_3$ in our simulations, such that for $\tdfis<\tdfis_3$, the saddle point on the Bohr\textendash Wheeler branch loses stability against mass-asymmetric perturbations, and thus the fission path becomes unstable. See Appendix \ref{Simulations in 3-D under periodic boundary conditions} for another possible minimum energy path called the spallation path.

We carry out the following numerical experiment to study the instability against mass-asymmetric perturbations. First we use the shrinking dimer dynamics \cite{zhang2012shrinking} to calculate the saddle point along the fission path for $\tdfis\in(\tdfis_3,\tdfis_*)$. Note that we already have a good initial guess of the saddle point and its unstable direction: in our simulation results from the string method, we can use the node of the highest energy as the initial guess of the saddle point, and for the initial guess of the unstable direction we can use the difference between the top two nodes of the highest energy.

The saddle point obtained above is a stationary point on the Bohr\textendash Wheeler branch. We can track this bifurcation branch down as $\tdfis$ decreases. For any stationary point on the Bohr\textendash Wheeler branch, we can make a very tiny mass-asymmetric perturbation to it and then use the perturbed shape as the initial value of the shrinking dimer dynamics. For $\tdfis\leqslant0.39386$, as shown in Figure \ref{Shrinking dimer dynamics for 0.372}, the shape changes very little for a long time, but then goes through a dramatic change with one end shrinking and the other growing, and eventually converges to a ball. For $\tdfis\geqslant0.39398$, the shape restores to equilibrium. So we know $\tdfis_3\in(0.39386,0.39398)$. We use $\tdfis_3/\tdfis_*\approx0.395$ to estimate $\fis_3$. Note that the shrinking dimer dynamics normally would not converge to a local minimizer, but our system is degenerate in the sense that there are zero eigenvalues whose eigenvectors correspond to translational motions.

\begin{figure}[H]
\centering
\includegraphics[width=472.577pt]{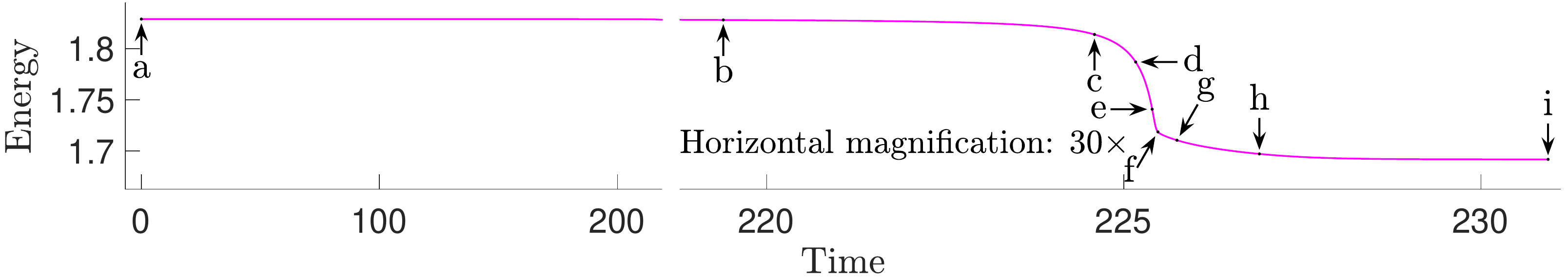}

$\overset{\includegraphics[width=58pt]{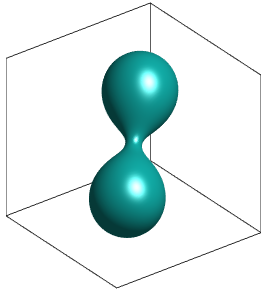}}{\begin{minipage}[t][\height][b]{24pt}\hspace{7pt}a\end{minipage}}$
$\overset{\includegraphics[width=58pt]{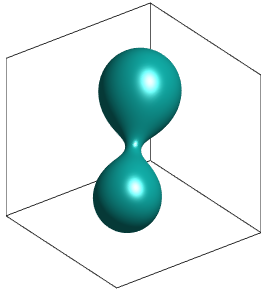}}{\begin{minipage}[t][\height][b]{24pt}\hspace{7pt}b\end{minipage}}$
$\overset{\includegraphics[width=58pt]{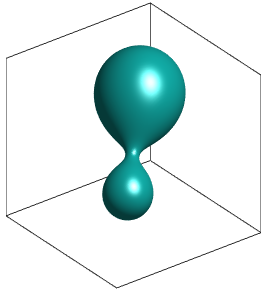}}{\begin{minipage}[t][\height][b]{24pt}\hspace{7pt}c\end{minipage}}$
$\overset{\includegraphics[width=58pt]{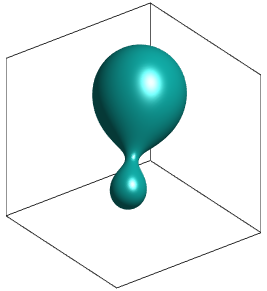}}{\begin{minipage}[t][\height][b]{24pt}\hspace{7pt}d\end{minipage}}$
$\overset{\includegraphics[width=58pt]{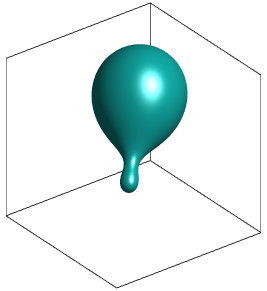}}{\begin{minipage}[t][\height][b]{24pt}\hspace{7pt}e\end{minipage}}$

$\overset{\includegraphics[width=58pt]{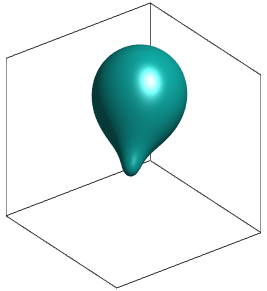}}{\begin{minipage}[t][\height][b]{24pt}\hspace{7pt}f\end{minipage}}$
$\overset{\includegraphics[width=58pt]{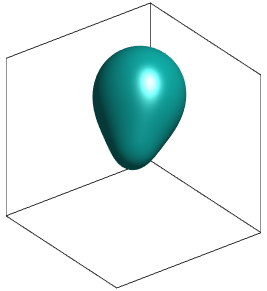}}{\begin{minipage}[t][\height][b]{24pt}\hspace{7pt}g\end{minipage}}$
$\overset{\includegraphics[width=58pt]{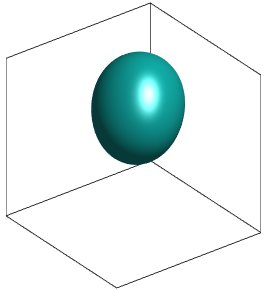}}{\begin{minipage}[t][\height][b]{24pt}\hspace{7pt}h\end{minipage}}$
$\overset{\includegraphics[width=58pt]{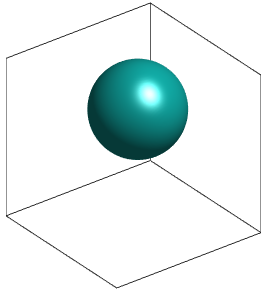}}{\begin{minipage}[t][\height][b]{24pt}\hspace{7pt}i\end{minipage}}$
\caption{Shrinking dimer dynamics for $\tdfis=0.372$.}
\label{Shrinking dimer dynamics for 0.372}
\end{figure}

For $\tdfis<\tdfis_3$, in order to calculate the equilibrium on the Bohr\textendash Wheeler branch which is an index-2 saddle point, while one can use a higher-index extension of the shrinking dimer method \cite[Bottom of Page 1919]{zhang2012shrinking}, we instead employ a simple technique to maintain its centrosymmetry: after every a few (e.g., 500) iterations, take the average of the computed shape and its point reflection with respect to the center of the simulation box.

\subsection{Analogue in 2-D}
\label{Analogue in 2-D}

In this subsection, we study the 2-D analogue under no boundary conditions. To the best of our knowledge, most of our results have not been seen in the literature. Note that in 2-D, a disk is only a stationary point of \eqref{energy-liquid-drop-model} since no global minimizer or local minimizer exists in the full space \cite[Bottom of Page 78]{julin2014isoperimetric}. However, if we restrict $\Omega$ to the interior of some bounded domain in $\mathbb R^2$, then the minimizers may come into existence \cite[Remark 2.14]{bonacini2014local}. For brevity such restricted 2-D local minimizers are simply called local minimizers in this paper.

We numerically study the pACOK dynamics \eqref{Allen-Cahn dynamics}, with the initial value $\eta(\,\cdot\,,0)$ resembling the indicator function of a disk. According to our simulation results, there is a critical value $\tdfis_*\approx0.99978$. For any $\tdfis>\tdfis_*$, there exists a time $t>0$ such that $\eta(\,\cdot\,,t)$ is approximately the indicator function of a shape resembling an eye mask. For any $\tdfis<\tdfis_*$, we observe that $\eta(\,\cdot\,,t)$ resembles the indicator function of a disk for any $t>0$ and converges numerically to machine precision as $t\rightarrow\infty$. In other words, $\tdfis_*$ is where the disk loses stability in our simulations. For the sharp interface problem \eqref{energy-liquid-drop-model}, the critical value is known to be $\fis_*=1$ \cite[Top of Page 1126]{ren2009oval}. The relative error between $\tdfis_*$ and $\fis_*$ is $-0.02\%$, which could be attributed to the finitely small $c$, finitely large $K$, and finite numerical discretizations.

As mentioned above, after losing its stability, a disk will deform and converge to an eye-mask shaped local minimizer. We use such a local minimizer as the initial value, and numerically solve the pACOK dynamics \eqref{Allen-Cahn dynamics} with a larger or smaller $\tdfis$. In this way we can track this bifurcation branch up and down.

According to our numerical simulations, as $\tdfis$ increases, the eye-mask shaped local minimizer becomes thinner and elongates. This persists even for $\tdfis$ very large. When $\tdfis$ is extremely large, the entire eye-mask shape becomes very thin especially in the middle of its neck, and its neck thickness is comparable to the diffuse interface thickness. Only then, the eye-mask shape would break up. We suspect that such a break-up is due to diffuse interface approximations, and that it would never happen in the sharp interface limit \eqref{energy-liquid-drop-model}.

As $\tdfis$ decreases, we observe that the eye-mask shaped local minimizer fattens and shortens. We can track this bifurcation branch all the way down to $\tdfis=0.905$. For $\tdfis\leqslant0.903$, the eye-mask shape will converge to a disk. Therefore we suspect that there is a $\tdfis_4\in(0.903,0.905)$ such that a saddle-node bifurcation happens at $\tdfis_4$, as shown on the top-left of Figure \ref{Minimum energy paths in 2-D various fissility parameters.}.

We use the string method to obtain the minimum energy paths of fission, as shown in Figure \ref{Minimum energy paths in 2-D various fissility parameters.} (see Appendix \ref{Simulations in 2-D under no boundary conditions} for details). We omit the latter part of the string where the two fragments reach the boundaries of the simulation box. For $\tdfis\in(\tdfis_4,\tdfis_*)$, there is an intermediate local minimizer (of an eye-mask shape, indicated by g, h, i, j and k) along the minimum energy path of fission, and there is a transition state (indicated by b, c, d, e and f) from a disk to such an intermediate local minimizer. Such a transition state resembles an ellipse for $\tdfis$ slightly less than $\tdfis_*$, and has been found by Ren and Wei in the sharp interface limit \cite{ren2009oval}. It bifurcates from a disk at $\tdfis=\tdfis_*$, and lies on a bifurcation branch which we call the Ren\textendash Wei branch. For $\tdfis<\tdfis_4$, there is no intermediate local minimizer. The scission point (indicated by l, m, n, o, p, q and r) resembles a lemniscate and is a transition state along the minimum energy path. There is a 2-D analogue of the Businaro\textendash Gallone point $\tdfis_3\approx0.428$, such that the critical point resembling a lemniscate becomes unstable against mass-asymmetric perturbations when $\tdfis<\tdfis_3$.

\begin{figure}[H]
\centering
\includegraphics[width=153.06pt]{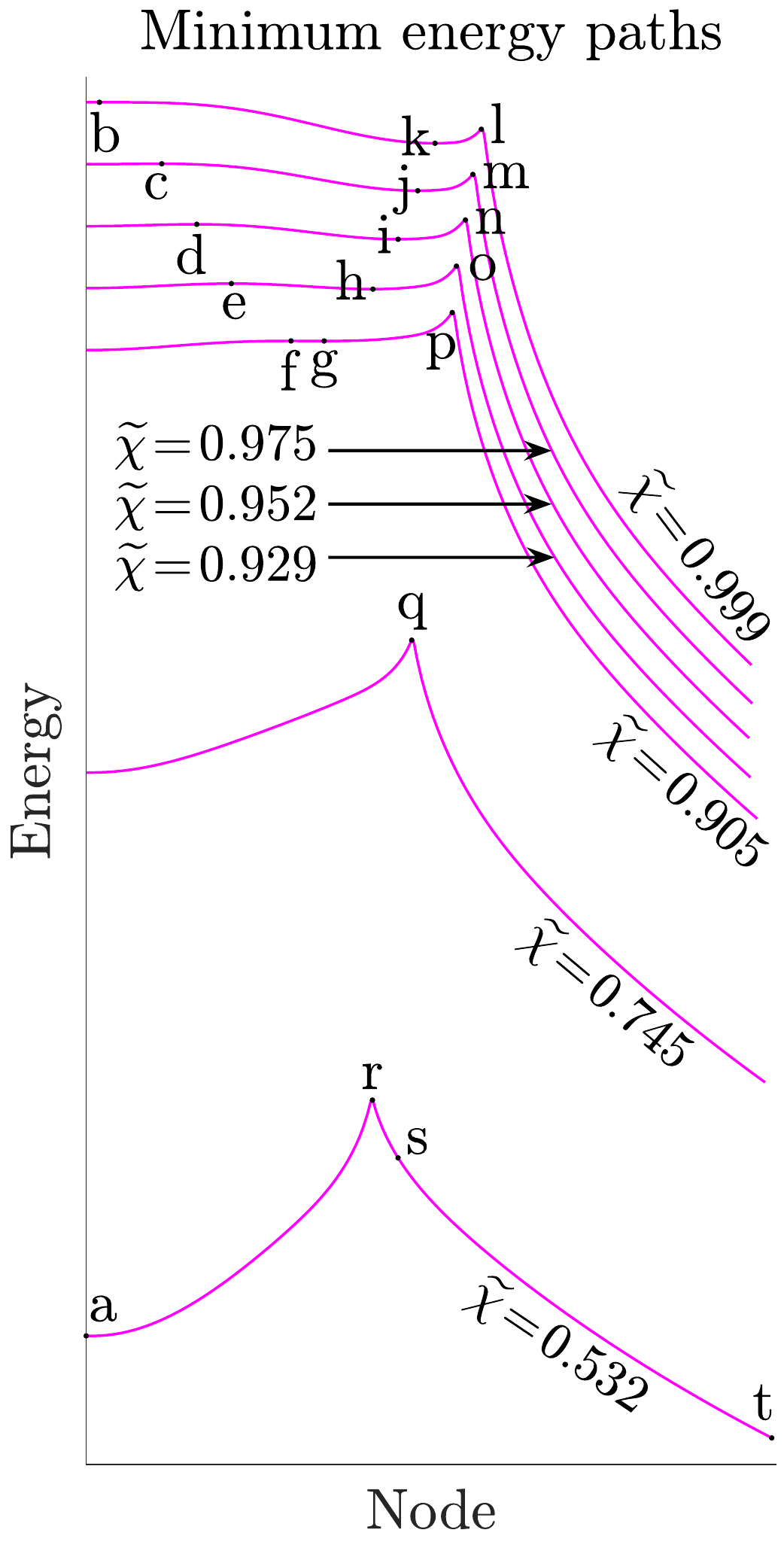}
\hspace{-2pt}
\begin{minipage}[b][293pt][t]{290pt}

$\overset{\includegraphics[width=53pt]{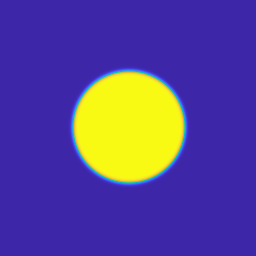}}{\begin{minipage}[t][\height][b]{24pt}\hspace{10pt}a\end{minipage}}$
$\overset{\includegraphics[width=53pt]{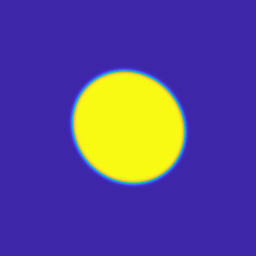}}{\begin{minipage}[t][\height][b]{24pt}\hspace{10pt}b\end{minipage}}$
$\overset{\includegraphics[width=53pt]{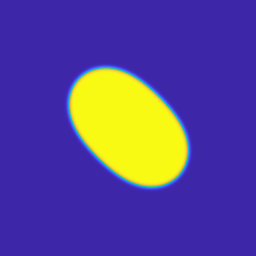}}{\begin{minipage}[t][\height][b]{24pt}\hspace{10pt}c\end{minipage}}$
$\overset{\includegraphics[width=53pt]{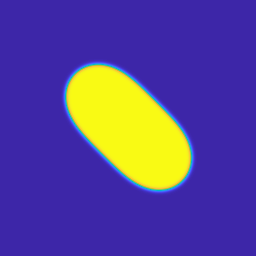}}{\begin{minipage}[t][\height][b]{24pt}\hspace{10pt}d\end{minipage}}$
$\overset{\includegraphics[width=53pt]{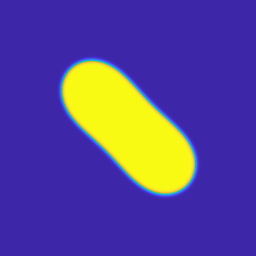}}{\begin{minipage}[t][\height][b]{24pt}\hspace{10pt}e\end{minipage}}$

\vspace{8pt}

$\overset{\includegraphics[width=53pt]{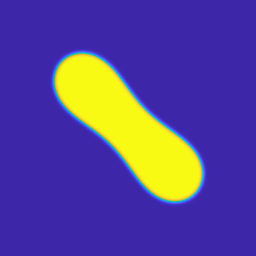}}{\begin{minipage}[t][\height][b]{24pt}\hspace{10pt}f\end{minipage}}$
$\overset{\includegraphics[width=53pt]{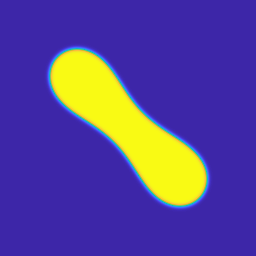}}{\begin{minipage}[t][\height][b]{24pt}\hspace{10pt}g\end{minipage}}$
$\overset{\includegraphics[width=53pt]{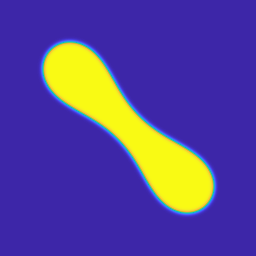}}{\begin{minipage}[t][\height][b]{24pt}\hspace{10pt}h\end{minipage}}$
$\overset{\includegraphics[width=53pt]{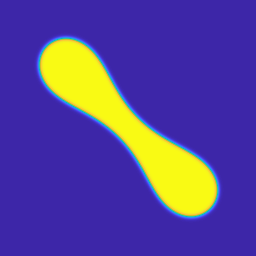}}{\begin{minipage}[t][\height][b]{24pt}\hspace{10pt}i\end{minipage}}$
$\overset{\includegraphics[width=53pt]{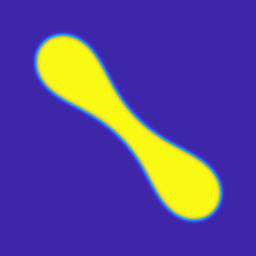}}{\begin{minipage}[t][\height][b]{24pt}\hspace{10pt}j\end{minipage}}$

\vspace{8pt}

$\overset{\includegraphics[width=53pt]{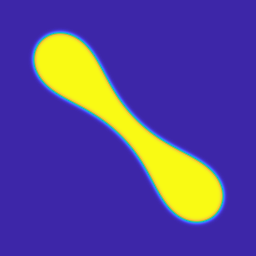}}{\begin{minipage}[t][\height][b]{24pt}\hspace{10pt}k\end{minipage}}$
$\overset{\includegraphics[width=53pt]{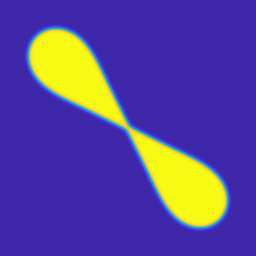}}{\begin{minipage}[t][\height][b]{24pt}\hspace{10pt}l\end{minipage}}$
$\overset{\includegraphics[width=53pt]{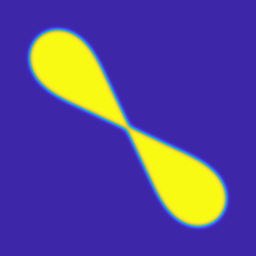}}{\begin{minipage}[t][\height][b]{24pt}\hspace{10pt}m\end{minipage}}$
$\overset{\includegraphics[width=53pt]{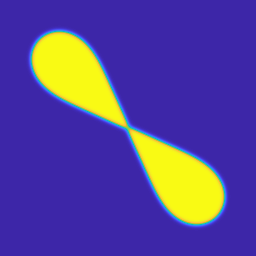}}{\begin{minipage}[t][\height][b]{24pt}\hspace{10pt}n\end{minipage}}$
$\overset{\includegraphics[width=53pt]{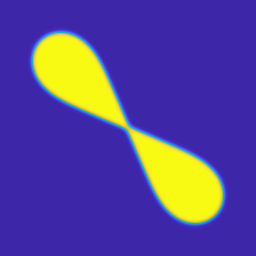}}{\begin{minipage}[t][\height][b]{24pt}\hspace{10pt}o\end{minipage}}$

\vspace{8pt}

$\overset{\includegraphics[width=53pt]{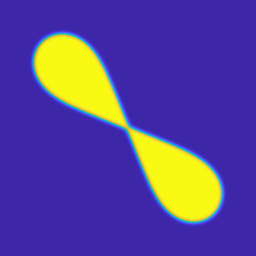}}{\begin{minipage}[t][\height][b]{24pt}\hspace{10pt}p\end{minipage}}$
$\overset{\includegraphics[width=53pt]{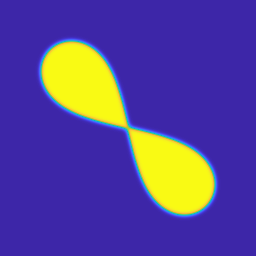}}{\begin{minipage}[t][\height][b]{24pt}\hspace{10pt}q\end{minipage}}$
$\overset{\includegraphics[width=53pt]{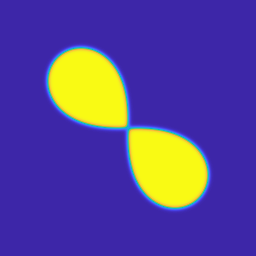}}{\begin{minipage}[t][\height][b]{24pt}\hspace{10pt}r\end{minipage}}$
$\overset{\includegraphics[width=53pt]{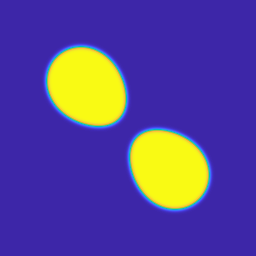}}{\begin{minipage}[t][\height][b]{24pt}\hspace{10pt}s\end{minipage}}$
$\overset{\includegraphics[width=53pt]{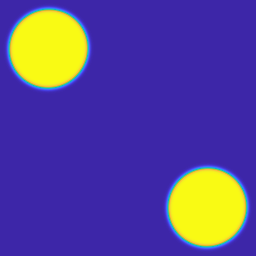}}{\begin{minipage}[t][\height][b]{24pt}\hspace{10pt}t\end{minipage}}$

\end{minipage}
\caption{Minimum energy paths in 2-D for various fissility parameters.}
\label{Minimum energy paths in 2-D various fissility parameters.}
\end{figure}

\subsection{Bifurcation diagrams}
\label{explainations3D}

\subsubsection*{The 3-D case}

Our observation of the instability of the Bohr\textendash Wheeler branch might seem surprising at first glance. The bifurcation of this branch from a ball at $\fis=\fis_*$ is transcritical, so we may think that its stability simply exchanges with the ball at $\fis=\fis_*$. In the paradigm of transcritical bifurcation, the ball is stable for $\fis<\fis_*$ and unstable for $\fis>\fis_*$, and it seems plausible that the oblate-like stationary point is stable for $\fis>\fis_*$, as shown on the left of Figure \ref{Transcritical} (see also Figure \ref{Spheroids}), where the horizontal axis qualitatively represents the aspect ratio of a spheroid, i.e., the ratio of the polar radius to the equatorial radius.

If we only consider axisymmetric deformations, the above paradigm of transcritical bifurcation is qualitatively correct, because the oblate-like stationary point is indeed stable against certain axisymmetric perturbations for $\fis>\fis_*$ \cite[Corollary 5]{frank2019non}. However, they are unstable against non-axisymmetric perturbations according to Section \ref{Symmetry breaking of oblate-like equilibria}. To explain this phenomenon, we need to use a 2-D transcritical bifurcation diagram.

We consider an ellipsoid which is close to a ball. It has three degrees of freedom, i.e., the lengths of its three axes. Due to the volume constraint, the degrees of freedom is reduced to two, and can be represented by the $xy$-plane on the right of Figure \ref{Transcritical} (where the $z$-axis represents the energy). The red dot at the center represents the ball, the three yellow dots represent three prolates whose axes of revolution are perpendicular to each other, the three blue dots represent oblates, and they are all stationary points. As $\fis$ exceeds $\fis_*$, the three yellow dots simultaneously collide with the red dot and then convert into three blue dots. Each blue dot is stable along the direction of its motion (corresponding to axisymmetric deformations), but unstable along the perpendicular direction of its motion (corresponding to non-axisymmetric deformations).
\begin{figure}[H]
\centering
\includegraphics[height=130pt]{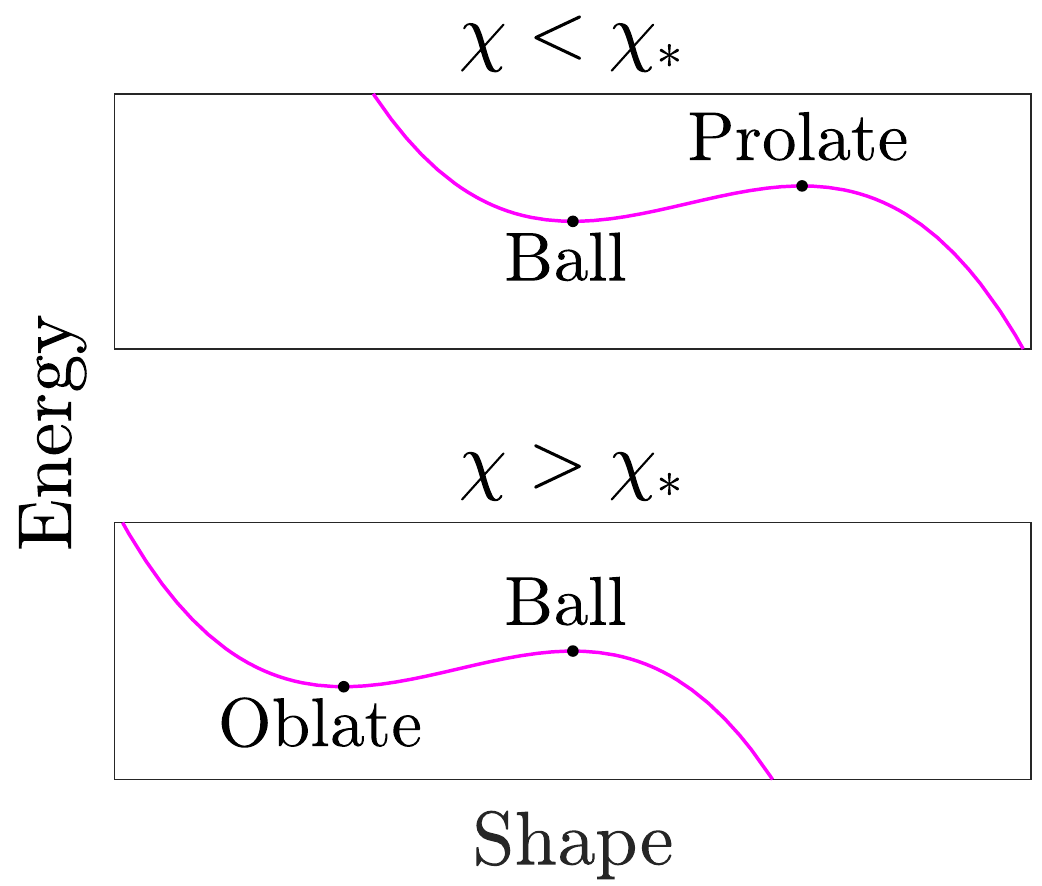}
\includegraphics[height=117pt]{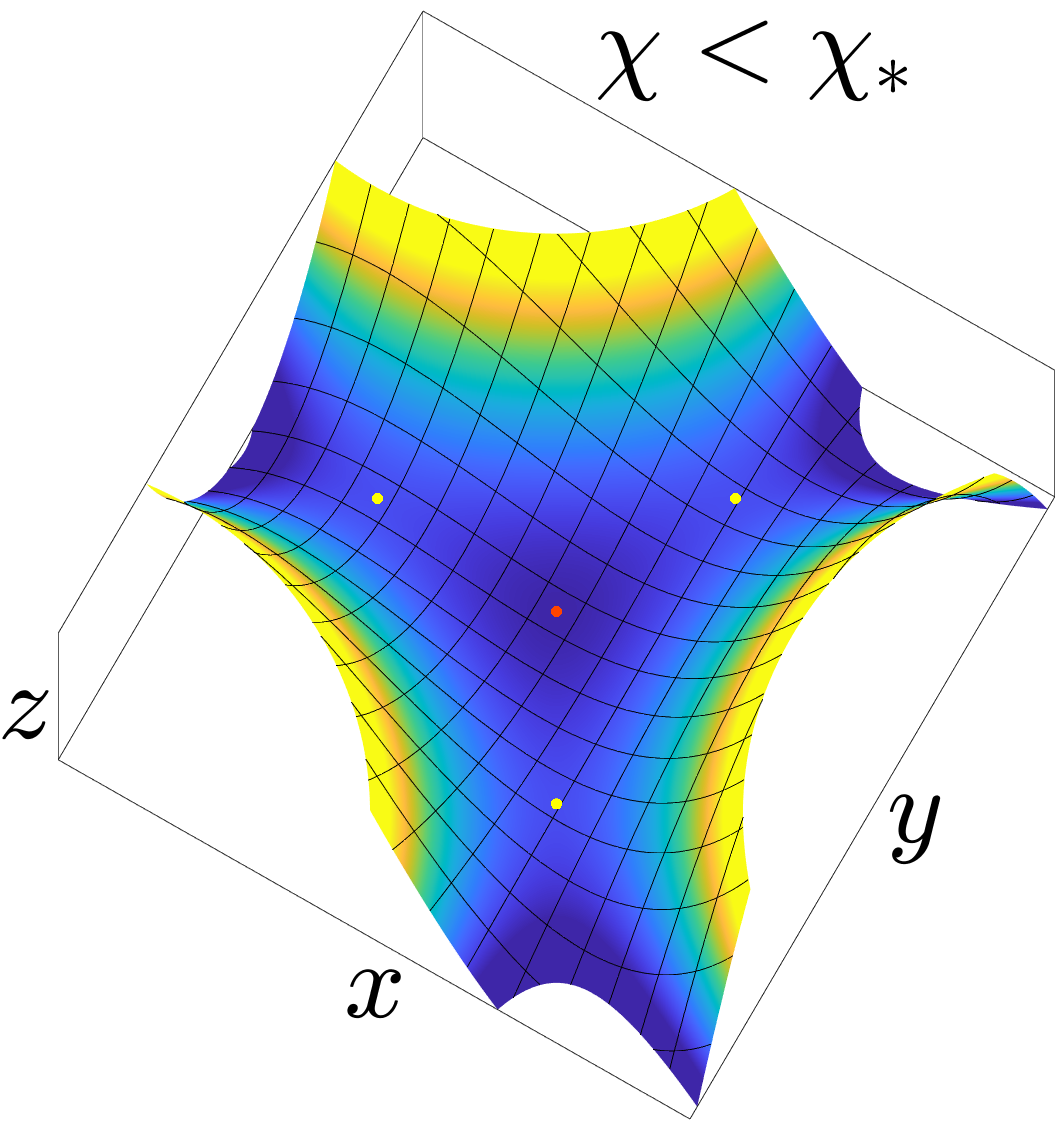}
\includegraphics[height=117pt]{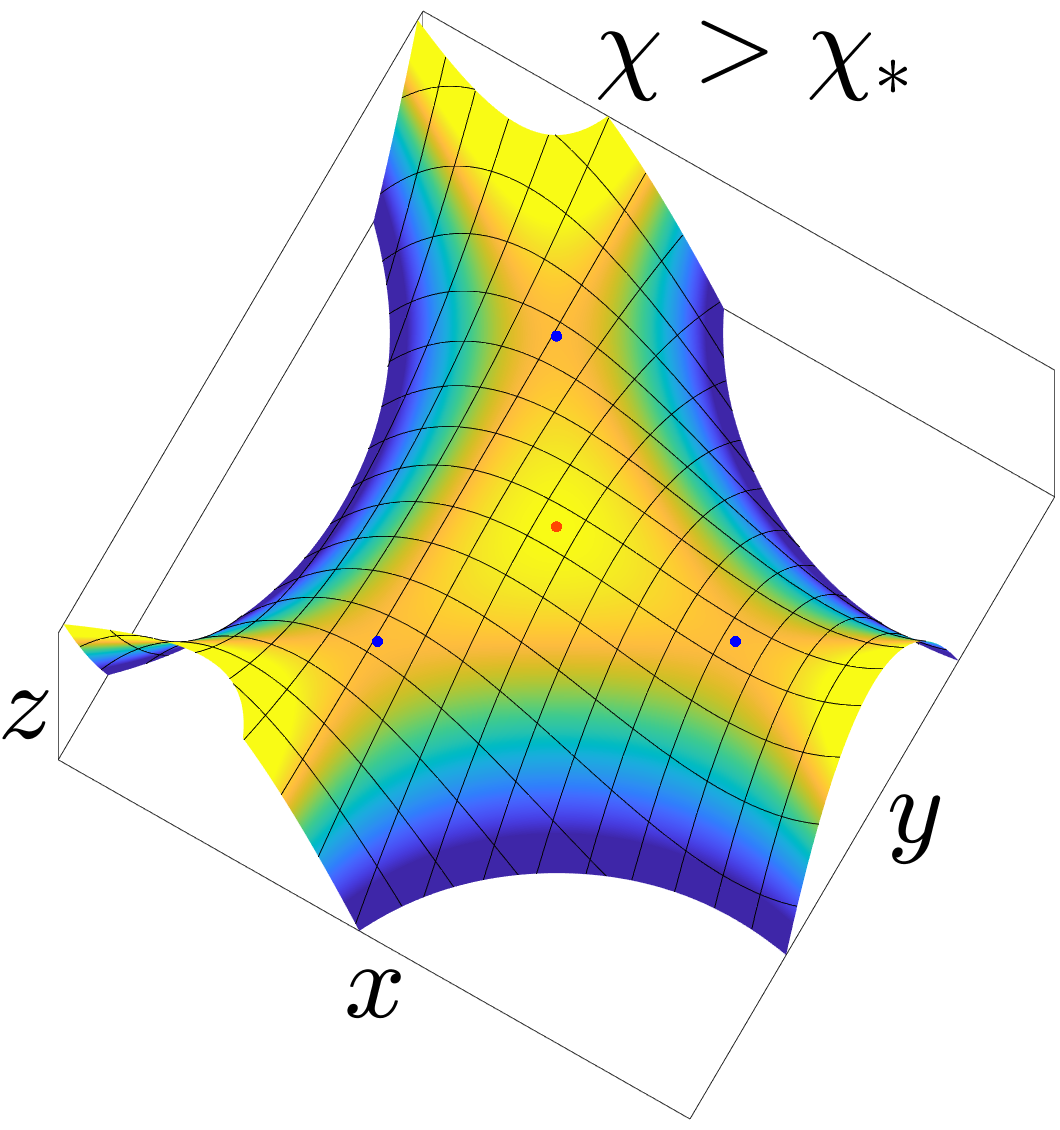}
\caption{Left: typical transcritical bifurcation in 1-D. Right: a transcritical bifurcation in 2-D.}
\label{Transcritical}
\end{figure}
Note that Figure \ref{Transcritical} is not produced from actual computational results but just for illustration. The left of Figure \ref{Transcritical} is produced using a prototype transcritical bifurcation: $y=sx^2/2-x^3/3$, where $s=1$ for $\fis<\fis_*$, and $s=-1$ for $\fis>\fis_*$. The right of Figure \ref{Transcritical} is produced by constructing three transcritical bifurcations along different directions colliding simultaneously, i.e., $z=sr^2/2+\cos(3\theta)\,r^3/3$, where $s = 0.85$ for $\fis<\fis_*$, $s = -0.85$ for $\fis>\fis_*$, and $(r,\theta)$ is $(x,y)$ in polar coordinates. It has been noticed in the literature \cite[Caption of Figure 39]{cohen1962deformation} that the instability of the oblate-like stationary points may be attributed to identically deformed but differently oriented bifurcation branches colliding simultaneously. The right of our Figure \ref{Transcritical} provides a concrete example for visual illustration.

\subsubsection*{The 2-D case}

The situation in 2-D is a little different, because a 3-D prolate and oblate have no analogues in 2-D (they both correspond to ellipses which cannot be distinguished like a prolate and oblate). As a result, the bifurcation in 2-D at $\fis=\fis_*$ is not transcritical but of a pitchfork type (in a generalized sense, see \cite[Page 1127]{ren2009oval}). It can be either supercritical or subcritical. Ren and Wei proposed the following conjecture \cite[Observation 1.1, Hypothesis 3.2, and Section 6]{ren2009oval}:
\begin{conjecture}
\label{supercritical conjecture}
The bifurcation at $\fis=\fis_*$ is of the supercritical pitchfork type. The equilibrium on this bifurcation branch (what we call the Ren\textendash Wei branch) given by \cite[Equation (3.7)]{ren2009oval}, only exists for $\fis>\fis_*$, and is stable.
\end{conjecture}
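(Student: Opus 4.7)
The plan is to analyze the bifurcation at $\fis=\fis_*$ by Lyapunov--Schmidt reduction and to compute the leading cubic coefficient of the reduced bifurcation equation, whose sign decides super versus subcritical. Given the numerical evidence in Section \ref{Analogue in 2-D} that an eye-mask branch emerges for $\tdfis<\tdfis_*$ and that the Ren--Wei ellipse-like shape appears as a transition state rather than as a local minimizer, I actually anticipate that a careful execution of this program will \emph{disprove} Conjecture \ref{supercritical conjecture}. First, I would parametrize near-disk shapes by $\partial\Omega_u=\{(1+u(\theta))(\cos\theta,\sin\theta):\theta\in[0,2\pi)\}$ with $u$ small, enforce the area constraint $|\Omega_u|=\pi$, and substitute into \eqref{rescaled energy-liquid-drop-model} to obtain an energy $E(u;\fis)$ whose Euler--Lagrange equation reads $\kappa(u)+\fis\,\phi[u]|_{\partial\Omega_u}=\lambda$, with $\kappa$ the curvature and $\phi[u]=(-\Delta)^{-1}\bm 1_{\Omega_u}$; both terms admit explicit power-series expansions in $u$ via shape derivatives and a single/double-layer expansion of the logarithmic potential on a near-circular curve.

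Second, I would diagonalize $\mathcal L(\fis):=D_u^2 E(0;\fis)$ in the Fourier basis $\{\cos k\theta,\sin k\theta\}$. Rotational symmetry forces $\mathcal L(\fis)$ to act as a Fourier multiplier with eigenvalues $\mu_k(\fis)$ that can be written in closed form. Discarding the zero modes (constant from area, $k=1$ from translations), the first eigenvalue to vanish as $\fis$ grows is $\mu_2(\fis_*)=0$ with $\mu_2'(\fis_*)<0$, producing a two-dimensional kernel spanned by $\cos 2\theta$ and $\sin 2\theta$. The $O(2)$ equivariance of the problem lets me restrict to the mirror-symmetric sector and reduce to a one-dimensional bifurcation equation $F(\varepsilon,\fis)=\mu_2(\fis)\,\varepsilon+c_3\,\varepsilon^3+O(\varepsilon^5)=0$, which is odd in $\varepsilon$ by the residual symmetry $\theta\mapsto\theta+\pi/2$.

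Third, I would compute $c_3$ via the standard second-order Lyapunov--Schmidt procedure: expand $u=\varepsilon\cos 2\theta+\varepsilon^2 u_2+\varepsilon^3 u_3+\cdots$, solve for $u_2=\alpha_0+\alpha_4\cos 4\theta$ by inverting $\mathcal L(\fis_*)$ on the $k=0$ and $k=4$ modes (with $\alpha_0$ pinned by area conservation), and finally collect the $\varepsilon^3\cos 2\theta$ coefficient of the Euler--Lagrange equation. Combined with $\mu_2'(\fis_*)<0$, the sign of $c_3$ decides everything: $c_3>0$ places the branch on the $\fis>\fis_*$ side with stable equilibria (the supercritical outcome predicted by the conjecture), whereas $c_3<0$ places it on the $\fis<\fis_*$ side with unstable (index-$1$) equilibria, matching Figure \ref{Minimum energy paths in 2-D various fissility parameters.}.

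The main obstacle is the explicit cubic-order expansion of the nonlocal Coulomb term on a perturbed circle. One must evaluate several contour convolutions of the logarithmic kernel against products such as $\cos 2\theta\cdot\cos 4\theta$ and $\cos^3 2\theta$, and the subtle cancellations between the perimeter cubic contribution and the Coulomb cubic contribution must be handled carefully enough to pin down the sign of $c_3$ rather than just its order of magnitude. If, as I expect, the calculation produces the subcritical sign, the work refutes the conjecture: the Ren--Wei branch lives on the $\fis<\fis_*$ side, its ellipse-like equilibria are saddle points separating the stable disk from the eye-mask local minimizer, and this picture explains the hysteretic jump observed in our simulations.
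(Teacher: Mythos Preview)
The paper does not prove Conjecture \ref{supercritical conjecture}; it \emph{reports} the conjecture as Ren--Wei's and then argues, purely on the basis of phase-field simulations in Section \ref{Analogue in 2-D}, that it is false, replacing it with Conjecture \ref{subcritical conjecture}. There is therefore no analytical argument in the paper against which to compare your proposal step by step.

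Your program is a genuinely different---and in principle stronger---route: an $O(2)$-equivariant Lyapunov--Schmidt reduction on near-circular boundaries that would determine the sign of the cubic coefficient $c_3$ and thus settle the super- versus subcritical question rigorously, whereas the paper offers only numerical evidence for the subcritical scenario. Your structural setup is correct: the linearization is a Fourier multiplier, the critical mode is $k=2$ with a two-dimensional kernel, the reflection $\theta\mapsto\theta+\pi/2$ forces odd parity of the reduced equation, and the quadratic interaction feeds into the $k=0$ and $k=4$ modes exactly as you describe. If carried out, this would upgrade the paper's numerical observation to a theorem.

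That said, your proposal remains a plan rather than a proof: you do not actually compute $c_3$, and you correctly flag the cubic expansion of the logarithmic single-layer potential on a perturbed circle as the hard step. The curvature cubic is elementary, but the nonlocal cubic requires expanding $\phi[u]|_{\partial\Omega_u}$ to third order in $u$, which involves both the domain variation of the Newtonian potential and the evaluation-point variation on the moving boundary; keeping track of all cross terms (and the area-constraint correction to $\alpha_0$) is where sign errors typically creep in. Until that computation is executed and the sign of $c_3$ is exhibited, your proposal neither proves nor disproves the conjecture---it is consistent with the paper's numerics but does not yet go beyond them.
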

If the above conjecture is true, a disk, after losing its stability (for $\fis$ slightly larger than $\fis_*$), would deform into a local minimizer resembling an ellipse, whose eccentricity increases with $\fis$. This scenario can be qualitatively illustrated by Figure \ref{Saturation process and bifurcation}-(b), where the points $A$ and $B$ represent ellipse-like local minimizers, $O$ represents the disk, and $s$ qualitatively represents $\fis-\fis_*$. Therefore Ren and Wei suggested a saturation process depicted in Figure \ref{Saturation process and bifurcation}-(a), where the word saturation refers to the process of a disk splitting into two disks as $\fis$ increases. They believed that the local minimizer will no longer be a disk but resemble an ellipse after $\fis$ exceeds $\fis_*$, then a neck will appear for some larger $\fis$, and finally it will break into two disconnected disks for $\fis$ large enough \cite[Page 1120]{ren2009oval}.

However, for $\fis>\fis_*$ we have not numerically found a stable equilibrium resembling an ellipse. In fact, as mentioned in Section \ref{Analogue in 2-D}, a disk after becoming unstable will eventually deform into a local minimizer resembling an eye mask, instead of settling into an ellipse-like equilibrium. For $\fis<\fis_*$, we are able to numerically find unstable equilibria resembling ellipses (represented by b and c in Figure \ref{Minimum energy paths in 2-D various fissility parameters.}). Therefore our numerical results suggest the scenario illustrated by Figure \ref{Saturation process and bifurcation}-(c), where the points $B$, $D$ and $E$ represent eye-mask shaped local minimizers, $A$ and $C$ represent ellipse-like saddle points, $O$ represents the disk, and $s$ qualitatively represents $\fis_*-\fis$. In other words, we believe that Conjecture \ref{supercritical conjecture} is false, and we propose the following conjecture:
\begin{conjecture}
\label{subcritical conjecture}
The bifurcation at $\fis=\fis_*$ is of the subcritical pitchfork type. The equilibrium on the Ren\textendash Wei branch only exists for $\fis<\fis_*$, and is unstable.
\end{conjecture}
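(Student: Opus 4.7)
I would prove Conjecture \ref{subcritical conjecture} by a Lyapunov--Schmidt reduction of the 2-D sharp interface energy \eqref{energy-liquid-drop-model} near the disk at $\fis=\fis_*$. Parameterize nearly circular competitors by $\partial\Omega=\{(r_0+\rho(\theta))(\cos\theta,\sin\theta)\}$, with $r_0$ chosen so that $|\Omega|=V$, and expand $\rho(\theta)=\sum_{m\geq 0}(a_m\cos m\theta+b_m\sin m\theta)$. The volume constraint and the translational freedom pin down $a_0$, $a_1$, $b_1$ up to higher-order corrections in the remaining Fourier coefficients. At $\fis_*$, the second-variation computation in \cite{ren2009oval} shows that the quadratic form of $I$ on the tangent space to the constraint manifold has exactly a two-dimensional kernel, spanned by $\cos 2\theta$ and $\sin 2\theta$, and is strictly positive on every other mode.

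\textbf{Reduction and normal form.} Rotational invariance of $I$ acts on the critical amplitudes $(a_2,b_2)$ as the planar rotation by twice the angle, so $SO(2)$-equivariance forces the reduced bifurcation potential to depend only on $A:=\sqrt{a_2^2+b_2^2}$ and to contain no odd powers. Solving the Euler--Lagrange equations for the noncritical modes (those with $m=0$ or $m\geq 3$) as functions of $(a_2,b_2)$ via the implicit function theorem---each such $a_m,b_m$ is forced to order $A^2$ by the quadratic--cubic coupling---and substituting back yields a reduced potential
\begin{equation*}
V_{\mathrm{red}}(A;\fis)\,=\,\tfrac{1}{2}\alpha(\fis)\,A^2\,+\,\tfrac{1}{4}\beta\,A^4\,+\,O(A^6),\qquad \alpha(\fis_*)=0,\quad \alpha'(\fis_*)>0.
\end{equation*}
Here $\beta>0$ yields a supercritical pitchfork (Conjecture \ref{supercritical conjecture}) while $\beta<0$ yields a subcritical one (Conjecture \ref{subcritical conjecture}); the entire task reduces to showing $\beta<0$. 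The subsidiary claims that the branch exists only for $\fis<\fis_*$ and that its solutions are Morse-index-1 saddles are then immediate consequences of the pitchfork normal form.

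\textbf{Computing $\beta$.} Two ingredients enter. The ``diagonal'' quartic piece comes from inserting $\rho(\theta)=A\cos 2\theta$ directly into $I$ to order $A^4$; on the reference circle the logarithmic kernel diagonalizes via
\begin{equation*}
-\tfrac{1}{2\pi}\ln\bigl|r_0 e^{i\theta}-r_0 e^{i\phi}\bigr|\,=\,-\tfrac{1}{2\pi}\ln r_0\,+\,\tfrac{1}{2\pi}\sum_{k\geq 1}\tfrac{\cos k(\theta-\phi)}{k},
\end{equation*}
so that every boundary double integral collapses to an explicit convergent series selected by Fourier orthogonality. The ``slaved-mode'' piece comes from substituting back the values of the $m=0$ and $m=4$ modes that are driven to order $A^2$ by the cubic coupling; by the Fredholm alternative this contribution has a definite sign (slaved modes always decrease the reduced energy), and a careful computation should reveal that it dominates the diagonal piece, giving $\beta<0$.

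\textbf{Main obstacle and backup.} The main hurdle is bookkeeping: the Jacobian and curvature contributions from $\rho^3$ and $\rho^4$ in the perimeter term must be balanced against the quadratic, cubic, and quartic couplings in the nonlocal term and against the slaved-mode corrections, with every sign tracked consistently. A natural internal check is that the leading-order calculation reproduce the quadratic form of \cite{ren2009oval}. As a complementary route, one may start directly from the explicit bifurcating family \cite[Equation (3.7)]{ren2009oval}, expand $\fis$ as a function of the amplitude $A$ as $\fis(A)=\fis_*+c_2 A^2+O(A^4)$, and verify $c_2<0$; by construction the sign of $c_2$ equals that of $\beta$, so this alternative approach also settles Conjecture \ref{subcritical conjecture} while providing an independent cross-check for the Lyapunov--Schmidt computation.
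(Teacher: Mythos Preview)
The paper does not prove this statement: it is explicitly stated as a conjecture, supported only by the phase-field numerics of Section~\ref{Analogue in 2-D}, and the Discussion section says outright that ``it remains open to analytically prove \ldots\ that the pitchfork bifurcation from a disk is indeed subcritical.'' So there is no proof in the paper for your proposal to be compared against.

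On its own merits, your Lyapunov--Schmidt strategy is the correct framework, and the structural claims are right: the kernel at $\fis_*$ is the $m=2$ mode, $SO(2)$-equivariance forces the reduced potential to be a function of $A^2$, the slaved modes that enter at order $A^2$ are $m=0$ and $m=4$, and the sign of the single quartic coefficient $\beta$ decides the conjecture. But your proposal stops exactly at the point where the work begins. You assert that the slaved-mode contribution ``dominates the diagonal piece, giving $\beta<0$,'' yet you offer no computation; the phrase ``a careful computation should reveal'' is a hope, not an argument. The observation that slaved modes always lower the reduced energy is true but does not by itself determine the sign of $\beta$: the direct quartic contribution from $\rho=A\cos 2\theta$ in both the perimeter and the logarithmic nonlocal term could in principle be positive and large enough to win. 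Determining which way the balance tips is precisely the content of the conjecture, and nothing in your plan settles it.

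Your backup route---expanding the explicit family of \cite[Equation (3.7)]{ren2009oval} and reading off the sign of $c_2$ in $\fis(A)=\fis_*+c_2A^2+O(A^4)$---is probably the more tractable of the two, since it sidesteps the need to solve for the slaved modes by hand. Either way, until the actual coefficient is computed and its sign exhibited, what you have is a correct outline of a standard method, not a proof.
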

\begin{figure}[H]
\centering
\raisebox{45pt}{(a)} \includegraphics[width=300pt]{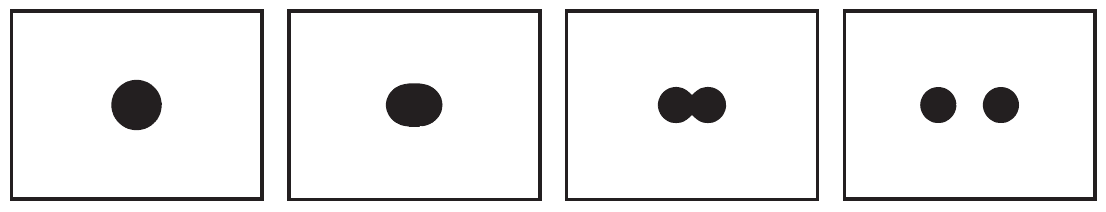}

\includegraphics[width=380pt]{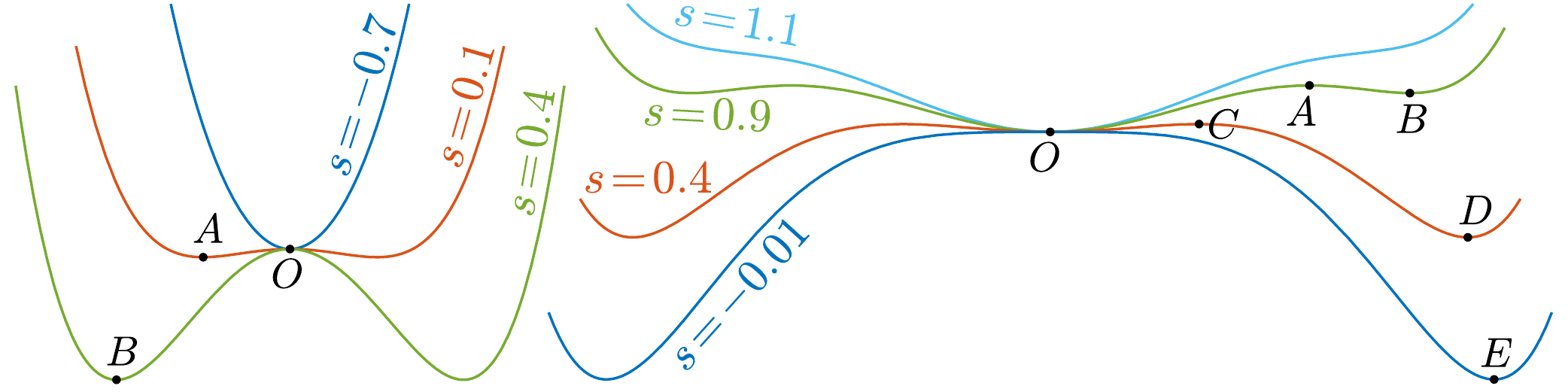}\\[-20pt]

\hspace{-57pt}(b)\hspace{170pt}(c)

\caption[Saturation process and bifurcation caption]{(a) Saturation process in the supercritical scenario (Figure from \cite[Figure 1]{ren2009oval}. Copyright \copyright2009 Society for Industrial and Applied Mathematics. Reprinted with permission. All rights reserved). (b) Supercritical scenario. (c) Subcritical scenario.}
\label{Saturation process and bifurcation}
\end{figure}
Figure \ref{Saturation process and bifurcation}-(b) is produced using a prototype supercritical pitchfork bifurcation, i.e., $y=x^4-2sx^2$. Figure \ref{Saturation process and bifurcation}-(c) is produced using $y = x^6/3 - x^4 + sx^2$. Neither of them is produced from actual computational results, and they are both for illustration only.

\section{Periodic isoperimetric problem: resistance to rupture}

\label{Resistance to rupture in the classical isoperimetric problem}

As noted in the literature \cite[Page 6]{krappe2012theory}, the underlying reason for the rupture of one nucleus into two nuclei is the Plateau\textendash Rayleigh instability, which explains the rupture of a long thin liquid cylinder into many droplets due to surface tension. This is why a thin fiber of water is rarely observed, while a thin film of water can be observed sometimes (e.g., on a bubble). According to Appendix \ref{Stability of a strip or lamella against rupture}, a lamella is always a stable equilibrium.

From the results in Section \ref{Simulation results}, we notice a difference in the fission process between 2-D and 3-D. In 2-D, the rupture requires energy inputs, because the scission point is always a transition state. In 3-D, however, the scission point is never a transition state, and the rupture is energetically spontaneous (following the energy descent direction).

To explain such a difference from the viewpoint of the Plateau\textendash Rayleigh instability, we demonstrate the roles that surface tension and dimensionality play, and focus on the cases where only the surface energy is present. In all simulations throughout this section, we choose $\tdfis=0$ and numerically calculate the minimum energy paths involving topological changes. The problem under study in this section is the periodic isoperimetric problem and its diffuse interface version, which have been considered in \cite[Equations (3.9) and (3.10)]{choksi2006periodic}.

\subsection{The 2-D case}

With a fixed volume fraction $\omega = (4\pi\!-\!3\sqrt3)/18\approx0.4095$, we use the string method to compute a minimum energy path between two local minimizers: from a strip to a disk. Note that our 2-D simulation can be regarded as a cross section of a 3-D simulation (by a trivial extension along the third dimension), corresponding to a path from a lamella to a cylinder. For disambiguation, we use the convention that a lamella is 3-D while a strip is 2-D.

As shown in Figure \ref{Minimum energy path from a strip to a disk.}, along the minimum energy path, the scission point c is the transition state. It resembles two arcs intersecting at a single point. Intuitively, the shape of the scission point is away from a disk, so the surface tension is trying to induce a rupture into left and right halves; however, the connection of the two halves at a single point is strong enough to resist the tendency to rupture. Therefore, the shape stays in a stalemate. This phenomenon is similar to what we have seen from the minimum energy paths presented in Section \ref{Analogue in 2-D} where $\tdfis>0$.
\begin{figure}[H]
\centering
\includegraphics[width=255.102pt]{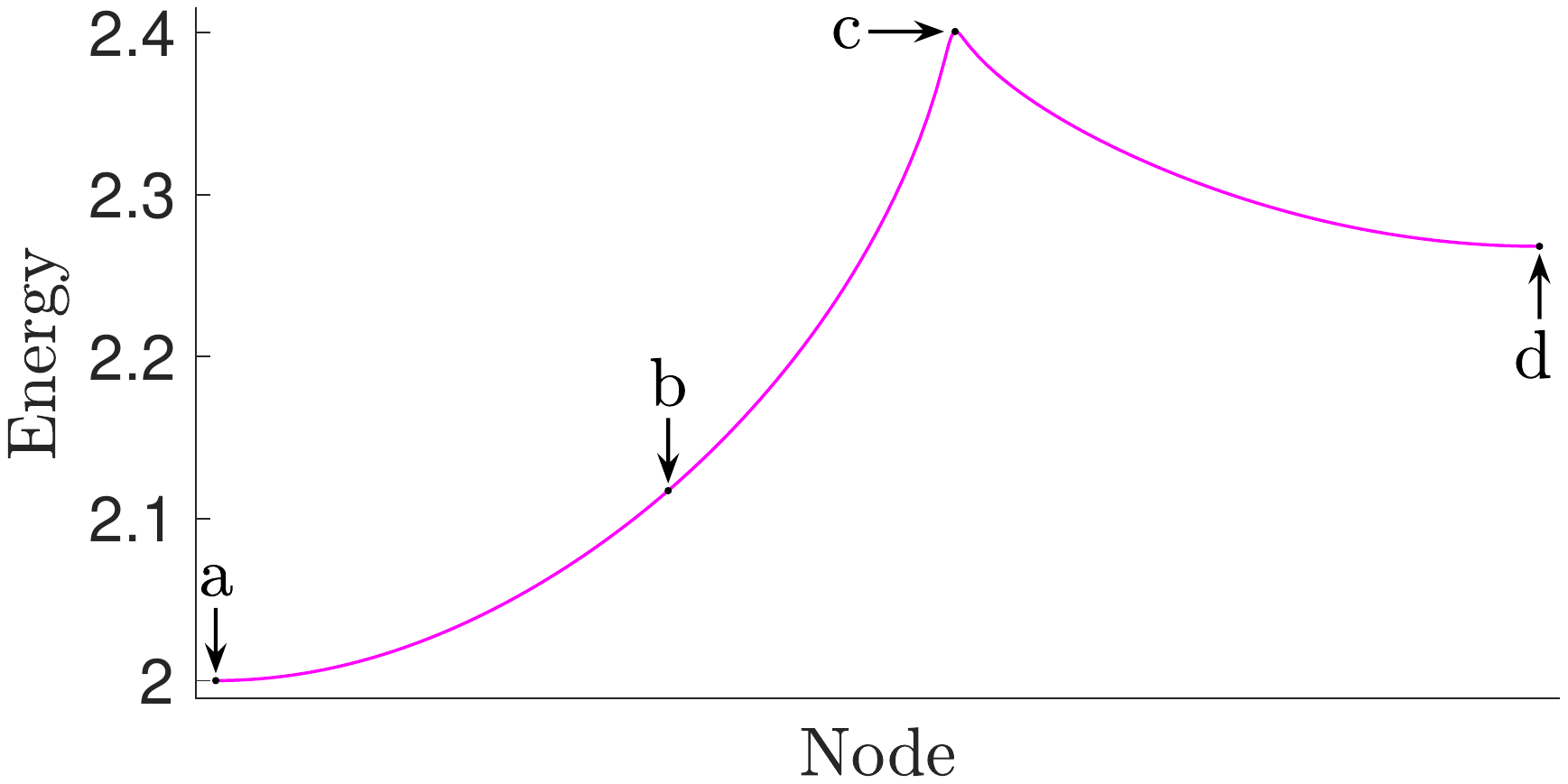}
\hspace{2pt}
\begin{minipage}[b][138pt][t]{130pt}
$\overset{\includegraphics[width=53pt]{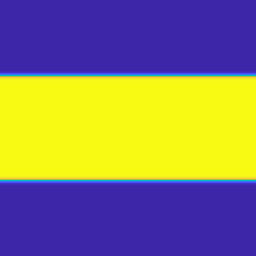}}{\begin{minipage}[t][\height][b]{24pt}\hspace{10pt}a\end{minipage}}$
\hspace{7pt}
$\overset{\includegraphics[width=53pt]{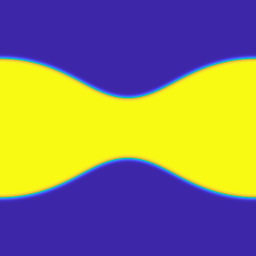}}{\begin{minipage}[t][\height][b]{24pt}\hspace{10pt}b\end{minipage}}$

\vspace{6pt}

$\overset{\includegraphics[width=53pt]{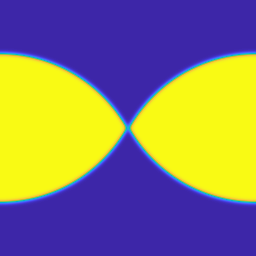}}{\begin{minipage}[t][\height][b]{24pt}\hspace{10pt}c\end{minipage}}$
\hspace{7pt}
$\overset{\includegraphics[width=53pt]{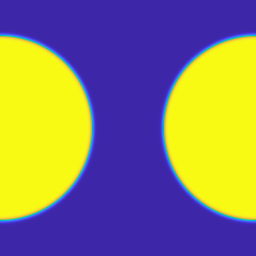}}{\begin{minipage}[t][\height][b]{24pt}\hspace{10pt}d\end{minipage}}$

\end{minipage}
\caption{Minimum energy path from a strip to a disk.}
\label{Minimum energy path from a strip to a disk.}
\end{figure}
In the sharp interface limit, the minimum energy path in Figure \ref{Minimum energy path from a strip to a disk.} should follow the trajectory of a volume preserving mean curvature flow (with the arc length corresponding to a reparametrization of the time variable). The sharp interface counterpart of the transition state c, if satisfying the Euler\textendash Lagrange equation \eqref{stationary point sharp interface} almost everywhere, must have circular boundaries. Therefore we propose the following conjecture:
\begin{conjecture}
\label{symmetric lens conjecture}
With a fixed volume fraction $\omega$, as $c\rightarrow0$ and $K\rightarrow\infty$, the transition state (represented by c in Figure \ref{Minimum energy path from a strip to a disk.}) from a strip to a disk, converges in $L^2$ to the indicator function of a symmetric lens, which is formed by two intersecting circles of equal radii $R$.
\end{conjecture}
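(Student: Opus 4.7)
The plan is to combine a Modica--Mortola type $\Gamma$-convergence with a classification of constant-curvature critical sets of the constrained perimeter functional in the plane. Since $\tdfis=0$, the diffuse energy $\tilde I$ reduces to a Modica--Mortola functional plus a volume-constraint penalty, and as $c\to 0$ and $K\to\infty$ it $\Gamma$-converges in $L^1$ to the constrained perimeter $\operatorname{Per}(\Omega)$ with $|\Omega|=\omega|D|$. The mountain-pass saddles $\eta_c$ have energies bounded above by the maximum of $\tilde I$ along any fixed admissible path from a strip to a disk, so Modica--Mortola compactness yields a subsequence converging strongly in $L^2$ to an indicator $\mathbf{1}_A$ for some set $A$ of finite perimeter and prescribed volume $\omega|D|$.

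Next I would show that $A$ is a constrained critical set and classify the candidates. Passing to the limit in the diffuse Euler--Lagrange equation \eqref{stationary point diffuse interface} (as formally done in Appendix \ref{Formal derivation of Euler-Lagrange equation from diffuse to sharp interface settings}) shows that the reduced boundary $\partial^* A$ has constant curvature $1/R$; in $\mathbb R^2$ this forces $\partial^* A$ to be a finite disjoint union of circular arcs of a common radius $R$, possibly meeting at singular corners. To narrow the candidates, I would symmetrize the interpolating path across the two reflection axes shared by the strip and the disk endpoints by reflection-averaging with a suitable reparametrization, so that $\eta_c$ and hence $A$ inherit those symmetries. Combined with the constant-curvature constraint and the mountain-pass topology (one connected region of prescribed volume, distinct from both a strip and a disk), the candidates reduce to the symmetric lens bounded by two equal arcs and, as a boundary competitor, a pair of externally tangent disks of equal radii.

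To rigorously select the lens, I would match the diffuse mountain-pass value with a sharp-interface min--max via $\Gamma$-convergence. An explicit interpolating family in the sharp-interface problem passing through symmetric lenses of varying opening angle has maximum perimeter at the unique lens satisfying the Euler--Lagrange equation; this realizes a min--max upper bound on $\operatorname{Per}(A)$. A direct perimeter comparison then rules out the tangent-disks competitor, whose total perimeter for the same prescribed volume strictly exceeds that of the optimal lens, forcing $A$ to be the symmetric lens. The common radius $R$ is then determined by $|A|=\omega|D|$.

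\textbf{Main obstacle.} The hardest step is transferring min--max information under $\Gamma$-convergence: while minimizers pass to the limit nearly for free, matching saddle-point values requires a quantitative Palais--Smale type estimate for the diffuse functional along admissible paths and a recovery sequence for the sharp-interface min--max that respects the topological/connectedness constraint separating the strip basin from the disk basin. A second subtlety is the perimeter comparison; the symmetric lens and the tangent-disks configuration become close as the lens thins, so the competing-perimeter gap must be quantified in terms of the specific volume fraction $\omega=(4\pi-3\sqrt3)/18$, and probably exploits that this precise value places the strip and the disk at equal perimeter so that neither competitor lies on the min--max path.
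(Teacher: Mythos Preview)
The paper does not prove this statement: it is explicitly labeled a \emph{conjecture}, introduced with ``Therefore we propose the following conjecture,'' and left open. The only related rigorous content is Proposition~\ref{vesica-piscis}, which is conditional on Conjecture~\ref{symmetric lens conjecture} and merely computes the radius $R$ from the area constraint. So there is no proof in the paper to compare your proposal against; you are attempting to settle an open problem.

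As an outline toward a proof, your strategy is reasonable in spirit but the obstacle you flag is the real one, and it is serious. $\Gamma$-convergence transfers infima, not saddle values; upgrading to mountain-pass levels requires something like the Jerrard--Sternberg or Kohn--Sternberg framework together with uniform Palais--Smale bounds for the diffuse functionals, and even then one typically obtains only an inequality between critical values, not identification of the limiting critical set. Your classification step is also incomplete: constant-curvature boundaries in the flat torus include not only lenses and pairs of tangent disks but also strips, single disks, and more exotic arc configurations, and you would need to argue why the limit cannot degenerate to one of the endpoints (strip or disk) themselves---compactness alone does not prevent this. Finally, two factual slips: the conjecture is stated for a general fixed $\omega$, not only the numerical value $(4\pi-3\sqrt3)/18$, and that particular value does \emph{not} equalize the strip and disk perimeters (it is chosen so that the lens is a vesica piscis, per Proposition~\ref{vesica-piscis}), so your closing heuristic about the min--max path does not hold as written.
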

After a simple calculation, we can obtain the precise shape of the transition state.
\begin{proposition}
\label{vesica-piscis}
If Conjecture \ref{symmetric lens conjecture} is true, as $\omega$ increases from 0 to $\pi/4$, the radius $R$ decreases from $\infty$ to $1/2$, as determined by $\omega=2R^2\arcsin\big(1/(2R)\big)-\sqrt{R^2\!-\!1/4}$. In particular, when $\omega = (4\pi\!-\!3\sqrt3)/18\approx0.4095$, we have $R=1/\sqrt3$, and thus the transition state is of the shape of the vesica piscis.
\end{proposition}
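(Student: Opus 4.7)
The plan is to turn Conjecture 4.1 into concrete geometric constraints on the two circles, compute the lens area from the standard two-disk intersection formula, and then verify the monotonicity, the endpoints, and the vesica piscis identification. First I would fix the configuration: under Conjecture 4.1 the limiting transition state is a lens bounded by two circular arcs of common radius $R$. Since the minimum energy path interpolates between a horizontal strip and a disk on the torus $[0,1]^2$, the lens inherits the residual symmetries (reflection across its two axes of symmetry, together with the periodic identification). The description ``two arcs intersecting at a single point'' on the torus then forces the two cusps of the lens to coincide via the periodic boundary, so up to relabeling the cusps are placed at $(0,1/2)\equiv(1,1/2)$ and the lens is centered at $(1/2,1/2)$.

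Next, each circle must pass through both cusps, so its center lies on the perpendicular bisector $\{x=1/2\}$; the horizontal reflection symmetry places the two centers at $(1/2,1/2\pm a)$ with $a=\sqrt{R^2-1/4}$, which forces $R\geqslant 1/2$. The lens is then the intersection of two disks of radius $R$ whose centers are separated by distance $d=2a$, with area given by the standard formula
\begin{equation*}
A=2R^2\arccos\!\left(\tfrac{d}{2R}\right)-\tfrac{d}{2}\sqrt{4R^2-d^2}.
\end{equation*}
Substituting $d=2a$ and using $\arccos(a/R)=\arcsin(1/(2R))$ (since $\cos\theta=\sqrt{R^2-1/4}/R$ implies $\sin\theta=1/(2R)$) yields the stated formula $\omega=A=2R^2\arcsin(1/(2R))-\sqrt{R^2-1/4}$. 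To see monotonicity, set $t=1/(2R)\in(0,1)$; direct differentiation gives $d\omega/dR=4R\bigl[\arcsin t - t/\sqrt{1-t^2}\bigr]$, and the bracket vanishes at $t=0$ with derivative $-t^2/(1-t^2)^{3/2}<0$, hence is strictly negative on $(0,1)$. Therefore $\omega$ is strictly decreasing on $[1/2,\infty)$, with $\omega(1/2)=(1/2)\arcsin 1=\pi/4$ and, by Taylor expansion, $\omega\sim 1/(6R)\to 0$ as $R\to\infty$, establishing the claimed bijection $[1/2,\infty)\leftrightarrow(0,\pi/4]$.

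For the vesica piscis identification, substituting $R=1/\sqrt3$ into the area formula gives $(2/3)\arcsin(\sqrt3/2)-\sqrt{1/12}=2\pi/9-\sqrt3/6=(4\pi-3\sqrt3)/18$, matching the prescribed $\omega$, and uniqueness follows from the monotonicity above. With this $R$, the inter-center distance is $2a=2\sqrt{1/12}=1/\sqrt3=R$, so each circle passes through the other's center---the defining configuration of the vesica piscis. The only nontrivial step is the geometric setup in the first paragraph: rigorously deducing that the two lens cusps must be identified across exactly one period of the torus requires a symmetry-plus-minimality argument to rule out other placements of the cusps. However, since the proposition is stated conditionally on Conjecture 4.1---which essentially specifies the lens configuration---this step can reasonably be absorbed into the hypothesis.
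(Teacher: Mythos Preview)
Your proof is correct and supplies exactly the ``simple calculation'' the paper alludes to but does not write out: the paper merely states the proposition after the sentence ``After a simple calculation, we can obtain the precise shape of the transition state,'' so your derivation of the lens area from the two-disk intersection formula, the monotonicity check, the endpoint limits, and the vesica piscis verification are precisely what is intended. Your identification of the implicit geometric constraint (that the two cusps are separated by one period, forcing the chord length to equal $1$) is the only nontrivial input, and you correctly note that it is effectively part of the conjectured hypothesis together with the paper's description of the transition state as ``two arcs intersecting at a single point'' on the torus.
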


\subsection{The 3-D case}

For $\omega=0.09$, $0.15$ and $0.2$, we use the string method to compute a minimum energy path between two local minimizers: from a cylinder to a ball, as shown in Figures \ref{V=0.09}, \ref{V=0.15} and \ref{V=0.2}, respectively. We can see that the scission point, represented by c, d and d in the three Figures, respectively, is never a transition state. Intuitively, in 3-D, the connection of the top and bottom halves at a single point is not strong enough to resist the tendency to rupture.
\begin{figure}[H]
\centering
\raisebox{0pt}{\includegraphics[width=289.116pt]{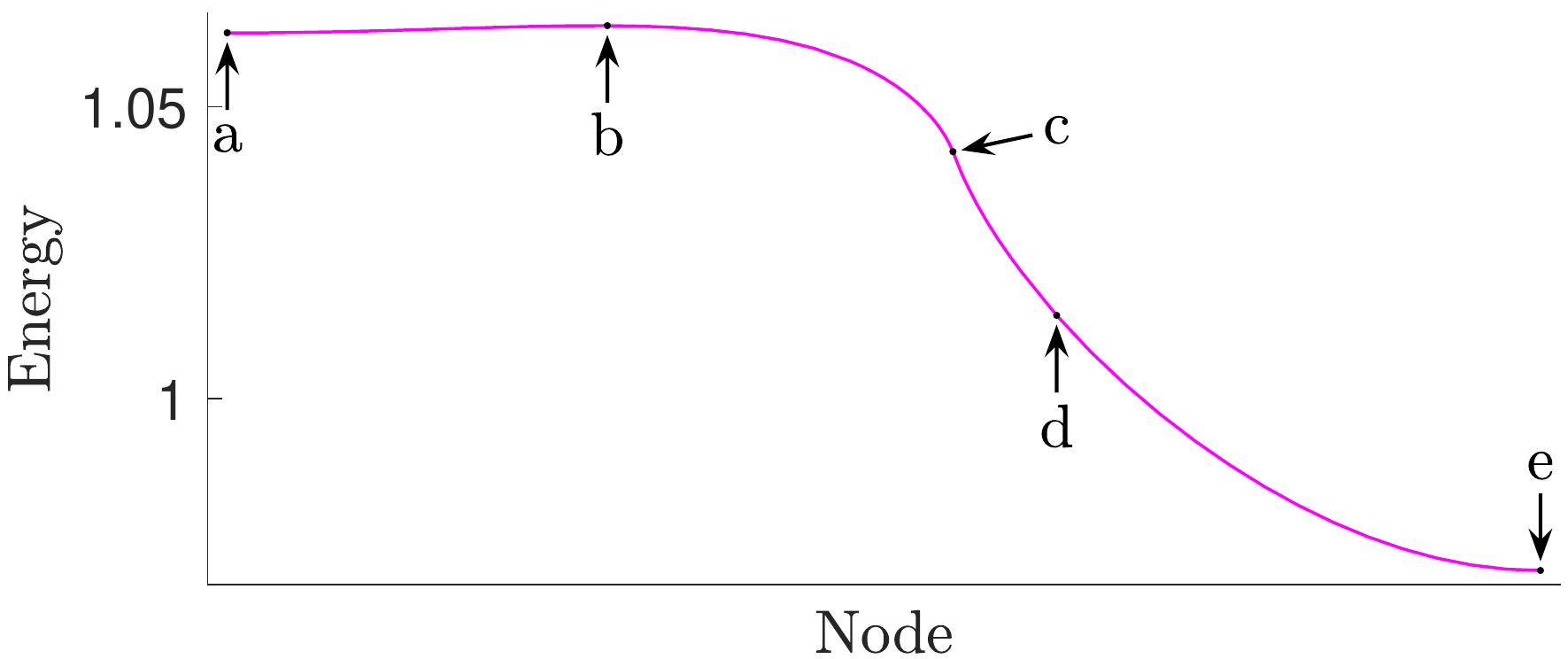}}
\hspace{-32pt}
\begin{minipage}[b][140pt][t]{190pt}
$\overset{\includegraphics[width=58pt]{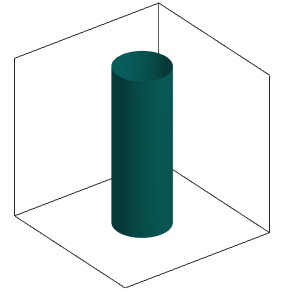}}{\begin{minipage}[t][\height][b]{24pt}\hspace{7pt}a\end{minipage}}$
$\overset{\includegraphics[width=58pt]{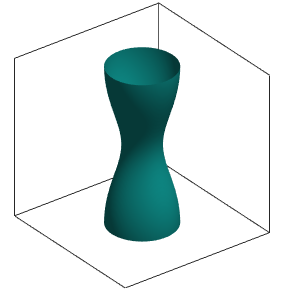}}{\begin{minipage}[t][\height][b]{24pt}\hspace{7pt}b\end{minipage}}$
$\overset{\includegraphics[width=58pt]{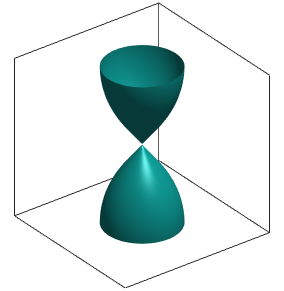}}{\begin{minipage}[t][\height][b]{24pt}\hspace{7pt}c\end{minipage}}$

\hspace{30pt}$\overset{\includegraphics[width=58pt]{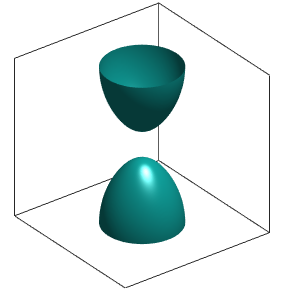}}{\begin{minipage}[t][\height][b]{24pt}\hspace{7pt}d\end{minipage}}$
$\overset{\includegraphics[width=58pt]{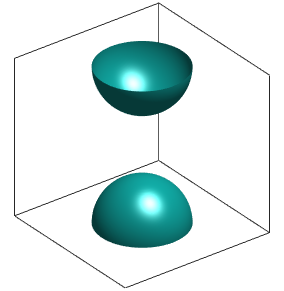}}{\begin{minipage}[t][\height][b]{24pt}\hspace{7pt}e\end{minipage}}$
\end{minipage}
\vspace{-2pt}
\caption{Minimum energy path from a cylinder to a ball for $\omega=0.09$}
\label{V=0.09}
\end{figure}

\begin{figure}[H]
\centering
\raisebox{28pt}{\includegraphics[width=289.116pt]{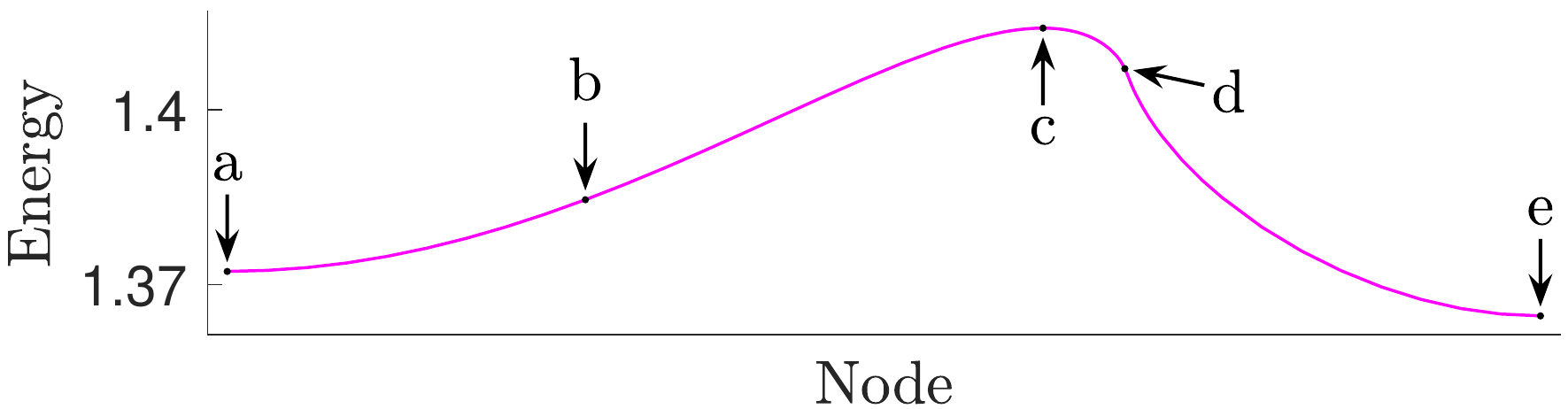}}
\hspace{-32pt}
\begin{minipage}[b][150pt][t]{190pt}
$\overset{\includegraphics[width=58pt]{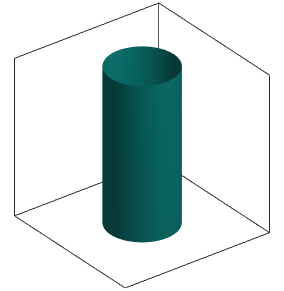}}{\begin{minipage}[t][\height][b]{24pt}\hspace{7pt}a\end{minipage}}$
$\overset{\includegraphics[width=58pt]{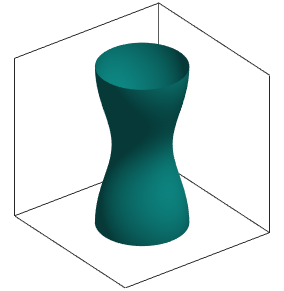}}{\begin{minipage}[t][\height][b]{24pt}\hspace{7pt}b\end{minipage}}$
$\overset{\includegraphics[width=58pt]{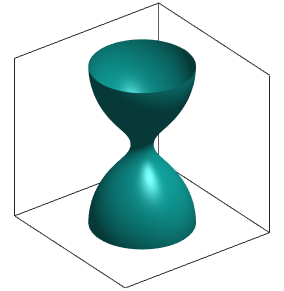}}{\begin{minipage}[t][\height][b]{24pt}\hspace{7pt}c\end{minipage}}$

\hspace{30pt}$\overset{\includegraphics[width=58pt]{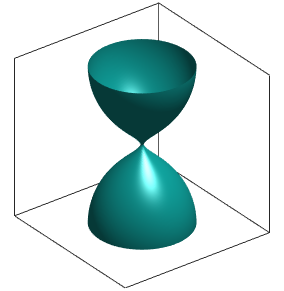}}{\begin{minipage}[t][\height][b]{24pt}\hspace{7pt}d\end{minipage}}$
$\overset{\includegraphics[width=58pt]{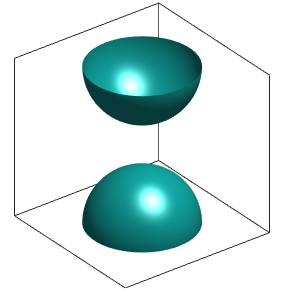}}{\begin{minipage}[t][\height][b]{24pt}\hspace{7pt}e\end{minipage}}$
\end{minipage}
\vspace{-11pt}
\caption{Minimum energy path from a cylinder to a ball for $\omega=0.15$}
\label{V=0.15}
\end{figure}

\begin{figure}[H]
\centering
\raisebox{0pt}{\includegraphics[width=289.116pt]{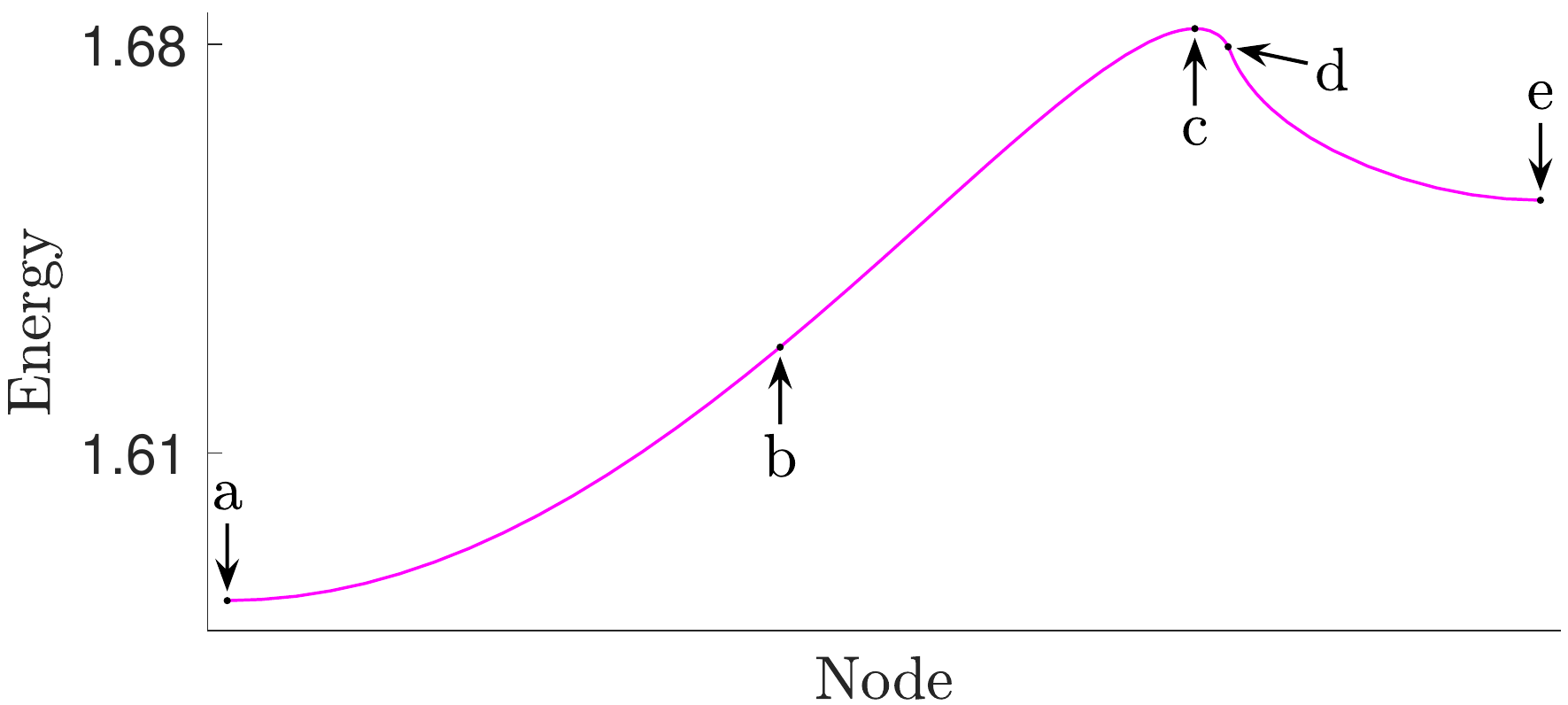}}
\hspace{-32pt}
\begin{minipage}[b][140pt][t]{190pt}
\hspace{30pt}$\overset{\includegraphics[width=58pt]{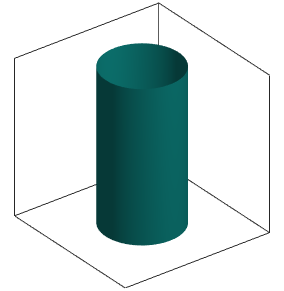}}{\begin{minipage}[t][\height][b]{24pt}\hspace{7pt}a\end{minipage}}$
$\overset{\includegraphics[width=58pt]{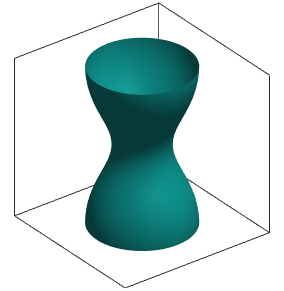}}{\begin{minipage}[t][\height][b]{24pt}\hspace{7pt}b\end{minipage}}$

$\overset{\includegraphics[width=58pt]{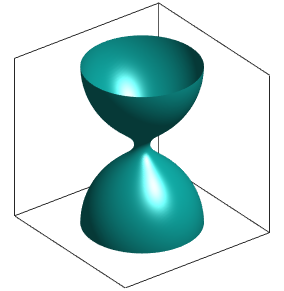}}{\begin{minipage}[t][\height][b]{24pt}\hspace{7pt}c\end{minipage}}$
$\overset{\includegraphics[width=58pt]{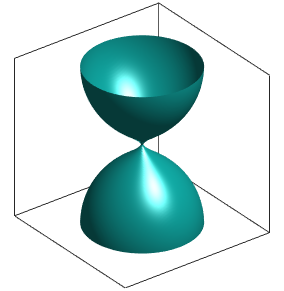}}{\begin{minipage}[t][\height][b]{24pt}\hspace{7pt}d\end{minipage}}$
$\overset{\includegraphics[width=58pt]{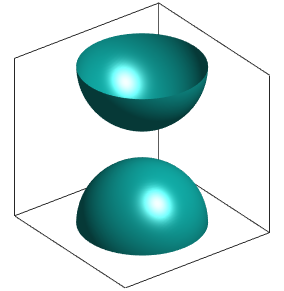}}{\begin{minipage}[t][\height][b]{24pt}\hspace{7pt}e\end{minipage}}$
\end{minipage}
\vspace{-2pt}
\caption{Minimum energy path from a cylinder to a ball for $\omega=0.2$}
\label{V=0.2}
\end{figure}
The transition state, represented by b, c and c in Figures \ref{V=0.09}, \ref{V=0.15} and \ref{V=0.2}, respectively, resembles the interior of an unduloid. Unduloids \cite{hadzhilazova2007unduloids} are a family of constant mean curvature surfaces of revolution, whose meridians are obtained by tracing a focus of an ellipse rolling without slippage along the axis of revolution. The eccentricity $e$ of an unduloid refers to the eccentricity of the rolling ellipse. As $e\rightarrow0$, the unduloid converges to a cylinder; as $e\rightarrow1$, the unduloid converges to a series of balls, with its neck circumference shrinking to 0, as shown in Figure \ref{unduloid picture}.
\begin{figure}[H]
\centering
\includegraphics[bb=0 0 942 398,scale=0.27]{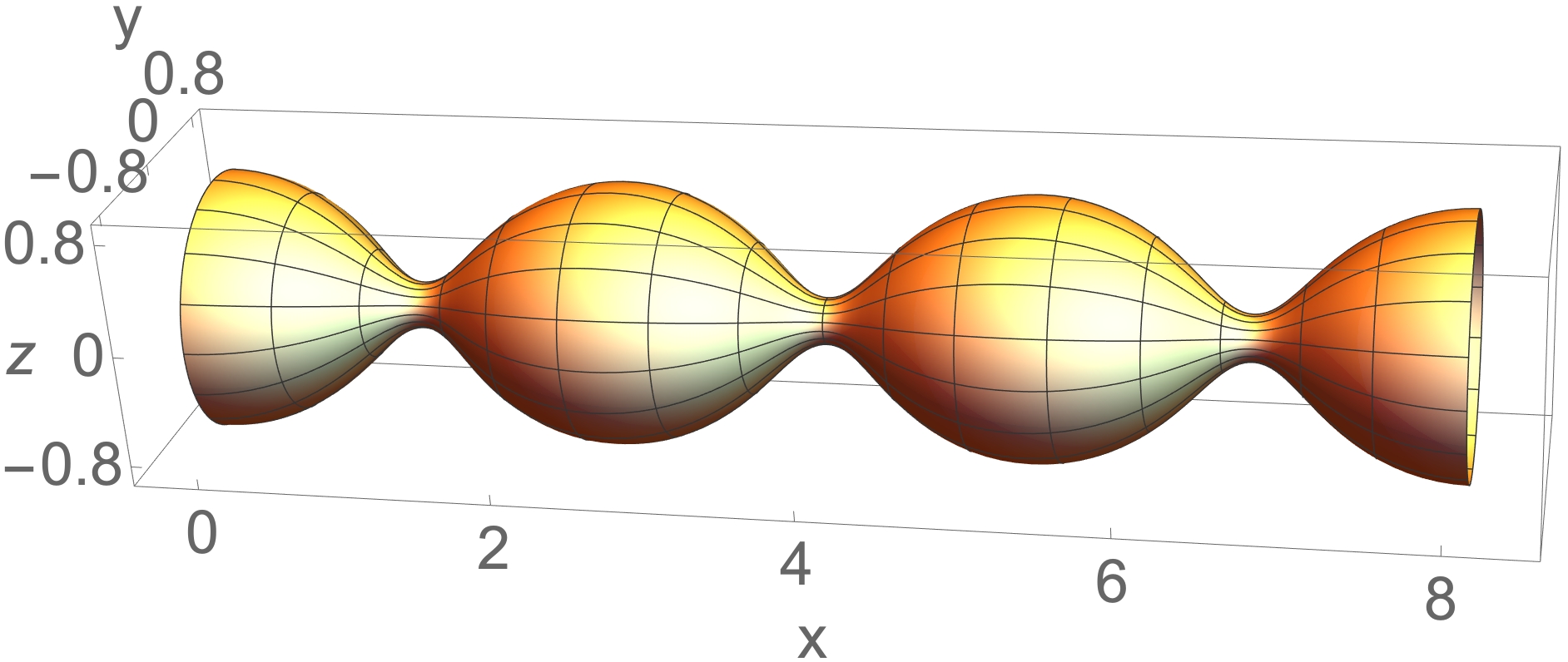}
\includegraphics[bb=0 0 730.5 389,scale=0.27]{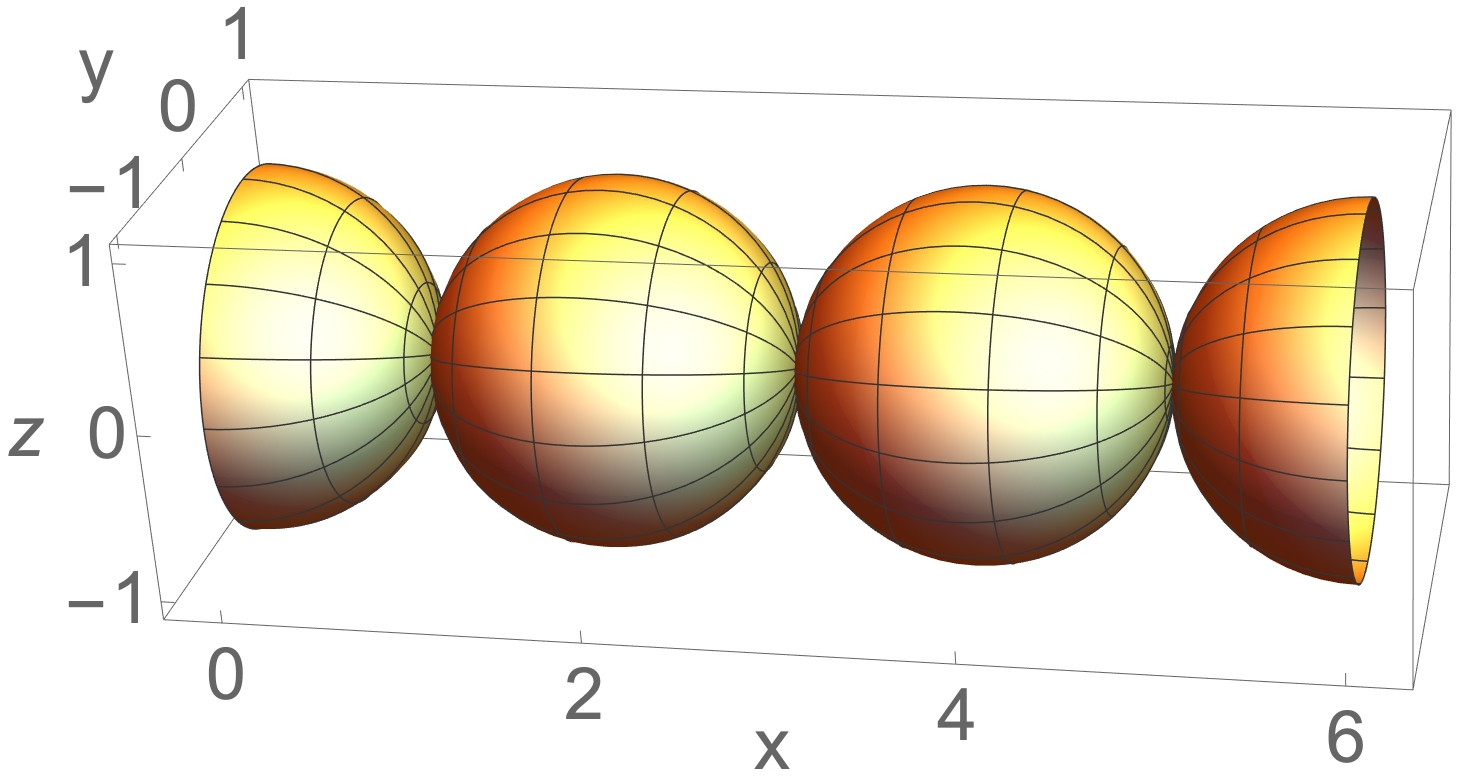}
\caption{Three periods of unduloids: $e=0.7$ (left) and $e=0.9999$ (right).}
\label{unduloid picture}
\end{figure}

\begin{conjecture}
\label{unduloid conjecture}
With a fixed volume fraction $\omega\in\big((4\pi)^{-1},\pi/6\big)$, as $c\rightarrow0$ and $K\rightarrow\infty$, the transition state (represented by b, c and c in Figures \ref{V=0.09}, \ref{V=0.15} and \ref{V=0.2}, respectively) from a cylinder to a ball, converges in $L^2$ to the indicator function of an unduloid. As $\omega$ increases from $(4\pi)^{-1}$ to $\pi/6$, the eccentricity $e$ increases from $0$ to $1$.
\end{conjecture}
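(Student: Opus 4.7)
The plan is to combine a sharp-interface passage via $\Gamma$-convergence of the phase-field energy with Delaunay's classification of axisymmetric constant mean curvature (CMC) surfaces, and then verify the eccentricity--volume correspondence by an explicit elliptic-integral computation.

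First, I would carry out the sharp-interface limit. Since $\tdfis=0$ throughout this section, only the perimeter term survives, and the $\Gamma$-convergence of \eqref{diffuse interface version of energy} to the isoperimetric functional cited earlier applies. A uniform energy bound along the string then gives, after extracting a subsequence, $L^1$ convergence of the diffuse transition states to some $\bm 1_\Omega$ with $|\Omega|=\omega|D|$; uniform boundedness in $L^\infty$ upgrades this to $L^2$. Because each diffuse transition state is a critical point of $\tilde I$, passing to the limit in \eqref{stationary point diffuse interface} (as formally derived in the appendix) shows that $\partial\Omega$ has constant mean curvature, $(n-1)H=\lambda$. Standard regularity for sets of finite perimeter with prescribed mean curvature upgrades $\partial\Omega$ to an analytic embedded surface away from a small singular set.

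Second, assuming the limit $\Omega$ is axisymmetric about an axis aligned with a coordinate direction of the cube, Delaunay's theorem restricts the candidates to spheres, cylinders, catenoids, unduloids, and nodoids. Catenoids have zero mean curvature and cannot meet the volume constraint in the interior of the given range; nodoids are self-intersecting and cannot bound a classical domain; spheres and cylinders correspond precisely to the endpoints $\omega=\pi/6$ (inscribed ball) and $\omega=(4\pi)^{-1}$ (Plateau--Rayleigh critical cylinder of radius $1/(2\pi)$). The remaining possibility is an unduloid whose period equals the cube side length $1$, with the correct volume per period.

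Third, to establish the monotone bijection $e\mapsto \omega$, I would use the rolling-ellipse parametrization: an unduloid generated by a focus of an ellipse with semi-major axis $a$ and eccentricity $e$ has period equal to the perimeter $4aE(e)$, where $E$ denotes the complete elliptic integral of the second kind. Imposing period $1$ gives $a=1/\bigl(4E(e)\bigr)$, so $a\to 1/(2\pi)$ as $e\to 0$ and $a\to 1/4$ as $e\to 1^-$. The enclosed volume per period can then be written as an explicit single-variable function of $e$ via another elliptic integral, and a direct differentiation would confirm strict monotonicity from $1/(4\pi)$ at $e=0$ (cylinder of radius $1/(2\pi)$) to $\pi/6$ as $e\to 1^-$ (touching balls of radius $1/2$). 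The intermediate value theorem then yields the claimed bijection.

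The principal obstacle is lifting $\Gamma$-convergence from minimizers to mountain-pass critical points: the standard Modica--Mortola result guarantees convergence of global minimizers but not of index-$1$ saddles produced by the string method. A remedy would be to combine a min-max characterization of the transition state with the phase-field critical-point convergence theory of Hutchinson--Tonegawa and its extensions to prescribed-mean-curvature problems. A secondary but nontrivial obstacle is justifying axisymmetry of the limiting set a priori; one route is to show that any non-axisymmetric bounded CMC set of index $1$ with the prescribed volume fraction sits strictly above the unduloid in energy, contradicting the mountain-pass characterization of the transition state.
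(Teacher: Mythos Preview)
The paper does not prove this statement: it is explicitly stated as a \emph{Conjecture} and is supported only by numerical evidence from the string method and phase-field simulations (Figures~\ref{V=0.09}--\ref{V=0.2}) together with the heuristic Remark that follows it. There is therefore no ``paper's own proof'' to compare against; you are attempting to resolve an open problem.

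As to the substance of your proposal, the outline is reasonable in broad strokes, but the two obstacles you flag at the end are not technicalities---they are the heart of the difficulty and are genuinely open. The $\Gamma$-convergence of Modica--Mortola (and the Ren--Wei variant cited in the paper) transfers only global minimizers; the Hutchinson--Tonegawa theory you invoke concerns critical points of the Allen--Cahn equation converging to stationary varifolds, which need not be boundaries of sets at all, and there is no established mechanism that guarantees the phase-field mountain-pass critical point of the string method converges to a sharp-interface index-$1$ critical set rather than, say, a varifold with multiplicity or a degenerate object. Likewise, you assume axisymmetry of the limit and then propose to rule out non-axisymmetric competitors by an energy comparison, but that comparison is itself a nontrivial classification problem for bounded CMC sets in the periodic cube. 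Finally, a minor correction to your Delaunay step: nodoids, while not embedded as complete surfaces, can bound embedded periodic domains over a single period, so they are not excluded quite as cheaply as you suggest; one would need an index or stability argument to rule them out.
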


\begin{remark}
When $\omega < (4\pi)^{-1}\approx0.08$, the unduloid disappears, through what we believe is a subcritical pitchfork bifurcation, and the cylinder becomes unstable (the Plateau\textendash Rayleigh instability). Note that a cylinder is stable if and only if its height is less than its circumference, see \cite[\S\,396\,--\,\S\,401]{plateau1873experimental} and \cite[Appendix I]{strutt1879vi}. According to the isoperimetric inequality, a disk and a ball are always stable. A strip and a lamella are also always stable according to Appendix \ref{Stability of a strip or lamella against rupture}.
\end{remark}

\section{Asymptotic analysis: two touching balls or disks}

\label{Two touching balls (or disks)}

As shown in Figure \ref{bifurcation branch}, Sections \ref{Minimum energy paths of fission} and \ref{Analogue in 2-D}, when $\tdfis$ is small, the transition state in 3-D seems to have a thin neck, while the transition state in 2-D seems to be non-smooth and form an angle in the middle. In Section \ref{Resistance to rupture in the classical isoperimetric problem}, we observe similar phenomena for $\tdfis=0$. To gain more insight into such a difference between 2-D and 3-D, in this section we consider the cases for $\fis\ll1$ using asymptotic analysis.

In 3-D, the following conjecture is stated in \cite[Middle of Page 071506-2]{frank2019non}.
\begin{conjecture}
\label{continuation in 3-D}
The Bohr\textendash Wheeler branch in 3-D continues to arbitrarily small $\fis$, and the saddle point on this branch converges to two touching balls in the limit of $\fis\rightarrow0$.
\end{conjecture}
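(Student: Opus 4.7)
My plan is to combine a min--max existence result at each fixed $\fis$ with an asymptotic analysis as $\fis\to 0$ to identify the limit shape. For each $\fis\in(0,\fis_*)$, let $\Pi$ denote the class of continuous paths (in the $L^1$ topology on sets of finite perimeter, under the volume constraint $|\Omega|=40\pi\fis$) joining a single ball $B$ to a configuration of two far-apart equal-volume balls $B'\sqcup B''$, and set
$$c(\fis):=\inf_{\gamma\in\Pi}\ \max_{t\in[0,1]}\ I(\gamma(t)).$$
Since $B$ is an isolated local minimizer for $\fis<\fis_*$ \cite{bonacini2014local}, and any path from $B$ to $B'\sqcup B''$ must pass through an intermediate connected dumbbell whose perimeter strictly exceeds both $\text{Per}(B)$ and $2^{1/3}\text{Per}(B)$, the pair $(B,B'\sqcup B'')$ has genuine mountain-pass geometry, and a standard deformation argument produces a Palais--Smale sequence at level $c(\fis)$. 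The obstructions to compactness are translation invariance and mass escaping to infinity; I would adapt the concentration--compactness analysis already applied to this energy in \cite{frank2015compactness,knupfer2016low} to argue that any defect of compactness lies strictly below $c(\fis)$ (drifting pieces only decrease the nonlocal contribution relative to a joined configuration), so a subsequence converges, modulo translation, to a saddle $\Omega_\fis$ realizing $c(\fis)$.

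\textbf{Limit $\fis\to 0$.} For the upper bound I would exhibit an explicit axisymmetric test path from $B$ to $B'\sqcup B''$ through smooth dumbbell configurations whose neck radius shrinks continuously; a direct computation gives $c(\fis)\le 2^{1/3}\text{Per}(B)+O(\fis)$, i.e.\ the surface area of two equal balls of volume $V/2$ plus a negligible Coulomb term. This yields a uniform perimeter bound on $\Omega_\fis$, so after translation and standard $L^1$ compactness there is a subsequential limit $\Omega_0$ with $|\Omega_0|=V$ and $\text{Per}(\Omega_0)\le 2^{1/3}\text{Per}(B)$. As the Coulomb coefficient vanishes, $\Omega_0$ is stationary for the perimeter under the volume constraint, hence by the Alexandrov-type rigidity result \cite{delgadino2019alexandrov} it is a disjoint union of equal-radius balls. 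The mountain-pass inequality $I(\Omega_\fis)>I(B)$ rules out $\Omega_0$ being a single ball, while the perimeter bound permits at most two; hence $\Omega_0$ is precisely two equal-radius balls of total volume $V$.

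\textbf{Main obstacle.} The delicate step is showing that the two limit balls are \emph{tangent} rather than separated, and, relatedly, that the family $\{\Omega_\fis\}$ forms a genuine continuous branch extending the segment constructed in \cite{frank2019non} near $\fis_*$. Positive separation of the limit balls is inconsistent with the mountain-pass characterization: a realizing path must transit continuously between a connected and a disconnected state, forcing $\Omega_\fis$ to sit at (or arbitrarily near) a scission configuration with vanishing neck. Quantifying this uniformly in $\fis$ requires controlling the neck diameter by a capacity or monotonicity argument that prevents premature pinch-off. Finally, to promote $\{\Omega_\fis\}$ from a family of critical points to a bona fide branch, one should continue Frank's near-$\fis_*$ family by the implicit function theorem and track critical points globally via a combined degree-theoretic and bifurcation-analysis argument; ruling out turning points and secondary bifurcations throughout the entire interval $(0,\fis_*)$ is arguably the most delicate component of a rigorous proof.
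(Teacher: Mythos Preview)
The paper does not prove this statement: it is explicitly labeled a conjecture, attributed to \cite{frank2019non}, and supported only by the numerical computations in Figure~\ref{bifurcation branch} together with the (conditional) asymptotic analysis of Theorem~\ref{3D asymptotics}. So there is no ``paper's own proof'' to compare against; you are proposing a proof of an open problem.

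That said, your strategy has a structural gap that would block it well before the delicate issues you flag at the end. The Bohr--Wheeler branch is, by definition, the continuation of Frank's family of axisymmetric, mass-symmetric critical points. As the paper discusses (Section~\ref{Businaro-Gallone point} and the surrounding text), once $\fis$ drops below the Businaro--Gallone point $\fis_3\approx 0.395$ this critical point becomes unstable to mass-asymmetric perturbations and its Morse index jumps from $1$ to $2$. A one-parameter mountain-pass over paths detects index-$1$ critical points; it cannot produce an index-$2$ saddle. So for all $\fis\in(0,\fis_3)$ your construction, if it converges at all, will land on some \emph{other} critical point (plausibly on the mass-asymmetric Businaro--Gallone family, or on a spallation-type configuration), not on the Bohr--Wheeler dumbbell. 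This is not a technicality you can patch with concentration--compactness; it means the min--max level $c(\fis)$ simply does not see the object the conjecture is about on more than half of the interval $(0,\fis_*)$.

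Even on $(\fis_3,\fis_*)$ there is a second gap: nothing in the argument identifies your mountain-pass saddle with the analytic continuation of Frank's branch. You acknowledge this in your final paragraph, but the remedy you sketch (implicit function theorem plus global degree theory to rule out turning points) is precisely the content of the conjecture, not a route around it. Finally, the claim that the two limit balls must be tangent because ``a realizing path must transit continuously between a connected and a disconnected state'' conflates the saddle with the scission point; in 3-D these are distinct (see Figure~\ref{Minimum energy paths of fission for various fissility parameters}), and the saddle for each fixed $\fis>0$ has a neck of positive thickness, so the tangency of the limit must come from a quantitative estimate on that neck (as in Theorem~\ref{3D asymptotics}), not from topology of paths.
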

As shown in Figure \ref{bifurcation branch}, our numerical results indicate that Conjecture \ref{continuation in 3-D} is correct. We perform asymptotic analysis and obtain the following result.
\begin{theorem}
\label{3D asymptotics}
If Conjecture \ref{continuation in 3-D} is true, the neck circumference of the saddle point on the Bohr\textendash Wheeler branch is $\sqrt[3]{3|\Omega|/(8\pi)}5\fis\pi/3+\sqrt[3]{|\Omega|}o(\fis)$ for $\fis\ll1$.
\end{theorem}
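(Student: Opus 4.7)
The plan is to combine a first integral of the Euler--Lagrange equation \eqref{stationary point sharp interface} with the explicit limiting shape from Conjecture \ref{continuation in 3-D} to read off $\epsilon:=c_n/(2\pi)$ directly. Throughout I write $\gamma = 40\pi\fis/|\Omega|$ and parameterize the axisymmetric upper half of the saddle point by its profile $r(z)$, $z\in[0,Z_{\max}]$, with $r(0)=\epsilon$, $r'(0)=0$ at the neck and $r(Z_{\max})=0$, $r'(Z_{\max})=-\infty$ at the top pole.

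First, a direct computation from $2H = -r''/(1+r'^2)^{3/2} + 1/(r\sqrt{1+r'^2})$ shows the identity $\frac{d}{dz}\bigl[r/\sqrt{1+r'^2}\bigr] = 2H\,r\,r'$. Substituting $2H = \lambda - \gamma\phi$ from \eqref{stationary point sharp interface} and integrating from $z=0$ to $z=Z_{\max}$ gives
\begin{equation*}
\epsilon \;=\; \frac{\lambda\,\epsilon^2}{2} \;+\; \gamma \int_0^{Z_{\max}} \phi(r(z),z)\,r(z)\,r'(z)\,dz,
\end{equation*}
using $r/\sqrt{1+r'^2}=\epsilon$ at the neck and $0$ at the pole, together with $\int_0^{Z_{\max}} r\,r'\,dz = -\epsilon^2/2$.

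Next, I pass to the $\fis\to 0$ limit. By Conjecture \ref{continuation in 3-D} the saddle profile converges to the upper ball of radius $R := \sqrt[3]{3|\Omega|/(8\pi)}$ centered at $(0,0,R)$, and the Lagrange multiplier $\lambda \to 2/R$, so the $\lambda\epsilon^2/2$ term is $o(\epsilon)$. Reparameterizing by $z = R(1-\cos\theta)$, $r = R\sin\theta$ on the limit sphere yields
\begin{equation*}
\int_0^{Z_{\max}} \phi\, r\, r'\, dz \;=\; \frac{R^2}{2}\int_0^\pi \phi(\theta)\sin(2\theta)\,d\theta \;+\; o(1).
\end{equation*}
The potential $\phi(\theta)$ on the upper ball has an explicit form: the self-potential contributes the constant $R^2/3$, and the lower ball contributes $R^2/(3\sqrt{5-4\cos\theta})$, since $|\vec x - (0,0,-R)|^2 = R^2(5-4\cos\theta)$ on the upper sphere. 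The constant part contributes zero (because $\int_0^\pi\sin(2\theta)\,d\theta=0$), while for the second part the substitution $v = 5-4\cos\theta$ reduces the integral to $\tfrac18\int_1^9(5-v)v^{-1/2}\,dv = 1/3$. Therefore the integral equals $R^4/18$, and using $\gamma = 15\fis/R^3$ (from $|\Omega|=8\pi R^3/3$) yields $\epsilon = 5R\fis/6 + R\,o(\fis)$, so $c_n = (5\pi R/3)\fis + \sqrt[3]{|\Omega|}\,o(\fis)$, as claimed.

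The main obstacle is controlling the error introduced by replacing the actual saddle profile with its limit inside the integral $\int \phi\,r\,r'\,dz$. The neck region, where $r\sim\epsilon$, contributes at most $O(R^2\epsilon^2)=o(\gamma)$, which is harmless; on the bulk of the ball the error depends on the mode of convergence in Conjecture \ref{continuation in 3-D}. One would need something like uniform convergence of the boundary in a Hausdorff sense together with a $C^1$ bound away from the neck, plus a quantitative estimate $\lambda = 2/R + O(\fis)$. These are expected features of the conjectured branch but constitute the technical piece of work remaining beyond the elementary calculation above.
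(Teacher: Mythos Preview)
Your argument is correct at the same level of formal asymptotics as the paper, and arrives at the identical leading term, but by a genuinely different route. The paper perturbs the sphere in polar coordinates, writing $\rho(\theta)=1+\varepsilon p(\theta)$, expands the mean curvature to first order, and obtains a second-order ODE $p''+p'\cot\theta+2p=(3\sqrt{5-4\cos\theta})^{-1}+\text{const}$. It then solves this ODE explicitly, finds a logarithmic singularity $p\sim -(\cos\theta)(\ln\theta)/18$ near $\theta=0$, and matches this singularity against the $e\to 1$ expansion of the unduloid family to read off $\varepsilon=9(1-e)$ and hence the neck radius $(1-e)/2=\varepsilon/18$. Your approach instead exploits the first integral $\tfrac{d}{dz}\big[r/\sqrt{1+r'^2}\big]=2H\,r\,r'$ of Delaunay type, which turns the Euler--Lagrange equation into an exact boundary identity relating $\epsilon$ to $\gamma\int\phi\,r\,r'\,dz$; the neck radius then drops out from a single explicit integral on the limiting sphere, with no ODE to solve and no unduloid matching.

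What your method buys is directness and robustness: it isolates exactly the one number the theorem asks for. What the paper's method buys is the full first-order deformation $p(\theta)$ (displayed in their figure), and an explanation of \emph{why} a neck must form---the logarithmic blow-up of $p$ forces the perturbative ansatz to break down near $\theta=0$, which the unduloid resolves. Your acknowledged gap (uniform convergence of the profile and of $\lambda$ sufficient to pass to the limit in the integral) is of the same nature as the paper's, which likewise replaces $\phi$ by its value on two touching balls with only an ``$o(1)$'' justification.
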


In 2-D, we propose the following conjectures based on our numerical results in Section \ref{Analogue in 2-D}.
\begin{conjecture}
The Ren\textendash Wei branch in 2-D does not continue to arbitrarily small $\fis$, and it disappears at $\fis=\fis_4\approx0.904\pm0.001$ through a saddle-node bifurcation. As $\fis$ decreases from $\fis_*$ to $\fis_4$, the equilibrium on this branch changes its shape from a disk to an ellipse-like shape to an eye mask.
\end{conjecture}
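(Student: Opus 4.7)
The statement combines a local bifurcation analysis at $\fis_*$ with a global continuation argument terminating at a fold. Throughout I would work with the sharp-interface Euler--Lagrange equation \eqref{stationary point sharp interface} in 2-D, $\kappa+\gamma\phi=\lambda$ on $\partial\Omega$, modulo translation and rotation, parametrizing perturbations of a disk of prescribed area by a radial function $r=r(\theta)$. The goal is to upgrade Conjecture \ref{subcritical conjecture} to a rigorous local branch, then continue it globally and identify the failure of continuation as a generic fold.

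\textbf{Step 1: Local branch via Lyapunov--Schmidt.} The linearization about the disk is a Fredholm operator whose kernel (modulo translation and rotation, which can be quotiented out as in \cite{bonacini2014local}) is spanned by the $\cos 2\theta$ Fourier mode. I would perform a Lyapunov--Schmidt reduction to cubic order: the quadratic coefficient vanishes by $D_2$-symmetry, so the pitchfork type is determined by the cubic coefficient. The target is to show this coefficient has the sign opposite to the one tacitly assumed by Ren--Wei, producing a subcritical branch of saddle points for $\fis\in(\fis_*\!-\!\varepsilon,\fis_*)$ that resemble slightly deformed ellipses, thereby confirming Conjecture \ref{subcritical conjecture} locally.

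\textbf{Step 2: Global continuation.} As long as the second variation (modulo symmetries) is non-degenerate, the implicit function theorem continues the branch. Let $(\fis_4,\fis_*)$ be the maximal interval of continuation with $\fis_4$ defined as the infimum. Using uniform $C^{2,\alpha}$ estimates for \eqref{stationary point sharp interface} together with the volume constraint, and compactness of the family of stationary sets with bounded perimeter and bounded nonlocal energy, I would extract a limit configuration $\Omega_4$ as $\fis\downarrow\fis_4$. The alternative analysis to be performed concerns three possible obstructions: (a) curvature blow-up, (b) neck pinching producing two disjoint disks, and (c) appearance of a new zero eigenvalue of the second variation (the fold).

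\textbf{Step 3: Identification of the fold and shape progression.} Numerical evidence (Section \ref{Analogue in 2-D}) indicates $\Omega_4$ is a smooth bounded eye-mask-shaped set bounded away from self-intersection, so alternatives (a) and (b) should be excluded by quantitative lower bounds on the neck width and upper bounds on elongation along the branch. The remaining alternative (c) then follows, and genericity of the fold would be checked by a non-vanishing condition on the second derivative of the reduced potential in the parameter direction (the standard Crandall--Rabinowitz fold criterion). Finally, the progression disk~$\to$~ellipse-like~$\to$~eye mask would be established by tracking the sign of $r''+r$ along the branch: near $\fis_*$ the solution is $1+\varepsilon\cos 2\theta+O(\varepsilon^2)$, which is convex; non-convexity (the eye-mask signature) sets in at a definite transition value of $\fis$ between $\fis_*$ and $\fis_4$, and I would show this transition is monotone in the continuation parameter.

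\textbf{Main obstacles.} The hard part is Step 1: computing the cubic coefficient requires working out the third-order contribution of the nonlocal term $\tfrac12\langle\bm1_\Omega,(-\Delta)^{-1}\bm1_\Omega\rangle$ under a $\cos 2\theta$ perturbation, which involves the 2-D logarithmic kernel and produces logarithmic cancellations that are easy to mis-sign; the analogous 3-D computation in \cite{frank2019non} took considerable effort. A close second difficulty is excluding neck pinching in Step 3: for the nonlocal isoperimetric problem in 2-D, a lower bound on the minimum of $r(\theta)$ along a branch of saddle points has no immediate variational argument, since the branch does not minimize any obvious quantity; a possible workaround is to exploit that the eye-mask shape \emph{is} a local minimizer (as observed numerically), so one can run a mountain-pass argument between it and the disk to certify that any limit retains a positive neck. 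Pinning down the numerical value $\fis_4\approx0.904$ analytically is almost certainly beyond reach; the realistic target is qualitative proof of the fold together with two-sided analytical bounds on $\fis_4$.
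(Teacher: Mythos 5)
There is a genuine gap: what you have written is a research programme, not a proof, and the decisive steps are exactly the ones left open. Note first that the paper itself does not prove this statement — it is posed as a conjecture, supported only by phase-field computations (tracking the branch with the pACOK dynamics and the string method under no boundary conditions, which bracket the fold as $\tdfis_4\in(0.903,0.905)$); so there is no paper proof for your argument to parallel, and the question is whether your outline would close the statement on its own. It would not, for the following concrete reasons. (i) Step 1 hinges on the sign of the cubic coefficient in the Lyapunov--Schmidt reduction with the 2-D logarithmic kernel; you state this as a ``target'' but do not compute it, and its sign is precisely the content of the still-open Conjecture \ref{subcritical conjecture}, so the local subcritical branch is assumed rather than established. (ii) In Steps 2--3 the exclusion of curvature blow-up and neck pinching is deferred to ``numerical evidence (Section \ref{Analogue in 2-D})'', which is circular if the goal is an analytical proof; and the proposed mountain-pass workaround presupposes that the eye-mask configuration is a genuine local minimizer, which is again only observed numerically — moreover, in 2-D no local minimizers of \eqref{energy-liquid-drop-model} exist in the full plane, so you would first have to fix a restricted setting (bounded container or periodic cell) in which that local minimality even makes sense and then relate it back to the branch of the free-space Euler--Lagrange equation \eqref{stationary point sharp interface}. (iii) The Crandall--Rabinowitz fold nondegeneracy is invoked but not verified, and nothing in the argument rules out the branch terminating instead by loss of the graph parametrization $r=r(\theta)$ or by drift off to other degenerations before a fold is reached.

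Finally, the statement you are asked to prove contains the quantitative assertion $\fis_4\approx0.904\pm0.001$, which you concede is ``almost certainly beyond reach'' by your method; so even if Steps 1--3 were completed, the argument would establish only a qualitative version (existence of a fold at some $\fis_4\in(0,\fis_*)$ with the stated shape progression), not the statement as written. In short, your approach is a sensible and genuinely different route from the paper's (analytic continuation plus fold identification versus numerical branch tracking), and if the cubic-coefficient computation and the a priori bounds along the branch were actually carried out it would be strictly stronger than what the paper offers; as it stands, however, every load-bearing step is either unproven or borrowed from the numerics the conjecture is meant to corroborate.
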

\begin{conjecture}
\label{continuation in 2-D}
There is a separate branch consisting of the scission points. Each scission point resembles a lemniscate and is a transition state for $\fis>\fis_3$, where $\fis_3\approx0.428$ is the 2-D analogue of the Businaro\textendash Gallone point. For $\fis<\fis_3$, the scission point becomes unstable against mass-asymmetric perturbations, but the branch continues to arbitrarily small $\fis$, and the scission point on this branch converges to two touching disks as $\fis\rightarrow0$.
\end{conjecture}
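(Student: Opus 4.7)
The plan is to prove Conjecture \ref{continuation in 2-D} in three stages, working from small $\fis$ upward and then analyzing stability along the constructed branch.

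First, for $\fis\ll1$, I would construct a stationary point resembling two externally tangent disks via matched asymptotic expansions. Let $R$ satisfy $\pi R^2=|\Omega|/2$, and take as base configuration two externally tangent disks of radius $R$. In the outer region, parametrize the boundary of each near-disk as a normal graph over the nominal circle, $r(\theta)=R+\fis\,h(\theta)+o(\fis)$, and linearize the 2-D Euler--Lagrange equation $\kappa+\gamma\phi=\lambda$, where $\phi=(-\Delta)^{-1}\bm1_\Omega$ is the logarithmic potential. The linearized operator on each circle is the Laplace--Beltrami operator on $S^1$ plus a compact nonlocal term; invertibility modulo the kernel spanned by infinitesimal rigid motions and scaling is routine. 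The leading correction $h$ encodes the mutual logarithmic interaction of the two disks. Near the touching point, an inner expansion on a shrinking neighborhood, matched to the outer via intermediate scaling, determines the opening angle of the lemniscate-like singular point.

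Second, to continue the branch upward I would invoke the implicit function theorem in a Banach space of normal perturbations with prescribed corner angle at the neck. At any critical point whose second variation (modulo infinitesimal isometries) is nondegenerate, local continuation is immediate, so the branch extends over a maximal interval $(0,\fis_{\max})$ interrupted only at isolated parameter values where an eigenvalue transversally crosses zero. To identify $\fis=\fis_3$ as one such value, I would write the second variation of $I$ adapted to the 2-D logarithmic kernel, following the template of Bonacini--Cristoferi, and decompose boundary perturbations into modes symmetric and antisymmetric under reflection across the neck plane. The symmetric spectrum carries a single negative eigenvalue responsible for the transition-state character; the smallest antisymmetric (mass-asymmetric) eigenvalue should be positive for $\fis>\fis_3$, cross zero at $\fis_3$, and become negative below, raising the Morse index from $1$ to $2$ as asserted. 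The limit assertion $\fis\to0$ then follows from a $BV$-compactness argument: a family of critical points with bounded perimeter and fixed area has an $L^1$-convergent subsequence whose limit is a finite-perimeter critical set of the pure isoperimetric problem, hence a disjoint union of disks; connectivity along the branch and mass-symmetry force the limit to be two externally tangent disks of equal radii.

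The hard part will be the first stage, specifically the corner singularity at the neck. The equation $\kappa+\gamma\phi=\lambda$ fails in the classical sense at the touching point, so the construction must be carried out in a class of admissible shapes with prescribed lemniscate topology and a free contact angle, or, alternatively, via a regularization with neck width $\varepsilon>0$ followed by the passage $\varepsilon\to0$. In contrast to the 3-D setting of Theorem \ref{3D asymptotics}, where the kernel $1/|\vec x\!-\!\vec y|$ sustains a neck of positive circumference, the 2-D logarithmic kernel is too weak to hold open a neck, and the neck must genuinely pinch to a point. Determining the limiting contact angle rigorously requires solving a nontrivial mixed local--nonlocal inner problem, and verifying that this angle is consistent with the outer logarithmic-potential corrections is the principal analytic difficulty.
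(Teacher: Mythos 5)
This statement is a \emph{conjecture} in the paper: the authors do not prove it, they support it with string-method computations (Section 4.4, Figure 12) and check its consistency with a formal asymptotic expansion (Theorem 6.4/6.5, which are explicitly conditional on the conjecture). So there is no paper proof to compare against, and your text should be judged as a proposed route to settling an open problem rather than as a reconstruction of an existing argument.

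As a proof, the proposal has genuine gaps at exactly the points that make the statement open. First, the object you must construct is singular: the scission point touches itself (lemniscate-like corner), so it is not a regular critical set, the Euler--Lagrange relation $\kappa+\gamma\phi=\lambda$ fails at the neck, and the second-variation framework of Bonacini--Cristoferi that you invoke for the spectral analysis is formulated for smooth ($C^{1}$/$C^{2}$-regular) critical sets. You acknowledge this as ``the principal analytic difficulty,'' but acknowledging it does not discharge it; no function space ``with prescribed corner angle'' is specified in which either the matched-asymptotics construction or the implicit-function-theorem continuation is actually carried out, and nondegeneracy of the linearization modulo rigid motions at such a singular configuration is asserted, not proved. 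Second, the identification of $\fis_3\approx 0.428$ cannot follow from the soft symmetric/antisymmetric mode decomposition you describe: that argument at best predicts that \emph{if} an antisymmetric eigenvalue crosses zero transversally, the index jumps from 1 to 2; producing the numerical value (or even the existence and uniqueness of such a crossing) would require a quantitative spectral computation or a computer-assisted bound, neither of which is sketched. Third, the $\fis\to0$ limit via $BV$-compactness yields a finite union of disks, but concluding ``two externally tangent disks of equal radii'' needs you to rule out the components separating or developing unequal areas along the branch; ``connectivity along the branch and mass-symmetry'' are assumed to propagate, whereas below $\fis_3$ mass-symmetry is precisely the property that becomes unstable and must be enforced or proven preserved on the stationary branch itself. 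In short, your outer/inner expansion is formally consistent with the paper's own (conditional) asymptotics, but the proposal is a plausible research program, not a proof of the conjecture.
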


We perform asymptotic analysis and obtain the following result.
\begin{theorem}
\label{2D asymptotics}
If Conjecture \ref{continuation in 2-D} is true, the angle formed by the lemniscate-like scission point is $3\fis\pi/\sqrt8+o(\fis)$ for $\fis\ll1$.
\end{theorem}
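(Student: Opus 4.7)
The plan is to derive the opening angle $\alpha(\fis)$ at the lemniscate crossing via a force-balance identity on one lobe, using the 2-D Euler\textendash Lagrange equation \eqref{stationary point sharp interface}. By Conjecture \ref{continuation in 2-D}, as $\fis\to 0$ the scission point $\Omega_\fis$ splits into two touching disks of equal area $|\Omega|/2$, each of radius $r_0:=\sqrt{|\Omega|/(2\pi)}$; I set up coordinates so the limit centers are at $(\pm r_0,0)$ and the crossing sits at the origin. Near the origin the boundary of the right lobe $\Omega_\fis^+$ consists of two smooth arcs whose tangent rays at the origin take the form $(\sin(\alpha/2),\pm\cos(\alpha/2))$, so that $\alpha\to 0$ as $\fis\to 0$ and the task reduces to computing its leading order in $\fis$.

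First I would integrate the Euler\textendash Lagrange equation $\kappa+\gamma\phi=\lambda$ (with $n=2$, so $H=\kappa$) against the inward unit normal $\vec n$ around $\partial\Omega_\fis^+$. The Frenet identity $\kappa\vec n\,ds=d\vec T$, the closed-curve condition $\oint d\vec T=0$, and the tangent jump at the corner combine to give
\begin{equation*}
\int_{\partial\Omega_\fis^+}\kappa\vec n\,ds=-\bigl(2\sin(\alpha/2),\,0\bigr).
\end{equation*}
Substituting $\kappa=\lambda-\gamma\phi$ and invoking $\int_{\partial\Omega_\fis^+}\vec n\,ds=0$ together with the divergence theorem $\int_{\partial\Omega_\fis^+}\phi\vec n\,ds=-\int_{\Omega_\fis^+}\nabla\phi\,\dd{A}$ yields the force-balance law
\begin{equation*}
\gamma\int_{\Omega_\fis^+}\nabla\phi\,\dd{A}=-\bigl(2\sin(\alpha/2),\,0\bigr).
\end{equation*}

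Next I would evaluate the integral on the left in the limit $\fis\to 0$. The antisymmetry of $\nabla_x G(x,y)$ under swapping $x\leftrightarrow y$ forces the self-interaction $\int_{\Omega_\fis^+}\nabla\phi_{\Omega_\fis^+}\,\dd{A}$ to vanish, so only the cross term $\int_{\Omega_\fis^+}\nabla\phi_{\Omega_\fis^-}\,\dd{A}$ survives. As $\fis\to 0$, $\Omega_\fis^\pm$ approach disjoint disks of radius $r_0$ centered at $(\pm r_0,0)$, so $\phi_{\Omega_\fis^-}$ is harmonic on $\Omega_\fis^+$; the 2-D shell theorem (a uniform disk's exterior potential agrees with the point-charge potential of its total mass placed at its center) combined with the mean-value property of harmonic functions gives
\begin{equation*}
\int_{\Omega_\fis^+}\nabla\phi_{\Omega_\fis^-}\,\dd{A}\longrightarrow\pi r_0^2\,\nabla\Bigl(-\tfrac12 r_0^2\ln|x+(r_0,0)|\Bigr)\bigg|_{x=(r_0,0)}=\bigl(-\pi r_0^3/4,\,0\bigr).
\end{equation*}
Combined with the balance law, $\sin(\alpha/2)=\gamma\pi r_0^3/8+o(\gamma)$, so $\alpha=\gamma\pi r_0^3/4+o(\gamma)$. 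Inserting $r_0^3=|\Omega|^{3/2}/(2\pi)^{3/2}$ and the 2-D relation $\gamma=12\fis\pi^{3/2}|\Omega|^{-3/2}$ from \eqref{rescaled energy-liquid-drop-model} produces $\alpha=3\fis\pi/\sqrt{8}+o(\fis)$, as claimed.

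The main obstacle is justifying the limit passage in the shell-theorem step, which requires Conjecture \ref{continuation in 2-D} in a quantitative form: one needs $\Omega_\fis^\pm$ to approach the limit disks rapidly enough (for instance, in Hausdorff metric with $O(\fis)$ control away from the corner) that replacing $\phi_{\Omega_\fis^-}$ by its zeroth-order point-charge potential inside the integral is legitimate to the required accuracy. A secondary difficulty is validating the corner-jump computation: the boundary at the crossing must be genuinely piecewise $C^1$ with a positive opening angle, rather than a cusp with rapidly oscillating tangent direction, which is again an aspect of Conjecture \ref{continuation in 2-D} that would need to be established before the above argument can be made rigorous.
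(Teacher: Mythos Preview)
Your argument is correct and takes a genuinely different route from the paper. The paper parameterizes the perturbed lobe boundary in polar coordinates as $\rho(\theta)=1+\varepsilon p(\theta)$, expands the curvature and the far-lobe potential in $\varepsilon$, substitutes into the Euler\textendash Lagrange equation to obtain the linear ODE $p+p''+\tfrac14\ln(5-4\cos\theta)=\text{const}$, solves it explicitly (via Mathematica), and reads off $\partial_+p(0)=-\pi/8$ to get the angle $\varepsilon\pi/4$. Your approach bypasses the ODE entirely: integrating $\kappa\vec n=\dd\vec T$ around one lobe converts the pointwise Euler\textendash Lagrange equation into a single vector identity equating the tangent jump $-(2\sin(\alpha/2),0)$ at the corner with the net Coulomb force $\gamma\int_{\Omega^+}\nabla\phi$, after which the shell theorem and mean-value property reduce the latter to a point-charge computation. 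This is more conceptual and physically transparent\textemdash the angle is literally a force balance between line tension at the corner and electrostatic repulsion between the lobes\textemdash and it requires no explicit solution of any differential equation. What the paper's approach buys in exchange is the full first-order shape $p(\theta)$ of the perturbed lobe, not just the corner angle; your integral identity sees only the total force, so it cannot recover the profile. Your closing caveats about needing quantitative convergence of $\Omega_\fis^\pm$ to the limit disks, and a genuine piecewise-$C^1$ corner rather than a cusp, are exactly the right regularity hypotheses to flag; the paper's proof is at the same formal level and implicitly assumes the analogous smoothness of $p$.
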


\begin{proof} [of Theorem \ref{2D asymptotics}]
Consider the energy functional \eqref{rescaled energy-liquid-drop-model}. We study its critical point configuration that is close to two touching disks when $\gamma\ll1$. We expect such a critical point to satisfy the Euler\textendash Lagrange equation \eqref{stationary point sharp interface} almost everywhere. When $\gamma=0$, obviously \eqref{stationary point sharp interface} is satisfied almost everywhere by two perfect disks touching at a single point. For $0<\gamma\ll1$, we have the following intuition: the potential $\phi$ is higher at the center where the two disk-like components touch each other, and thus the curvature $H$ is smaller there, so an angle must be formed by the intersecting interfaces. We want to determine the angle asymptotically.

Fix $|\Omega|$ to be $2\pi$, then in the limit of $\gamma=0$, the radius of each disk is 1, as described by $\rho(\theta)=1$ for $\theta\in[0,2\pi)$ in polar coordinates (corresponding to the black circles in Figure \ref{Perturbation of two touching disks}). For $0<\gamma\ll1$, we expect the critical point configuration to be slightly perturbed, as described by $\rho(\theta)=1+\varepsilon p(\theta)$, with $0<\varepsilon\ll1$ and $p\in C^2[0,2\pi]$ (corresponding to the red curves in Figure \ref{Perturbation of two touching disks}). We assume that $p$ is symmetric, i.e., $p(\theta)=p(2\pi-\theta)$. We can also require $p(0)=0$. Note that $p$ describes the deformation of the left half, and that the right half goes through a symmetric deformation.
\begin{figure}[H]
\centering
\begin{tikzpicture}
\draw (-40*1.3pt,0*1.3pt) circle (40*1.3pt);
\draw (40*1.3pt,0*1.3pt) circle (40*1.3pt);
\draw [red] (-43*1.3pt,0*1.3pt) ellipse (43*1.3pt and 37.2*1.3pt);
\draw [red] (43*1.3pt,0*1.3pt) ellipse (43*1.3pt and 37.2*1.3pt);
\draw [-{Stealth[length=5*1.3pt]}] (-40*1.3pt,0*1.3pt) -- (-73*1.3pt,22.2*1.3pt);
\draw [red, -{Stealth[length=5*1.3pt]}] (-40*1.3pt,0*1.3pt) -- (-35*1.3pt,36.3*1.3pt);
\draw [dashed] (0*1.3pt,0*1.3pt) -- (-40*1.3pt,0*1.3pt);
\draw [red] (-36*1.3pt,0) arc (0:82:4*1.3pt);
\node [red] at (-34*1.3pt,5*1.3pt) {\fontsize{10.4pt}{0pt}\selectfont$\theta$};
\node[fill,circle,inner sep=0.8*1.3pt,label={below:\fontsize{10.4pt}{0pt}\selectfont$O$}] at (-40*1.3pt,0*1.3pt) {};
\node[fill,circle,inner sep=0.8*1.3pt,label={right:\fontsize{10.4pt}{0pt}\selectfont$T$}] at (0*1.3pt,0*1.3pt) {};
\node [red,rotate=82] at (-41.5*1.3pt,18*1.3pt) {\fontsize{10.4pt}{0pt}\selectfont$1\!\!+\!\varepsilon p(\theta)$};
\node [rotate=-35] at (-55*1.3pt,15*1.3pt) {\fontsize{10.4pt}{0pt}\selectfont$1$};
\end{tikzpicture}
\caption{Perturbation of two touching disks. Black: perfect disks. Red: perturbed disks.}
\label{Perturbation of two touching disks}
\end{figure}
Under mass-symmetric assumptions, the area of the left half is $\pi$, i.e.,
\begin{equation*}
\pi=\frac12\int_0^{2\pi}\hspace{-7pt}\rho^2(\theta)\dd{\theta}=\pi+\varepsilon\int_0^{2\pi}\hspace{-7pt}p(\theta)\dd{\theta}+o(\varepsilon),
\end{equation*}
therefore we require $\int_0^{2\pi}\hspace{-3pt}p(\theta)\dd{\theta}=0$.

The curvature $H$ is given by
\begin{equation*}
H=\frac{2\rho'^2+\rho^2-\rho\rho''}{(\rho'^2+\rho^2)^{3/2}}=1-(p+p'')\varepsilon+o(\varepsilon).
\end{equation*}

When $\varepsilon\ll1$, we expect $\Omega$ to be very close to two perfect disks. Hence, we expect
\begin{equation*}
\phi(\theta)=\frac{\ln(5\!-\!4\cos\theta)}{-4\pi}\pi+o(1),
\end{equation*}
where $\phi(\theta)$ denotes the potential $\phi$ evaluated at $\big(\rho(\theta),\theta\big)$ in polar coordinates.

According to \eqref{stationary point sharp interface}, we have
\begin{equation*}
1-(p+p'')\varepsilon+o(\varepsilon)+\gamma\frac{\ln(5\!-\!4\cos\theta)}{-4}+\gamma\,o(1)=\lambda,
\end{equation*}
therefore we know that $\varepsilon$ and $\gamma$ are of the same order, and without loss of generality, we assume $\varepsilon=\gamma$. Equating the first-order terms in the above equation, we obtain
\begin{equation}
\label{ode 2d}
p(\theta)+p''(\theta)+\frac{\ln(5\!-\!4\cos\theta)}{4}=\text{constant},
\end{equation}
where the constant is independent of $\theta$, and arises from the fact that $\lambda$ may be dependent on $\varepsilon$. Using WOLFRAM MATHEMATICA, we obtain a solution satisfying \eqref{ode 2d} as well as all the other conditions that we have imposed:
\begin{equation*}
16p(\theta)=
6\arccot\big(3\tan(\theta/2)\big)\sin\theta + (6\!+\!8\ln2)(1\!-\!\cos\theta) + 5(\theta\!-\!\pi)\sin\theta + (5\cos\theta\!-\!4)\ln(5\!-\!4\cos\theta),
\end{equation*}
with the constant in \eqref{ode 2d} adopting the value $(1\!+\!4\ln2)/8$. As shown in Figure \ref{2-D rho and mirror}, since $\partial_+p(0)=-\pi/8$ and $\partial_-p(2\pi)=\pi/8$, we know that the angle formed by the tangent lines is $\varepsilon\pi/4+o(\varepsilon)$. By \eqref{rescaled energy-liquid-drop-model} we have $\varepsilon=\gamma=12\fis(2\pi/\pi)^{-3/2}$, so the angle is $3\fis\pi/\sqrt8$ asymptotically.
\begin{figure}[H]
\centering
\includegraphics[width=234pt]{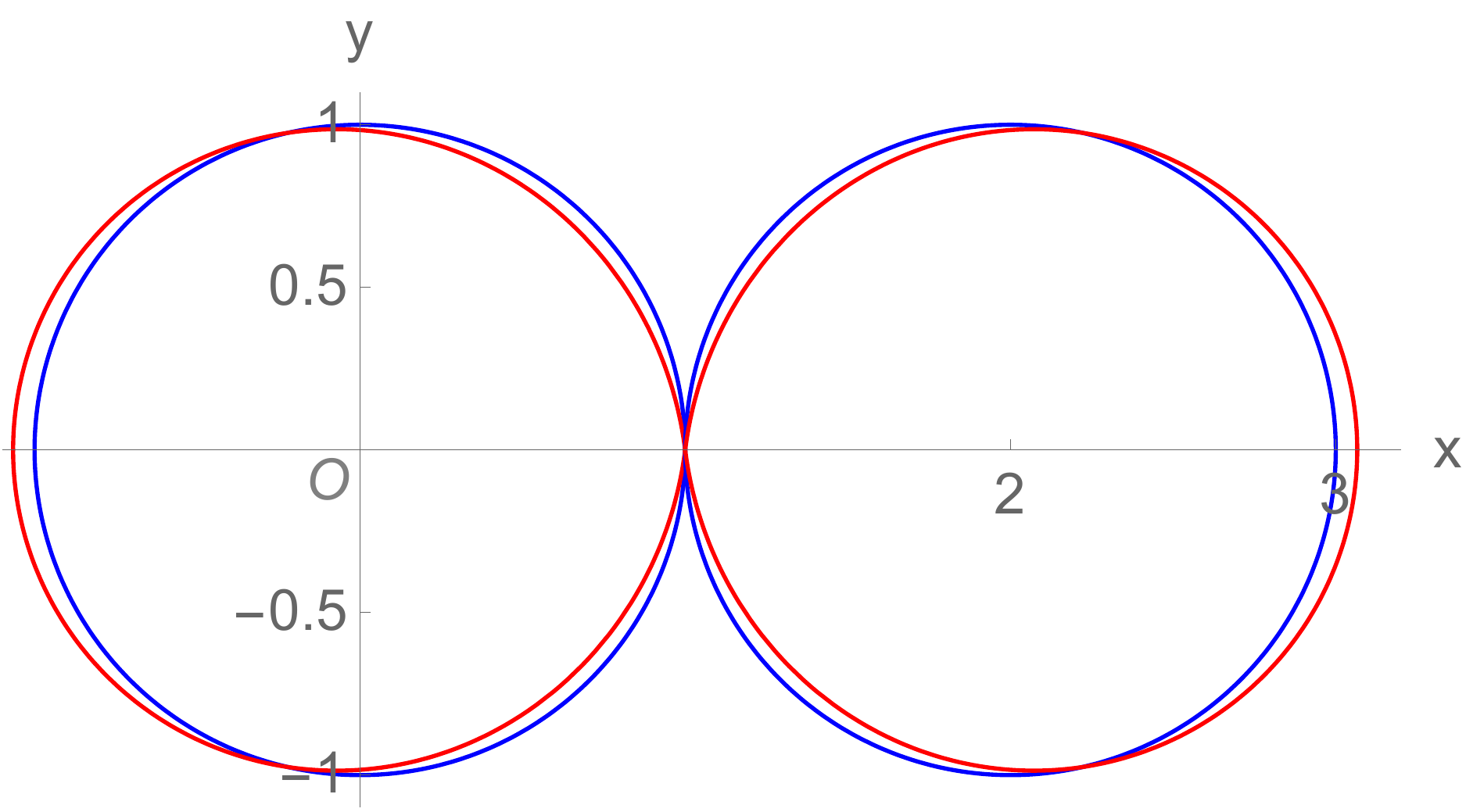}
\caption{Blue: $\rho=1$ and its mirror image. Red: $\rho=1+p/\pi$ and its mirror image.}
\label{2-D rho and mirror}
\end{figure}
\end{proof}

\begin{proof} [of Theorem \ref{3D asymptotics}]
When $\gamma=0$, the Euler\textendash Lagrange equation \eqref{stationary point sharp interface} is satisfied almost everywhere by two perfect balls touching at a single point. For $0<\gamma\ll1$, we have the following intuition: the potential $\phi$ is higher at the center where the two ball-like components are connected, and thus the mean curvature $H$ is smaller there, so naturally we expect a cone or a neck to be formed. However, there cannot be a cone-like structure connecting the two ball-like components, because the mean curvature of a cone is very large near its apex. Instead, it is reasonable to have a neck there, resembling the neck of an unduloid, because $\phi$ and thus $H$ are almost constant there (their gradients vanish due to symmetry). We want to determine the neck circumference asymptotically.

We assume that the equilibrium is axisymmetric. In Figure \ref{cross-section of balls}, we make no assumption on the existence of a neck, and still use a similar parameterization as in the Proof of Theorem \ref{2D asymptotics} (the 2-D case), i.e., we let $\rho(\theta)=1+\varepsilon p(\theta)$, but no longer require $p(0)=0$. We require $p(\pi)=0$ instead, and assume that $p$ is symmetric, i.e., $p(\theta)=p(2\pi-\theta)$.
\begin{figure}[H]
\centering
\begin{tikzpicture}
\draw (-40*1.3pt,0*1.3pt) circle (40*1.3pt);
\draw (40*1.3pt,0*1.3pt) circle (40*1.3pt);
\draw [red] (-37*1.3pt,0*1.3pt) ellipse (43*1.3pt and 37.2*1.3pt);
\draw [red] (49*1.3pt,0*1.3pt) ellipse (43*1.3pt and 37.2*1.3pt);
\draw [-{Stealth[length=5*1.3pt]}] (-40*1.3pt,0*1.3pt) -- (-4*1.3pt,-17.0*1.3pt);
\draw [red, -{Stealth[length=5*1.3pt]}] (-40*1.3pt,0*1.3pt) -- (-48*1.3pt,35.7*1.3pt);
\draw [dashed] (0*1.3pt,0*1.3pt) -- (-40*1.3pt,0*1.3pt);
\draw [red] (-36*1.3pt,0) arc (0:103:4*1.3pt);
\node [red] at (-34*1.3pt,5*1.3pt) {\fontsize{10.4pt}{0pt}\selectfont$\theta$};
\node[fill,circle,inner sep=0.8*1.3pt,label={below:\fontsize{10.4pt}{0pt}\selectfont$O$}] at (-40*1.3pt,0*1.3pt) {};
\node [red,rotate=-77] at (-48*1.3pt,15*1.3pt) {\fontsize{10.4pt}{0pt}\selectfont$1\!\!+\!\varepsilon p(\theta)$};
\node [rotate=-26] at (-25*1.3pt,-12*1.3pt) {\fontsize{10.4pt}{0pt}\selectfont$1$};
\end{tikzpicture}
\caption{Meridians of two touching balls. Black: perfect balls. Red: perturbed balls.}
\label{cross-section of balls}
\end{figure}

The mean curvature $H$ is given by \cite[Equation (15.13)]{gray2006modern}
\begin{equation*}
H = \frac{x' y''-x'' y'}{2\left(x'^2+y'^2\right)^{3/2}}-\frac{x'/y}{2\sqrt{x'^2+y'^2}},
\end{equation*}
where $x(\theta)=\rho(\theta)\cos(\theta)$ and $y(\theta)=\rho(\theta)\sin(\theta)$. For $\rho=1+\varepsilon p$ and $\varepsilon\ll1$, we have
\begin{equation*}
H=1-\frac\varepsilon2\big(p''(\theta)+p'(\theta)\cot\theta+2 p(\theta)\big)+o(\varepsilon).
\end{equation*}
Under the volume constraints and mass-symmetric assumptions, the volume of the left half is fixed to be $4\pi/3$, that is
\begin{equation*}
\frac{4\pi}3=\bigg\vert\pi\hspace{-4pt}\int_0^{\pi}\hspace{-6pt}y^2x'\dd{\theta}\bigg\vert=\frac{4\pi}3+\pi\varepsilon\hspace{-3pt}\int_0^{\pi}\hspace{-5pt}\big(3 p(\theta) \sin\theta-p'(\theta)\cos\theta\big)\sin^2\theta\dd{\theta}+o(\varepsilon).
\end{equation*}
Therefore we require $\int_0^{\pi}\hspace{-3pt}\big(3 p(\theta) \sin\theta-p'(\theta)\cos\theta\big)\sin^2\theta\dd{\theta}=0$.

For $\varepsilon\ll1$, we expect $\Omega$ to be very close to two perfect balls touching each other at a single point. Hence, we expect
\begin{equation*}
\phi(\theta)=\Big(4\pi\sqrt{5\!-\!4\cos\theta}\Big)^{-1}\frac{4\pi}3+(4\pi)^{-1}\frac{4\pi}3+o(1).
\end{equation*}
According to \eqref{stationary point sharp interface}, we have
\begin{equation*}
2-\varepsilon\big(p''(\theta)+p'(\theta)\cot\theta+2 p(\theta)\big)+o(\varepsilon)+\frac\gamma3\Big(\sqrt{5\!-\!4\cos\theta}\Big)^{-1}+\frac\gamma3+\gamma\,o(1)=\lambda.
\end{equation*}
As before, by letting $\varepsilon=\gamma$ and equating the first-order terms in the above equation, we obtain
\begin{equation*}
p''(\theta)+p'(\theta)\cot\theta+2 p(\theta)=\Big(3\sqrt{5\!-\!4\cos\theta}\Big)^{-1}+\text{constant}.
\end{equation*}
With the help of WOLFRAM MATHEMATICA, we obtain a solution, with the constant on the above right-hand side adopting the value $-5/36$,
\begin{equation}
\label{3-D p(theta)}
\begin{aligned}
72 p(\theta)=&\;10\sqrt{5\!-\!4\cos\theta}-23+\Big(\,7+6\ln(4/27)-2\ln(1\!-\!\cos\theta)\\
&+18\ln\big(\sqrt{5\!-\!4\cos\theta}\!+\!3\big)-14\ln\big(\sqrt{5\!-\!4\cos\theta}\!+\!1\big)\,\Big)\cos\theta.
\end{aligned}
\end{equation}
Figure \ref{3-D Blue: and its mirror image. Red: } visualizes the deformation. The singularity of $p$ indicates the formation of a neck near $(x,y)=(1,0)$. As mentioned earlier, near the neck, $\phi$ and thus $H$ are almost constant, so the neck should resemble the neck of an unduloid.
\begin{figure}[H]
\centering
\includegraphics[width=234pt]{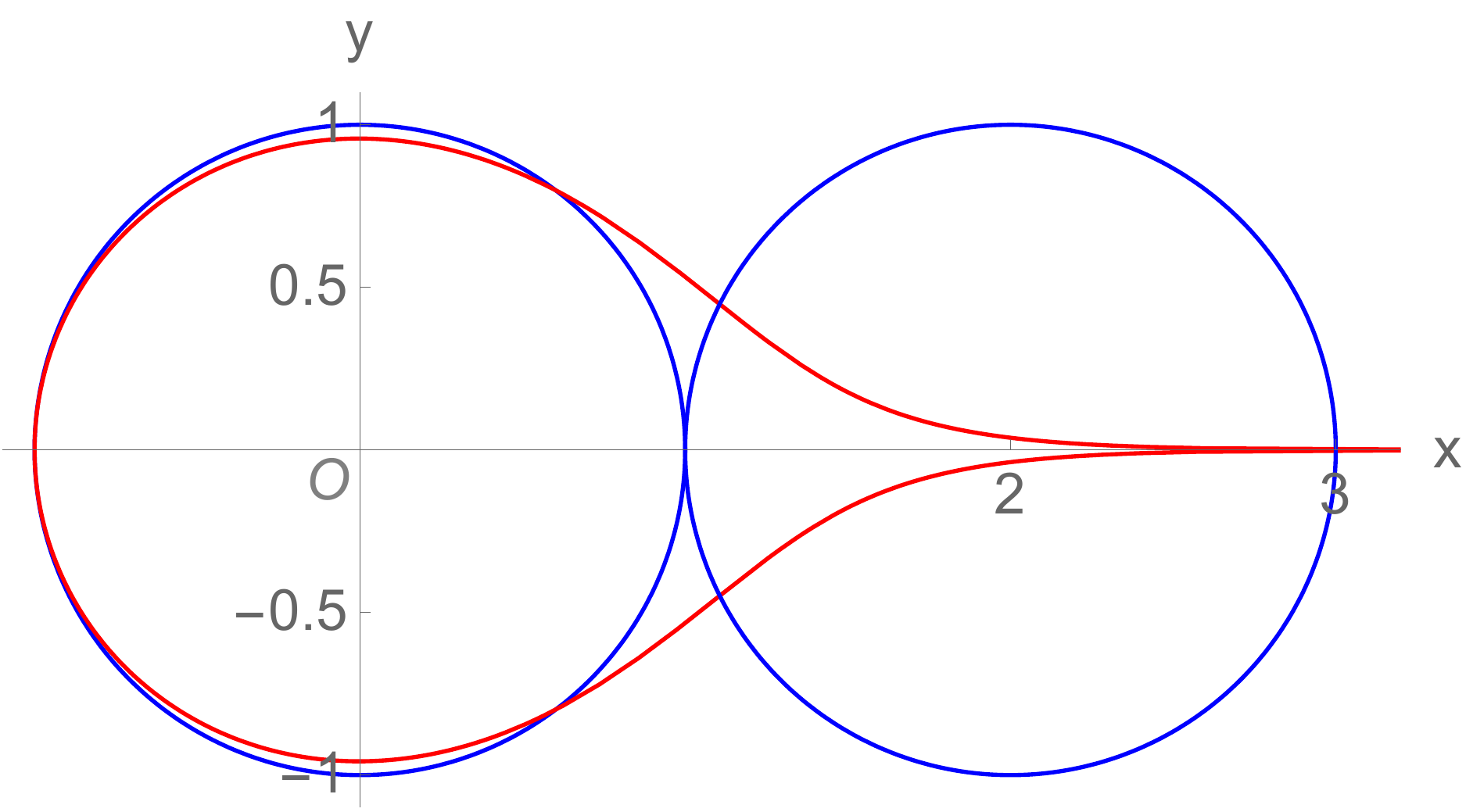}
\caption{Blue: $\rho=1$ and its mirror image. Red: $\rho=1+15p/\pi$.}
\label{3-D Blue: and its mirror image. Red: }
\end{figure}

As shown in Figure \ref{unduloid picture}, unduloids are surfaces of revolution with constant mean curvature. The meridians of unduloids (whose mean curvature is always 1) are parameterized by \cite[Theorem 3.1]{hadzhilazova2007unduloids}
\begin{equation*}
x(u)=\frac{1\!-\!e}2 F\left(\frac u2,\frac{2\sqrt e}{1\!+\!e}\right)+\frac{1\!+\!e}2 E\left(\frac u2,\frac{2\sqrt e}{1\!+\!e}\right),\quad y(u)=\sqrt{1+2 e \cos u+e^2}\big/2,\quad u\in[0,2\pi],
\end{equation*}
where $e\in(0,1)$ is the eccentricity, $F(\,\cdot\,,\,\cdot\,)$ and $E(\,\cdot\,,\,\cdot\,)$ are the incomplete elliptic integral of the first and second kinds, respectively.
\begin{figure}[H]
\centering
\includegraphics[width=180pt]{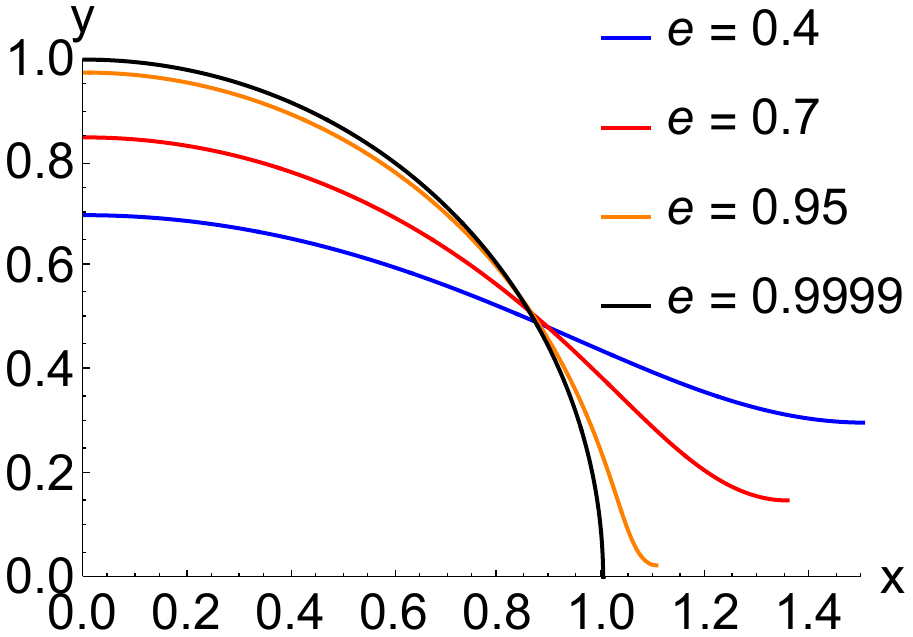}
\caption{Meridians of half periods of unduloids.}
\label{Meridians of half periods of unduloids.}
\end{figure}
As shown in Figure \ref{Meridians of half periods of unduloids.}, the meridian converges to a limiting curve $x=\sqrt{1\!-\!y^2}$ as $e\rightarrow1$. For $e\approx1$, we consider the deviation of the meridian from the limiting curve and then compare it to \eqref{3-D p(theta)}. To this end, we eliminate $u$ using $u=\arccos\big((4y^2\!-\!1\!-\!e^2)/(2e)\big)$, and then compute the Taylor expansion of $x(y)$ at $e=1$ as follows,
\begin{equation*}
x=\sqrt{1\!-\!y^2}+\frac{e\!-\!1}{2}\Big(\sec\theta-\ln\cot\frac\theta2\Big)+o(1\!-\!e),
\end{equation*}
where $\theta=\arcsin y$. Write $\rho=\sqrt{x^2+y^2}$, we have
\begin{equation}
\label{Taylor expansion of unduloids}
\rho=1+\frac{e\!-\!1}{2}\Big(1-\cos\theta\ln\cot\frac\theta2\Big)+o(1\!-\!e),
\end{equation}
where $\ln\cot(\theta/2)=-\ln\theta+\ln2+O\big(\theta^2\big)$.

In order to obtain the relation between $e$ and $\varepsilon$, we compare \eqref{Taylor expansion of unduloids} to $\rho=1+\varepsilon p(\theta)$ (with $p$ defined in \eqref{3-D p(theta)}), up to leading-order terms. Noticing that in \eqref{3-D p(theta)}, we have $\ln(1\!-\!\cos\theta)=2\ln\theta-\ln2+O\big(\theta^2\big)$. We want to match the following two leading order terms
\begin{equation*}
1+\frac{e\!-\!1}{2}\cos\theta\ln\theta\quad\text{and}\quad1+\varepsilon\frac{-2\cos\theta}{72}2\ln\theta,
\end{equation*}
therefore we obtain $\varepsilon=9(1\!-\!e)$.

Note that the minimum of $y(u)$ is attained at $y(\pi)=(1\!-\!e)/2$, so the neck circumference is $2\pi(1\!-\!e)/2=\varepsilon\pi/9$. According to \eqref{rescaled energy-liquid-drop-model}, we have
\begin{equation*}
\varepsilon=\gamma=\frac{40\pi\fis}{2(4\pi/3)}=15\fis.
\end{equation*}
Now that we know the neck circumference is asymptotically $5\fis\pi/3$ under the volume constraint $|\Omega|=2(4\pi/3)$, by rescaling we know that the neck circumference is asymptotically $\sqrt[3]{3|\Omega|/(8\pi)}5\fis\pi/3$ under an arbitrary volume constraint.
\end{proof}

\section{Discussion}

\label{Discussion}

In the literature (e.g., \cite[Figure 3]{zhao2022bifurcation}), it has been noticed that as $\fis$ increases, the energy minimizer changes from one ball to a few balls to many balls (or disks in the 2-D case). For $\fis$ sufficiently large, the minimizer resembles many periodically arranged tiny balls (or disks) of roughly equal sizes \cite{choksi2010small,knupfer2016low}. In the context of pattern formation, this is called a saturation process from a coarse structure to a fine structure, and is not well understood mathematically \cite[Page 1120]{ren2009oval}. The present work sheds light on how a ball (or disk) loses stability. We have shown that the saturation process is hysteretic, because when $\fis$ slightly exceeds the threshold $\fis_*$, a ball (or disk) will destabilize and eventually go through a dramatic change over a long period of time, after which even if $\fis$ falls back below $\fis_*$, the shape will not be restored to a ball (or disk). More specifically, our numerical results show that the oblate-like stationary points are unstable with respect to non-axisymmetric perturbations, and that the pitchfork bifurcation from a disk is subcritical instead of supercritical. But it remains open to analytically prove that the oblate-like stationary points are indeed unstable, and that the pitchfork bifurcation from a disk is indeed subcritical. As a side note, in some other systems such as a rotating liquid drop with no charge or a rotating self-gravitating celestial object \cite{brown1980shape,swiatecki1974rotating}, similar phenomena occur \cite[Figure 4.9]{heinecomputations}.

For $\fis>\fis_3$, there is a Businaro\textendash Gallone branch consisting of mass-asymmetric saddle points which are of index 2 \cite[Page 144]{cohen1962deformation}. Nix as well as Strutinski\u{\i} \cite[Middle of Page 251]{nix1969further} was able to track this branch for $\fis\in(\fis_3,0.80)$; for $\fis>0.80$, the fate of this family is unknown but of great interest. It will be interesting and promising to use our phase-field approach together with a higher-index extension of the shrinking dimer method \cite[Bottom of Page 1919]{zhang2012shrinking} to search for such index-2 saddle points for $\fis>0.80$.

In the present work, we have only considered the simplest liquid drop model. To more accurately describe atomic nuclei, the liquid drop model would need some corrections. It is likely that one could adapt our method to handle more complicated modified versions of the liquid drop model. For example, we may add a rotational energy term to handle angular momenta \cite[Page 6]{krappe2012theory}, or add a surface curvature energy term \cite[Equation (19)]{ivanyuk2009optimal}. We may also consider pairing correlations \cite[Page 18]{krappe2012theory}, as well as the quantum shell structure effects. After those modifications, the minimizer may no longer be a ball but resemble an oblate or prolate.

From a mathematical point of view, it might also be interesting to explore the problem \eqref{energy-liquid-drop-model} with the Euclidean perimeter replaced by the 1-perimeter \cite{goldman2019optimality}, a fractional perimeter \cite{dipierro2017rigidity}, or a general nonlocal perimeter \cite{cesaroni2017isoperimetric}, with the Coulomb potential replaced by a Yukawa potential \cite{fall2018periodic}, a Riesz potential \cite{bonacini2014local}, a fractional inverse Laplacian kernel \cite[Appendix]{chan2019lamellar}, or a general nonlocal kernel \cite{luo2022nonlocal}.

\section*{Acknowledgments}

The authors would like to thank Kuang Huang, Chong Wang, Rupert Frank and Michael Weinstein for helpful discussions, and thank the reviewer for helpful comments. The computing resources were provided by Columbia University's Shared Research Computing Facility project, which is supported by NIH Research Facility Improvement Grant 1G20RR030893-01, and associated funds from the New York State Empire State Development, Division of Science Technology and Innovation (NYSTAR) Contract C090171, both awarded April 15, 2010.

\section*{Data availability}

The data that support the findings of this study are openly available in Open Science Framework at \href{http://doi.org/10.17605/OSF.IO/76TDW}{DOI:10.17605/OSF.IO/76TDW}.

\bibliographystyle{apalike}
\bibliography{main}

\vfill
\begin{center}
\LARGE Appendices appear after this page
\end{center}
\vspace{-10pt}

\appendix\newpage

\section{Local minimality of two balls infinitely far apart}
\label{Local minimality of two balls}

When $\fis$ exceeds $\fis_1$, the shape consisting of two balls of equal radii infinitely far apart becomes a generalized local minimizer. By "generalized" we mean that the two balls are separated very far apart and their Coulomb interactions are very weak, so that we can consider the limit where there is no Coulomb interaction between them (see \cite[Definition 4.3]{knupfer2016low}). To make it more mathematically rigorous, we could consider a weighted background potential well which confines the shape in a bounded region, and then study the limit as the weight goes to 0 (see \cite{alama2019droplet}). Alternatively, we could use periodic boundary conditions which confines the shape inside a cell, and then let the size of the cell go to infinity (see \cite{choksi2010small}).

In this appendix, we give a simple derivation of $\fis_1$. We only consider shapes consisting of two balls infinitely far apart (not necessarily of equal sizes). The reason is that there is no Coulomb interaction between the two components which are infinitely far apart, so that we can use the following ansatz: each component is a ball that is a local minimizer of \eqref{energy-liquid-drop-model}.

For $\Omega$ being a ball of volume $V$, the energy in \eqref{energy-liquid-drop-model} is
\begin{equation*}
g(V) = \Big(1\!+\!\frac V{20\pi}\Big)\sqrt[3]{36\pi V^2}.
\end{equation*}
Now consider two balls with volumes $v$ and $V-v$, respectively, infinitely far apart so that we can neglect the interaction between them. The total energy of those two balls equals $g(v)+g(V\!-\!v)$, and are shown in Figure \ref{two balls energy} (cf. \cite[Figure 9]{frankel1947calculations}, where the two balls are touching). For $\fis>\fis_1$ (recalling $\fis=V/(40\pi)$ and $\fis_1=1/5$), the energy attains a local minimum when the two balls have equal radii. At $\fis=\fis_1$, there is a subcritical bifurcation. For $\fis<\fis_1$, it is no longer stable to have two equally large balls, one of the two balls will shrink and eventually vanish, and the other one will grow into a larger ball.
\begin{figure}[H]
\centering
\includegraphics[width=300pt]{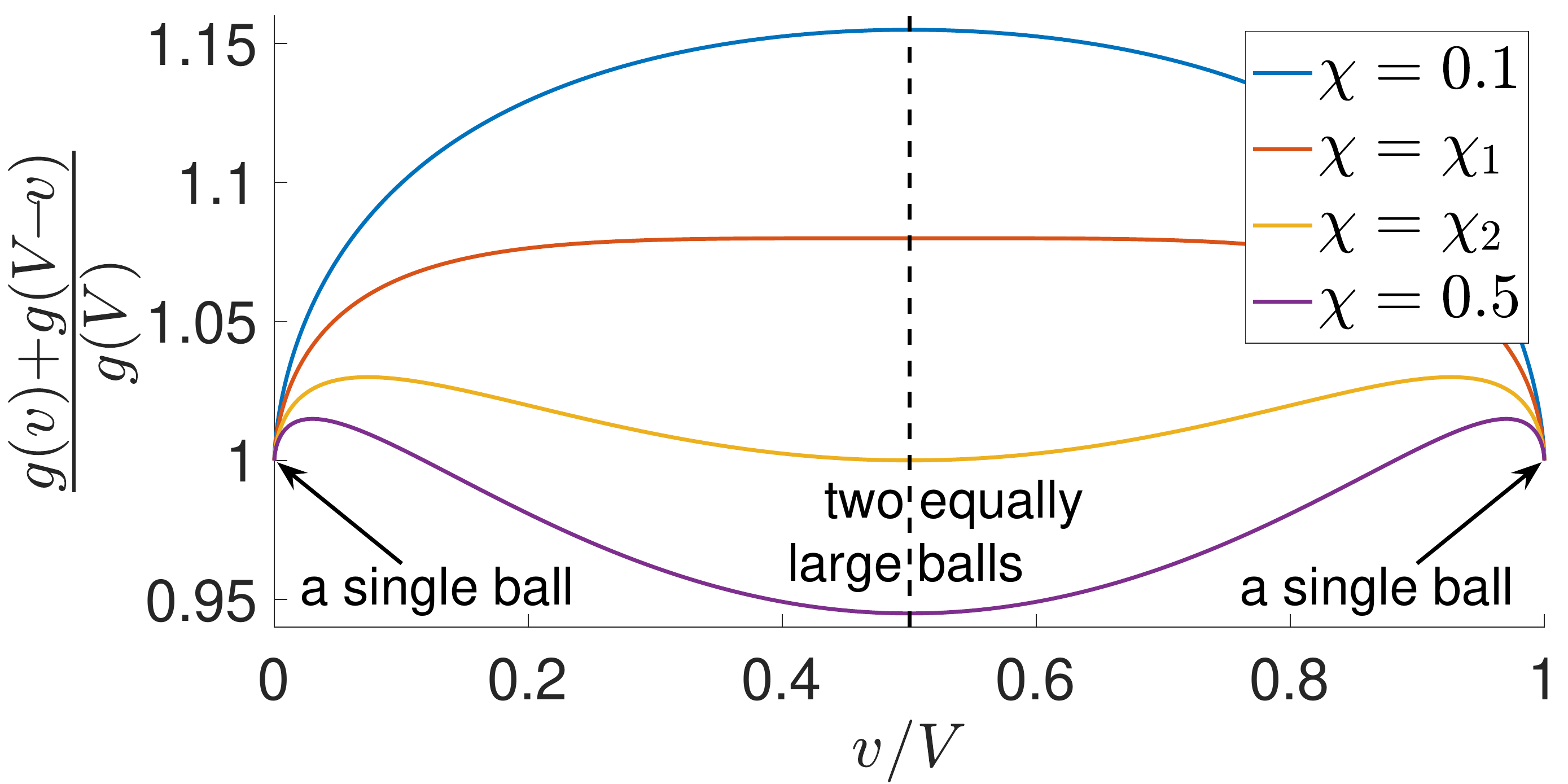}
\caption{Energy of two balls infinitely far apart. Horizontal axis: volume fraction. Vertical axis: energy ratio of two balls to one ball. Dashed vertical line: the cases where the two balls have equal radii.}
\label{two balls energy}
\end{figure}
The calculation of $\fis_1$ is straightforward by looking at the sign of the following second-order derivative:
\begin{equation*}
\frac{\dd^2}{\dd v^2}\bigg|_{v=V/2}\Big(g(v)+g(V\!-\!v)\Big)=2(V\!-\!8\pi)\big/\sqrt[3]{81\pi^2V^4}.
\end{equation*}

\section{Euler\textendash Lagrange equations for diffuse and sharp interfaces}
\label{Formal derivation of Euler-Lagrange equation from diffuse to sharp interface settings}

In this appendix, let us formally derive \eqref{stationary point sharp interface} from \eqref{stationary point diffuse interface} as $c\rightarrow0$. Let $\Gamma$ be the interface $\big\{\vec x: \eta(\vec x)=1/2\big\}$. Assuming $c\ll1$, in the vicinity of $\Gamma$, the first two terms in the left-hand side of \eqref{stationary point diffuse interface} are dominant, therefore we follow the heuristics in \cite[Equation (6)]{du2020phase} and assume that the stationary solution has the following approximation in the vicinity of $\Gamma$,
\begin{equation*}
\eta\big(\vec x\big)=\frac{1+\tanh\big(k d(\vec x)\big)}2,
\end{equation*}
where $k>0$ is an undetermined coefficient, and $d(\vec x)$ is the signed distance function between $\vec x$ and $\Gamma$. By differentiating both sides we obtain
\begin{equation*}
\begin{aligned}
2\Delta\eta&=\nabla\cdot\nabla\tanh(k d)=\nabla\cdot\big(k\tanh'(k d)\nabla d\big)\\
&=k\,\nabla\tanh'(k d)\cdot\nabla d+k\tanh'(k d)\Delta d\\
&=k^2\tanh''(k d)\nabla d\cdot\nabla d+k\tanh'(k d)\Delta d,
\end{aligned}
\end{equation*}
where
\begin{equation*}
\tanh' = 1-\tanh^2=\frac23f'\Big(\frac{1\!+\!\tanh}2\Big),\quad \tanh'' = 2\tanh^3-2\tanh=\frac43W'\Big(\frac{1\!+\!\tanh}2\Big).
\end{equation*}
According to \cite[Bottom of Page 11]{du2020phase}, we have $\nabla d\cdot\nabla d=\|\nabla d\|_2^2=1$ and $\Delta d=-(n\!-\!1)H$. Therefore
\begin{equation*}
\begin{aligned}
\sqrt{6/c}\,W'(\eta)-\sqrt{6c}\,\Delta\hspace{0.5pt}\eta&=\sqrt{\frac6c}\,W'(\eta)-\sqrt{\frac{3c}2}\big(k^2\tanh''(k d)-(n\!-\!1)Hk\tanh'(k d)\big)\\
&=\sqrt{\frac6c}\,W'(\eta)-\sqrt{\frac{3c}2}\Big(k^2\frac43W'(\eta)-(n\!-\!1)Hk\frac23f'(\eta)\Big)\\
&=\Big(\sqrt{\frac6c}\!-\!\sqrt{\frac{8c}3}k^2\Big)W'(\eta)+\sqrt{\frac{2c}3}(n\!-\!1)Hk f'(\eta),
\end{aligned}
\end{equation*}
where we choose $k=\sqrt{3/(2c)}$ so that the coefficient of $W'(\eta)$ vanishes. In this way, in the vicinity of $\Gamma$, we can rewrite \eqref{stationary point diffuse interface} as
\begin{equation*}
(n\!-\!1)Hf'(\eta)-\tilde\gamma f'(\eta)\,\Delta^{-1}f(\eta)=-Kf'(\eta)\big\langle f(\eta)\!-\!\omega,\,1\big\rangle,
\end{equation*}
or equivalently
\begin{equation*}
(n\!-\!1)H+\tilde\gamma(-\Delta)^{-1}f(\eta)=-K\big\langle f(\eta)\!-\!\omega,\,1\big\rangle,
\end{equation*}
where the right-hand side can be regarded as a Lagrange multiplier, and $f(\eta)$ converges to some indicator function $\bm1_\Omega$ as $c\rightarrow0$.

\section{Stability of a strip or a lamella against rupture}
\label{Stability of a strip or lamella against rupture}

In this appendix, we consider the periodic isoperimetric problem \cite[Equation (3.10)]{choksi2006periodic} in a unit cell. A thin strip (in 2-D) or a thin lamella (in 3-D) may have relatively large interfacial energy, but we are going to show that they are both stable against a certain class of perturbations.

\subsection{The 2-D case}

Let the periodic cell be $[0,1]^2$. Without loss of generality, we only consider shapes described by $\big\{(x,y): 0<y<f(x)\big\}$, where $f:[0,1]\to[0,1]$ is piecewise continuously differentiable and satisfies the periodic boundary condition $f(0)=f(1)$. Given $\omega\in(0,1)$, the periodic isoperimetric problem can be written as follows
\begin{equation}
\label{strip classical isoperimetric problem}
\min_{f}\text{Per}(f)\quad\text{subject to}\quad\text{Area}(f)=\omega,
\end{equation}
where
\begin{equation*}
\text{Per}(f) = \int_0^1\sqrt{f'(x)^2\!+\!1}\dd{x}\quad\text{and}\quad\text{Area}(f) = \int_0^1f(x)\dd{x}.
\end{equation*}
The following proposition shows that no matter how small $\omega$ is, a thin strip is always stable against admissible deformations.

\begin{proposition}
\label{stable strip}
For any $\omega\in(0,1)$, the second-order variation of \eqref{strip classical isoperimetric problem} at the constant function $f=\omega$ is positive among admissible perturbations.
\end{proposition}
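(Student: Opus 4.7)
The plan is to expand the perimeter around the flat state $f\equiv\omega$ by a direct Taylor calculation, using an admissible perturbation $g:[0,1]\to\mathbb{R}$ that is piecewise continuously differentiable, satisfies the periodicity $g(0)=g(1)$, and has vanishing mean $\int_0^1 g\,dx=0$. The mean-zero condition ensures that $f_\epsilon:=\omega+\epsilon g$ exactly preserves the area constraint, and keeps $f_\epsilon\in[0,1]$ for all sufficiently small $\epsilon$, so no Lagrange multiplier is needed in the computation.

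The central step is to use $\sqrt{1+s^2}=1+s^2/2+O(s^4)$ with $s=\epsilon g'(x)$ and integrate, giving
\[
\mathrm{Per}(f_\epsilon)\;=\;\int_0^1\!\sqrt{1+\epsilon^2 g'(x)^2}\,dx\;=\;1+\frac{\epsilon^2}{2}\int_0^1 g'(x)^2\,dx+O(\epsilon^4).
\]
In particular the first variation vanishes identically, confirming that $f\equiv\omega$ is a critical point of the periodic isoperimetric problem, and the second variation equals the Dirichlet-type quadratic form $Q[g]:=\int_0^1 g'(x)^2\,dx$.

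Finally I would verify that $Q[g]>0$ whenever $g$ is a non-trivial admissible perturbation: $Q[g]=0$ forces $g'\equiv 0$ almost everywhere, so $g$ must be (piecewise) constant, and the mean-zero condition $\int_0^1 g\,dx=0$ together with periodicity then forces $g\equiv 0$. This establishes strict positivity of the second-order variation on the admissible space, proving Proposition \ref{stable strip}.

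There is no serious obstacle here; the only housekeeping is to pin down the class of admissible variations. As a bonus remark worth including, the same argument upgrades to a global statement for free: since $s\mapsto\sqrt{1+s^2}$ is strictly convex, Jensen's inequality applied to the periodic function $f'$ (whose average over $[0,1]$ is $f(1)-f(0)=0$) gives $\mathrm{Per}(f)\ge\sqrt{1+0^2}=1=\mathrm{Per}(\omega)$, with equality iff $f'\equiv 0$. Thus the flat strip is in fact a strict global minimizer of \eqref{strip classical isoperimetric problem}, of which the positivity of the second variation is merely the infinitesimal shadow.
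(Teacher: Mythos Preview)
Your proof is correct and follows essentially the same route as the paper: a direct Taylor expansion of $\mathrm{Per}(f+\varepsilon g)$ around the flat state, yielding the second variation $\int_0^1 g'(x)^2\,dx$. You are in fact slightly more careful than the paper in one respect: you make explicit the mean-zero condition $\int_0^1 g\,dx=0$ coming from the area constraint, which is what rules out nonzero constant perturbations (for which $g'\equiv 0$ yet $g\not\equiv 0$); the paper's proof leaves this implicit in the phrase ``admissible perturbations.'' Your bonus remark upgrading to strict global minimality via Jensen's inequality applied to the convex function $s\mapsto\sqrt{1+s^2}$ is a genuine addition not present in the paper and is a nice strengthening.
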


\begin{proof}
For the constant function $f=\omega$, consider the deformation from $f$ to $f+\varepsilon g$, where $0<\varepsilon\ll1$, and $g$ is piecewise continuously differentiable and satisfies the periodic boundary condition $g(0)=g(1)$. Using Taylor expansion we obtain
\begin{equation*}
\text{Per}(f+\varepsilon g) = \text{Per}(f)+\varepsilon\!\int_0^1\frac{f'(x)g'(x)}{\sqrt{f'(x)^2\!+\!1}}\dd{x}+\frac{\varepsilon^2}2\!\int\frac{g'(x)^2\dd{x}}{\big(f'(x)^2\!+\!1\big)^{3/2}}+O\big(\varepsilon^3\big),
\end{equation*}
where the first-order term (involving $\varepsilon$) is always zero, and the second-order term (involving $\varepsilon^2$) is always positive unless $g=0$, because we have $f'=0$.
\end{proof}

\subsection{The 3-D case}

Let the periodic cell be $[0,1]^3$. Without loss of generality, we only consider shapes described by $\big\{(x,y,z): 0<z<f(x,y)\big\}$, where $f:[0,1]^2\to[0,1]$ is piecewise continuously differentiable and satisfies periodic boundary conditions. Given $\omega\in(0,1)$, the periodic isoperimetric problem can be written as follows
\begin{equation}
\label{lamella classical isoperimetric problem}
\min_{f}\text{Area}(f)\quad\text{subject to}\quad\text{Vol}(f)=\omega,
\end{equation}
where
\begin{equation*}
\text{Area}(f) = \iint_{[0,\,1]^2}\sqrt{f_x^2\!+\!f_y^2\!+\!1}\dd{x}\dd{y}\quad\text{and}\quad\text{Vol}(f) = \iint_{[0,\,1]^2}f\dd{x}\dd{y},
\end{equation*}
with $f_x$ and $f_y$ denoting the partial derivatives of $f$. The following proposition shows that no matter how small $\omega$ is, a thin lamella is always stable against admissible deformations.

\begin{proposition}
\label{stable lamella}
For any $\omega\in(0,1)$, the second-order variation of \eqref{lamella classical isoperimetric problem} at the constant function $f=\omega$ is positive among admissible perturbations.
\end{proposition}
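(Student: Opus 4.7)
The plan is to mirror the proof of Proposition \ref{stable strip} verbatim in dimension three. I would consider a perturbation of the form $f + \varepsilon g$ with $f \equiv \omega$, where $g : [0,1]^2 \to \mathbb{R}$ is piecewise continuously differentiable, satisfies periodic boundary conditions, and is admissible in the sense that $\iint_{[0,1]^2} g = 0$ (so that the linearization of the volume constraint is satisfied).

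First, I would pointwise Taylor-expand the integrand
\[
\sqrt{(f_x + \varepsilon g_x)^2 + (f_y + \varepsilon g_y)^2 + 1}
\]
about $\varepsilon = 0$. The first $\varepsilon$-derivative equals $(f_x g_x + f_y g_y)/\sqrt{f_x^2 + f_y^2 + 1}$, which vanishes identically because $f_x = f_y \equiv 0$. A direct computation shows that the second $\varepsilon$-derivative at $\varepsilon = 0$ equals $g_x^2 + g_y^2$ exactly, so after integrating over the cell,
\[
\text{Area}(f + \varepsilon g) \;=\; 1 \;+\; \frac{\varepsilon^2}{2}\iint_{[0,1]^2}\bigl(g_x^2 + g_y^2\bigr)\, dx\, dy \;+\; O(\varepsilon^3).
\]
The quadratic form on the right is the Dirichlet energy of $g$, which is nonnegative and vanishes precisely when $g$ is (a.e.) constant on $[0,1]^2$. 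The admissibility constraint $\iint g = 0$ then forces $g \equiv 0$, so the second variation is strictly positive on every nonzero admissible perturbation.

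There is no substantive obstacle: structurally the argument is identical to the 2-D case, because linearizing the area functional of a graph around a flat profile produces a pure Dirichlet form whose only mean-zero null element is the trivial perturbation. The only minor technical points are to justify differentiating under the integral on each smooth piece of $g$ (immediate, since $|\nabla g|$ is bounded there) and, if one insists on enforcing the volume constraint exactly rather than only to first order, to absorb the $O(\varepsilon^2)$ discrepancy by a standard adjustment of $g$ by a constant of order $\varepsilon$; such a correction changes the area only at order $\varepsilon^3$ and so leaves the sign of the second variation untouched.
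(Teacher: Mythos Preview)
Your proposal is correct and follows essentially the same approach as the paper: Taylor-expand the area integrand around the flat profile $f\equiv\omega$, observe that the first-order term vanishes because $f_x=f_y=0$, and identify the second-order term with the Dirichlet energy $\iint(g_x^2+g_y^2)$. If anything, you are slightly more careful than the paper in making explicit that positivity of the Dirichlet form only forces $g$ to be constant, and that the mean-zero (volume) constraint then yields $g\equiv0$.
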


\begin{proof}
For the constant function $f=\omega$, consider the deformation from $f$ to $f+\varepsilon g$, where $0<\varepsilon\ll1$, and $g$ is piecewise continuously differentiable and satisfies periodic boundary conditions. Using Taylor expansion we obtain
\begin{equation*}
\text{Area}(f+\varepsilon g) = \text{Area}(f)+\varepsilon\iint\frac{(f_x g_x\!+\!f_y g_y)}{\sqrt{f_x^2\!+\!f_y^2+1}}+\frac{\varepsilon^2}2\!\iint\frac{f_x^2 g_y^2\!-\!2 f_x f_y g_x g_y\!+\!f_y^2 g_x^2\!+\!g_x^2\!+\!g_y^2}{\left(f_x^2+f_y^2+1\right)^{3/2}}+O\big(\varepsilon^3\big),
\end{equation*}
where the first-order term (involving $\varepsilon$) is always zero, and the second-order term (involving $\varepsilon^2$) is always positive unless $g=0$, because we have $f_x=f_y=0$.
\end{proof}

\section{Algorithms for $(-\Delta)^{-1}$ under no boundary conditions}
In this appendix, we provide code for numerically solving Poisson's equation $-\Delta\phi=\eta$ in the free space.
\subsection{The 2-D case}
\label{algorithm 2-D Poissons equation}
As an example, we use the source term $\eta$ given by \eqref{test source term}.
\begin{lstlisting}[caption = {MATLAB code for solving Poisson's equation in 2-D},mathescape]
M = 256; N = 256; X = 1; Y = 1;
r = radius(1:N,(1:M)',M,N,X,Y);
eta = 1 - 12*min(r,1/2).^2 + 16*min(r,1/2).^3;
phi_analytical = Poisson_Analytical(r);
phi_numerical  = Poisson_Numerical(eta, X, Y);
err = max(abs(phi_analytical(:)-phi_numerical(:)));
fprintf('For [M,N] = [%d,%d], the error is %g.\n',M,N,err);

function r = radius(i,j,M,N,X,Y)
xi = X/N*(i-1/2); yj = Y/M*(j-1/2); % 1<=i<=N, 1<=j<=M
r = sqrt((xi-X/2).^2 + (yj-Y/2).^2); % calculate distance
end

function phi = Poisson_Analytical(r)
r1 = min(r,1/2); r2 = max(r,1/2);
phi = (3*r1.^4-r1.^2)/4 - 16*r1.^5/25 + 57/1600 - 3/80*log(r2);
end

function phi = Poisson_Numerical(eta, X, Y)
[M, N] = size(eta); x = X/N*((0:N)-1/2); y = Y/M*((0:M)-1/2)';
H = (y.^2-x.^2).*atan(x./y) + x.*y.*log(x.^2+y.^2);
Gh = diff(diff(H),1,2) - 3*X/N*Y/M;
Gh(1,:) = Gh(1,:) + 2*pi*(X/N)^2*(0:N-1);
Gh(1,1) = Gh(1,1) + pi/2*(X/N)^2; Gh = Gh/(-4*pi);
Gh = [Gh(end:-1:2,:);Gh;zeros(1,N)];
Gh = [Gh(:,end:-1:2) Gh zeros(2*M,1)];
convolution = ifft2(fft2(eta,2*M,2*N).*fft2(Gh));
phi = real(convolution(end-M:end-1,end-N:end-1));
end
\end{lstlisting}

\subsection{The 3-D case}
\label{algorithm 3-D Poissons equation}

\begin{lstlisting}[caption = {MATLAB code for solving Poisson's equation in 3-D},mathescape]
function phi = Poisson_Numerical(eta, X, Y, Z)
[M, N, P] = size(eta); x = X/N*((0:N)-1/2);
y = Y/M*((0:M)-1/2)'; z = reshape(Z/P*((0:P)-1/2),1,1,[]);
xy = x.*y; yz = y.*z; zx = z.*x; r = sqrt(x.^2+y.^2+z.^2);
H = yz.*log(x+r) + zx.*log(y+r) + xy.*log(z+r) -...
    x.^2/2.*atan(yz./(x.*r)) - y.^2/2.*atan(zx./(y.*r)) -...
    z.^2/2.*atan(xy./(z.*r));
Gh = diff(diff(diff(H),1,2),1,3); Gh = Gh/(4*pi);
Gh = cat(1, Gh(end:-1:2,:,:), Gh, zeros(1,    N,P));
Gh = cat(2, Gh(:,end:-1:2,:), Gh, zeros(2*M,  1,P));
Gh = cat(3, Gh(:,:,end:-1:2), Gh, zeros(2*M,2*N,1));
convolution = ifftn(fftn(eta,[2*M,2*N,2*P]).*fftn(Gh));
phi = real(convolution(end-M:end-1,end-N:end-1,end-P:end-1));
end
\end{lstlisting}

\subsection{Numerical convergence}
\label{numerical convergence of G*eta}

As a test run, we compute the Poisson potential of the following source term in 2-D:
\begin{equation}
\label{test source term}
\eta(x,y) =
\left\{
\begin{aligned}
&1\!-\!12r^2\!+\!16r^3,&&r\leqslant\frac12,\\
&0,&&r>\frac12,
\end{aligned}
\right.
\quad\text{where}\;r:=\sqrt{\Big(x\!-\!\frac12\Big)^2+\Big(y\!-\!\frac12\Big)^2}.
\end{equation}
We know that $\eta$ is axisymmetric and compactly supported within $[0,1]^2$, as shown on the left of Figure \ref{test run}. The potential $\phi$ is given by the convolution $G*\eta$, where $G(x,y)=(-4\pi)^{-1}\ln(x^2\!+\!y^2)$. We know that $\phi$ is also axisymmetric, as shown on the right of Figure \ref{test run}, and has an analytic expression
\begin{equation}
\label{analytical poisson potential}
\phi(x,y) =
\left\{
\begin{aligned}
&\frac{3 r^4\!-\!r^2}{4}-\frac{16 r^5}{25}+\frac{57}{1600}+\frac{3 \ln2}{80},&&r\leqslant\frac12,\\
&-\frac{3 \ln r}{80},&&r>\frac12.
\end{aligned}
\right.
\end{equation}
See \cite{chen2002four} for the derivation in the case of $r>1/2$. The case of $r\leqslant1/2$ can be derived using polar coordinates for Laplacian, and then adding a constant term to make $\phi$ continuous.
\begin{figure}[H]
\centering
\includegraphics[height=135pt]{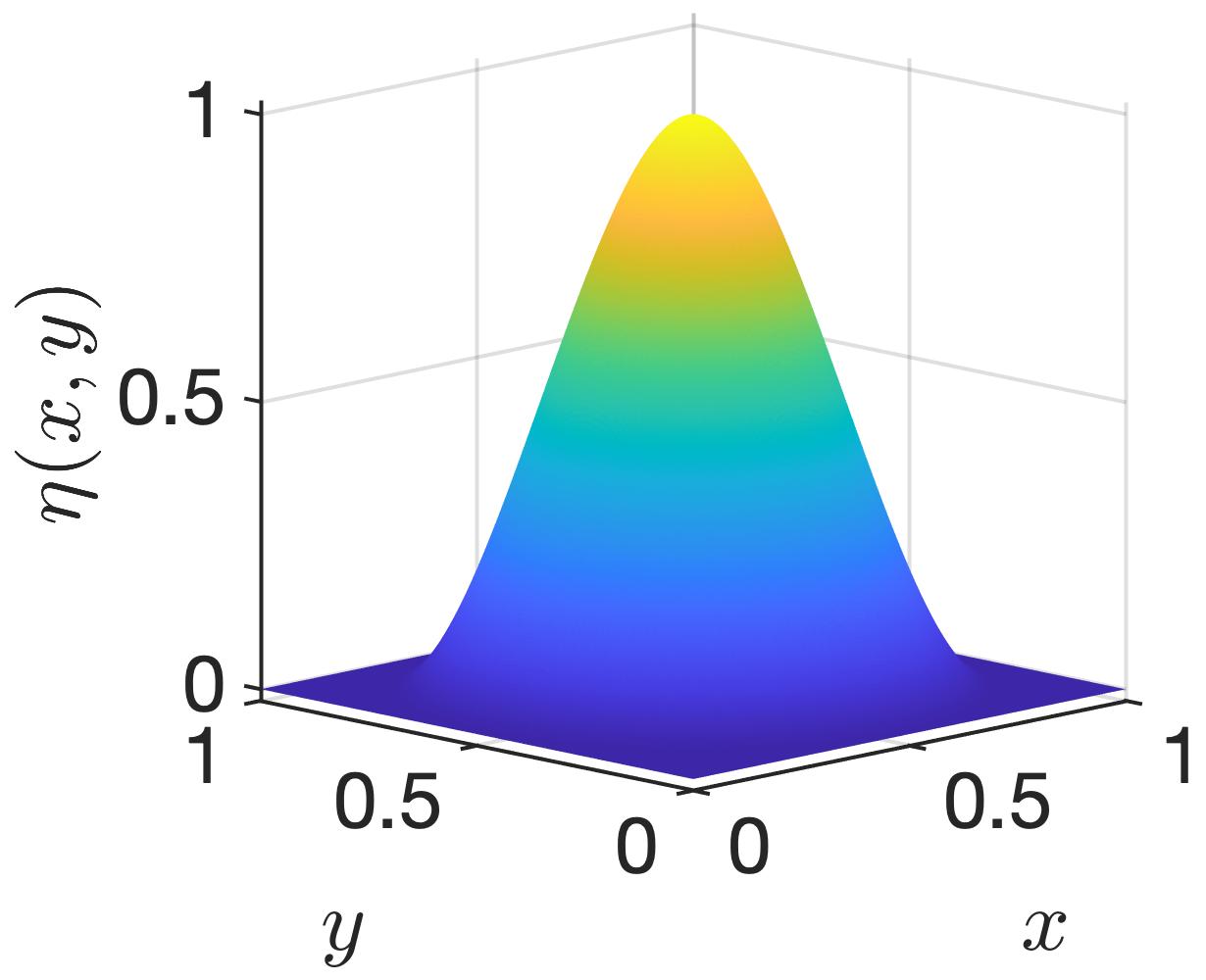}
\includegraphics[height=135pt]{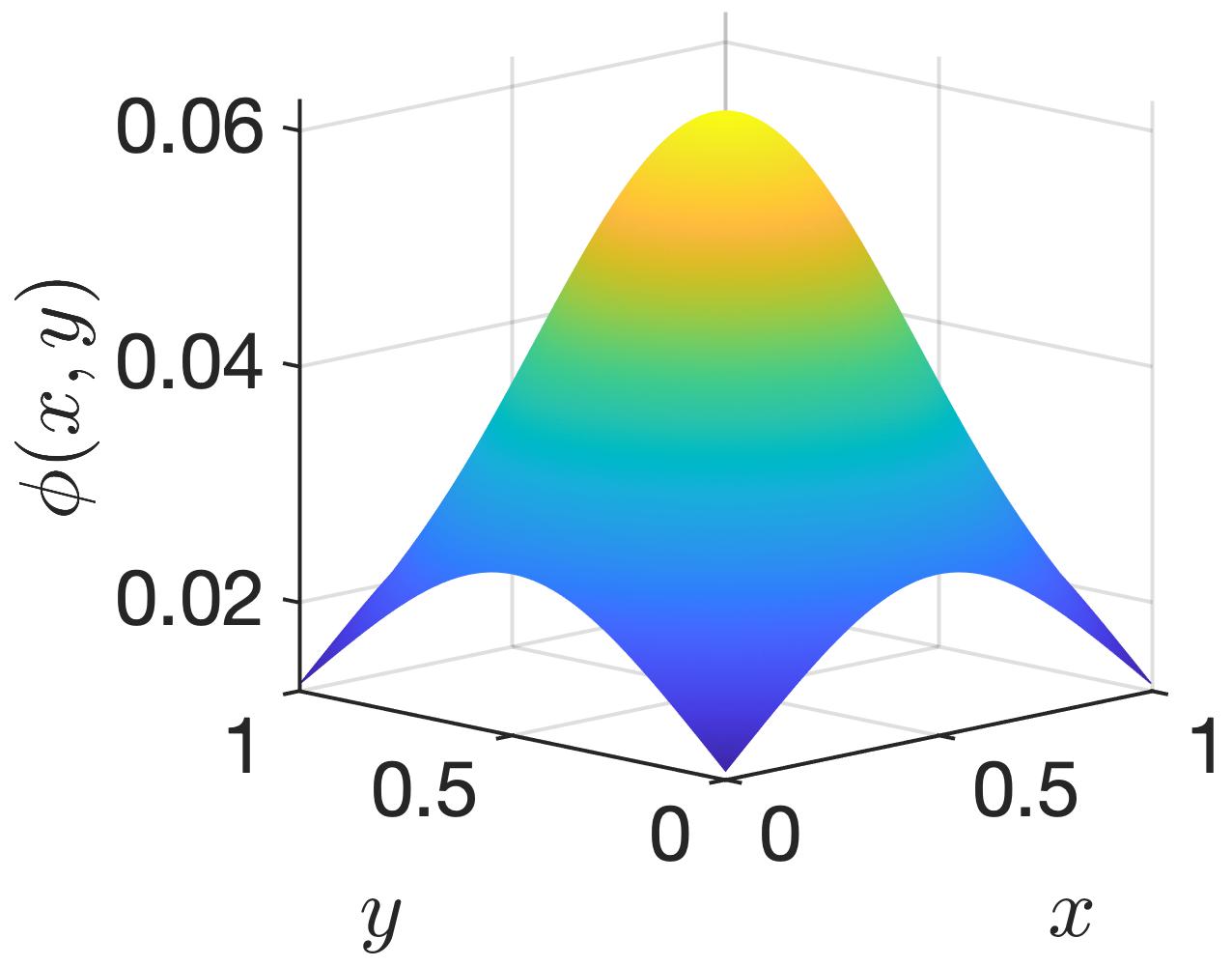}
\caption{Test run. Left: source term. Right: Poisson potential.}
\label{test run}
\end{figure}

We look at the difference between the analytic solution \eqref{analytical poisson potential} and the numerical solution obtained in Appendix \ref{algorithm 2-D Poissons equation}. Define the error to be the maximum difference on all the grid points. Our numerical results exhibit quadratic convergence, as shown in Table \ref{Convergence}.
\begin{table}[H]
  \centering
  \caption{Convergence}
    \begin{tabular}{c|c|c}
    \toprule
    $[M,N]$ & error & ratio \\
    \hline
    \multirow{2}[1]{*}{} & \multirow{2}[1]{*}{} &  \\[-10pt]
        $[64,64]$   & $1.02\times10^{-5}$ & \multirow{2}[0]{*}{\raisebox{-4pt}{4.00}} \\
    \multirow{2}[0]{*}{} & \multirow{2}[0]{*}{} &  \\
       $[128,128]$   & $2.54\times10^{-6}$ & \multirow{2}[0]{*}{\raisebox{-4pt}{4.00}} \\
    \multirow{2}[0]{*}{} & \multirow{2}[0]{*}{} &  \\
       $[256,256]$   & $6.36\times10^{-7}$ & \multirow{2}[0]{*}{\raisebox{-4pt}{4.00}} \\
    \multirow{2}[0]{*}{} & \multirow{2}[0]{*}{} &  \\
       $[512,512]$   & $1.59\times10^{-7}$ &  \\
    \bottomrule
    \end{tabular}
  \label{Convergence}
\end{table}

\section{Simulations in 3-D under no boundary conditions}
\label{Simulations in 3-D under no boundary conditions}

In this appendix we present the detailed numerical results for the minimum energy paths of fission in 3-D under no boundary conditions.

\begin{figure}[H]
\centering
\includegraphics[width=269.388pt]{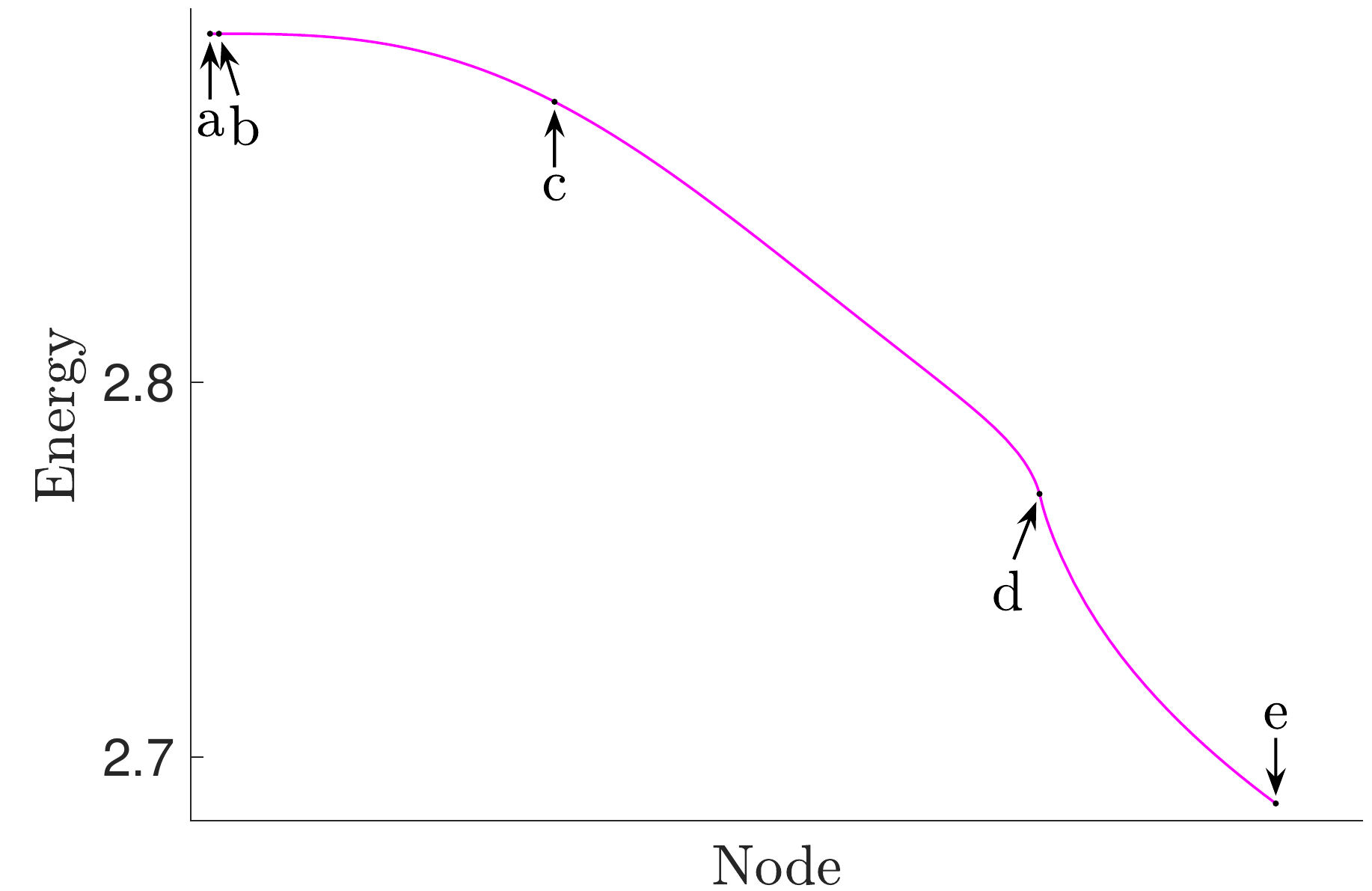}

$\overset{\includegraphics[width=58pt]{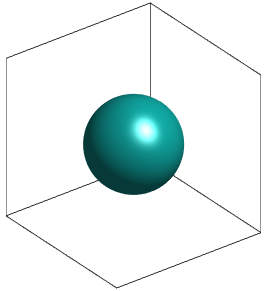}}{\begin{minipage}[t][\height][b]{24pt}\hspace{7pt}a\end{minipage}}$
$\overset{\includegraphics[width=58pt]{images/NoBoundary3D/8.0/b.png}}{\begin{minipage}[t][\height][b]{24pt}\hspace{7pt}b\end{minipage}}$
$\overset{\includegraphics[width=58pt]{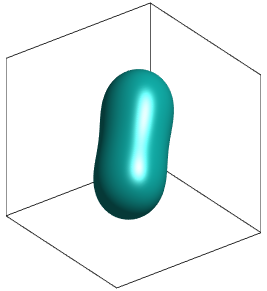}}{\begin{minipage}[t][\height][b]{24pt}\hspace{7pt}c\end{minipage}}$
$\overset{\includegraphics[width=58pt]{images/NoBoundary3D/8.0/d.png}}{\begin{minipage}[t][\height][b]{24pt}\hspace{7pt}d\end{minipage}}$
$\overset{\includegraphics[width=58pt]{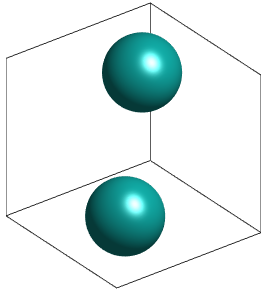}}{\begin{minipage}[t][\height][b]{24pt}\hspace{7pt}e\end{minipage}}$
\caption{Minimum energy path ($\tdfis=0.992$). Saddle point and scission point are b and d, respectively.}
\end{figure}

\begin{figure}[H]
\centering
\includegraphics[width=269.388pt]{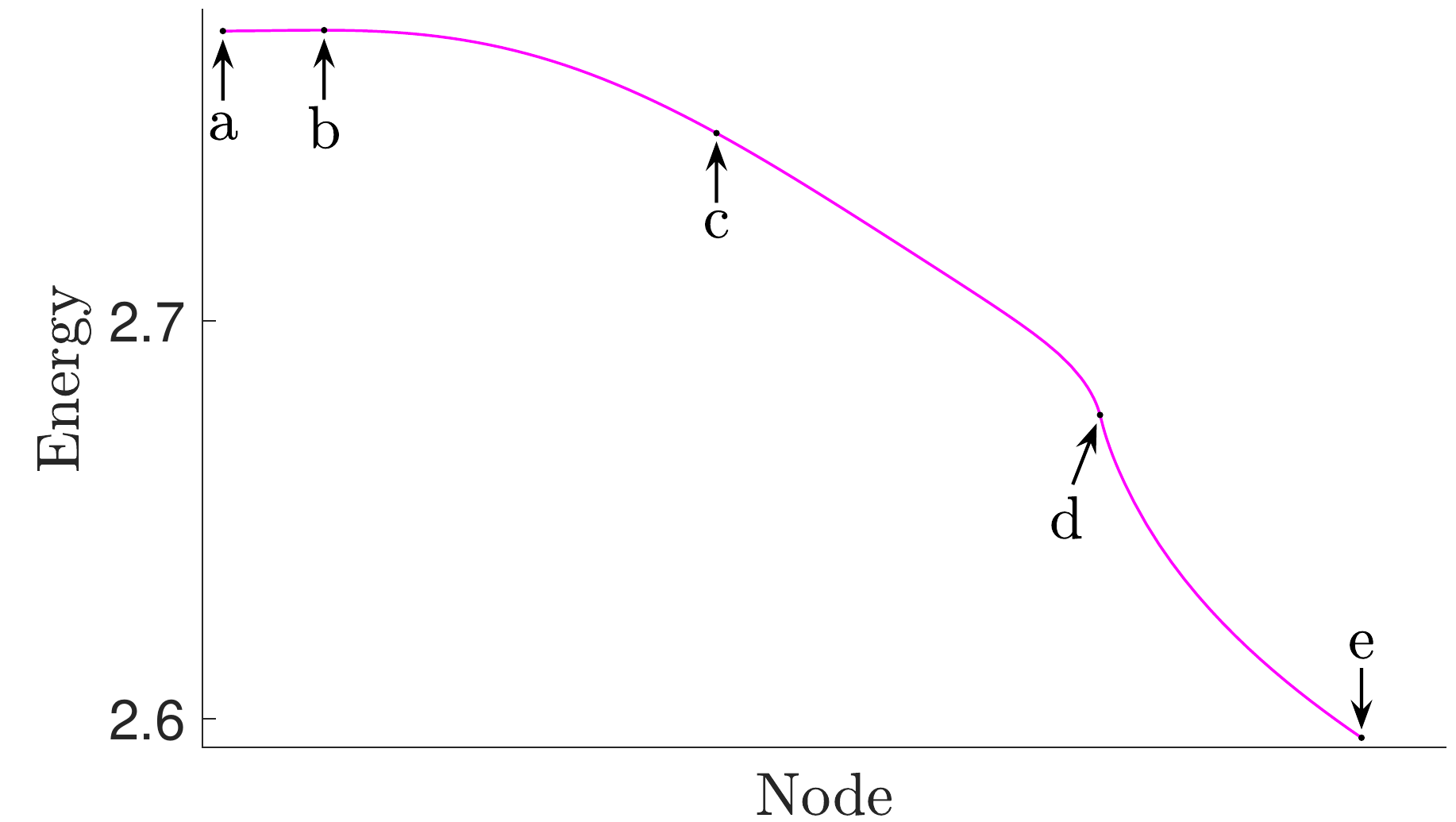}

$\overset{\includegraphics[width=58pt]{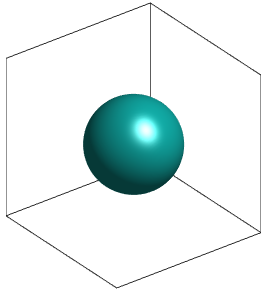}}{\begin{minipage}[t][\height][b]{24pt}\hspace{7pt}a\end{minipage}}$
$\overset{\includegraphics[width=58pt]{images/NoBoundary3D/7.5/b.png}}{\begin{minipage}[t][\height][b]{24pt}\hspace{7pt}b\end{minipage}}$
$\overset{\includegraphics[width=58pt]{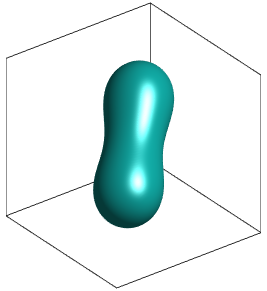}}{\begin{minipage}[t][\height][b]{24pt}\hspace{7pt}c\end{minipage}}$
$\overset{\includegraphics[width=58pt]{images/NoBoundary3D/7.5/d.png}}{\begin{minipage}[t][\height][b]{24pt}\hspace{7pt}d\end{minipage}}$
$\overset{\includegraphics[width=58pt]{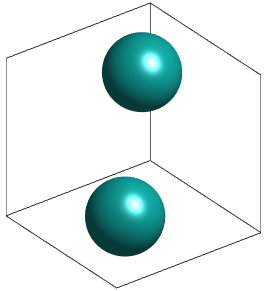}}{\begin{minipage}[t][\height][b]{24pt}\hspace{7pt}e\end{minipage}}$
\caption{Minimum energy path ($\tdfis=0.930$). Saddle point and scission point are b and d, respectively.}
\end{figure}

\begin{figure}[H]
\centering
\includegraphics[width=269.388pt]{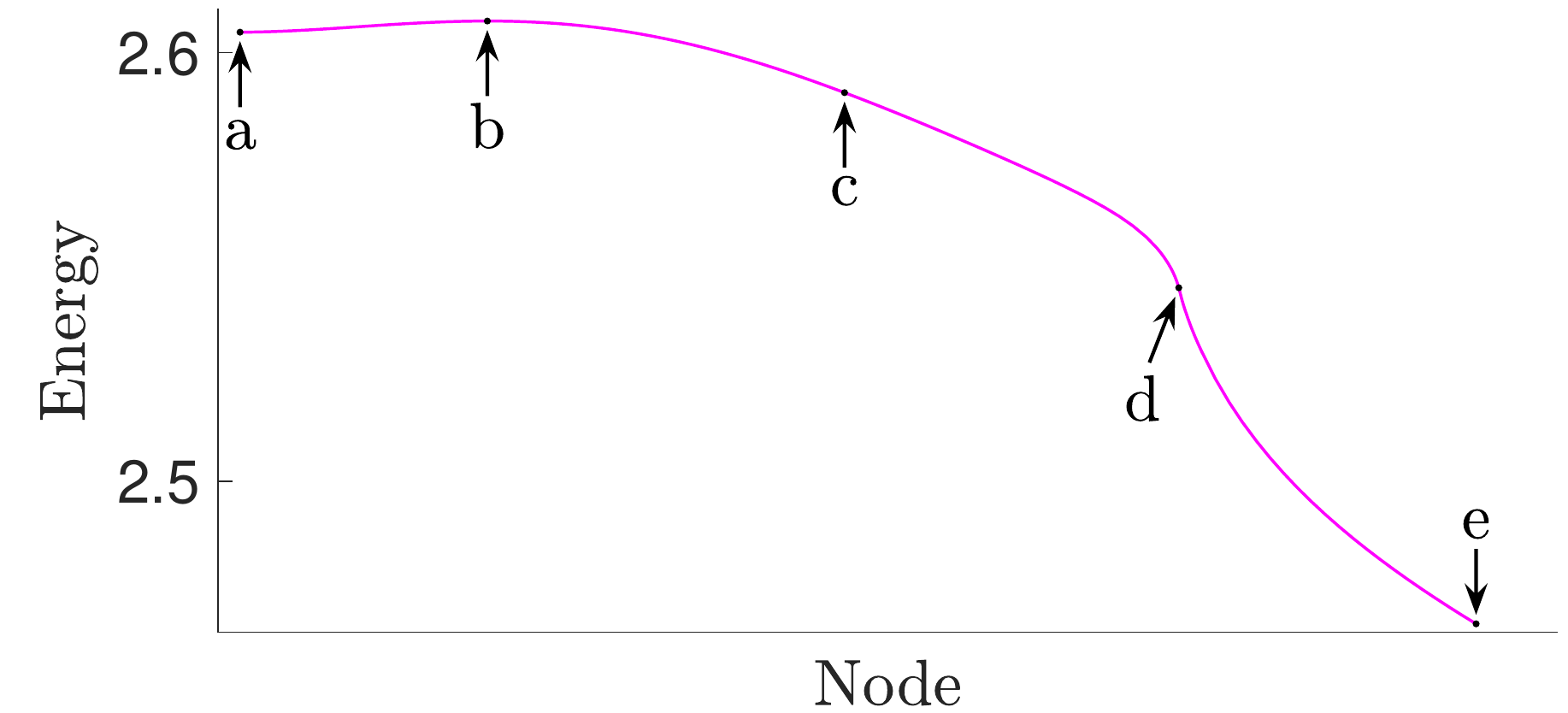}

$\overset{\includegraphics[width=58pt]{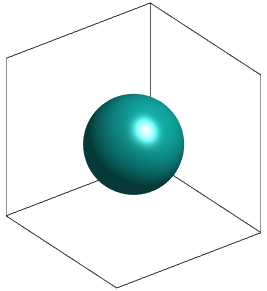}}{\begin{minipage}[t][\height][b]{24pt}\hspace{7pt}a\end{minipage}}$
$\overset{\includegraphics[width=58pt]{images/NoBoundary3D/6.8/b.png}}{\begin{minipage}[t][\height][b]{24pt}\hspace{7pt}b\end{minipage}}$
$\overset{\includegraphics[width=58pt]{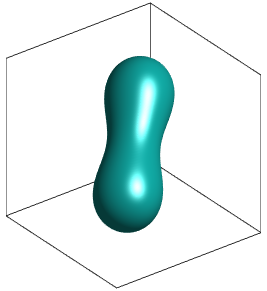}}{\begin{minipage}[t][\height][b]{24pt}\hspace{7pt}c\end{minipage}}$
$\overset{\includegraphics[width=58pt]{images/NoBoundary3D/6.8/d.png}}{\begin{minipage}[t][\height][b]{24pt}\hspace{7pt}d\end{minipage}}$
$\overset{\includegraphics[width=58pt]{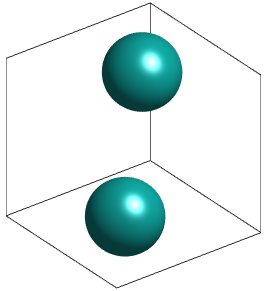}}{\begin{minipage}[t][\height][b]{24pt}\hspace{7pt}e\end{minipage}}$
\caption{Minimum energy path ($\tdfis=0.844$). Saddle point and scission point are b and d, respectively.}
\end{figure}

\begin{figure}[H]
\centering
\includegraphics[width=269.388pt]{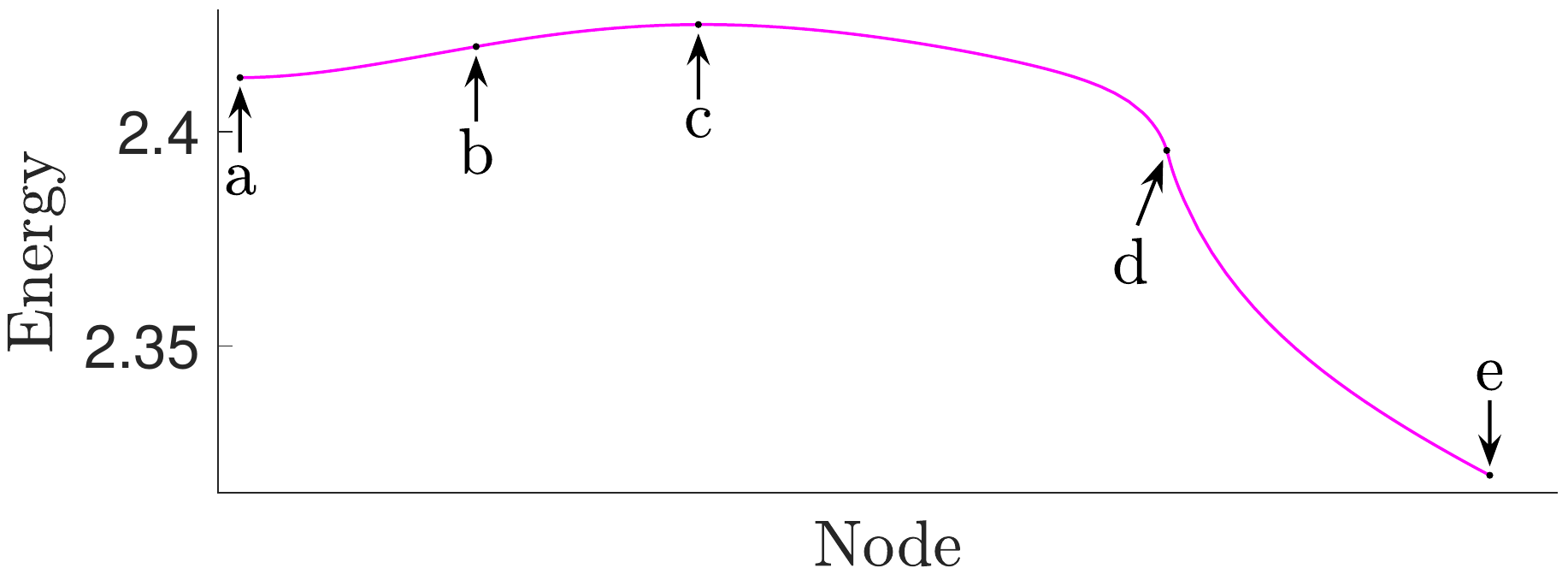}

$\overset{\includegraphics[width=58pt]{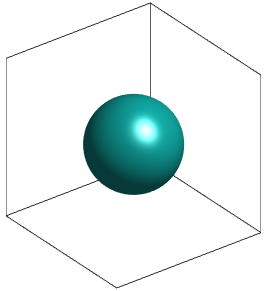}}{\begin{minipage}[t][\height][b]{24pt}\hspace{7pt}a\end{minipage}}$
$\overset{\includegraphics[width=58pt]{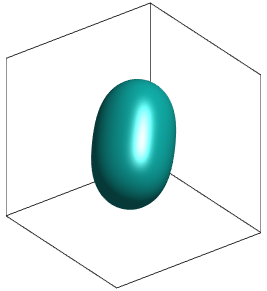}}{\begin{minipage}[t][\height][b]{24pt}\hspace{7pt}b\end{minipage}}$
$\overset{\includegraphics[width=58pt]{images/NoBoundary3D/6.0/c.png}}{\begin{minipage}[t][\height][b]{24pt}\hspace{7pt}c\end{minipage}}$
$\overset{\includegraphics[width=58pt]{images/NoBoundary3D/6.0/d.png}}{\begin{minipage}[t][\height][b]{24pt}\hspace{7pt}d\end{minipage}}$
$\overset{\includegraphics[width=58pt]{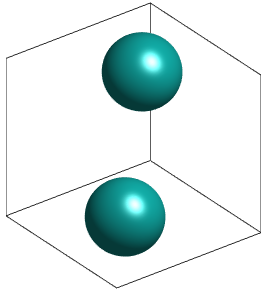}}{\begin{minipage}[t][\height][b]{24pt}\hspace{7pt}e\end{minipage}}$
\caption{Minimum energy path ($\tdfis=0.744$). Saddle point and scission point are c and d, respectively.}
\end{figure}

\begin{figure}[H]
\centering
\includegraphics[width=269.388pt]{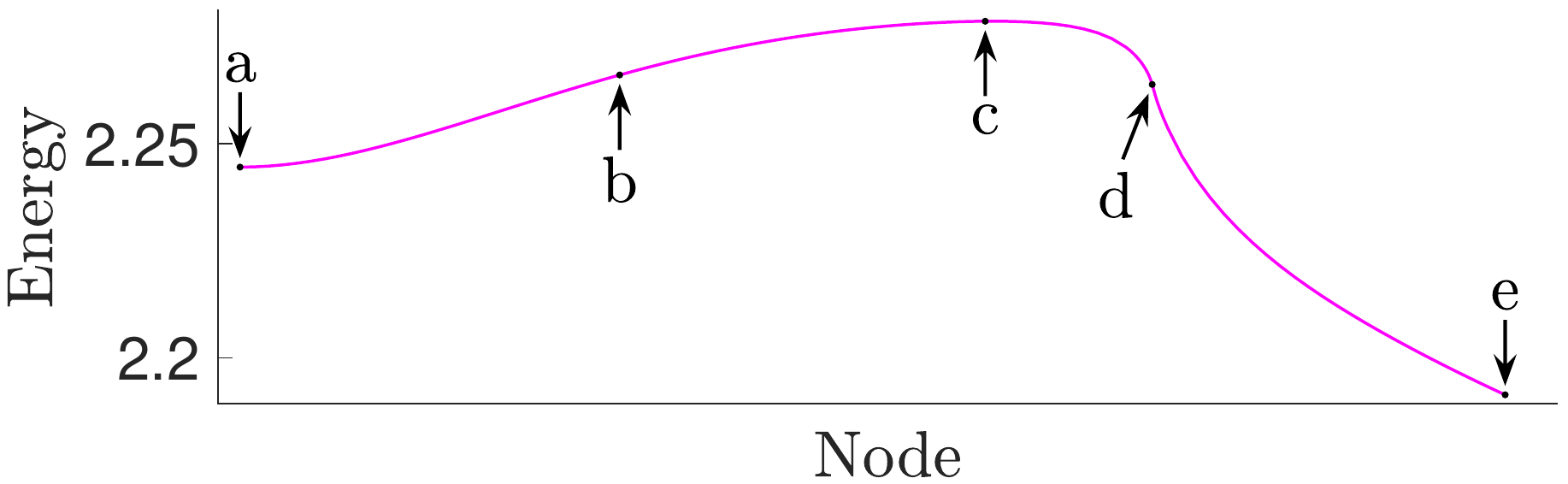}

$\overset{\includegraphics[width=58pt]{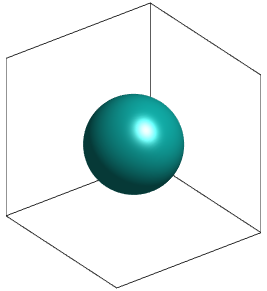}}{\begin{minipage}[t][\height][b]{24pt}\hspace{7pt}a\end{minipage}}$
$\overset{\includegraphics[width=58pt]{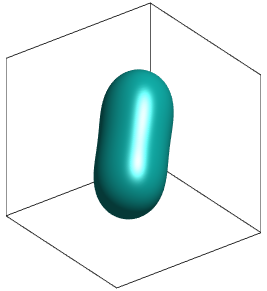}}{\begin{minipage}[t][\height][b]{24pt}\hspace{7pt}b\end{minipage}}$
$\overset{\includegraphics[width=58pt]{images/NoBoundary3D/5.3/c.png}}{\begin{minipage}[t][\height][b]{24pt}\hspace{7pt}c\end{minipage}}$
$\overset{\includegraphics[width=58pt]{images/NoBoundary3D/5.3/d.png}}{\begin{minipage}[t][\height][b]{24pt}\hspace{7pt}d\end{minipage}}$
$\overset{\includegraphics[width=58pt]{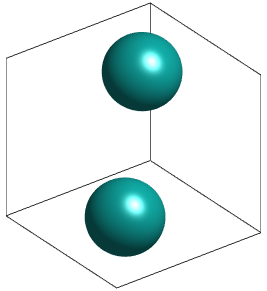}}{\begin{minipage}[t][\height][b]{24pt}\hspace{7pt}e\end{minipage}}$
\caption{Minimum energy path ($\tdfis=0.657$). Saddle point and scission point are c and d, respectively.}
\end{figure}

\begin{figure}[H]
\centering
\includegraphics[width=269.388pt]{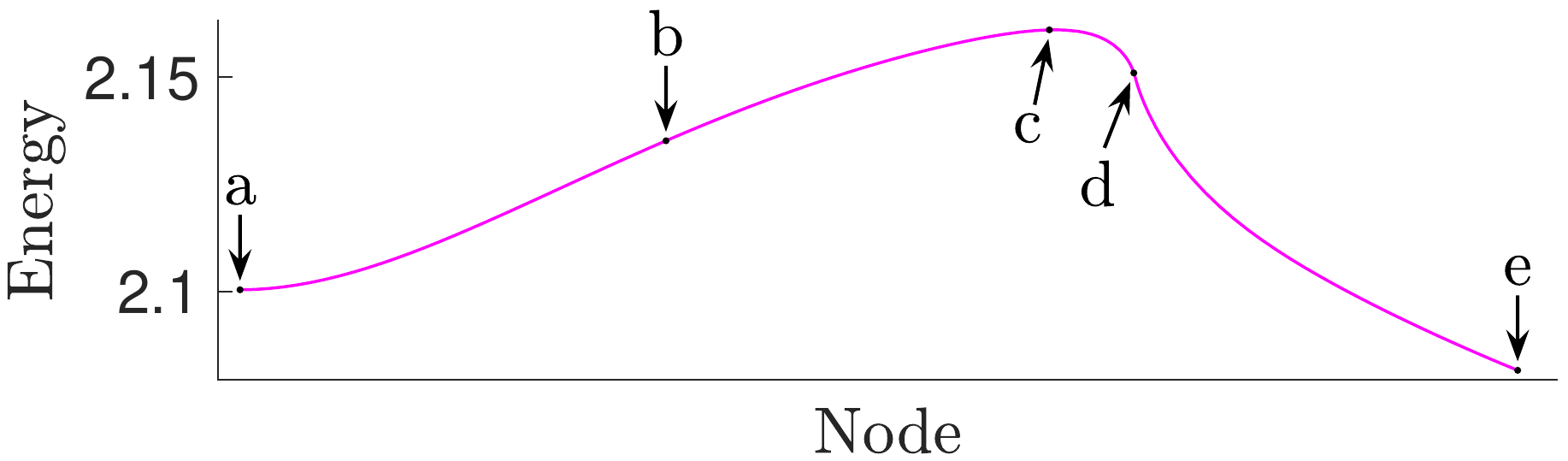}

$\overset{\includegraphics[width=58pt]{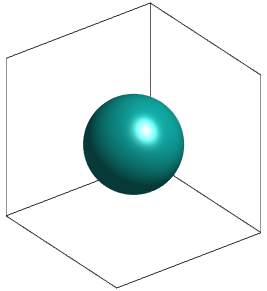}}{\begin{minipage}[t][\height][b]{24pt}\hspace{7pt}a\end{minipage}}$
$\overset{\includegraphics[width=58pt]{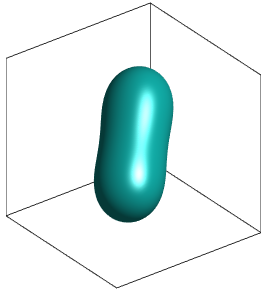}}{\begin{minipage}[t][\height][b]{24pt}\hspace{7pt}b\end{minipage}}$
$\overset{\includegraphics[width=58pt]{images/NoBoundary3D/4.7/c.png}}{\begin{minipage}[t][\height][b]{24pt}\hspace{7pt}c\end{minipage}}$
$\overset{\includegraphics[width=58pt]{images/NoBoundary3D/4.7/d.png}}{\begin{minipage}[t][\height][b]{24pt}\hspace{7pt}d\end{minipage}}$
$\overset{\includegraphics[width=58pt]{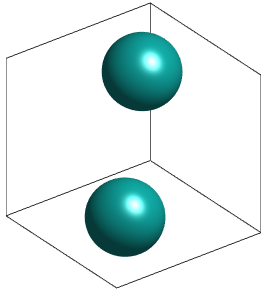}}{\begin{minipage}[t][\height][b]{24pt}\hspace{7pt}e\end{minipage}}$
\caption{Minimum energy path ($\tdfis=0.583$). Saddle point and scission point are c and d, respectively.}
\end{figure}

\begin{figure}[H]
\centering
\includegraphics[width=269.388pt]{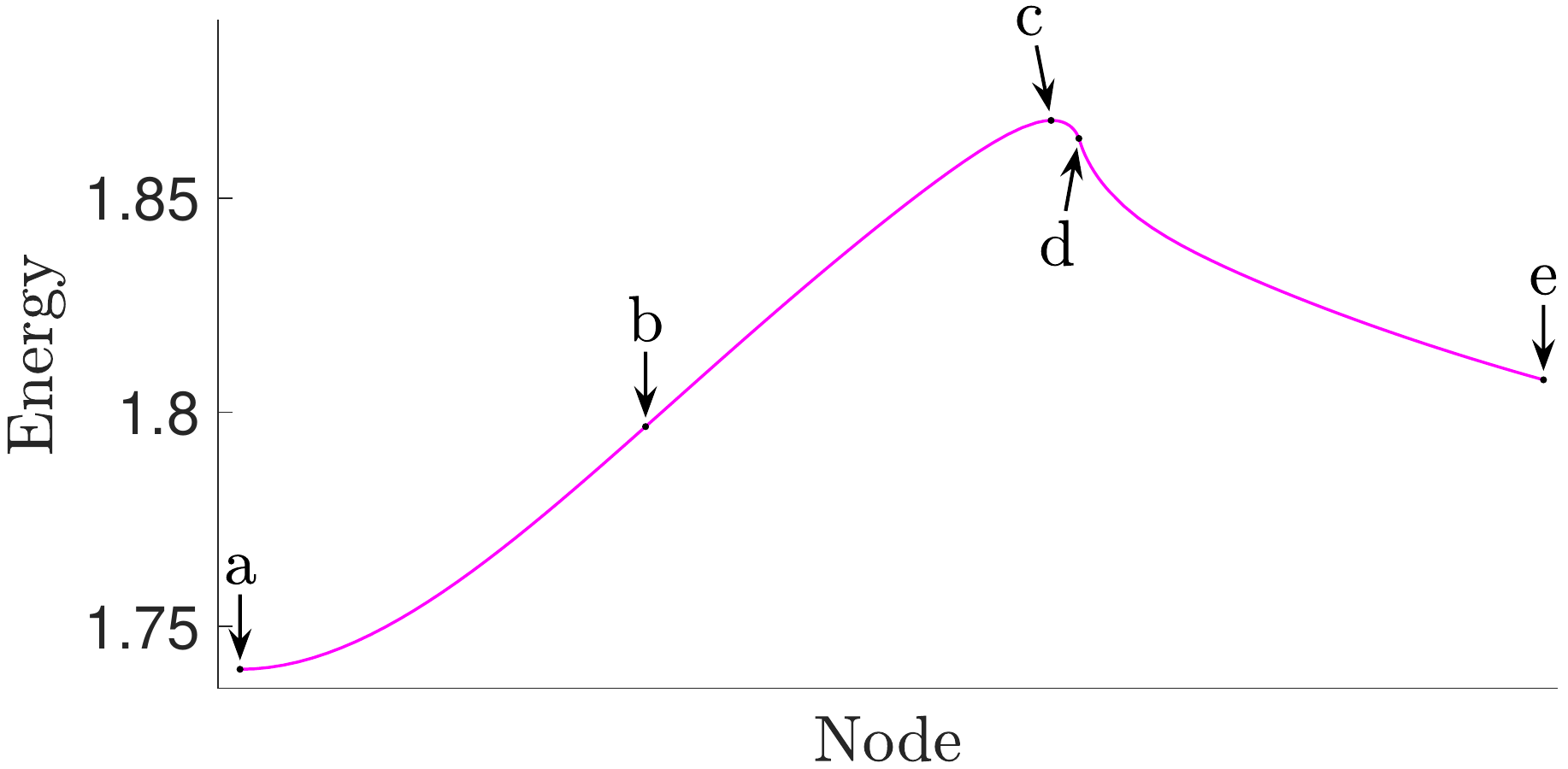}

$\overset{\includegraphics[width=58pt]{images/NoBoundary3D/3.2/a.png}}{\begin{minipage}[t][\height][b]{24pt}\hspace{7pt}a\end{minipage}}$
$\overset{\includegraphics[width=58pt]{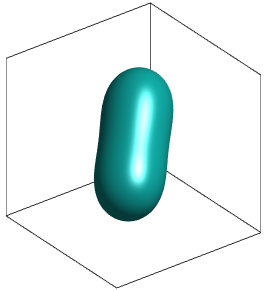}}{\begin{minipage}[t][\height][b]{24pt}\hspace{7pt}b\end{minipage}}$
$\overset{\includegraphics[width=58pt]{images/NoBoundary3D/3.2/c.png}}{\begin{minipage}[t][\height][b]{24pt}\hspace{7pt}c\end{minipage}}$
$\overset{\includegraphics[width=58pt]{images/NoBoundary3D/3.2/d.png}}{\begin{minipage}[t][\height][b]{24pt}\hspace{7pt}d\end{minipage}}$
$\overset{\includegraphics[width=58pt]{images/NoBoundary3D/3.2/e.png}}{\begin{minipage}[t][\height][b]{24pt}\hspace{7pt}e\end{minipage}}$
\caption{Minimum energy path ($\tdfis=0.397$). Saddle point and scission point are c and d, respectively.}
\end{figure}

\section{Simulations in 3-D under periodic boundary conditions}
\label{Simulations in 3-D under periodic boundary conditions}

In this appendix we present the numerical simulations of \eqref{diffuse interface version of energy} in 3-D under periodic boundary conditions. As mentioned in \eqref{reparametrization of nonlocal coefficient}, the parameter $\tdfis$ refers to $\tilde\gamma\omega|D|/(40\pi)$, where we choose $\omega|D|=0.04$. The results here are similar to those under no boundary conditions (in Section \ref{Simulation results}), and we expect the former to converge to the latter as $\omega\rightarrow0$.

We numerically study the pACOK dynamics with the initial value resembling the indicator
function of a ball. According to our simulation results, there is a critical value $\tdfis_*\in(1.086,1.092)$ such that for $\tdfis<\tdfis_*$, the ball is stable; for $\tdfis>\tdfis_*$, the ball is unstable and will eventually split into two equally large balls. Due to periodic boundary conditions, those two balls will align themselves along a body diagonal of the simulation box and keep distance from the boundaries.

We use the string method to compute the minimum energy paths. The is a critical value $\tdfis_3\in(0.463, 0.469)$ such that for $\tdfis<\tdfis_3$, the computed minimum energy path of fission is unstable against mass-asymmetric perturbations. After sufficiently many iterations in the string method, such a fission path will eventually converge to another minimum energy path, which we call the spallation path (named after the spallation process described in \cite[Page 144]{cohen1962deformation}). Along the spallation path, a tiny child ball emerges at a distance from the parent ball, and then the child ball grows while the parent ball shrinks, until they achieve equal sizes. This spallation path can be understood as the reverse process of Ostwald ripening phenomenon (or coarsening effect) \cite{niethammer1999derivation}. For $\tdfis>\tdfis_3$, we obtain two possible minimum energy paths: fission and spallation.

\begin{figure}[H]
\centering
\includegraphics[width=286.2245pt]{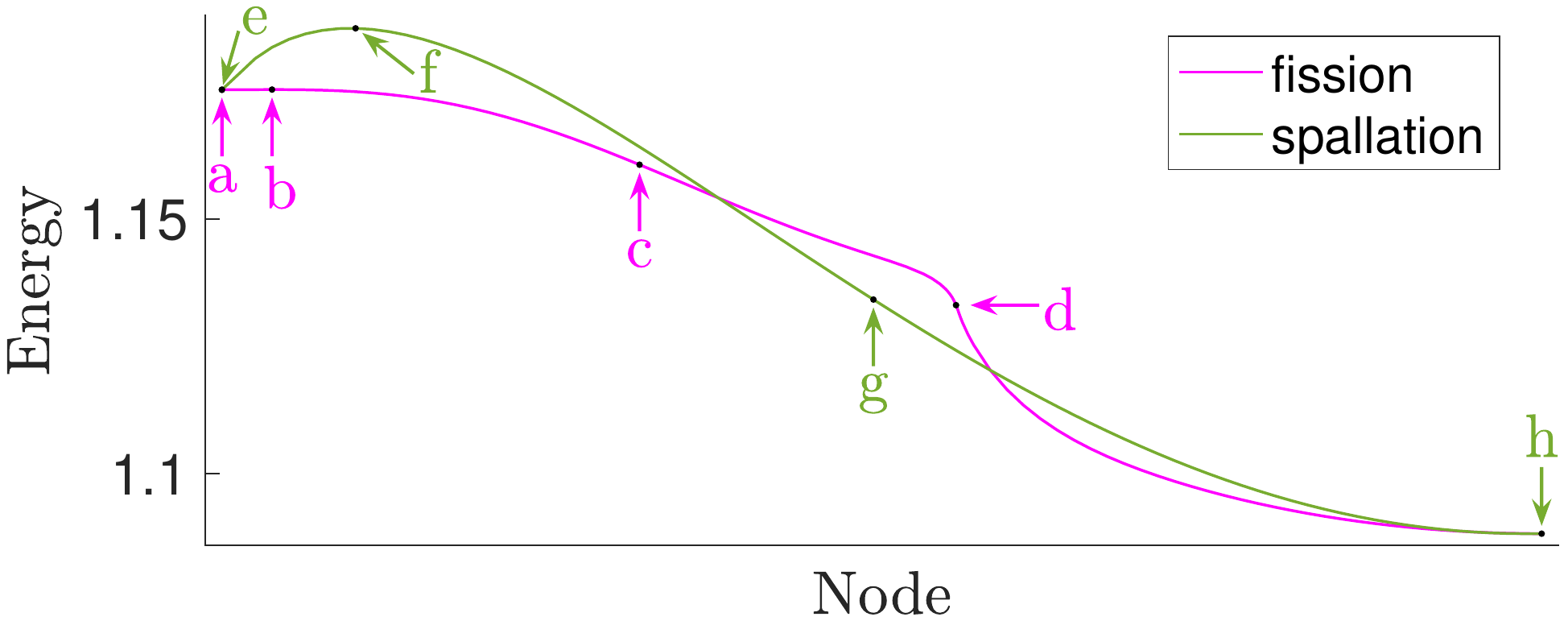}

\hspace{-30pt}
$\overset{\includegraphics[width=58pt]{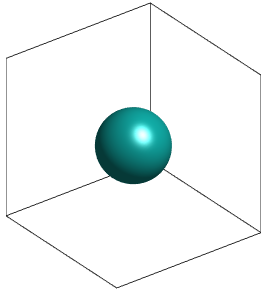}}{\begin{minipage}[t][\height][b]{24pt}\hspace{7pt}a\end{minipage}}$
$\overset{\includegraphics[width=58pt]{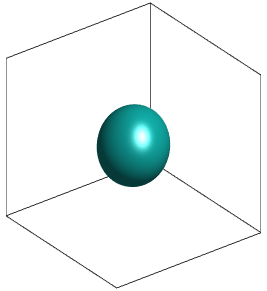}}{\begin{minipage}[t][\height][b]{24pt}\hspace{7pt}b\end{minipage}}$
$\overset{\includegraphics[width=58pt]{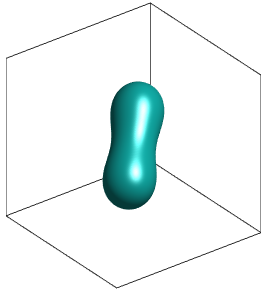}}{\begin{minipage}[t][\height][b]{24pt}\hspace{7pt}c\end{minipage}}$
$\overset{\includegraphics[width=58pt]{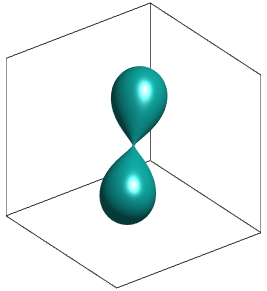}}{\begin{minipage}[t][\height][b]{24pt}\hspace{7pt}d\end{minipage}}$

\hspace{30pt}
$\overset{\includegraphics[width=58pt]{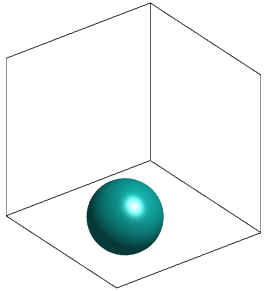}}{\begin{minipage}[t][\height][b]{24pt}\hspace{7pt}e\end{minipage}}$
$\overset{\includegraphics[width=58pt]{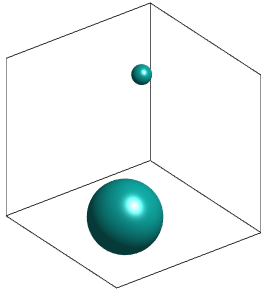}}{\begin{minipage}[t][\height][b]{24pt}\hspace{7pt}f\end{minipage}}$
$\overset{\includegraphics[width=58pt]{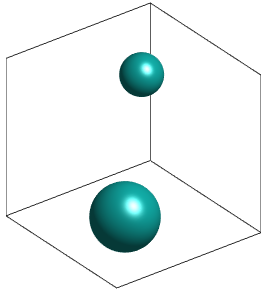}}{\begin{minipage}[t][\height][b]{24pt}\hspace{7pt}g\end{minipage}}$
$\overset{\includegraphics[width=58pt]{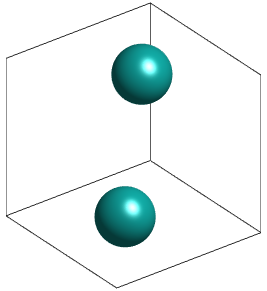}}{\begin{minipage}[t][\height][b]{24pt}\hspace{7pt}h\end{minipage}}$
\caption{Minimum energy paths ($\tdfis=1.048$). Saddle points are b and f. Scission point is d.}
\end{figure}

\begin{figure}[H]
\centering
\includegraphics[width=286.2245pt]{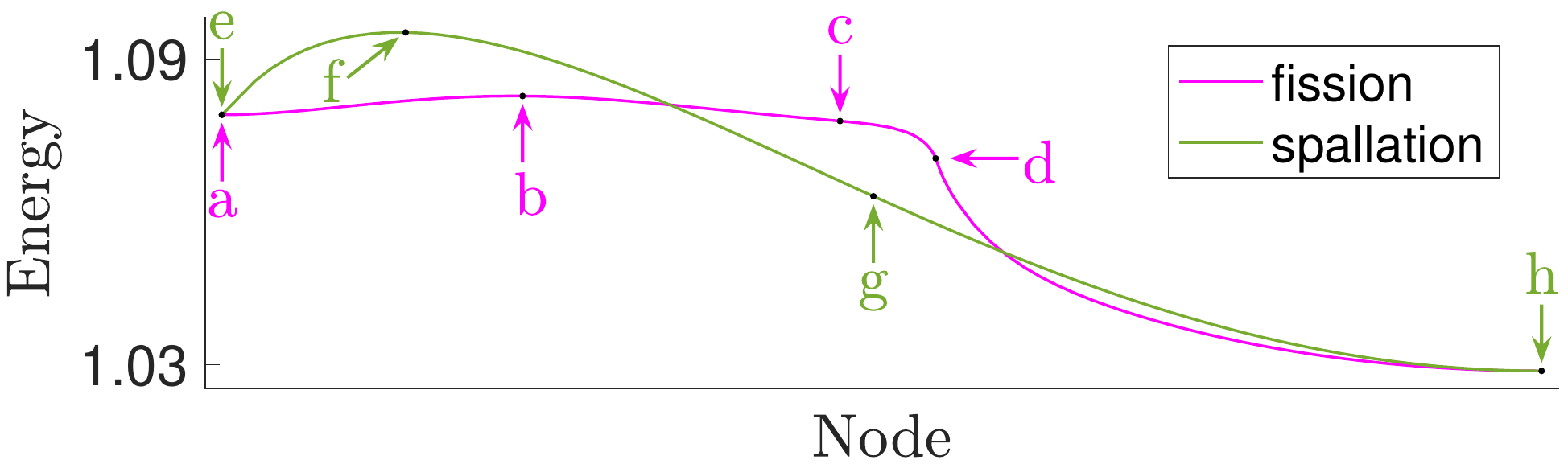}

\hspace{-30pt}
$\overset{\includegraphics[width=58pt]{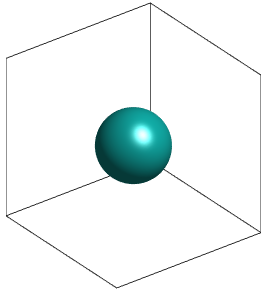}}{\begin{minipage}[t][\height][b]{24pt}\hspace{7pt}a\end{minipage}}$
$\overset{\includegraphics[width=58pt]{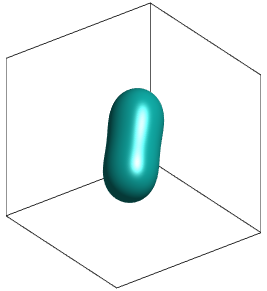}}{\begin{minipage}[t][\height][b]{24pt}\hspace{7pt}b\end{minipage}}$
$\overset{\includegraphics[width=58pt]{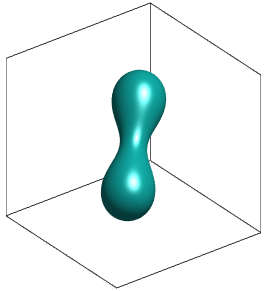}}{\begin{minipage}[t][\height][b]{24pt}\hspace{7pt}c\end{minipage}}$
$\overset{\includegraphics[width=58pt]{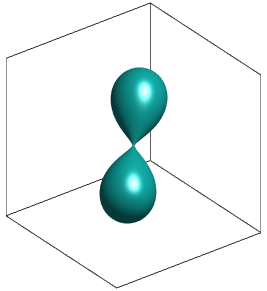}}{\begin{minipage}[t][\height][b]{24pt}\hspace{7pt}d\end{minipage}}$

\hspace{30pt}
$\overset{\includegraphics[width=58pt]{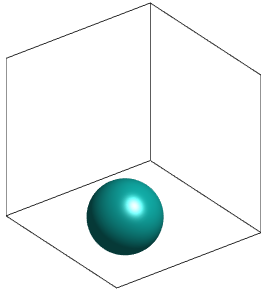}}{\begin{minipage}[t][\height][b]{24pt}\hspace{7pt}e\end{minipage}}$
$\overset{\includegraphics[width=58pt]{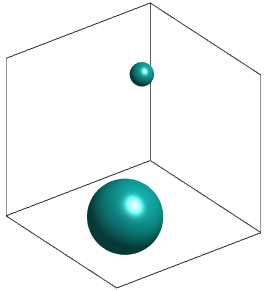}}{\begin{minipage}[t][\height][b]{24pt}\hspace{7pt}f\end{minipage}}$
$\overset{\includegraphics[width=58pt]{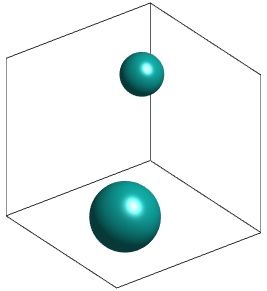}}{\begin{minipage}[t][\height][b]{24pt}\hspace{7pt}g\end{minipage}}$
$\overset{\includegraphics[width=58pt]{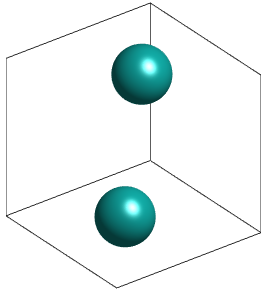}}{\begin{minipage}[t][\height][b]{24pt}\hspace{7pt}h\end{minipage}}$
\caption{Minimum energy paths ($\tdfis=0.882$). Saddle points are b and f. Scission point is d.}
\end{figure}

\begin{figure}[H]
\centering
\includegraphics[width=286.2245pt]{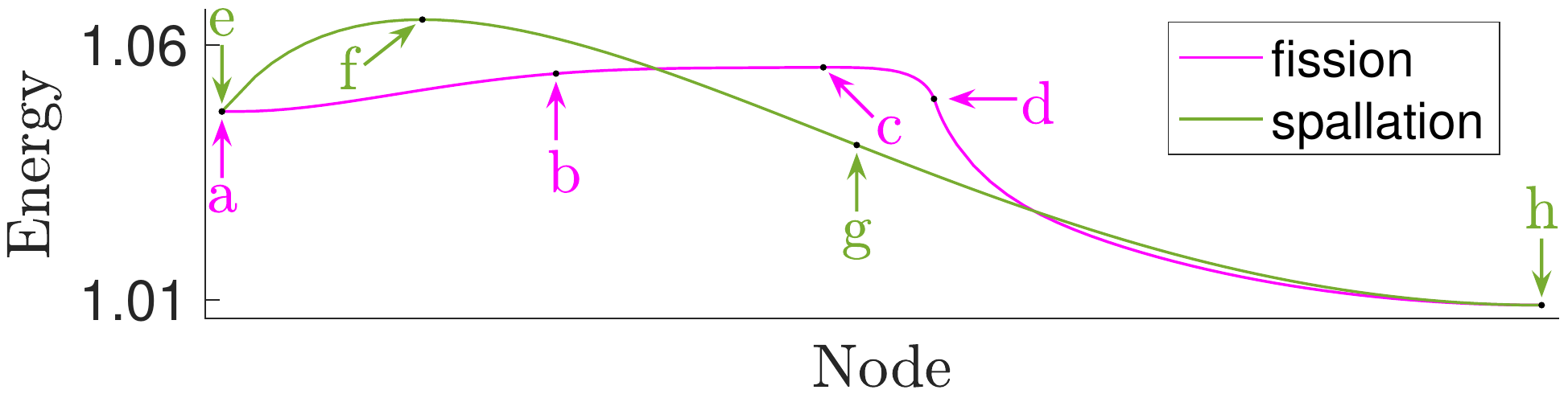}

\hspace{-30pt}
$\overset{\includegraphics[width=58pt]{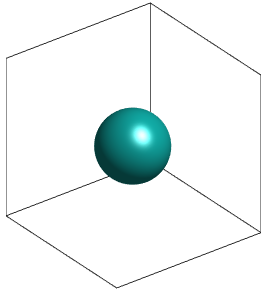}}{\begin{minipage}[t][\height][b]{24pt}\hspace{7pt}a\end{minipage}}$
$\overset{\includegraphics[width=58pt]{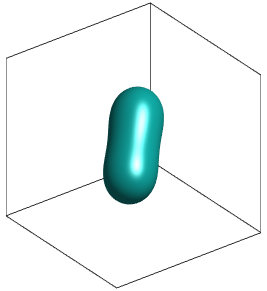}}{\begin{minipage}[t][\height][b]{24pt}\hspace{7pt}b\end{minipage}}$
$\overset{\includegraphics[width=58pt]{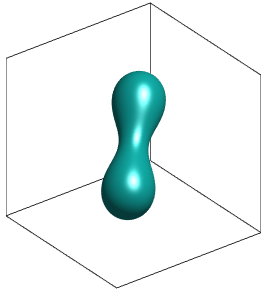}}{\begin{minipage}[t][\height][b]{24pt}\hspace{7pt}c\end{minipage}}$
$\overset{\includegraphics[width=58pt]{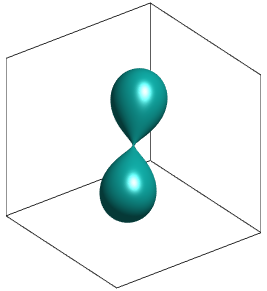}}{\begin{minipage}[t][\height][b]{24pt}\hspace{7pt}d\end{minipage}}$

\hspace{30pt}
$\overset{\includegraphics[width=58pt]{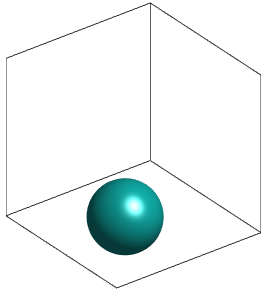}}{\begin{minipage}[t][\height][b]{24pt}\hspace{7pt}e\end{minipage}}$
$\overset{\includegraphics[width=58pt]{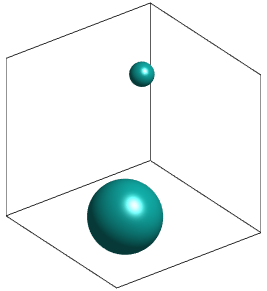}}{\begin{minipage}[t][\height][b]{24pt}\hspace{7pt}f\end{minipage}}$
$\overset{\includegraphics[width=58pt]{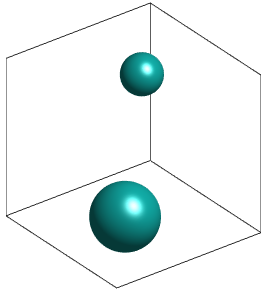}}{\begin{minipage}[t][\height][b]{24pt}\hspace{7pt}g\end{minipage}}$
$\overset{\includegraphics[width=58pt]{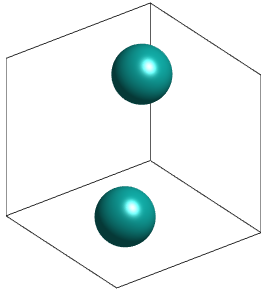}}{\begin{minipage}[t][\height][b]{24pt}\hspace{7pt}h\end{minipage}}$
\caption{Minimum energy paths ($\tdfis=0.827$). Saddle points are c and f. Scission point is d.}
\end{figure}

\begin{figure}[H]
\centering
\includegraphics[width=286.2245pt]{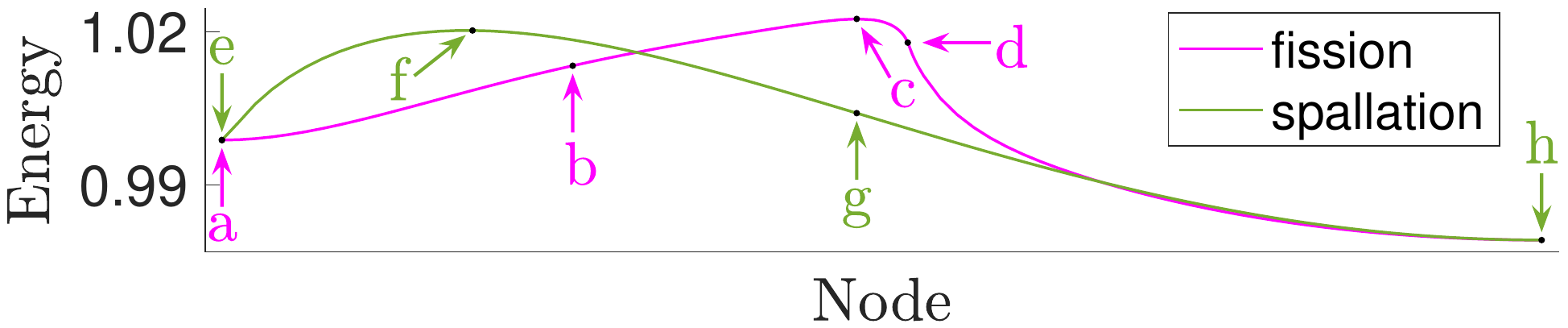}

\hspace{-30pt}
$\overset{\includegraphics[width=58pt]{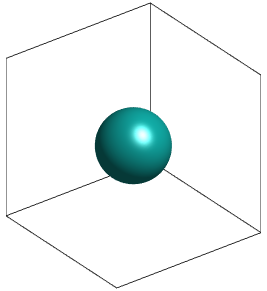}}{\begin{minipage}[t][\height][b]{24pt}\hspace{7pt}a\end{minipage}}$
$\overset{\includegraphics[width=58pt]{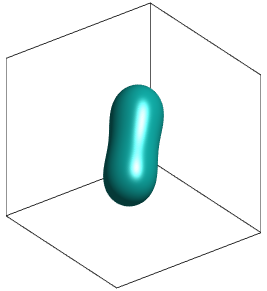}}{\begin{minipage}[t][\height][b]{24pt}\hspace{7pt}b\end{minipage}}$
$\overset{\includegraphics[width=58pt]{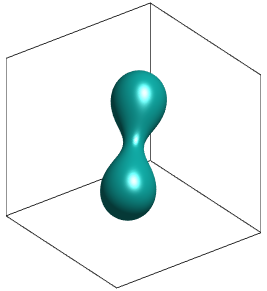}}{\begin{minipage}[t][\height][b]{24pt}\hspace{7pt}c\end{minipage}}$
$\overset{\includegraphics[width=58pt]{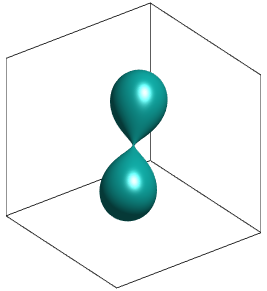}}{\begin{minipage}[t][\height][b]{24pt}\hspace{7pt}d\end{minipage}}$

\hspace{30pt}
$\overset{\includegraphics[width=58pt]{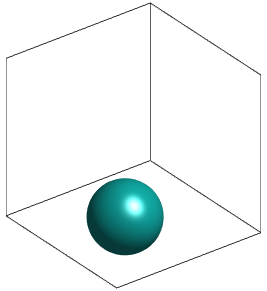}}{\begin{minipage}[t][\height][b]{24pt}\hspace{7pt}e\end{minipage}}$
$\overset{\includegraphics[width=58pt]{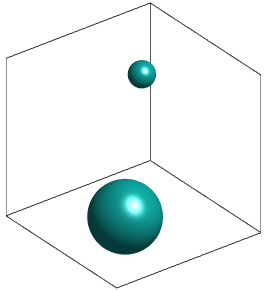}}{\begin{minipage}[t][\height][b]{24pt}\hspace{7pt}f\end{minipage}}$
$\overset{\includegraphics[width=58pt]{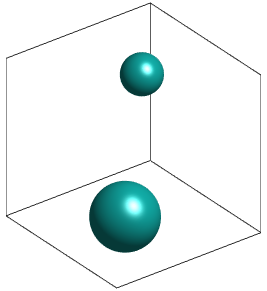}}{\begin{minipage}[t][\height][b]{24pt}\hspace{7pt}g\end{minipage}}$
$\overset{\includegraphics[width=58pt]{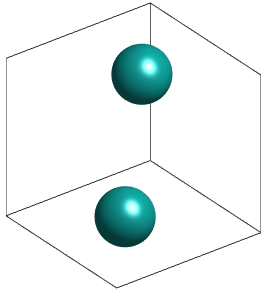}}{\begin{minipage}[t][\height][b]{24pt}\hspace{7pt}h\end{minipage}}$
\caption{Minimum energy paths ($\tdfis=0.744$). Saddle points are c and f. Scission point is d.}
\end{figure}

\begin{figure}[H]
\centering
\includegraphics[width=286.2245pt]{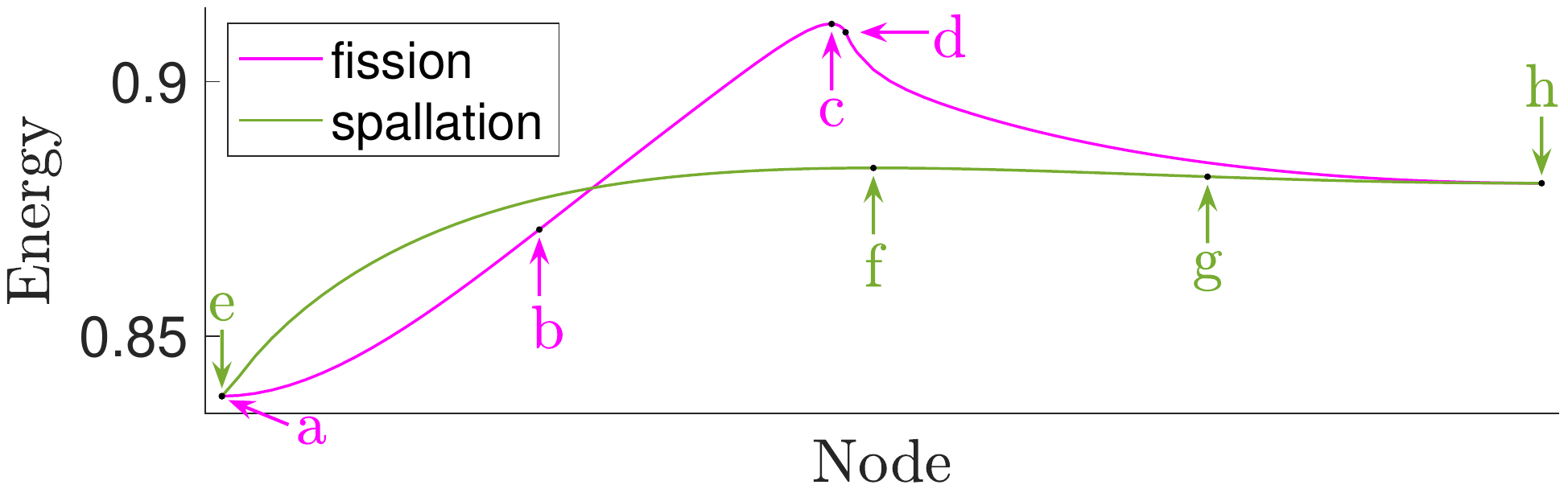}

\hspace{-30pt}
$\overset{\includegraphics[width=58pt]{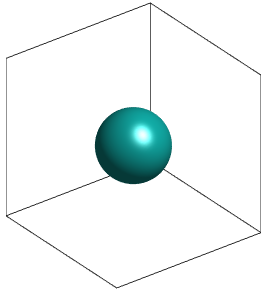}}{\begin{minipage}[t][\height][b]{24pt}\hspace{7pt}a\end{minipage}}$
$\overset{\includegraphics[width=58pt]{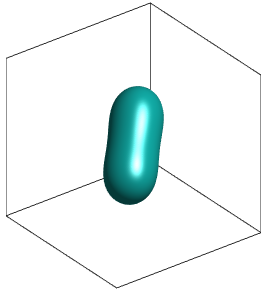}}{\begin{minipage}[t][\height][b]{24pt}\hspace{7pt}b\end{minipage}}$
$\overset{\includegraphics[width=58pt]{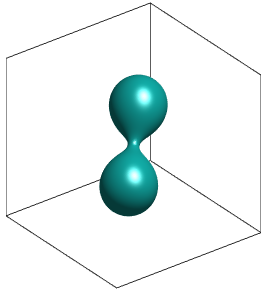}}{\begin{minipage}[t][\height][b]{24pt}\hspace{7pt}c\end{minipage}}$
$\overset{\includegraphics[width=58pt]{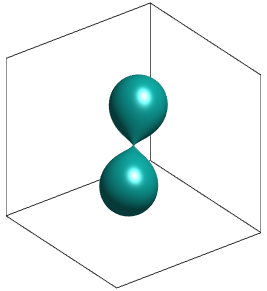}}{\begin{minipage}[t][\height][b]{24pt}\hspace{7pt}d\end{minipage}}$

\hspace{30pt}
$\overset{\includegraphics[width=58pt]{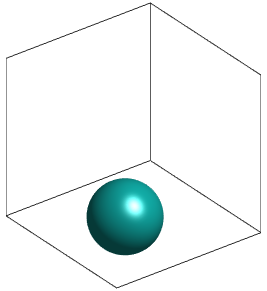}}{\begin{minipage}[t][\height][b]{24pt}\hspace{7pt}e\end{minipage}}$
$\overset{\includegraphics[width=58pt]{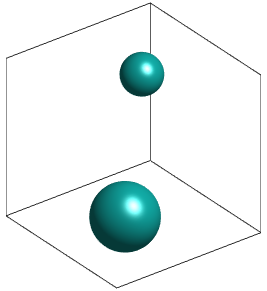}}{\begin{minipage}[t][\height][b]{24pt}\hspace{7pt}f\end{minipage}}$
$\overset{\includegraphics[width=58pt]{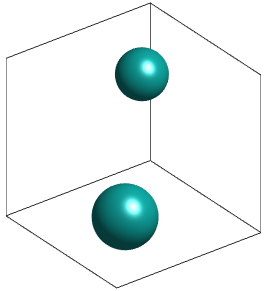}}{\begin{minipage}[t][\height][b]{24pt}\hspace{7pt}g\end{minipage}}$
$\overset{\includegraphics[width=58pt]{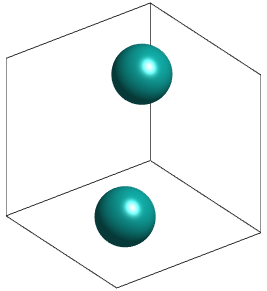}}{\begin{minipage}[t][\height][b]{24pt}\hspace{7pt}h\end{minipage}}$
\caption{Minimum energy paths ($\tdfis=0.469$). Saddle points are c and f. Scission point is d.}
\end{figure}

\section{Simulations in 2-D under no boundary conditions}
\label{Simulations in 2-D under no boundary conditions}

In this appendix we present the detailed numerical results for the minimum energy paths of fission in 2-D under no boundary conditions.

\begin{figure}[H]
\centering
\includegraphics[width=396.7347pt]{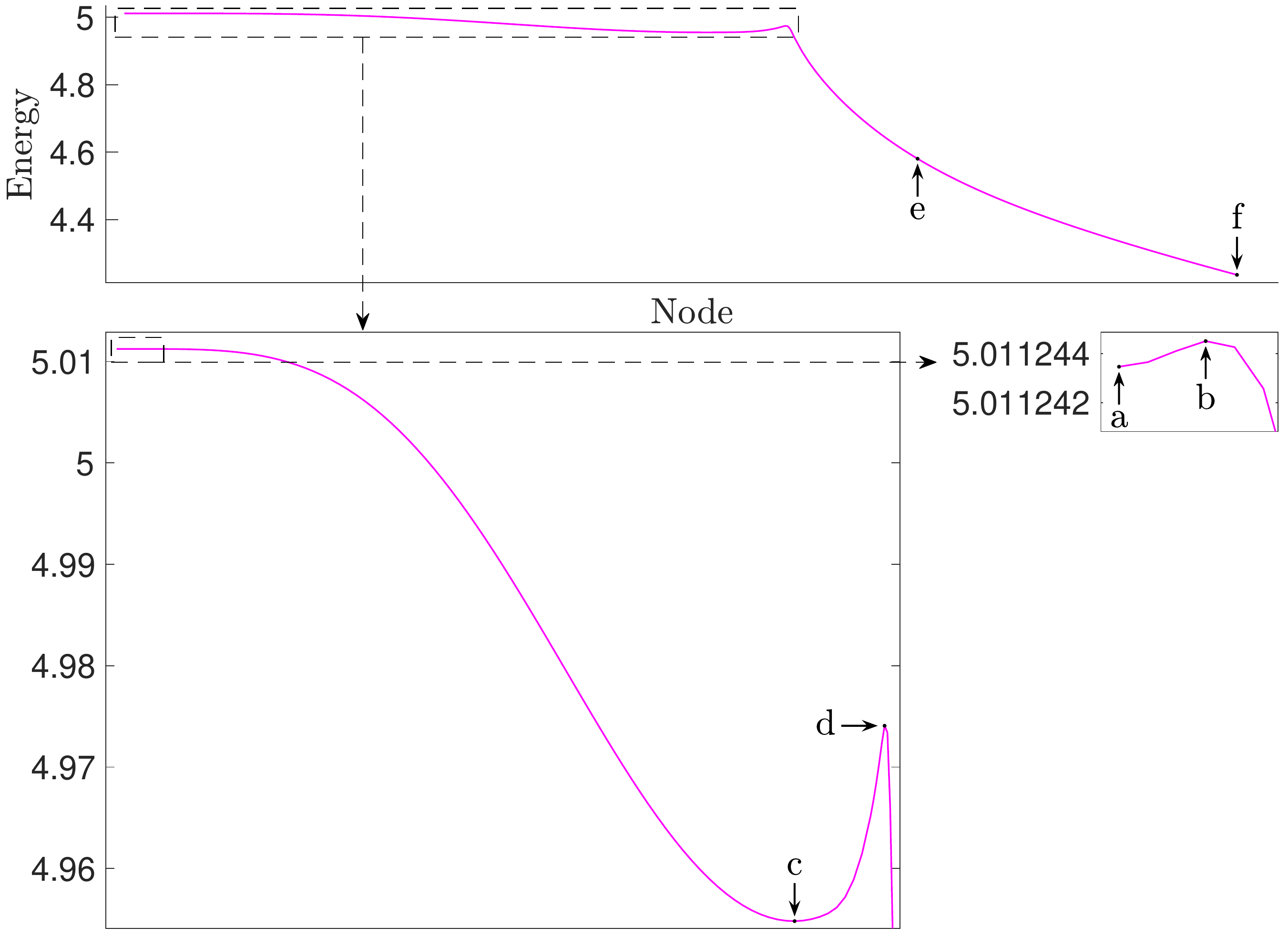}

$\overset{\includegraphics[width=53pt]{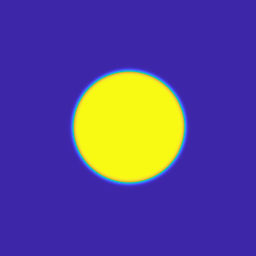}}{\begin{minipage}[t][\height][b]{24pt}\hspace{10pt}a\end{minipage}}$
$\overset{\includegraphics[width=53pt]{images/NoBoundary2D/4.69/b.png}}{\begin{minipage}[t][\height][b]{24pt}\hspace{10pt}b\end{minipage}}$
$\overset{\includegraphics[width=53pt]{images/NoBoundary2D/4.69/c.png}}{\begin{minipage}[t][\height][b]{24pt}\hspace{10pt}c\end{minipage}}$
$\overset{\includegraphics[width=53pt]{images/NoBoundary2D/4.69/d.png}}{\begin{minipage}[t][\height][b]{24pt}\hspace{10pt}d\end{minipage}}$
$\overset{\includegraphics[width=53pt]{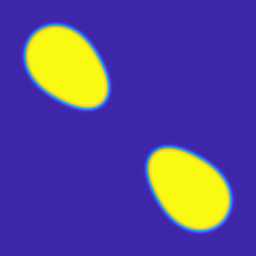}}{\begin{minipage}[t][\height][b]{24pt}\hspace{10pt}e\end{minipage}}$
$\overset{\includegraphics[width=53pt]{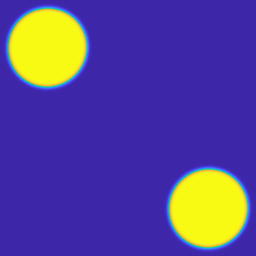}}{\begin{minipage}[t][\height][b]{24pt}\hspace{10pt}f\end{minipage}}$
\caption{Minimum energy path ($\tdfis\!=\!0.999$). Local minimizers are a and c. Transition states are b and d.}
\end{figure}

\begin{figure}[H]
\centering
\includegraphics[width=396.7347pt]{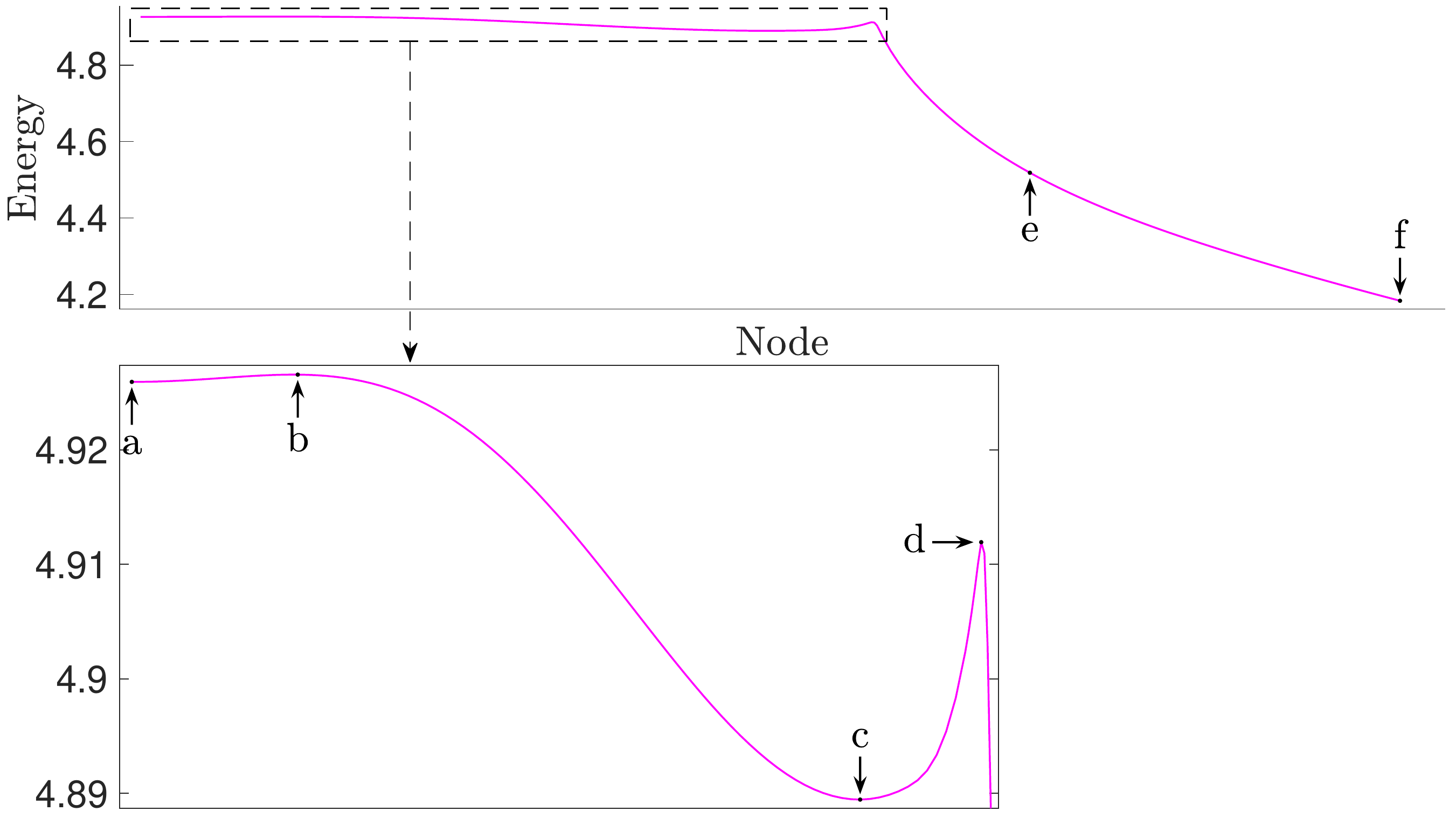}

$\overset{\includegraphics[width=53pt]{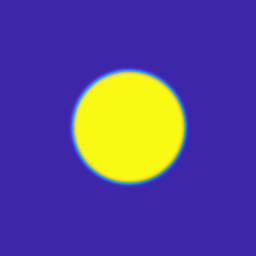}}{\begin{minipage}[t][\height][b]{24pt}\hspace{10pt}a\end{minipage}}$
$\overset{\includegraphics[width=53pt]{images/NoBoundary2D/4.58/b.png}}{\begin{minipage}[t][\height][b]{24pt}\hspace{10pt}b\end{minipage}}$
$\overset{\includegraphics[width=53pt]{images/NoBoundary2D/4.58/c.png}}{\begin{minipage}[t][\height][b]{24pt}\hspace{10pt}c\end{minipage}}$
$\overset{\includegraphics[width=53pt]{images/NoBoundary2D/4.58/d.png}}{\begin{minipage}[t][\height][b]{24pt}\hspace{10pt}d\end{minipage}}$
$\overset{\includegraphics[width=53pt]{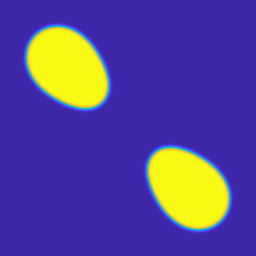}}{\begin{minipage}[t][\height][b]{24pt}\hspace{10pt}e\end{minipage}}$
$\overset{\includegraphics[width=53pt]{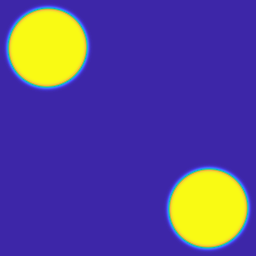}}{\begin{minipage}[t][\height][b]{24pt}\hspace{10pt}f\end{minipage}}$
\caption{Minimum energy path ($\tdfis\!=\!0.975$). Local minimizers are a and c. Transition states are b and d.}
\end{figure}

\begin{figure}[H]
\centering
\includegraphics[width=396.7347pt]{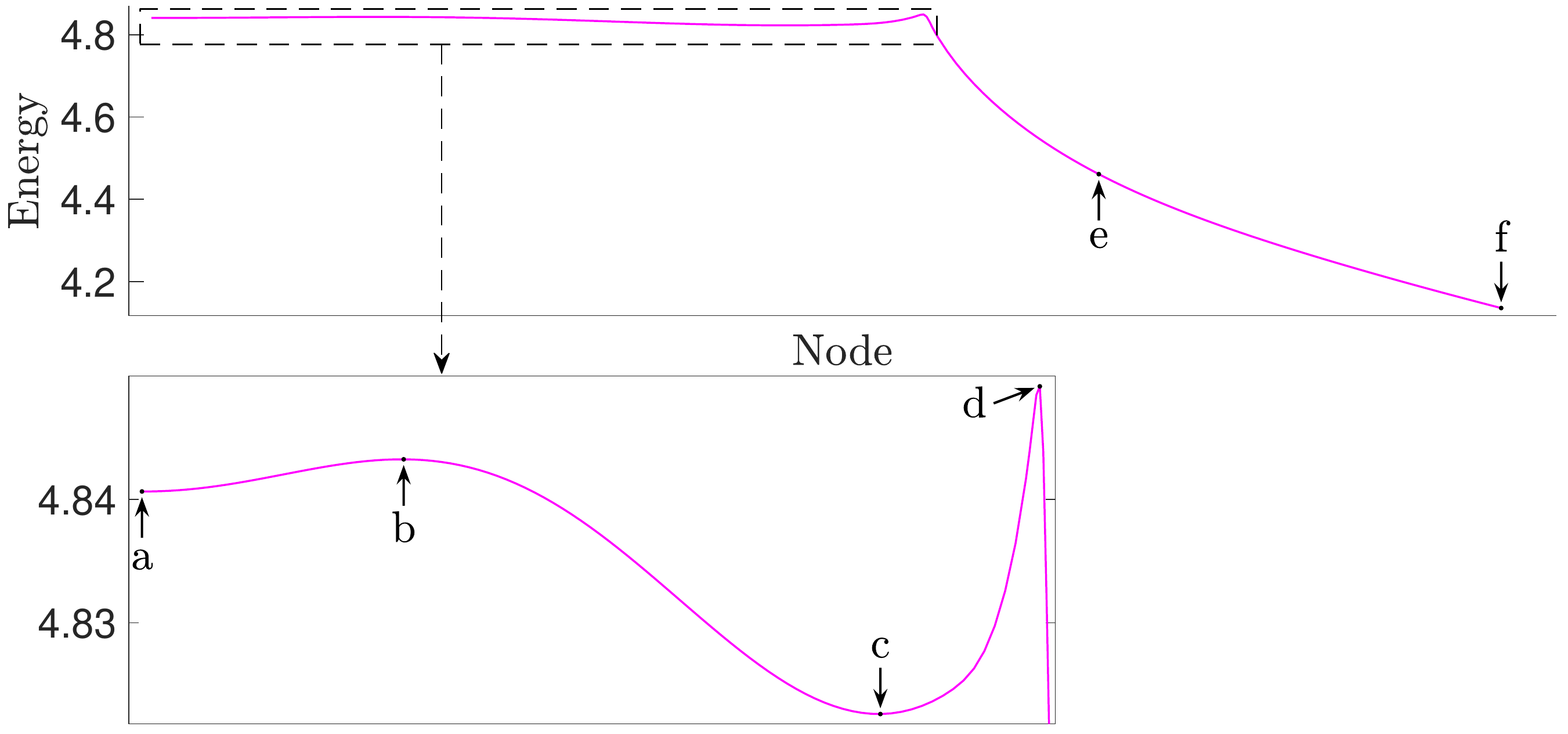}

$\overset{\includegraphics[width=53pt]{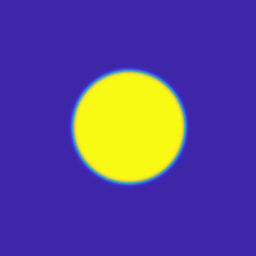}}{\begin{minipage}[t][\height][b]{24pt}\hspace{10pt}a\end{minipage}}$
$\overset{\includegraphics[width=53pt]{images/NoBoundary2D/4.47/b.png}}{\begin{minipage}[t][\height][b]{24pt}\hspace{10pt}b\end{minipage}}$
$\overset{\includegraphics[width=53pt]{images/NoBoundary2D/4.47/c.png}}{\begin{minipage}[t][\height][b]{24pt}\hspace{10pt}c\end{minipage}}$
$\overset{\includegraphics[width=53pt]{images/NoBoundary2D/4.47/d.png}}{\begin{minipage}[t][\height][b]{24pt}\hspace{10pt}d\end{minipage}}$
$\overset{\includegraphics[width=53pt]{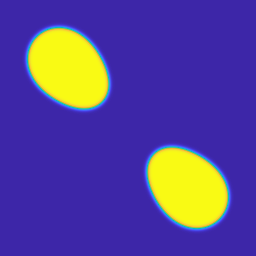}}{\begin{minipage}[t][\height][b]{24pt}\hspace{10pt}e\end{minipage}}$
$\overset{\includegraphics[width=53pt]{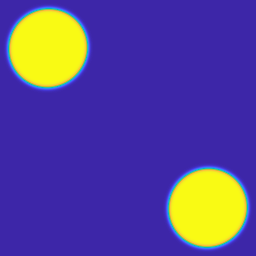}}{\begin{minipage}[t][\height][b]{24pt}\hspace{10pt}f\end{minipage}}$
\caption{Minimum energy path ($\tdfis\!=\!0.952$). Local minimizers are a and c. Transition states are b and d.}
\end{figure}

\begin{figure}[H]
\centering
\includegraphics[width=396.7347pt]{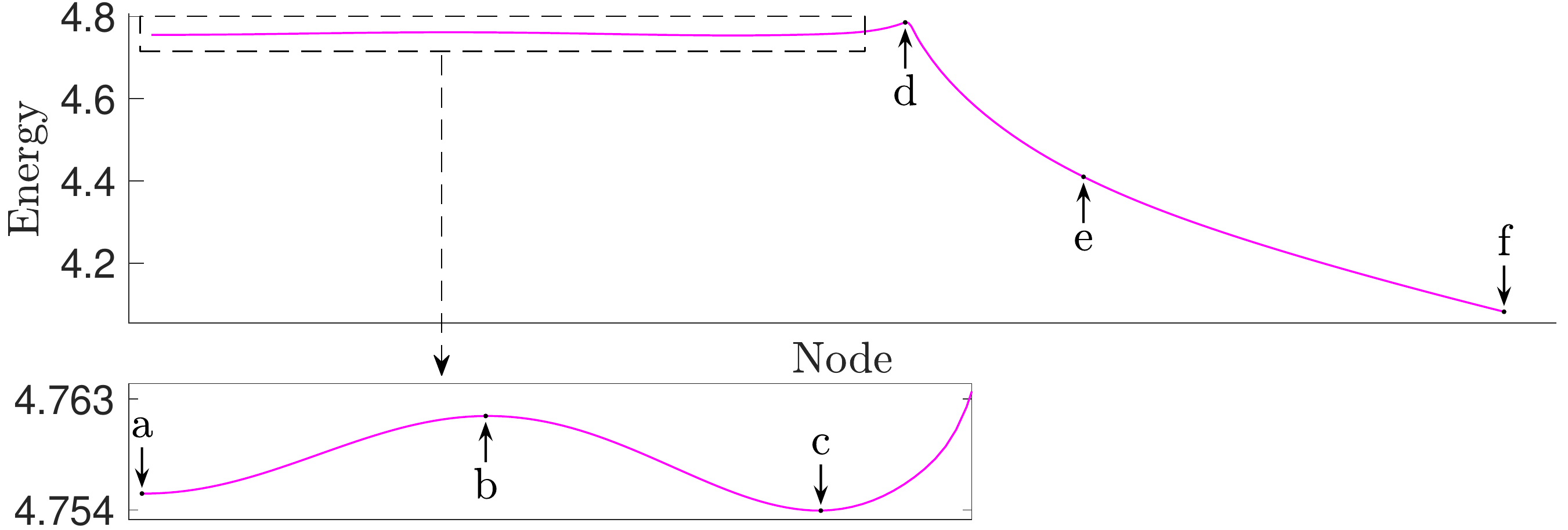}

$\overset{\includegraphics[width=53pt]{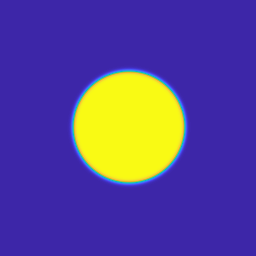}}{\begin{minipage}[t][\height][b]{24pt}\hspace{10pt}a\end{minipage}}$
$\overset{\includegraphics[width=53pt]{images/NoBoundary2D/4.36/b.png}}{\begin{minipage}[t][\height][b]{24pt}\hspace{10pt}b\end{minipage}}$
$\overset{\includegraphics[width=53pt]{images/NoBoundary2D/4.36/c.png}}{\begin{minipage}[t][\height][b]{24pt}\hspace{10pt}c\end{minipage}}$
$\overset{\includegraphics[width=53pt]{images/NoBoundary2D/4.36/d.png}}{\begin{minipage}[t][\height][b]{24pt}\hspace{10pt}d\end{minipage}}$
$\overset{\includegraphics[width=53pt]{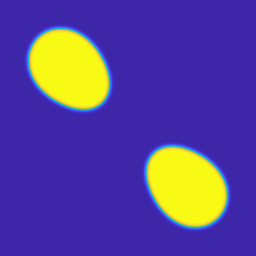}}{\begin{minipage}[t][\height][b]{24pt}\hspace{10pt}e\end{minipage}}$
$\overset{\includegraphics[width=53pt]{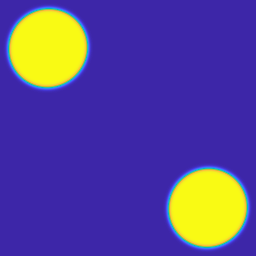}}{\begin{minipage}[t][\height][b]{24pt}\hspace{10pt}f\end{minipage}}$
\caption{Minimum energy path ($\tdfis\!=\!0.929$). Local minimizers are a and c. Transition states are b and d.}
\end{figure}

\begin{figure}[H]
\centering
\includegraphics[width=396.7347pt]{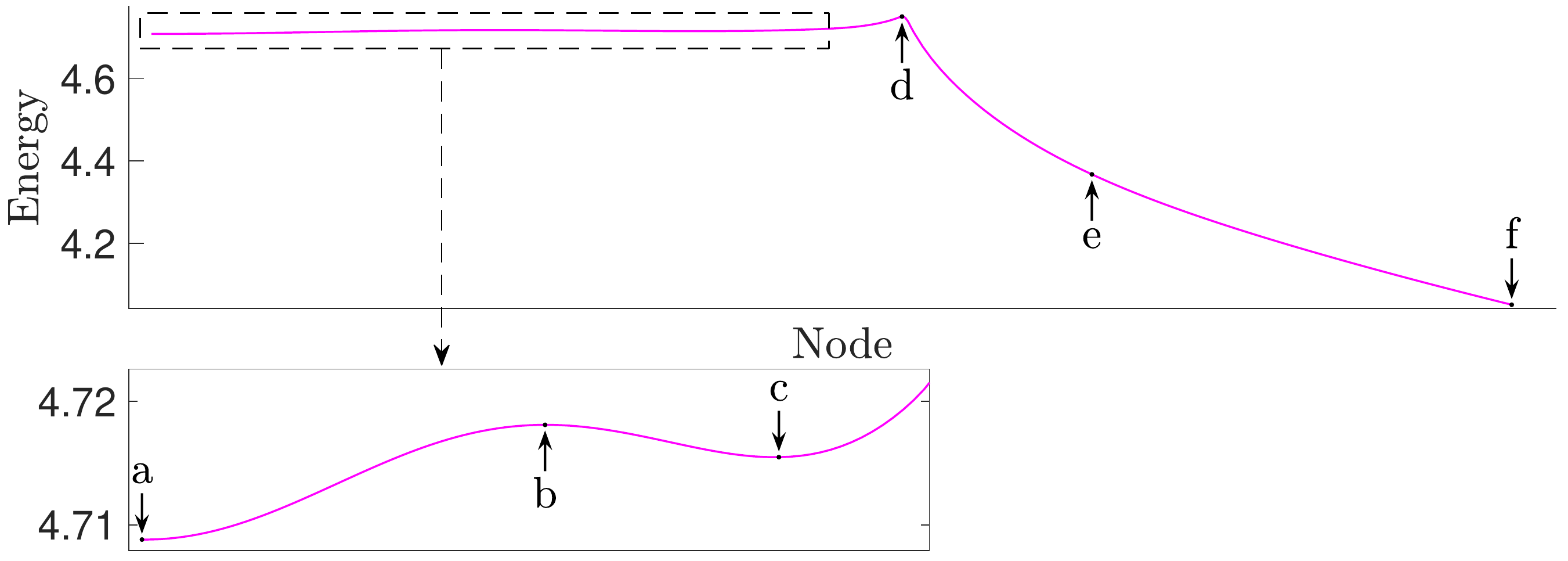}

$\overset{\includegraphics[width=53pt]{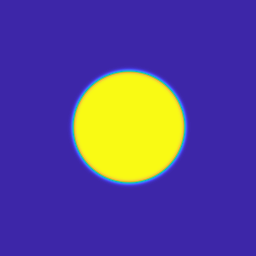}}{\begin{minipage}[t][\height][b]{24pt}\hspace{10pt}a\end{minipage}}$
$\overset{\includegraphics[width=53pt]{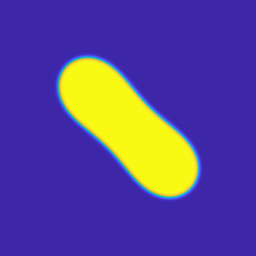}}{\begin{minipage}[t][\height][b]{24pt}\hspace{10pt}b\end{minipage}}$
$\overset{\includegraphics[width=53pt]{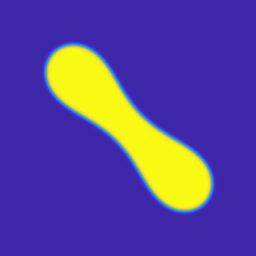}}{\begin{minipage}[t][\height][b]{24pt}\hspace{10pt}c\end{minipage}}$
$\overset{\includegraphics[width=53pt]{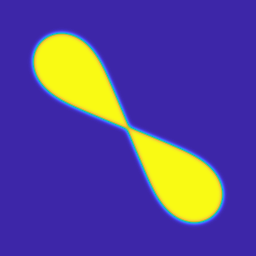}}{\begin{minipage}[t][\height][b]{24pt}\hspace{10pt}d\end{minipage}}$
$\overset{\includegraphics[width=53pt]{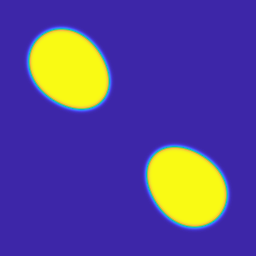}}{\begin{minipage}[t][\height][b]{24pt}\hspace{10pt}e\end{minipage}}$
$\overset{\includegraphics[width=53pt]{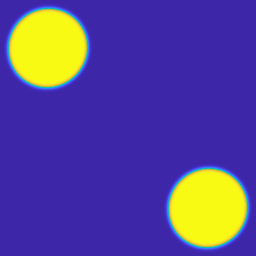}}{\begin{minipage}[t][\height][b]{24pt}\hspace{10pt}f\end{minipage}}$
\caption{Minimum energy path ($\tdfis\!=\!0.916$). Local minimizers are a and c. Transition states are b and d.}
\end{figure}

\begin{figure}[H]
\centering
\includegraphics[width=396.7347pt]{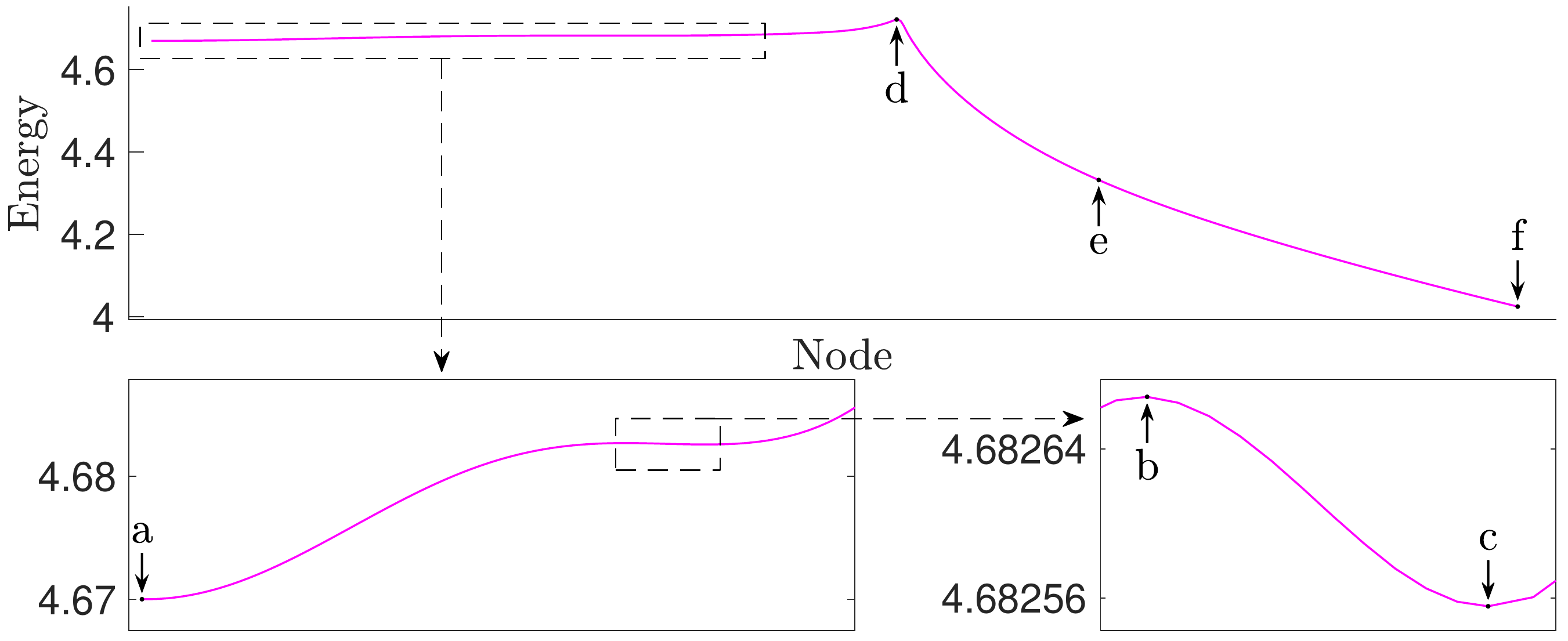}

$\overset{\includegraphics[width=53pt]{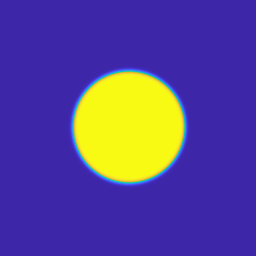}}{\begin{minipage}[t][\height][b]{24pt}\hspace{10pt}a\end{minipage}}$
$\overset{\includegraphics[width=53pt]{images/NoBoundary2D/4.25/b.png}}{\begin{minipage}[t][\height][b]{24pt}\hspace{10pt}b\end{minipage}}$
$\overset{\includegraphics[width=53pt]{images/NoBoundary2D/4.25/c.png}}{\begin{minipage}[t][\height][b]{24pt}\hspace{10pt}c\end{minipage}}$
$\overset{\includegraphics[width=53pt]{images/NoBoundary2D/4.25/d.png}}{\begin{minipage}[t][\height][b]{24pt}\hspace{10pt}d\end{minipage}}$
$\overset{\includegraphics[width=53pt]{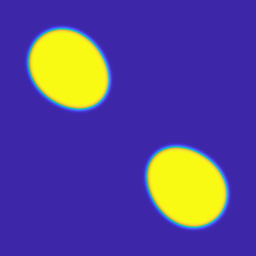}}{\begin{minipage}[t][\height][b]{24pt}\hspace{10pt}e\end{minipage}}$
$\overset{\includegraphics[width=53pt]{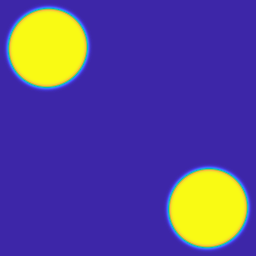}}{\begin{minipage}[t][\height][b]{24pt}\hspace{10pt}f\end{minipage}}$
\caption{Minimum energy path ($\tdfis\!=\!0.905$). Local minimizers are a and c. Transition states are b and d.}
\end{figure}

\begin{figure}[H]
\centering
\includegraphics[width=396.7347pt]{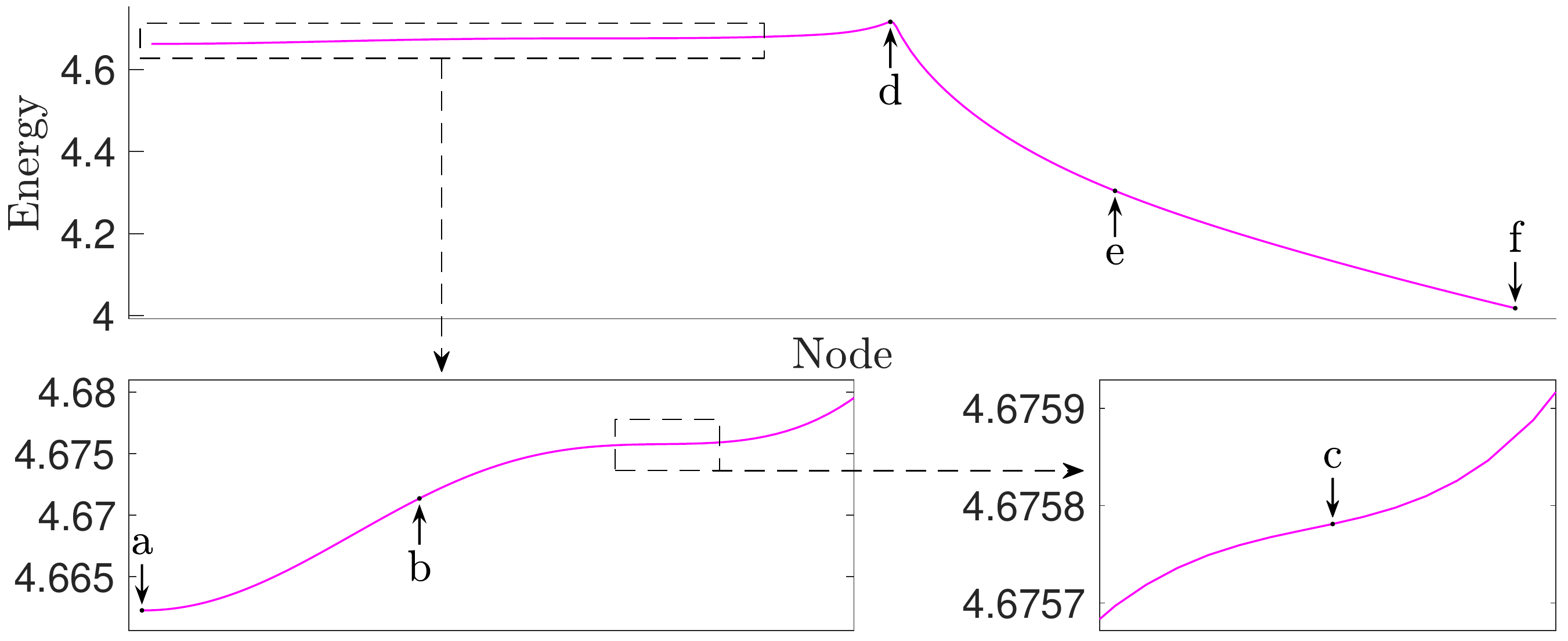}

$\overset{\includegraphics[width=53pt]{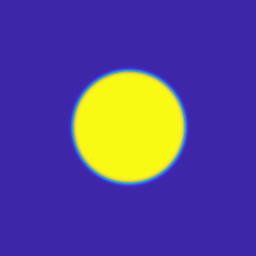}}{\begin{minipage}[t][\height][b]{24pt}\hspace{10pt}a\end{minipage}}$
$\overset{\includegraphics[width=53pt]{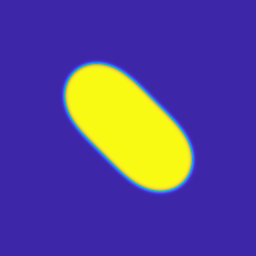}}{\begin{minipage}[t][\height][b]{24pt}\hspace{10pt}b\end{minipage}}$
$\overset{\includegraphics[width=53pt]{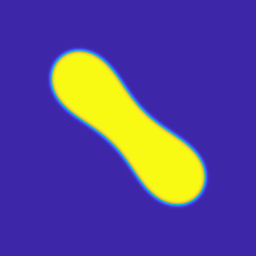}}{\begin{minipage}[t][\height][b]{24pt}\hspace{10pt}c\end{minipage}}$
$\overset{\includegraphics[width=53pt]{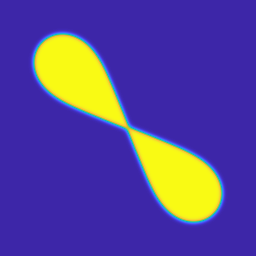}}{\begin{minipage}[t][\height][b]{24pt}\hspace{10pt}d\end{minipage}}$
$\overset{\includegraphics[width=53pt]{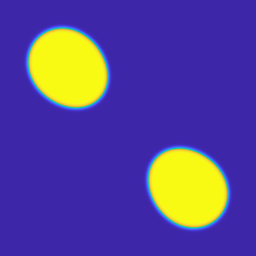}}{\begin{minipage}[t][\height][b]{24pt}\hspace{10pt}e\end{minipage}}$
$\overset{\includegraphics[width=53pt]{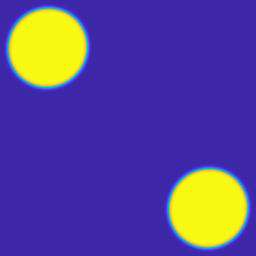}}{\begin{minipage}[t][\height][b]{24pt}\hspace{10pt}f\end{minipage}}$
\caption{Minimum energy path ($\tdfis\!=\!0.903$). Transition state is d.}
\end{figure}

\begin{figure}[H]
\centering
\includegraphics[width=396.7347pt]{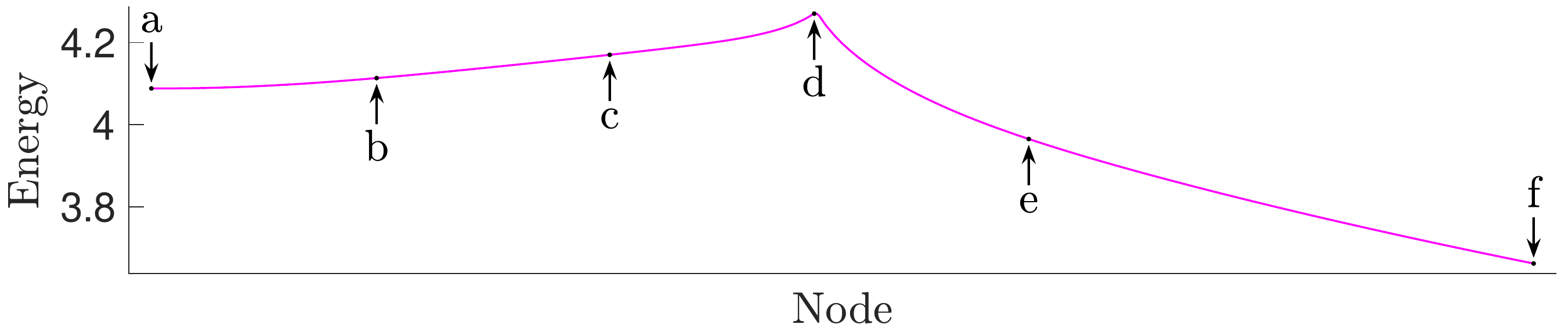}

$\overset{\includegraphics[width=53pt]{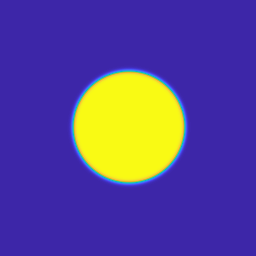}}{\begin{minipage}[t][\height][b]{24pt}\hspace{10pt}a\end{minipage}}$
$\overset{\includegraphics[width=53pt]{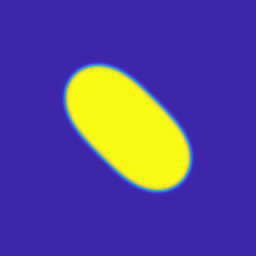}}{\begin{minipage}[t][\height][b]{24pt}\hspace{10pt}b\end{minipage}}$
$\overset{\includegraphics[width=53pt]{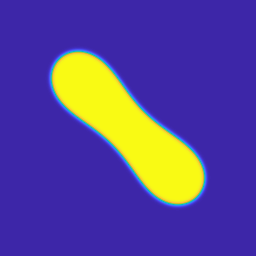}}{\begin{minipage}[t][\height][b]{24pt}\hspace{10pt}c\end{minipage}}$
$\overset{\includegraphics[width=53pt]{images/NoBoundary2D/3.50/d.png}}{\begin{minipage}[t][\height][b]{24pt}\hspace{10pt}d\end{minipage}}$
$\overset{\includegraphics[width=53pt]{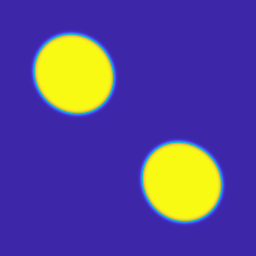}}{\begin{minipage}[t][\height][b]{24pt}\hspace{10pt}e\end{minipage}}$
$\overset{\includegraphics[width=53pt]{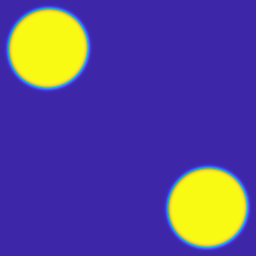}}{\begin{minipage}[t][\height][b]{24pt}\hspace{10pt}f\end{minipage}}$
\caption{Minimum energy path ($\tdfis\!=\!0.745$). Transition state is d.}
\end{figure}

\begin{figure}[H]
\centering
\includegraphics[width=396.7347pt]{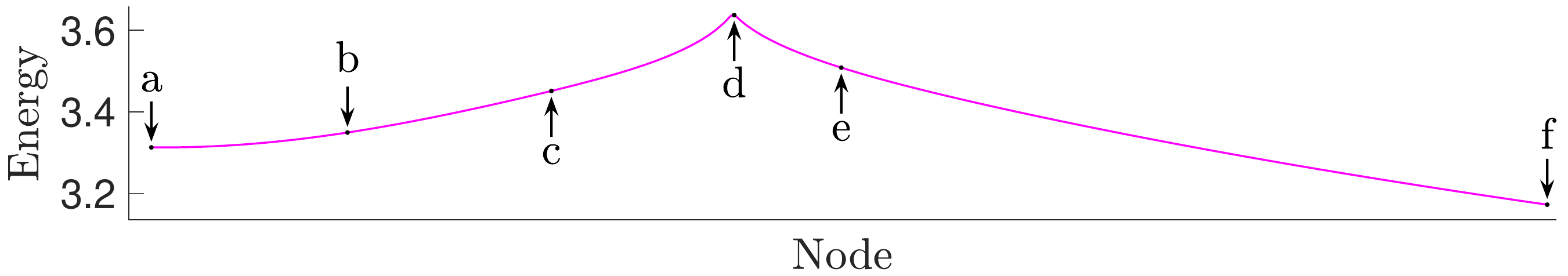}

$\overset{\includegraphics[width=53pt]{images/NoBoundary2D/2.50/a.png}}{\begin{minipage}[t][\height][b]{24pt}\hspace{10pt}a\end{minipage}}$
$\overset{\includegraphics[width=53pt]{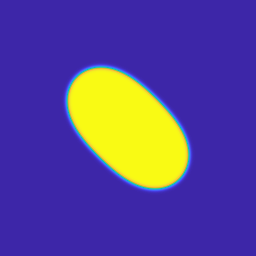}}{\begin{minipage}[t][\height][b]{24pt}\hspace{10pt}b\end{minipage}}$
$\overset{\includegraphics[width=53pt]{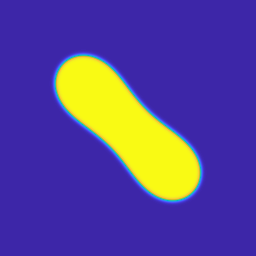}}{\begin{minipage}[t][\height][b]{24pt}\hspace{10pt}c\end{minipage}}$
$\overset{\includegraphics[width=53pt]{images/NoBoundary2D/2.50/d.png}}{\begin{minipage}[t][\height][b]{24pt}\hspace{10pt}d\end{minipage}}$
$\overset{\includegraphics[width=53pt]{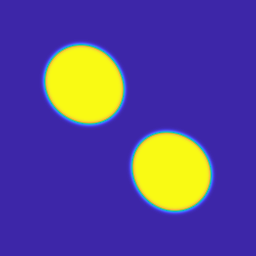}}{\begin{minipage}[t][\height][b]{24pt}\hspace{10pt}e\end{minipage}}$
$\overset{\includegraphics[width=53pt]{images/NoBoundary2D/2.50/f.png}}{\begin{minipage}[t][\height][b]{24pt}\hspace{10pt}f\end{minipage}}$
\caption{Minimum energy path ($\tdfis\!=\!0.532$). Transition state is d.}
\end{figure}

\section{Simulations in 2-D under periodic boundary conditions}
\label{Simulations in 2-D under periodic boundary conditions}

In this appendix we present the numerical simulations of \eqref{diffuse interface version of energy} in 2-D under periodic boundary conditions. As mentioned in \eqref{reparametrization of nonlocal coefficient}, the parameter $\tdfis$ refers to $\tilde\gamma\big(\omega|D|/\pi\big)^{3/2}/12$, where we choose $\omega|D|=0.1$. The results here are similar to those under no boundary conditions (in Section \ref{Analogue in 2-D}), and we expect the former to converge to the latter as $\omega\rightarrow0$.

We numerically study the pACOK dynamics with the initial value resembling the indicator
function of a disk. According to our simulation results, there is a critical value $\tdfis_*\in(1.225, 1.226)$ such that for $\tdfis<\tdfis_*$, the disk is stable; for $\tdfis>\tdfis_*$, the disk is unstable and will eventually deform into an eye-mask shaped local minimizer. We use such an eye-mask shaped local minimizer as the initial value, and numerically solve the pACOK
dynamics with a larger or smaller $\tdfis$. In this way we can track this bifurcation branch up and down. Our numerical results show that this bifurcation branch disappears at $\tdfis=\tdfis_4\in(1.182, 1.184)$ through a saddle-node bifurcation.

We use the string method to compute the minimum energy paths. The is a critical value $\tdfis_3\in(0.545, 0.556)$ such that for $\tdfis<\tdfis_3$, the computed minimum energy path of fission is unstable against mass-asymmetric perturbations. After sufficiently many iterations in the string method, such a fission path will eventually converge to another minimum energy path called the spallation path.

\begin{figure}[H]
\centering
\includegraphics[width=391.02pt]{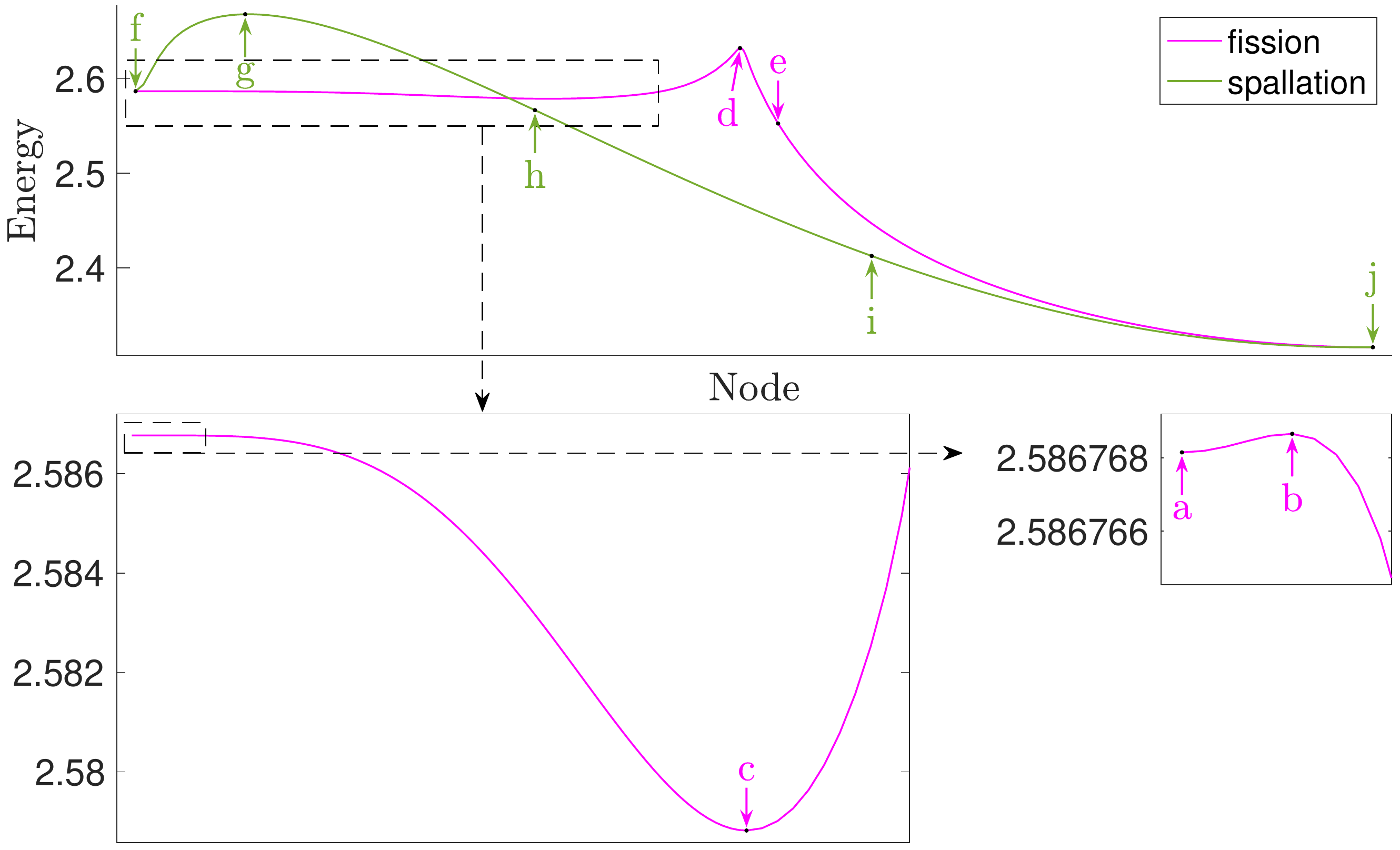}

$\overset{\includegraphics[width=53pt]{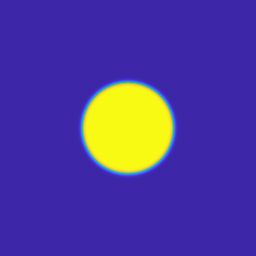}}{\begin{minipage}[t][\height][b]{24pt}\hspace{10pt}a\end{minipage}}$
$\overset{\includegraphics[width=53pt]{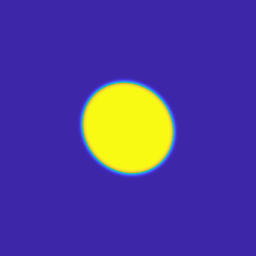}}{\begin{minipage}[t][\height][b]{24pt}\hspace{10pt}b\end{minipage}}$
$\overset{\includegraphics[width=53pt]{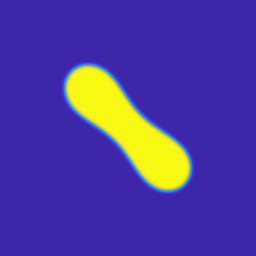}}{\begin{minipage}[t][\height][b]{24pt}\hspace{10pt}c\end{minipage}}$
$\overset{\includegraphics[width=53pt]{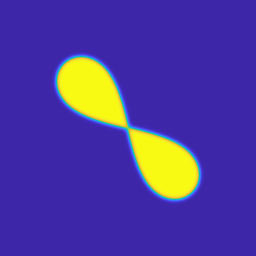}}{\begin{minipage}[t][\height][b]{24pt}\hspace{10pt}d\end{minipage}}$
$\overset{\includegraphics[width=53pt]{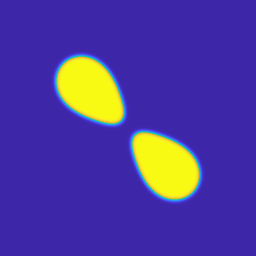}}{\begin{minipage}[t][\height][b]{24pt}\hspace{10pt}e\end{minipage}}$

\vspace{6pt}

$\overset{\includegraphics[width=53pt]{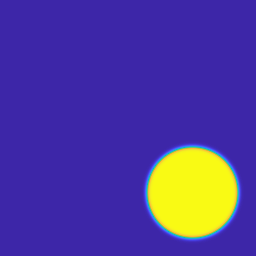}}{\begin{minipage}[t][\height][b]{24pt}\hspace{10pt}f\end{minipage}}$
$\overset{\includegraphics[width=53pt]{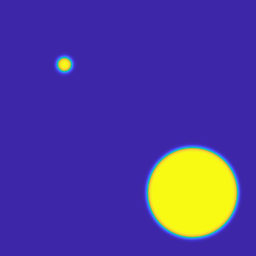}}{\begin{minipage}[t][\height][b]{24pt}\hspace{10pt}g\end{minipage}}$
$\overset{\includegraphics[width=53pt]{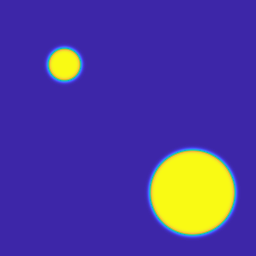}}{\begin{minipage}[t][\height][b]{24pt}\hspace{10pt}h\end{minipage}}$
$\overset{\includegraphics[width=53pt]{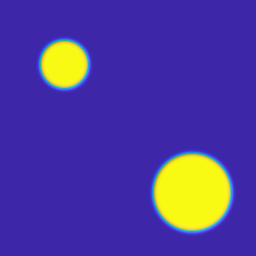}}{\begin{minipage}[t][\height][b]{24pt}\hspace{10pt}i\end{minipage}}$
$\overset{\includegraphics[width=53pt]{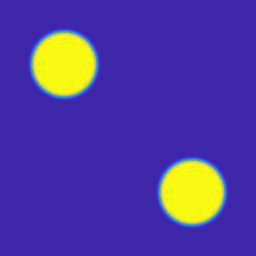}}{\begin{minipage}[t][\height][b]{24pt}\hspace{10pt}j\end{minipage}}$
\caption{Minimum energy paths ($\tdfis\!=\!1.225$). Transition states: b, d and g. Local minimizers: a, c and j.}
\end{figure}

\begin{figure}[H]
\centering
\includegraphics[width=391.02pt]{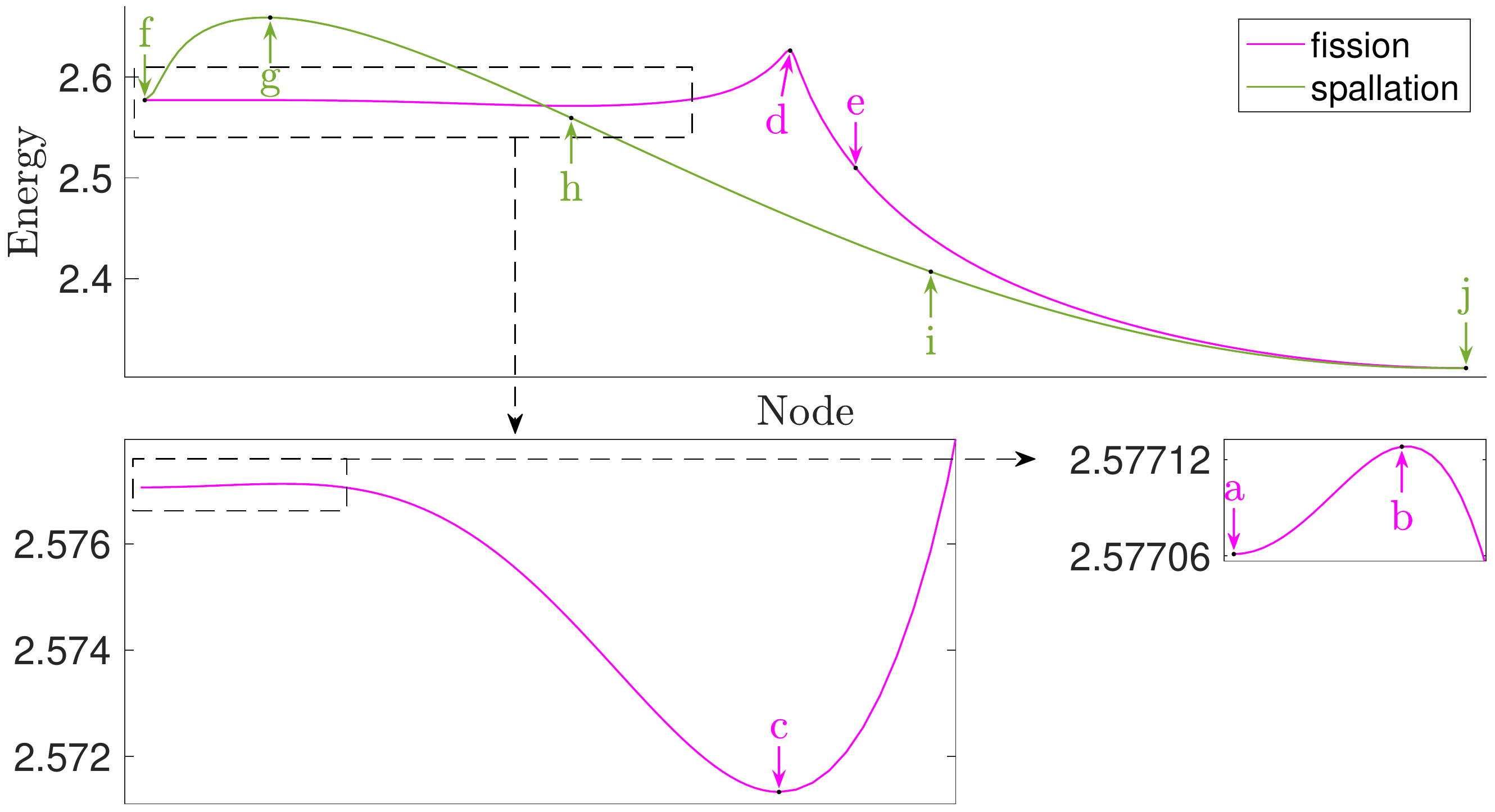}

$\overset{\includegraphics[width=53pt]{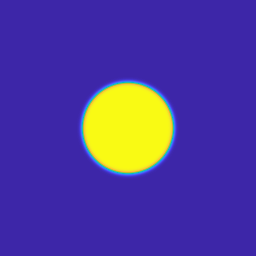}}{\begin{minipage}[t][\height][b]{24pt}\hspace{10pt}a\end{minipage}}$
$\overset{\includegraphics[width=53pt]{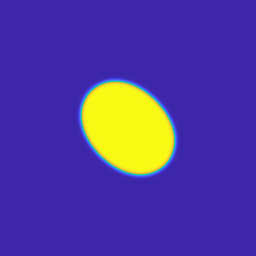}}{\begin{minipage}[t][\height][b]{24pt}\hspace{10pt}b\end{minipage}}$
$\overset{\includegraphics[width=53pt]{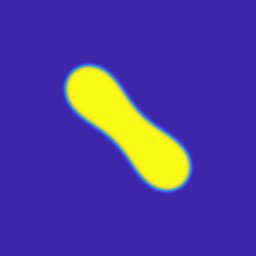}}{\begin{minipage}[t][\height][b]{24pt}\hspace{10pt}c\end{minipage}}$
$\overset{\includegraphics[width=53pt]{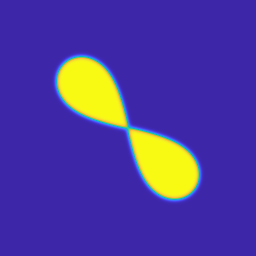}}{\begin{minipage}[t][\height][b]{24pt}\hspace{10pt}d\end{minipage}}$
$\overset{\includegraphics[width=53pt]{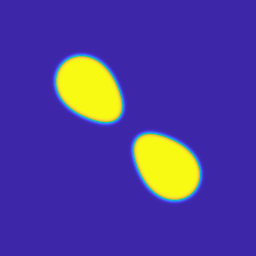}}{\begin{minipage}[t][\height][b]{24pt}\hspace{10pt}e\end{minipage}}$

\vspace{6pt}

$\overset{\includegraphics[width=53pt]{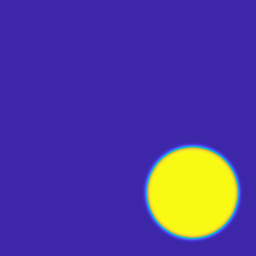}}{\begin{minipage}[t][\height][b]{24pt}\hspace{10pt}f\end{minipage}}$
$\overset{\includegraphics[width=53pt]{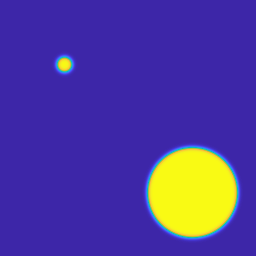}}{\begin{minipage}[t][\height][b]{24pt}\hspace{10pt}g\end{minipage}}$
$\overset{\includegraphics[width=53pt]{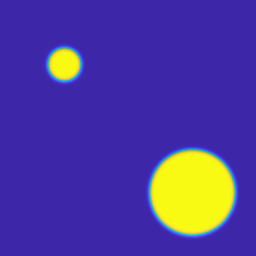}}{\begin{minipage}[t][\height][b]{24pt}\hspace{10pt}h\end{minipage}}$
$\overset{\includegraphics[width=53pt]{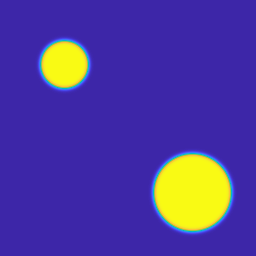}}{\begin{minipage}[t][\height][b]{24pt}\hspace{10pt}i\end{minipage}}$
$\overset{\includegraphics[width=53pt]{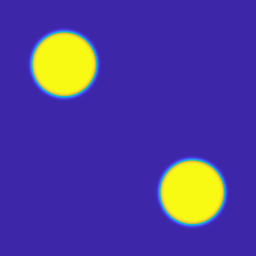}}{\begin{minipage}[t][\height][b]{24pt}\hspace{10pt}j\end{minipage}}$
\caption{Minimum energy paths ($\tdfis\!=\!1.217$). Transition states: b, d and g. Local minimizers: a, c and j.}
\end{figure}

\begin{figure}[H]
\centering
\includegraphics[width=391.02pt]{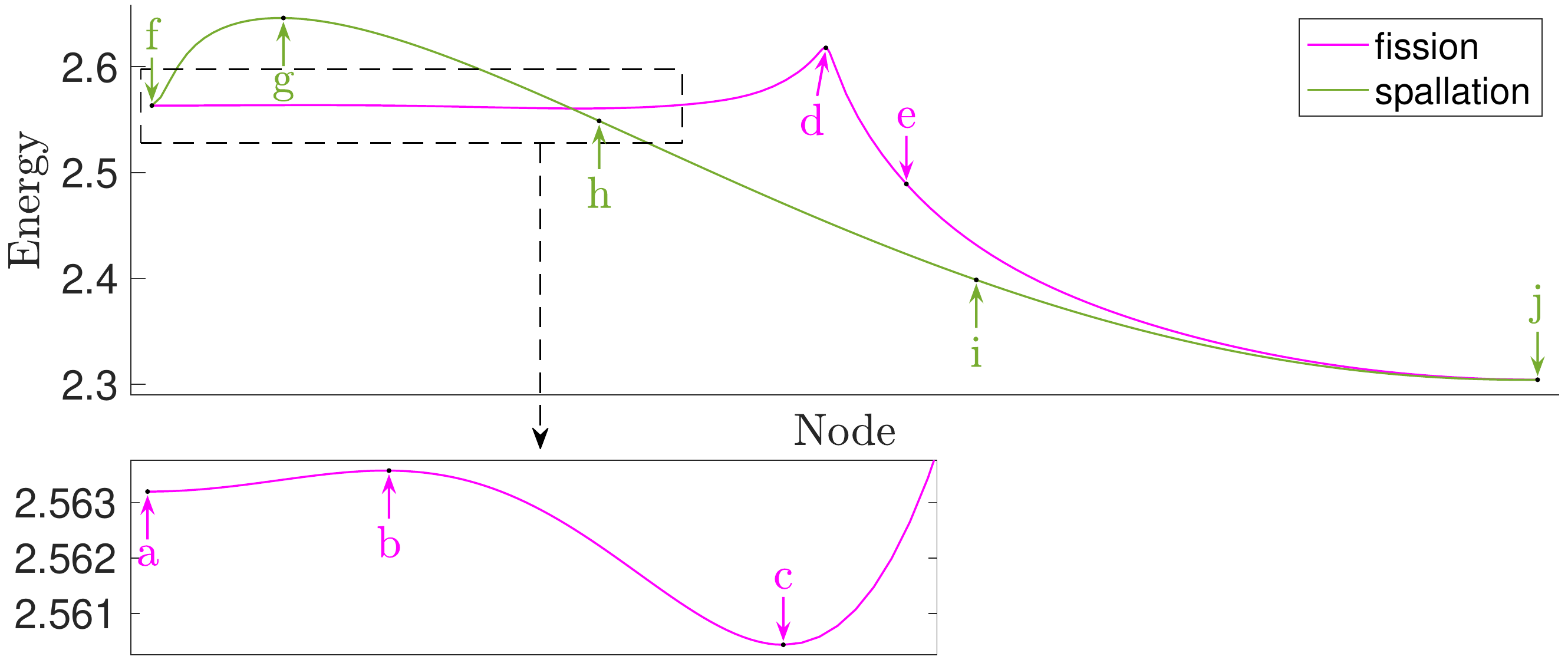}

$\overset{\includegraphics[width=53pt]{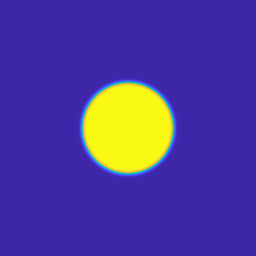}}{\begin{minipage}[t][\height][b]{24pt}\hspace{10pt}a\end{minipage}}$
$\overset{\includegraphics[width=53pt]{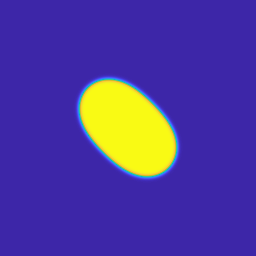}}{\begin{minipage}[t][\height][b]{24pt}\hspace{10pt}b\end{minipage}}$
$\overset{\includegraphics[width=53pt]{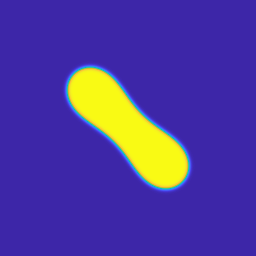}}{\begin{minipage}[t][\height][b]{24pt}\hspace{10pt}c\end{minipage}}$
$\overset{\includegraphics[width=53pt]{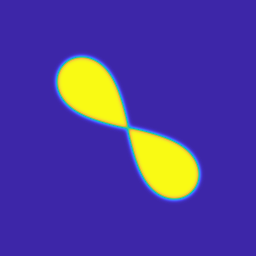}}{\begin{minipage}[t][\height][b]{24pt}\hspace{10pt}d\end{minipage}}$
$\overset{\includegraphics[width=53pt]{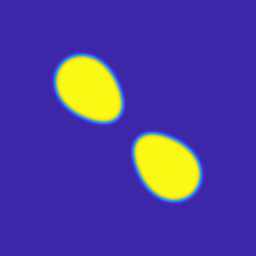}}{\begin{minipage}[t][\height][b]{24pt}\hspace{10pt}e\end{minipage}}$

\vspace{6pt}

$\overset{\includegraphics[width=53pt]{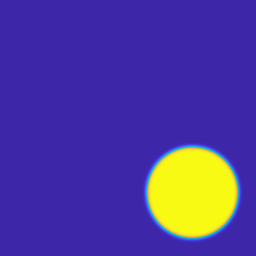}}{\begin{minipage}[t][\height][b]{24pt}\hspace{10pt}f\end{minipage}}$
$\overset{\includegraphics[width=53pt]{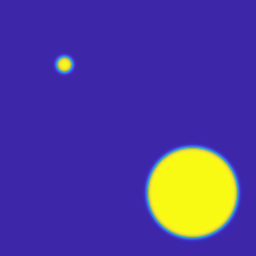}}{\begin{minipage}[t][\height][b]{24pt}\hspace{10pt}g\end{minipage}}$
$\overset{\includegraphics[width=53pt]{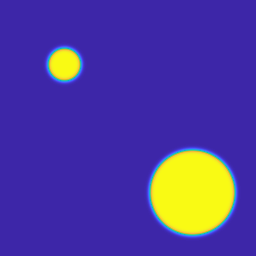}}{\begin{minipage}[t][\height][b]{24pt}\hspace{10pt}h\end{minipage}}$
$\overset{\includegraphics[width=53pt]{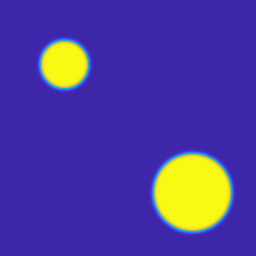}}{\begin{minipage}[t][\height][b]{24pt}\hspace{10pt}i\end{minipage}}$
$\overset{\includegraphics[width=53pt]{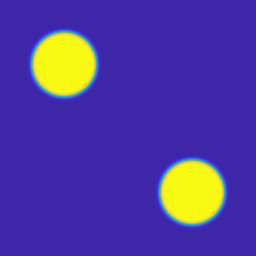}}{\begin{minipage}[t][\height][b]{24pt}\hspace{10pt}j\end{minipage}}$
\caption{Minimum energy paths ($\tdfis\!=\!1.206$). Transition states: b, d and g. Local minimizers: a, c and j.}
\end{figure}

\begin{figure}[H]
\centering
\includegraphics[width=391.02pt]{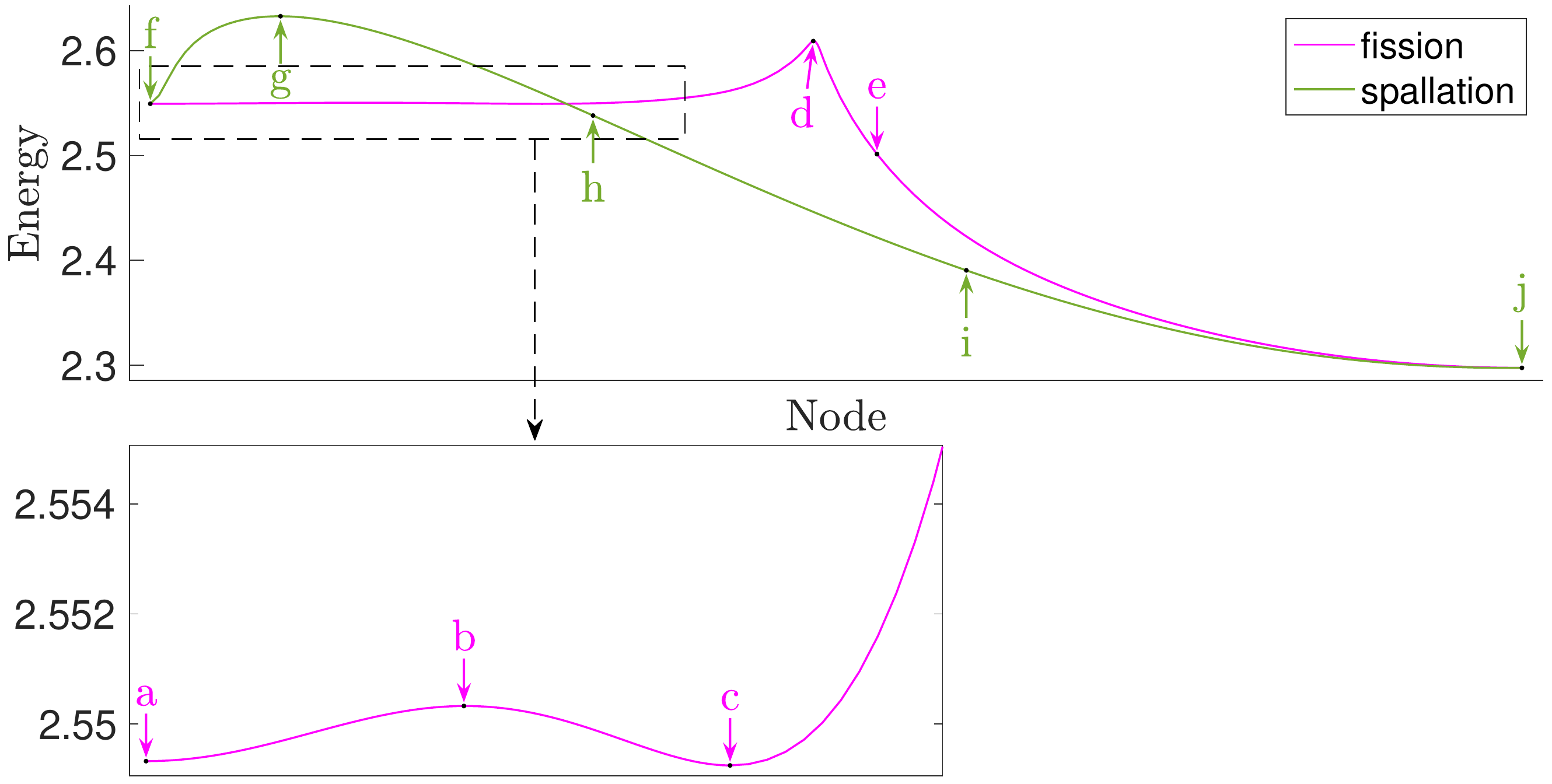}

$\overset{\includegraphics[width=53pt]{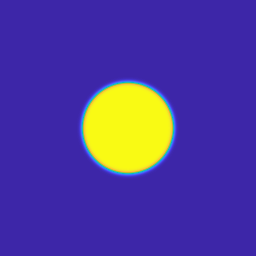}}{\begin{minipage}[t][\height][b]{24pt}\hspace{10pt}a\end{minipage}}$
$\overset{\includegraphics[width=53pt]{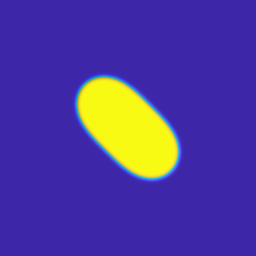}}{\begin{minipage}[t][\height][b]{24pt}\hspace{10pt}b\end{minipage}}$
$\overset{\includegraphics[width=53pt]{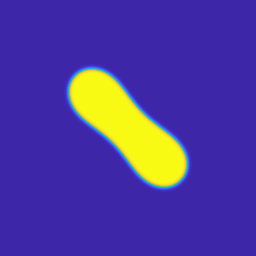}}{\begin{minipage}[t][\height][b]{24pt}\hspace{10pt}c\end{minipage}}$
$\overset{\includegraphics[width=53pt]{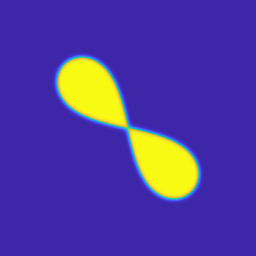}}{\begin{minipage}[t][\height][b]{24pt}\hspace{10pt}d\end{minipage}}$
$\overset{\includegraphics[width=53pt]{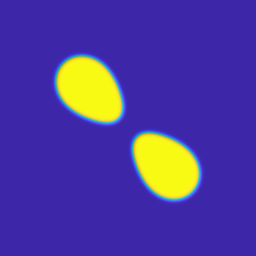}}{\begin{minipage}[t][\height][b]{24pt}\hspace{10pt}e\end{minipage}}$

\vspace{6pt}

$\overset{\includegraphics[width=53pt]{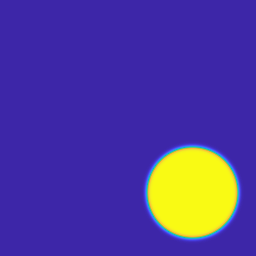}}{\begin{minipage}[t][\height][b]{24pt}\hspace{10pt}f\end{minipage}}$
$\overset{\includegraphics[width=53pt]{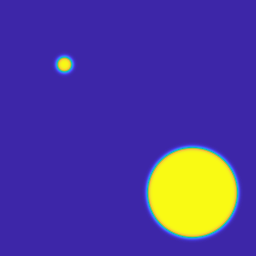}}{\begin{minipage}[t][\height][b]{24pt}\hspace{10pt}g\end{minipage}}$
$\overset{\includegraphics[width=53pt]{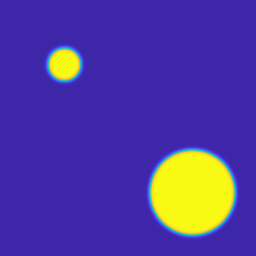}}{\begin{minipage}[t][\height][b]{24pt}\hspace{10pt}h\end{minipage}}$
$\overset{\includegraphics[width=53pt]{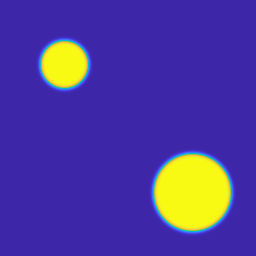}}{\begin{minipage}[t][\height][b]{24pt}\hspace{10pt}i\end{minipage}}$
$\overset{\includegraphics[width=53pt]{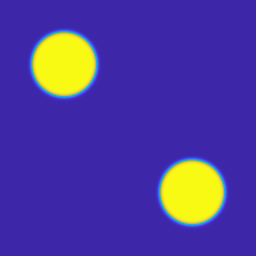}}{\begin{minipage}[t][\height][b]{24pt}\hspace{10pt}j\end{minipage}}$
\caption{Minimum energy paths ($\tdfis\!=\!1.194$). Transition states: b, d and g. Local minimizers: a, c and j.}
\end{figure}

\begin{figure}[H]
\centering
\includegraphics[width=391.02pt]{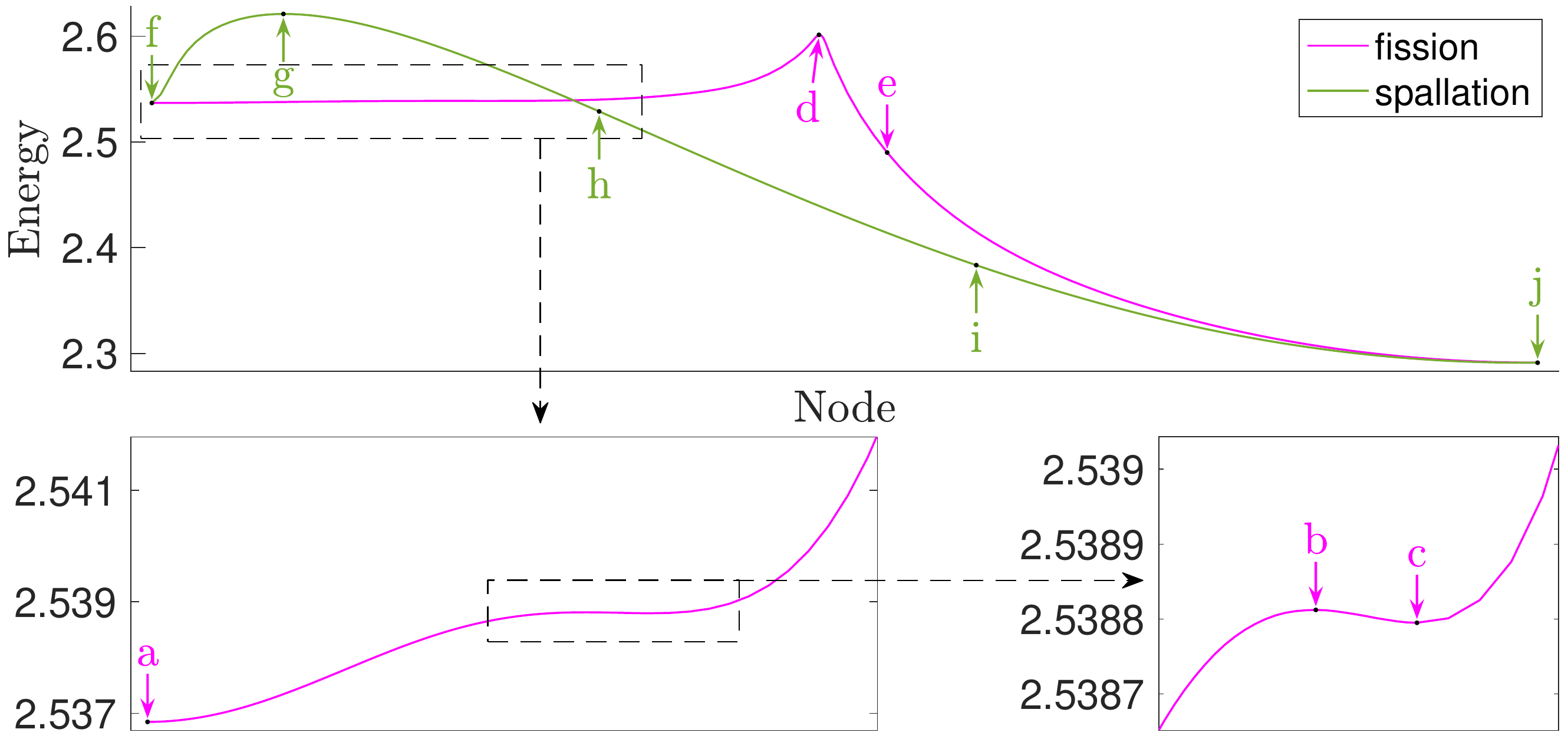}

$\overset{\includegraphics[width=53pt]{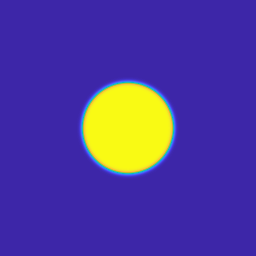}}{\begin{minipage}[t][\height][b]{24pt}\hspace{10pt}a\end{minipage}}$
$\overset{\includegraphics[width=53pt]{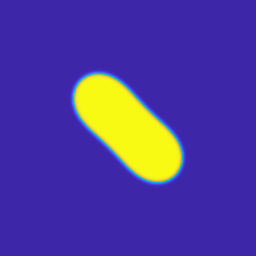}}{\begin{minipage}[t][\height][b]{24pt}\hspace{10pt}b\end{minipage}}$
$\overset{\includegraphics[width=53pt]{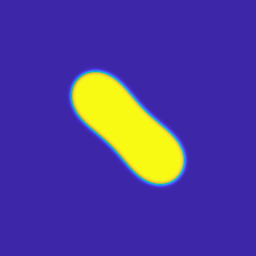}}{\begin{minipage}[t][\height][b]{24pt}\hspace{10pt}c\end{minipage}}$
$\overset{\includegraphics[width=53pt]{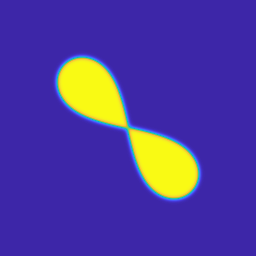}}{\begin{minipage}[t][\height][b]{24pt}\hspace{10pt}d\end{minipage}}$
$\overset{\includegraphics[width=53pt]{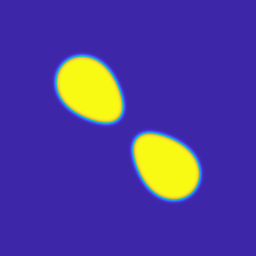}}{\begin{minipage}[t][\height][b]{24pt}\hspace{10pt}e\end{minipage}}$

\vspace{6pt}

$\overset{\includegraphics[width=53pt]{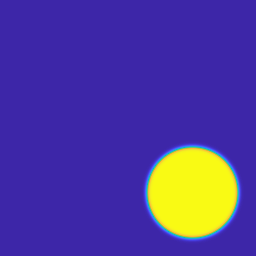}}{\begin{minipage}[t][\height][b]{24pt}\hspace{10pt}f\end{minipage}}$
$\overset{\includegraphics[width=53pt]{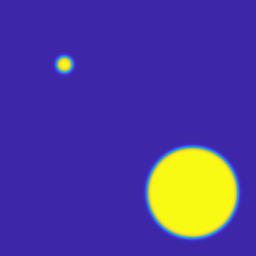}}{\begin{minipage}[t][\height][b]{24pt}\hspace{10pt}g\end{minipage}}$
$\overset{\includegraphics[width=53pt]{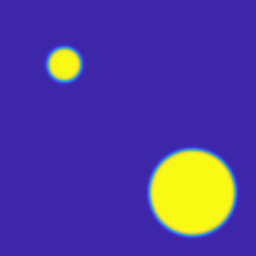}}{\begin{minipage}[t][\height][b]{24pt}\hspace{10pt}h\end{minipage}}$
$\overset{\includegraphics[width=53pt]{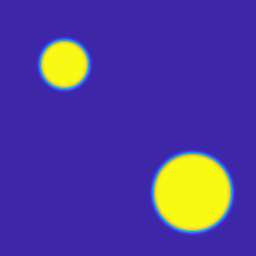}}{\begin{minipage}[t][\height][b]{24pt}\hspace{10pt}i\end{minipage}}$
$\overset{\includegraphics[width=53pt]{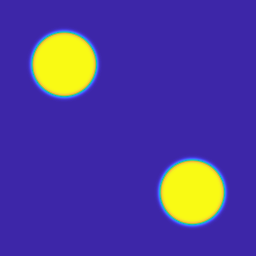}}{\begin{minipage}[t][\height][b]{24pt}\hspace{10pt}j\end{minipage}}$
\caption{Minimum energy paths ($\tdfis\!=\!1.184$). Transition states: b, d and g. Local minimizers: a, c and j.}
\end{figure}

\begin{figure}[H]
\centering
\includegraphics[width=391.02pt]{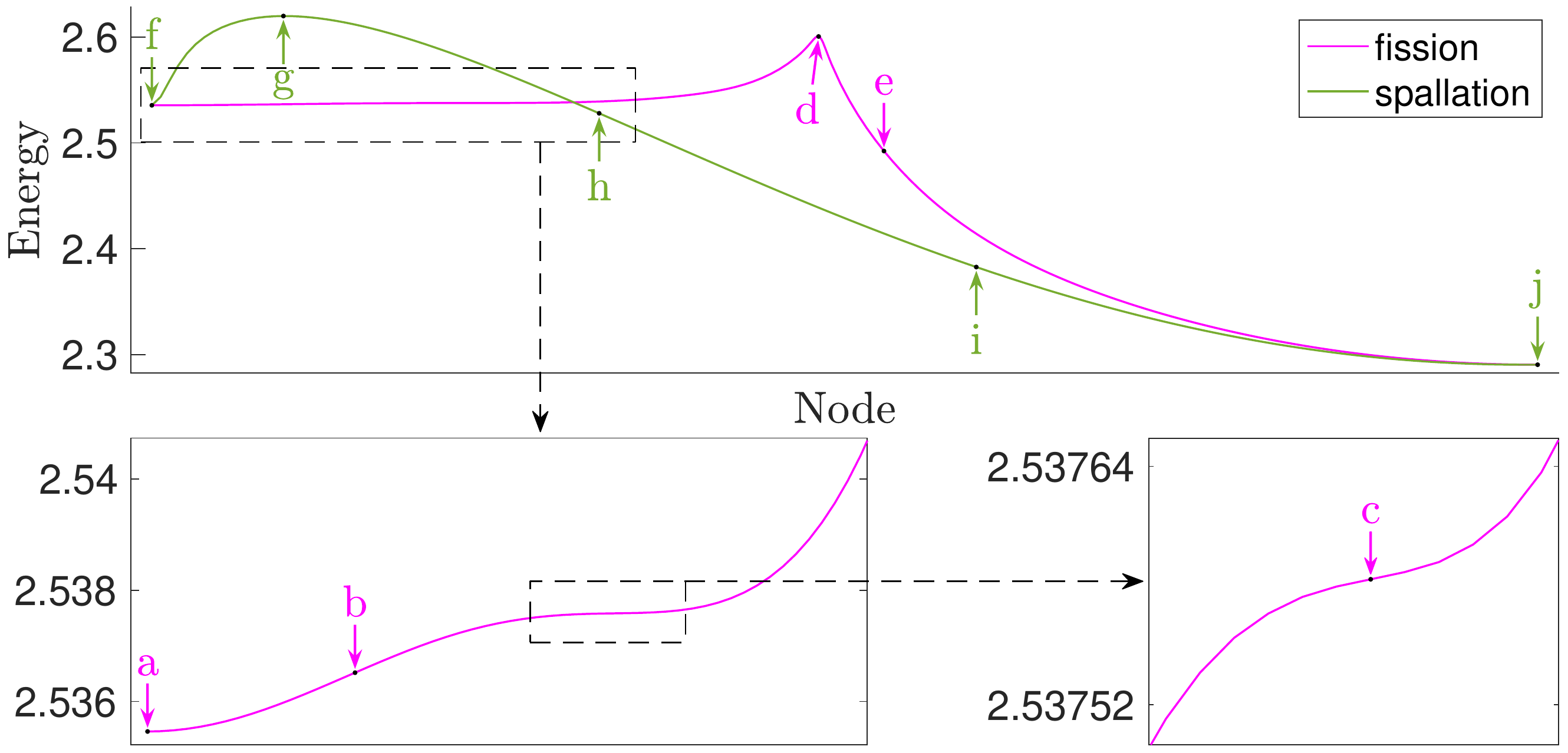}

$\overset{\includegraphics[width=53pt]{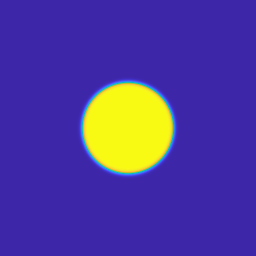}}{\begin{minipage}[t][\height][b]{24pt}\hspace{10pt}a\end{minipage}}$
$\overset{\includegraphics[width=53pt]{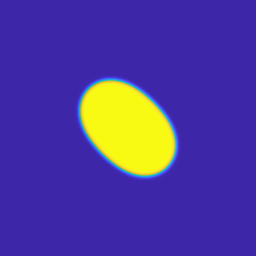}}{\begin{minipage}[t][\height][b]{24pt}\hspace{10pt}b\end{minipage}}$
$\overset{\includegraphics[width=53pt]{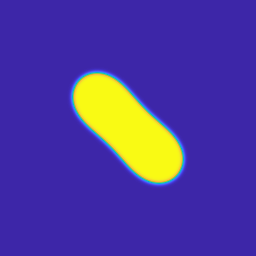}}{\begin{minipage}[t][\height][b]{24pt}\hspace{10pt}c\end{minipage}}$
$\overset{\includegraphics[width=53pt]{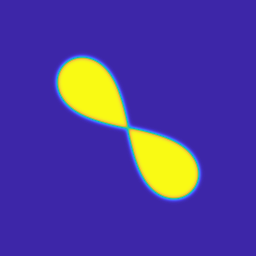}}{\begin{minipage}[t][\height][b]{24pt}\hspace{10pt}d\end{minipage}}$
$\overset{\includegraphics[width=53pt]{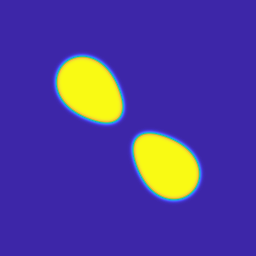}}{\begin{minipage}[t][\height][b]{24pt}\hspace{10pt}e\end{minipage}}$

\vspace{6pt}

$\overset{\includegraphics[width=53pt]{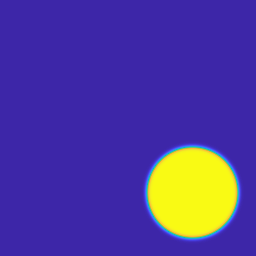}}{\begin{minipage}[t][\height][b]{24pt}\hspace{10pt}f\end{minipage}}$
$\overset{\includegraphics[width=53pt]{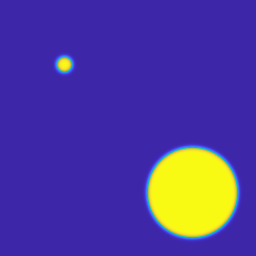}}{\begin{minipage}[t][\height][b]{24pt}\hspace{10pt}g\end{minipage}}$
$\overset{\includegraphics[width=53pt]{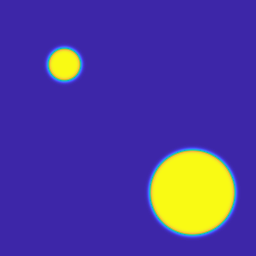}}{\begin{minipage}[t][\height][b]{24pt}\hspace{10pt}h\end{minipage}}$
$\overset{\includegraphics[width=53pt]{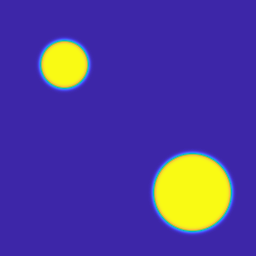}}{\begin{minipage}[t][\height][b]{24pt}\hspace{10pt}i\end{minipage}}$
$\overset{\includegraphics[width=53pt]{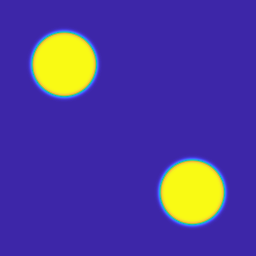}}{\begin{minipage}[t][\height][b]{24pt}\hspace{10pt}j\end{minipage}}$
\caption{Minimum energy paths ($\tdfis\!=\!1.182$). Transition states: d and g.}
\end{figure}

\begin{figure}[H]
\centering
\includegraphics[width=391.02pt]{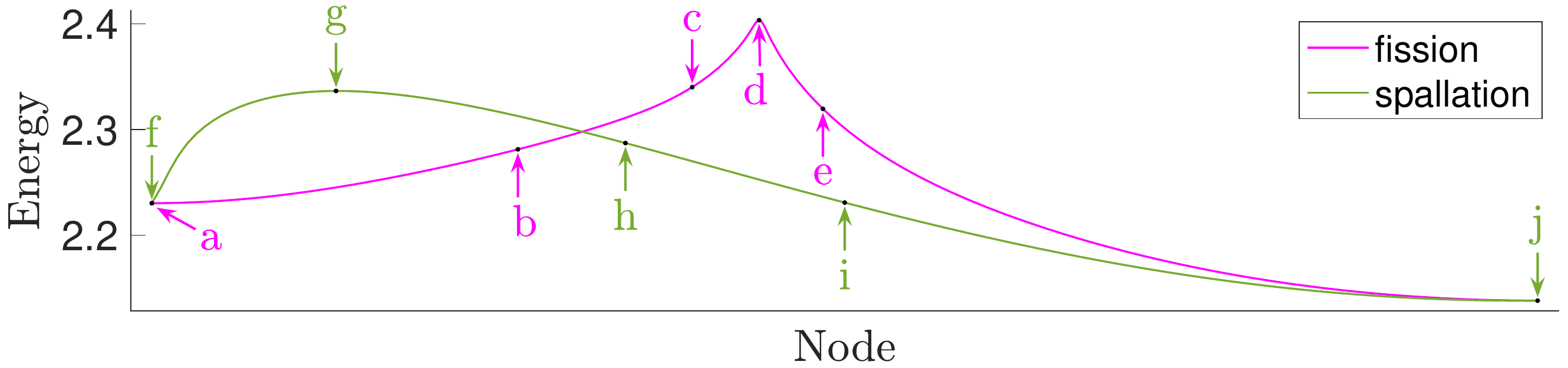}

$\overset{\includegraphics[width=53pt]{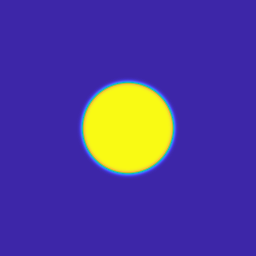}}{\begin{minipage}[t][\height][b]{24pt}\hspace{10pt}a\end{minipage}}$
$\overset{\includegraphics[width=53pt]{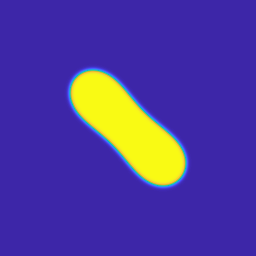}}{\begin{minipage}[t][\height][b]{24pt}\hspace{10pt}b\end{minipage}}$
$\overset{\includegraphics[width=53pt]{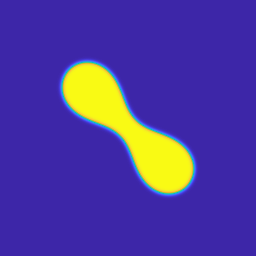}}{\begin{minipage}[t][\height][b]{24pt}\hspace{10pt}c\end{minipage}}$
$\overset{\includegraphics[width=53pt]{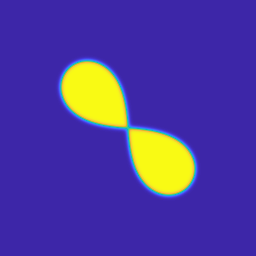}}{\begin{minipage}[t][\height][b]{24pt}\hspace{10pt}d\end{minipage}}$
$\overset{\includegraphics[width=53pt]{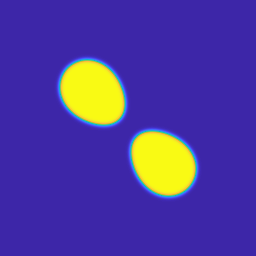}}{\begin{minipage}[t][\height][b]{24pt}\hspace{10pt}e\end{minipage}}$

\vspace{6pt}

$\overset{\includegraphics[width=53pt]{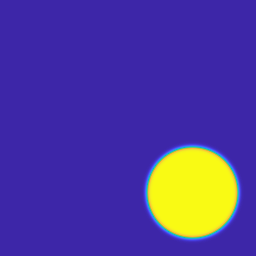}}{\begin{minipage}[t][\height][b]{24pt}\hspace{10pt}f\end{minipage}}$
$\overset{\includegraphics[width=53pt]{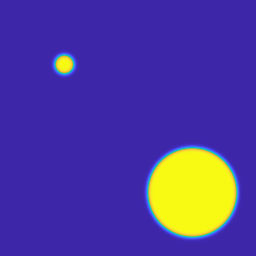}}{\begin{minipage}[t][\height][b]{24pt}\hspace{10pt}g\end{minipage}}$
$\overset{\includegraphics[width=53pt]{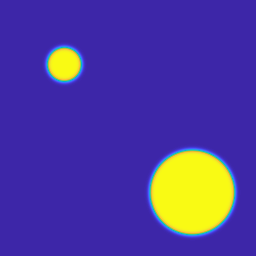}}{\begin{minipage}[t][\height][b]{24pt}\hspace{10pt}h\end{minipage}}$
$\overset{\includegraphics[width=53pt]{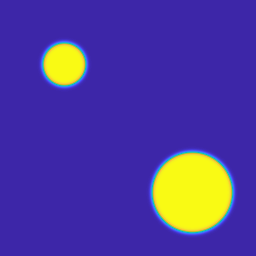}}{\begin{minipage}[t][\height][b]{24pt}\hspace{10pt}i\end{minipage}}$
$\overset{\includegraphics[width=53pt]{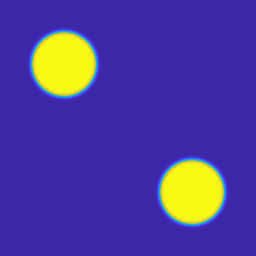}}{\begin{minipage}[t][\height][b]{24pt}\hspace{10pt}j\end{minipage}}$
\caption{Minimum energy paths ($\tdfis\!=\!0.927$). Transition states: d and g.}
\end{figure}

\begin{figure}[H]
\centering
\includegraphics[width=391.02pt]{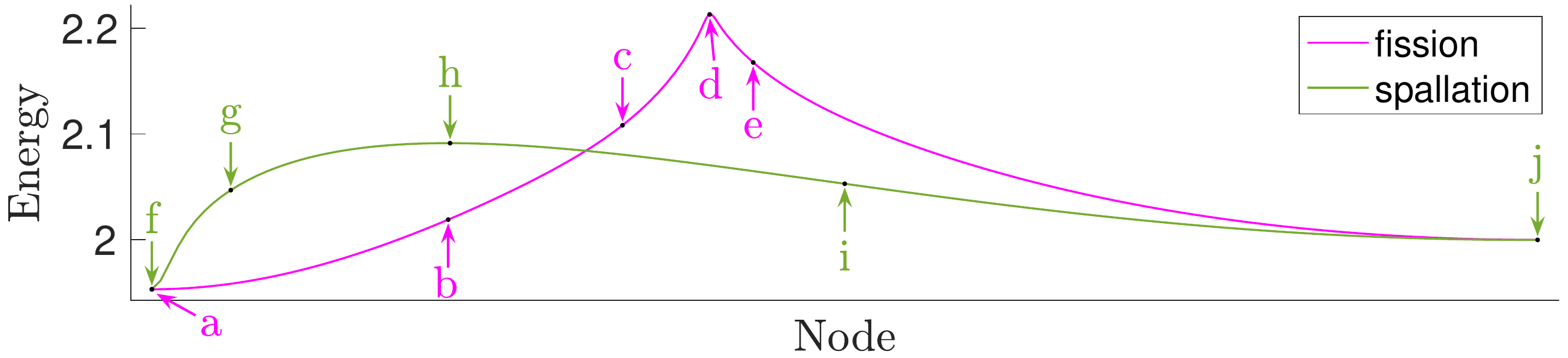}

$\overset{\includegraphics[width=53pt]{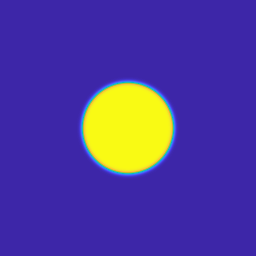}}{\begin{minipage}[t][\height][b]{24pt}\hspace{10pt}a\end{minipage}}$
$\overset{\includegraphics[width=53pt]{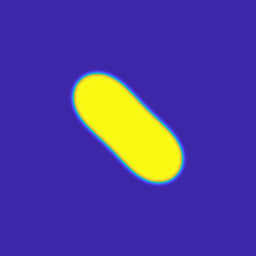}}{\begin{minipage}[t][\height][b]{24pt}\hspace{10pt}b\end{minipage}}$
$\overset{\includegraphics[width=53pt]{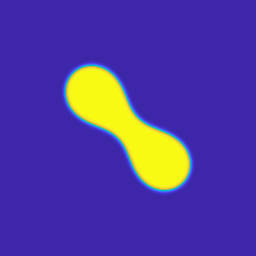}}{\begin{minipage}[t][\height][b]{24pt}\hspace{10pt}c\end{minipage}}$
$\overset{\includegraphics[width=53pt]{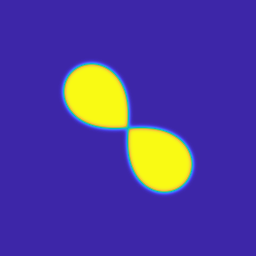}}{\begin{minipage}[t][\height][b]{24pt}\hspace{10pt}d\end{minipage}}$
$\overset{\includegraphics[width=53pt]{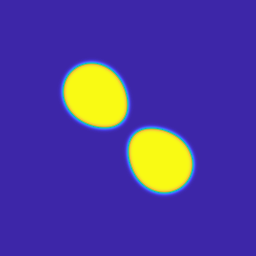}}{\begin{minipage}[t][\height][b]{24pt}\hspace{10pt}e\end{minipage}}$

\vspace{6pt}

$\overset{\includegraphics[width=53pt]{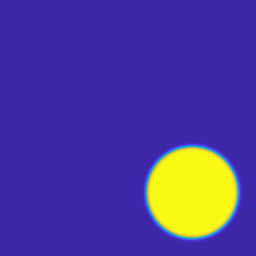}}{\begin{minipage}[t][\height][b]{24pt}\hspace{10pt}f\end{minipage}}$
$\overset{\includegraphics[width=53pt]{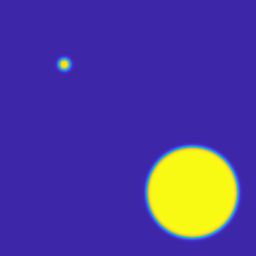}}{\begin{minipage}[t][\height][b]{24pt}\hspace{10pt}g\end{minipage}}$
$\overset{\includegraphics[width=53pt]{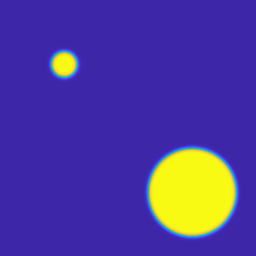}}{\begin{minipage}[t][\height][b]{24pt}\hspace{10pt}h\end{minipage}}$
$\overset{\includegraphics[width=53pt]{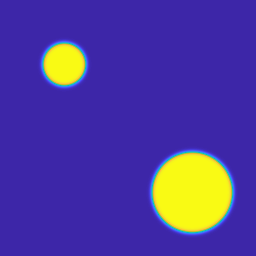}}{\begin{minipage}[t][\height][b]{24pt}\hspace{10pt}i\end{minipage}}$
$\overset{\includegraphics[width=53pt]{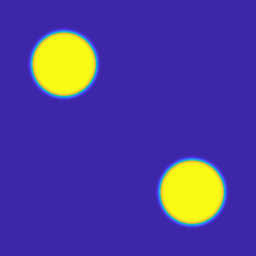}}{\begin{minipage}[t][\height][b]{24pt}\hspace{10pt}j\end{minipage}}$
\caption{Minimum energy paths ($\tdfis\!=\!0.696$). Transition states: d and h.}
\end{figure}

\begin{figure}[H]
\centering
\includegraphics[width=391.02pt]{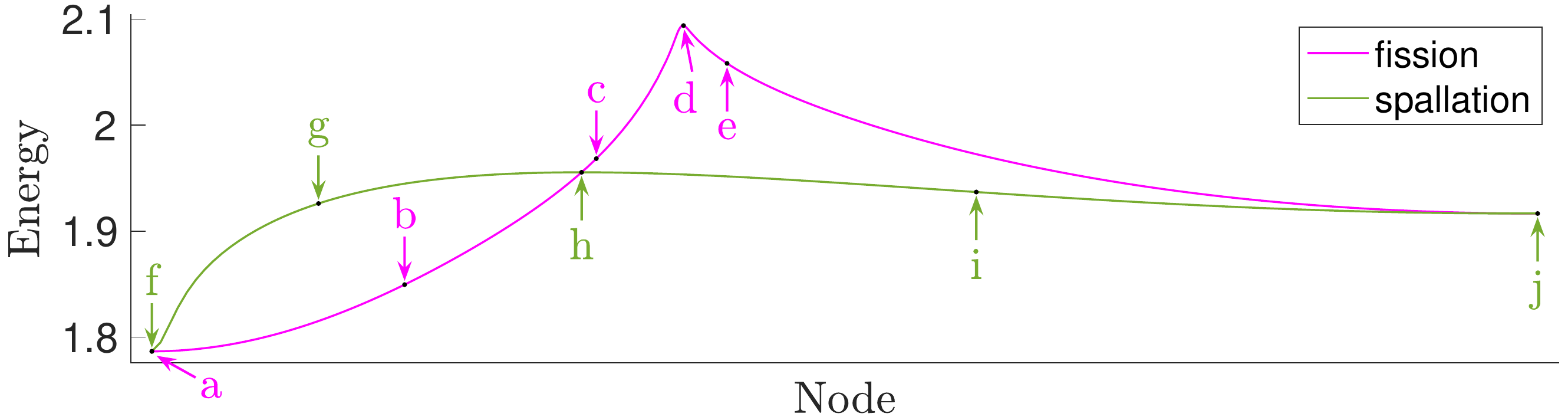}

$\overset{\includegraphics[width=53pt]{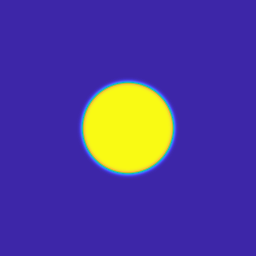}}{\begin{minipage}[t][\height][b]{24pt}\hspace{10pt}a\end{minipage}}$
$\overset{\includegraphics[width=53pt]{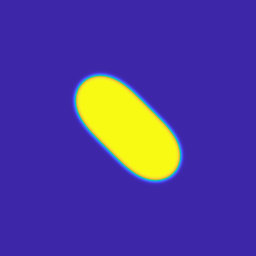}}{\begin{minipage}[t][\height][b]{24pt}\hspace{10pt}b\end{minipage}}$
$\overset{\includegraphics[width=53pt]{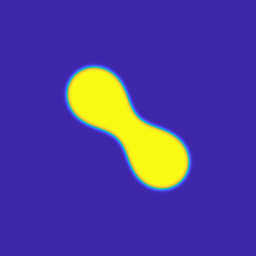}}{\begin{minipage}[t][\height][b]{24pt}\hspace{10pt}c\end{minipage}}$
$\overset{\includegraphics[width=53pt]{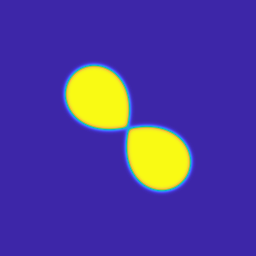}}{\begin{minipage}[t][\height][b]{24pt}\hspace{10pt}d\end{minipage}}$
$\overset{\includegraphics[width=53pt]{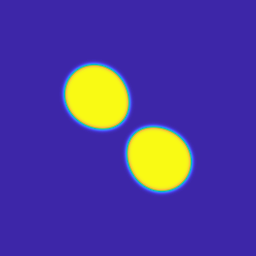}}{\begin{minipage}[t][\height][b]{24pt}\hspace{10pt}e\end{minipage}}$

\vspace{6pt}

$\overset{\includegraphics[width=53pt]{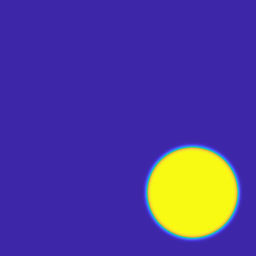}}{\begin{minipage}[t][\height][b]{24pt}\hspace{10pt}f\end{minipage}}$
$\overset{\includegraphics[width=53pt]{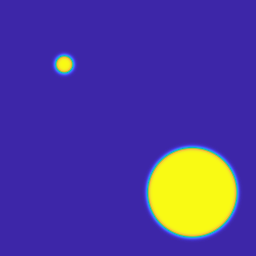}}{\begin{minipage}[t][\height][b]{24pt}\hspace{10pt}g\end{minipage}}$
$\overset{\includegraphics[width=53pt]{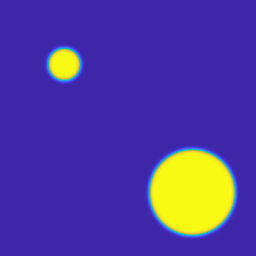}}{\begin{minipage}[t][\height][b]{24pt}\hspace{10pt}h\end{minipage}}$
$\overset{\includegraphics[width=53pt]{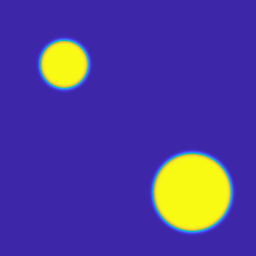}}{\begin{minipage}[t][\height][b]{24pt}\hspace{10pt}i\end{minipage}}$
$\overset{\includegraphics[width=53pt]{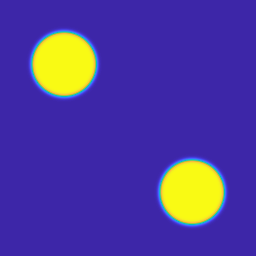}}{\begin{minipage}[t][\height][b]{24pt}\hspace{10pt}j\end{minipage}}$
\caption{Minimum energy paths ($\tdfis\!=\!0.556$). Transition states: d and h.}
\end{figure}

\section{Simulations in 3-D of torus-like equilibria}
\label{torus appendix}

In this appendix, we present torus-like equilibria in 3-D under no boundary conditions. A portion (where $\fis$ is large enough) of this bifurcation branch has already been found in \cite{ren2011toroidal}. As mentioned in Section \ref{Resistance to rupture in the classical isoperimetric problem}, a cylinder with a height larger than its circumference is unstable under surface tension. Therefore we expect the torus-like equilibria to be unstable (see also \cite{ren2017spectrum}). (In contrast, a 2-D strip is stable under surface tension, so an annulus might be stable in 2-D \cite[Theorem 2.2]{kang2009ring}. Similarly, we expect a spherical shell to be stable in 3-D. However, the spherical shell is reported to be unstable \cite[Page 981]{ren2009shell}, which may seem surprising at first, but such an instability can be understood as the consequence of the Coulomb repulsive term.)

In order to overcome the instability of the torus-like equilibrium, we employ the following technique to maintain the axisymmetry of the shape: after every a few (e.g., 500) iterations in the pACOK dynamics, rotate it by $30^\circ,\,60^\circ,\,90^\circ,\,\cdots,\,360^\circ$, then take the average in the sense of superposition. In our simulations, there exists a $\tdfis_4\in(0.968, 0.969)$, such that for $\tdfis<\tdfis_4$, the torus will eventually converge to a ball, so we suspect that the torus-like equilibrium disappears through a saddle-node bifurcation at $\fis=\fis_4$. We use $\tdfis_4/\tdfis_*\approx0.972$ to estimate $\fis_4$. For $\tdfis>\tdfis_4$, our numerical results are shown in Figure \ref{torus equilibria}.

\begin{figure}[H]
\centering
\hspace{-40pt}$\overset{\includegraphics[width=103.6pt,bb=0 0 569 344]{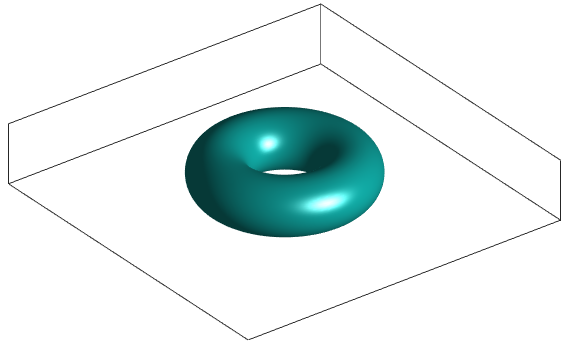}}{0.969}$
\hspace{-12pt}
$\overset{\includegraphics[width=103.6pt,bb=0 0 569 344]{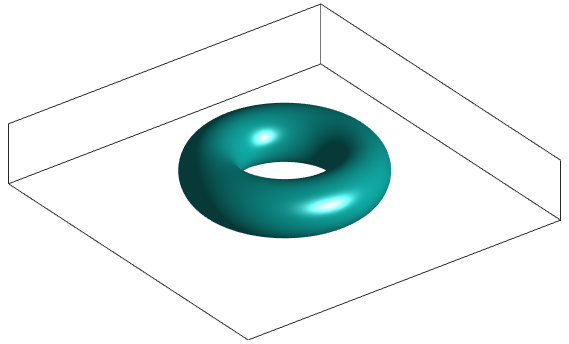}}{0.992}$
\hspace{-12pt}
$\overset{\includegraphics[width=103.6pt,bb=0 0 569 344]{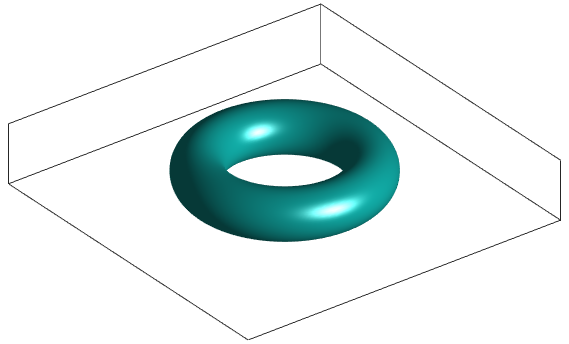}}{1.054}$
\hspace{-12pt}
$\overset{\includegraphics[width=103.6pt,bb=0 0 569 344]{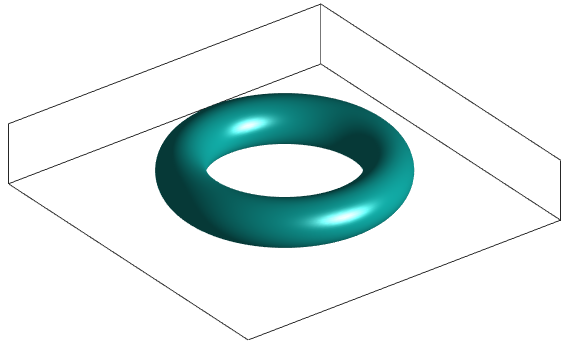}}{1.116}$
\\
\hspace{40pt}$\overset{\includegraphics[width=103.6pt,bb=0 0 569 344]{images/BifurcationBranch/t9.5.png}}{1.178}$
\hspace{-12pt}
$\overset{\includegraphics[width=103.6pt,bb=0 0 569 344]{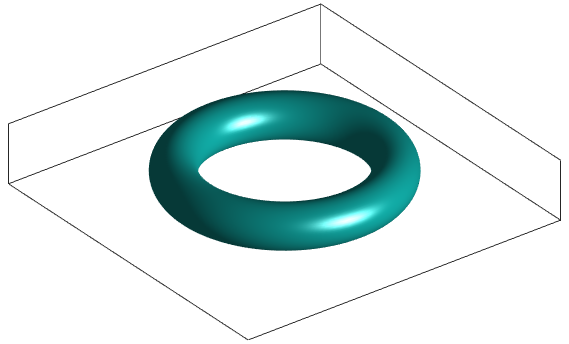}}{1.240}$
\hspace{-12pt}
$\overset{\includegraphics[width=103.6pt,bb=0 0 569 344]{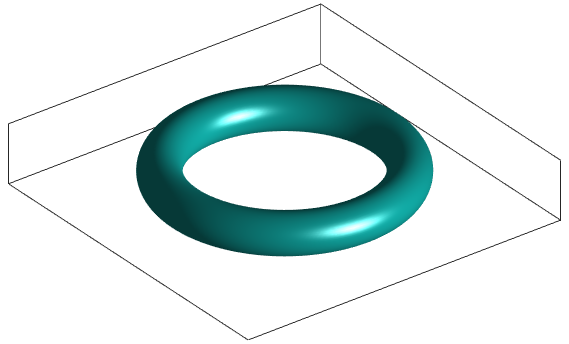}}{1.365}$
\hspace{-12pt}
$\overset{\includegraphics[width=103.6pt,bb=0 0 569 344]{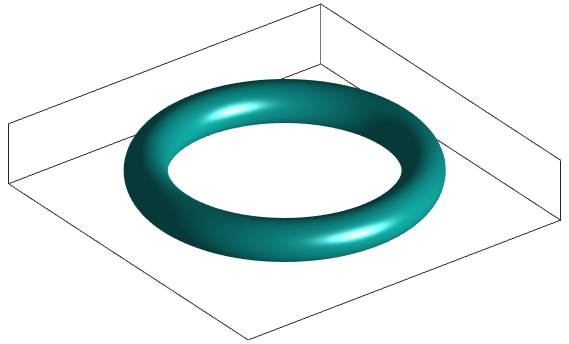}}{1.489}$
\\
\hspace{-40pt}$\overset{\includegraphics[width=103.6pt,bb=0 0 569 344]{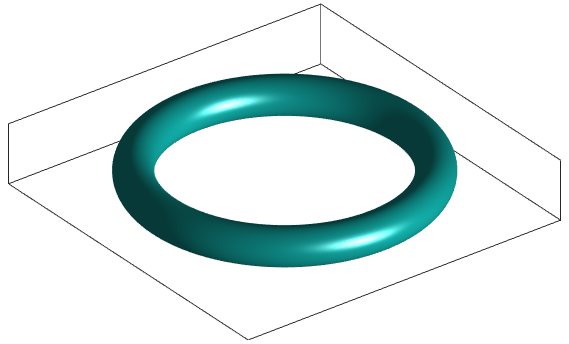}}{1.613}$
\hspace{-12pt}
$\overset{\includegraphics[width=103.6pt,bb=0 0 569 344]{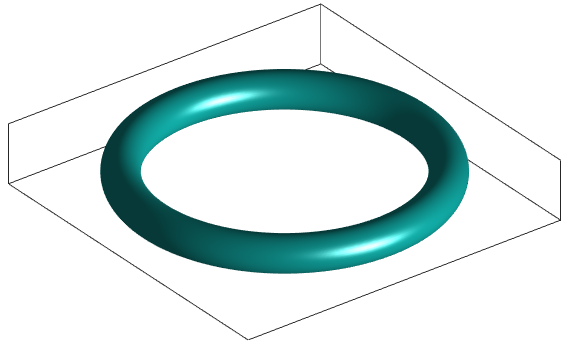}}{1.737}$
\hspace{-12pt}
$\overset{\includegraphics[width=103.6pt,bb=0 0 569 344]{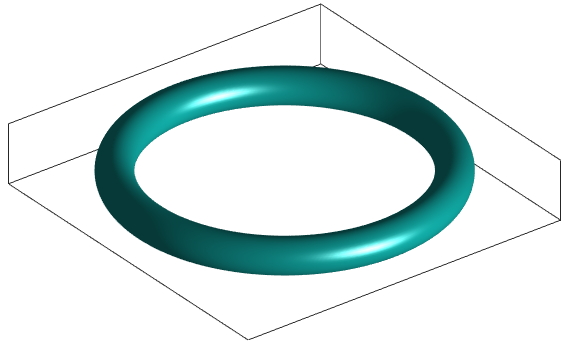}}{1.799}$
\hspace{-12pt}
$\overset{\includegraphics[width=103.6pt,bb=0 0 569 344]{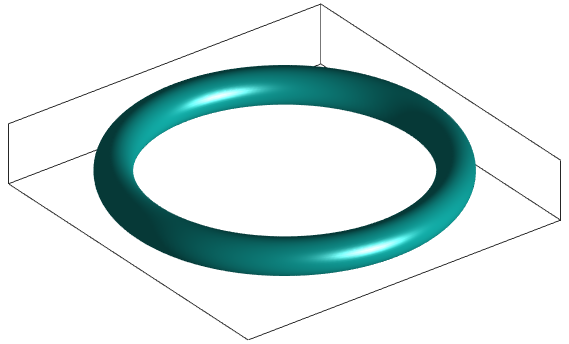}}{1.811}$
\caption{Torus-like equilibria for various $\tdfis$. Size of bounding box: 1.6$\times$1.6$\times$0.28125.}
\label{torus equilibria}
\end{figure}



\end{document}